\def\beq{\begin{equation}}
\def\eeq{\end{equation}}
\def\bea{\begin{eqnarray}}
\def\eea{\end{eqnarray}}
\def\beann{\begin{eqnarray*}}
\def\eeann{\end{eqnarray*}}
\let\a=\alpha  \let\g=\gamma \let\de=\delta
\let\e=\varepsilon \let\z=\zeta  \let\th=\theta
\let\dh=\vartheta \let\k=\kappa \let\la=\lambda \let\m=\mu
\let\n=\nu \let\x=\xi \let\p=\pi \let\r=\rho \let\s=\sigma
\let\om=\omega \let\ps=\psi
\let\ph=\varphi \let\Ph=\phi \let\PH=\Phi \let\Ps=\Psi
\let\ups=\Upsilon
\let\Om=\Omega \let\Si=\Sigma 
\let\La=\Lambda \let\G=\Gamma \let\D=\Delta
\let\qd=\quad  
\def\epp{\, .}
\def\epc{\, ,}
\def\tst#1{{\textstyle #1}}
\theoremstyle{plain}
\newtheorem{theorem}{Theorem}
\newtheorem*{theorem*}{Theorem}
\newtheorem{lemma}{Lemma}
\newtheorem*{lemma*}{Lemma}
\newtheorem{proposition}{Proposition}
\newtheorem{corollary}{Corollary}
\newtheorem*{corollary*}{Corollary}
\newtheorem{conjecture}{Conjecture}
\newtheorem*{conjecture*}{Conjecture}
\theoremstyle{definition}
\newtheorem*{remark}{Remark}
\newtheorem*{question*}{Question}
\newtheorem{Remark}{Remark}
\def\2{\frac{1}{2}} \def\4{\frac{1}{4}}
\def\6{\partial}
\def\+{\dagger}
\def\<{\langle} \def\>{\rangle}
\let\auf=\uparrow \let\ab=\downarrow
 \def\CL{{\cal L}}
\def\CO{{\cal O}}
\let\nodoti\i
\def\i{{\rm i}}
\def\rd{{\rm d}}
\def\tg{\, {\rm tg}\,}
\def\ctg{\, {\rm ctg}\,}
\DeclareMathOperator{\re}{e}
\DeclareMathOperator{\Ln}{Ln}
\DeclareMathOperator{\sh}{sh}
\DeclareMathOperator{\ch}{ch}
\DeclareMathOperator{\tr}{tr}
\DeclareMathOperator{\one}{\mathds{1}}
\DeclareMathOperator{\Int}{Int}
\DeclareMathOperator{\Ext}{Ext}
\DeclareMathOperator{\sign}{sign}
\DeclareMathOperator{\End}{End}
\DeclareMathOperator{\id}{id}
\DeclareMathOperator{\card}{card}
\def\Re{{\rm Re\,}} \def\Im{{\rm Im\,}}
\def\ev{\mathbf{e}}
\def\uv{\mathbf{u}}
\def\vv{\mathbf{v}}
\def\xv{\mathbf{x}}
\def\yv{\mathbf{y}}
\def\Uv{\mathbf{U}}
\def\Vv{\mathbf{V}}
\def\fa{\mathfrak{a}}
\def\fe{\mathfrak{e}}
\DeclareMathOperator{\fz}{\mathfrak{z}}
\def\ks{h_R}
\renewcommand{\ks}{\k}
\newcommand{\nex}{\ell}
\begin{document}

\thispagestyle{empty}

\begin{center}

{\Large \bf
A thermal form factor series for the longitudinal two-point function of
the Heisenberg-Ising chain in the antiferromagnetic massive regime}

\vspace{10mm}

{\large
Constantin Babenko,$^\dagger$
Frank G\"{o}hmann,$^\dagger$
Karol K. Kozlowski$^\ast$ and Junji Suzuki$^\ddagger$}\\[3.5ex]
$^\dagger$Fakult\"at f\"ur Mathematik und Naturwissenschaften,\\
Bergische Universit\"at Wuppertal,
42097 Wuppertal, Germany\\[1.0ex]
$^\ast$Univ Lyon, ENS de Lyon, Univ Claude Bernard,\\ CNRS,
Laboratoire de Physique, F-69342 Lyon, France\\[1.0ex]
$^\ddagger$Department of Physics, Faculty of Science, Shizuoka University,\\
Ohya 836, Suruga, Shizuoka, Japan


\vspace{17mm}
{\it Dedicated to Professor Barry M. McCoy on the occasion of his 80th birthday}
\vspace{19mm}

{\large {\bf Abstract}}

\end{center}

\begin{list}{}{\addtolength{\rightmargin}{9mm}
               \addtolength{\topsep}{-5mm}}
\item
We consider the longitudinal dynamical two-point function of the
XXZ quantum spin chain in the antiferromagnetic massive
regime. It has a series representation based on the form factors
of the quantum transfer matrix of the model. The $n$th summand of 
the series is a multiple integral accounting for all $n$-particle
$n$-hole excitations of the quantum transfer matrix. In previous
works the expressions for the form factor amplitudes appearing
under the integrals were either again represented as multiple
integrals or in terms of Fredholm determinants. Here we obtain
a representation which reduces, in the zero-temperature limit,
essentially to a product of two determinants of finite matrices
whose entries are known special functions. This will facilitate
the further analysis of the correlation function.
\end{list}

\clearpage

\section{Introduction}
In this work we renew our attempts to obtain simple and manageable
expressions for the dynamical correlation functions of the XXZ
chain. The XXZ chain is an anisotropic deformation of the Heisenberg
chain which is the fundamental model of 1d magnetism. The XXZ
Hamiltonian for a chain of length $L$ acts on the tensor product
${\cal H}_L = \bigotimes_{j=1}^L V_j$, $V_j = {\mathbb C}^2$, in which
every factor is identified with a site in a 1d lattice. It is defined by
\begin{equation} \label{hxxz}
     H_L = J \sum_{j = 1}^L \Bigl\{ \s_{j-1}^x \s_j^x + \s_{j-1}^y \s_j^y
                 + \D \bigl( \s_{j-1}^z \s_j^z - 1 \bigr) \Bigr\}
		 - \frac{h}{2} \sum_{j=1}^L \s_j^z \epc
\end{equation}
where $\s^\a \in \End {\mathbb C}^2$, $\a = x, y, z$, are the
Pauli matrices. The three real parameters of the Hamiltonian are
the anisotropy $\D$, the exchange interaction $J > 0$, and the
strength $h > 0$ of an external magnetic field in the direction of
the magnetic symmetry axis.

If the magnet is in contact with a heat bath of temperature $T$, it
is in a `mixed state' with canonical density matrix
\begin{equation}
     \r_L (T)  = \frac{\re^{- H_L/T}}{\tr \{\re^{- H_L/T}\}} \epp
\end{equation}
The Heisenberg time evolution of a local operator $x_j \in
\End V_j$ is defined by
\begin{equation}
     x_j \mapsto x_j (t) = \re^{\i H_L t} x_j \re^{- \i H_L t} \epp
\end{equation}
A typical quantity measured in experiments on quasi-1d magnets is the
dynamical two-point correlation function
\begin{equation} \label{twopointfun}
     \bigl\<x_j y_k (t)\bigr\>_T
        = \lim_{L \rightarrow \infty}
	  \tr \bigl\{\r_L (T) x_j y_k (t)\bigr\} \epp
\end{equation}
of two local operators $x_j$, $y_k$.

The analysis of finite-temperature dynamical correlation functions
such as (\ref{twopointfun}) is rather involved. Even for integrable
lattice models like the XXZ chain very little is known in the
general finite temperature case. Notable exceptions are so-called
free-fermion models as, for instance, the XX chain (Hamiltonian (\ref{hxxz})
with $\D = 0$) \cite{IIKS93b,CIKT92,Jie98,GKSS19,GKS20a,GKS20b} or
the transverse field Ising model \cite{GFE20,Iorgov11,IoLi11}, where
much of the long-time large-distance asymptotics was worked out and
numerically efficient representations of the correlation functions
are available.

In the low-temperature limit, $T \rightarrow 0$, when $\r_L (T)$
becomes proportional to the projector onto the groundstate subspace,
the dynamical correlation functions of the XXZ chain are better
understood. They were mainly analysed by means of form factor
series expansions based on matrix elements of local operators between
the ground state and the excited states of the Hamiltonian. The
ground state phase diagram is depicted in Figure~\ref{fig:phasediagram}.
\begin{figure}
\begin{center}
\includegraphics[width=.80\textwidth,angle=0,clip=true]{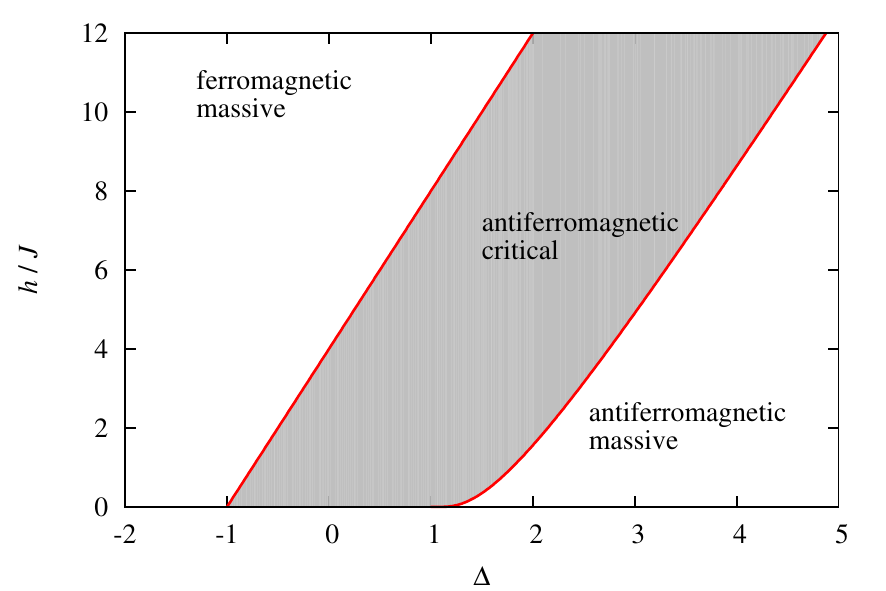}
\caption{\label{fig:phasediagram} The ground state phase diagram of
the XXZ chain in the $\D$-$h/J$ plane. The ferromagnetic massive regime
and the antiferromagnetic critical regime are separated by the upper
critical field $h_u = 4J(1 + \D)$ (left red line). The lower critical
field $h_\ell$ (right red line, defined in equation (\ref{hlow}) below)
determines the border between the antiferromagnetic critical and the
antiferromagnetic massive regime.
}
\end{center}
\end{figure}%
The analysis focused on the antiferromagnetic regions of the phase
diagram. Form factors for the antiferromagnetic massive regime
were first obtained within the $q$-vertex operator approach
by Jimbo and Miwa \cite{JiMi95}. In this approach the form factors
are obtained in the form of multiple integrals which are interpreted
in terms of even-numbered multiple-spinon excitations and are called
the $2n$-spinon form factors. Since only the two-spinon form
factors are known explicitly in terms of special functions, most
of the subsequent analysis focused on the two-spinon contribution
to the dynamical correlation functions. Moreover, the emphasis
was on the calculation of the spatial and temporal Fourier transforms of
the dynamical two-point functions, the so-called `dynamical structure
factors,' since these are the functions that are measured more or
less directly in neutron scattering experiments on certain quasi
one-dimensional magnets. The two-spinon dynamical structure factor
of the XXZ chain in the antiferromagnetic massive regime was studied
in \cite{BCK96,BKM98,CMP08,Castillo20pp}. The important limiting case
of the XXX chain (Hamiltonian (\ref{hxxz}) with $\D = 1$) was considered
in \cite{KMBFM97}. The four-spinon contribution for the XXX model was
calculated in \cite{CaHa06}. Two- and four spinon contributions
together seem to give a rather quantitative description of the available
experimental data \cite{MEKCSR13}. The space-time asymptotics of the
longitudinal dynamical two-point functions of the XXZ chain in the
antiferromagnetic massive regime was analysed in \cite{DGKS16a}.

The $q$-vertex operator approach is designed to work directly in
the thermodynamic limit. Unfortunately, so far it has been successful
only for massive integrable models and only for the case of vanishing
magnetic field. The two-spinon contribution to the dynamical structure
factor of the XXZ chain in the antiferromagnetic massless regime at
vanishing magnetic field was obtained in \cite{CKSW12} using expressions
for the two-spinon form factors of the XYZ chain \cite{Lashkevich02}
and performing a massless limit. For the exploration of higher-spinon
contributions or of the antiferromagnetic massless phase at non-zero
magnetic field the available results on dynamical correlation functions
rely on Bethe Ansatz techniques.

An efficient Bethe Ansatz approach to the calculation of form factors
of the finite XXZ chain was developed in \cite{KMT99a}. Such an
approach has the advantage that is applies equally well to all points
in the ground state phase diagram. However, the calculation of
the thermodynamic limit of the form factors in the general
situation is rather sophisticated \cite{Slavnov90,IKMT99,KKMST09b,%
KKMST11a,DGKS15a,Kozlowski17,KiKu19}. The summation of the form
factor series imposed additional difficulties that were overcome in
\cite{KKMST11b,KKMST12,DGKS15a,Kozlowski18}.

In \cite{KMT99a} the form factors pertaining to the finite XXZ chain
of even length $L$ were obtained in a form proportional to determinants
of matrices of size $N \times N$ with $N$ growing in $L$. These were
analysed quite successfully by solving the Bethe Ansatz equations numerically
\cite{BKM02a} and resulted in predictions for the dynamical structure
factors in finite magnetic fields \cite{SST04,CaMa05} that could be
compared with experiments. In the thermodynamic limit, $L \rightarrow \infty$,
the $N \times N$ determinants, generically, turn into Fredholm determinants
of integral operators \cite{KKMST09b,KKMST11a}. For a long time the
Fredholm determinant corresponding to Baxter's staggered polarisation
\cite{Baxter73a,Baxter76a}, i.e.\ the Fredholm determinant arising
in the calculation of the matrix element of $\s^z$ between the two
asymptotically degenerate groundstates of the XXZ chain in the
antiferromagnetic massive regime of the XXZ chain, was the only
known example that could be evaluated more explicitly in terms of
known special functions \cite{IKMT99}. Only recently the well-known
explicit result for the two-spinon form factors of the XXX chain
\cite{JiMi95} was reproduced from an algebraic Bethe Ansatz perspective
\cite{KiKu19}.

Altogether the thermodynamic limit of the form factors generally
results in rather technical expressions which can be seen as
multiple residues related to the zeros of the counting functions
of the massive excitations in the respective ground state regimes
\cite{DGKS15a,Kozlowski18}. In spite of the technically complicated
nature of the expressions for the form factors, the series derived
in \cite{Kozlowski18} could be used for a detailed analysis of
the space-time asymptotics of the dynamical two-point functions
of the XXZ chain in the antiferromagnetic massless regime at finite
magnetic field \cite{Kozlowski19} and of the threshold singularities
of their Fourier transforms \cite{Kozlowski18pp}. This analysis
confirmed and refined, from a microscopic point of view, the
non-linear Luttinger liquid phenomenology (for a review see
\cite{ISG12}).

An alternative route to the zero temperature correlation functions
of the XXZ chain is through the thermal form factor expansion 
\cite{DGK13a,DGKS16b} developed initially for static correlation
functions and later extended to the dynamical case in \cite{GKKKS17}.
The thermal form factor expansion is an expansion involving the
form factors and eigenvalues of the quantum transfer matrix. The
latter is an auxiliary object originally introduced
\cite{Suzuki85,SAW90,Kluemper93} in order to study the thermodynamics
of quantum spin chains. The quantum transfer matrix is generically
non-Hermitian. Its eigenvalues are not necessarily all real.
The eigenvalue of largest modulus, however, is always real and
non-degenerate (which was rigorously established for temperature
high enough in \cite{GGKS20}). We call it the dominant eigenvalue
and the corresponding eigenvector the dominant eigenvector.
The dominant eigenvector determines the reduced density matrices
of all finite sub-segments of the considered quantum spin chain
in the thermodynamic limit \cite{GKS04a,GKS05}. Dynamical
correlation functions can be expressed as form factor series
involving the matrix elements of the operators of the algebraic
Bethe Ansatz between the dominant state and excited states of
the quantum transfer matrix \cite{GKKKS17}.

The spectrum and the Bethe roots of the quantum transfer matrix of
an integrable quantum spin chain are rather different from the spectrum
and Bethe roots of the ordinary transfer matrix associated with the
Hamiltonian. Both depend parametrically on the temperature
and, in case of the XXZ chain, on the external magnetic field.
The Bethe roots form patterns in the complex plane which `evolve'
with increasing temperature. These patterns are currently not
sufficiently well understood to allow for an analysis of the thermal
form factor series of the dynamical two-point functions of the XXZ
chain at arbitrary temperatures. For this reason we shall focus
in this work on the case of the low-temperature limit of the
XXZ chain in the antiferromagnetic massive regime, which is the
only parameter regime so far, where we have obtained a complete
picture of all excitations and explicit expressions for all
eigenvalues \cite{DGKS15b}. Amazingly, the corresponding Bethe
root patterns are much simpler than those belonging to the
excited states of the Hamiltonian in the same regime of the
ground state phase diagram. They are characterized by the absence
of so-called strings. Accordingly, they also come with a different
interpretation. While the excitations of the Hamiltonian are
interpreted in terms of spinons (see e.g.\ \cite{DGKS15a}), the
excitations of the quantum transfer matrix are parameterised
by Bethe root patterns of particle-hole type. In the zero temperature
limit every excitation is determined by two finite sets of
complex parameters, called particle- and hole roots, and by a
`topological index' $k \in {\mathbb Z}\slash 2 {\mathbb Z}$.
The particle- and hole roots are confined on two simple finite
curves in the upper and lower half plane, respectively, which
are symmetric with respect to reflection about the real and
imaginary axis (see Figure~\ref{fig:regimes}).

In \cite{DGK13a} we found that the amplitudes in the form
factor expansions factorize in three parts which we called the
universal part, the factorizing part and the determinant part.
In the so-called Trotter limit, which is a necessary step of the formalism,
the universal and factorising parts were taking a very explicit form
given in terms of exponents of one and two dimensional integrals.
In the low-T limit, we showed that all these quantities can be
evaluated in closed form in terms of products of $q$-gamma and $q$-Barnes
functions. In its turn, the determinant part was represented as a ratio
of two Fredholm determinants in the numerator over two Fredholm
determinants in the denominator. While in the low-T limit the Fredholm
determinants in the denominator turned out to be explicitly computable
in terms of $q$-products the Fredholm determinants in the numerator
had too complicated kernels so as to go beyond the Fredholm determinant
representation. This was not much of a problem for the numerical
evaluation of the correlation functions through its thermal form
factor expansions. However, this structural intricacy of the
integrands in the form factor expansion raised doubts about the
usefulness of the representation for an analytic study of other
properties of the correlation functions, such as their
long-distance and large-time asymptotic behaviour.

By relying on a form of determinant factorisation proposed in
\cite{KiKu19}, we obtain in this work a representation for the
amplitudes arising in the thermal form factor expansion of the
generating function of the dynamical longitudinal correlation
functions. The latter is structured in such a way that, in the
low-$T$ limit, the full amplitudes become explicitly computable
in terms of products of $q$-gamma and $q$-Barnes functions and
\textit{finite size} determinants built up of $q$-gamma and theta
functions. These expressions for the form factors are then used
to express the dynamical two-point function $\<\s_1^z \s_{m+1}^z (t)\>_{T=0}$
as a novel series of multiple integrals with very explicit and
simple integrands. More precisely we shall derive the following
\begin{theorem*}
The longitudinal dynamical two-point function of the XXZ chain in
the antiferromagnetic massive regime and in the zero-temperature
limit can be expressed in terms of a generating function ${\cal G}^{(0)}$,
\begin{equation}
     \bigl\<\s_1^z \s_{m+1}^z (t)\bigr\> =
        \tst{\2} \6_{\g \a}^2 D_m^2 {\cal G}^{(0)} (m+1, t, \a)\bigr|_{\a = 0} \epp
\end{equation}
Here $\g$ parameterizes the anisotropy $\D = \ch (\g)$ and $D_m$
denotes the difference operator, formally defined by $D_m f_m =
f_m - f_{m-1}$. The generating function has a series representation
of the form
\begin{multline} 
     {\cal G}^{(0)} (m, t, \a) =
        \frac{\dh_2^2 (\i \g \a/2)}{\dh_2^2}
	+ \frac{\dh_1^2 (\i \g \a/2)}{\dh_2^2}
	  (-1)^m \\[1ex]
        + \sum_{\substack{\ell \in {\mathbb N}\\k = 0, 1}}
	  \frac{(-1)^{km}}{(\ell !)^2}
	     \int_{{\cal C}_h^\ell} \frac{\rd^\ell u}{(2\p \i)^\ell}
	     \int_{{\cal C}_p^\ell} \frac{\rd^\ell v}{(2\p \i)^\ell} \:
             {\cal A} ({\cal U},{\cal V}|k)
	     \re^{- \i \sum_{\la \in {\cal U} \ominus {\cal V}}
	     (m p(\la) + t \e (\la|h))} \epc
\end{multline}
where $\dh_1$ and $\dh_2$ are Jacobi theta functions, where $p$ and
$\e$ are the dressed momentum and dressed energy of the excitations,
and where the form factor amplitudes are given by
\begin{multline}
     {\cal A} ({\cal U},{\cal V}|k) =
        \frac{\dh_2^2 (\Si_0 ({\cal U},{\cal V}))}{\dh_2^2}
	\biggl[\prod_{\la, \m \in {\cal U}\ominus {\cal V}}
	       \mspace{-18.mu} \Ps (\la - \m)\biggr] \\ \times
        \det_{\ell} \{\Om (u_j,v_k|{\cal U}, {\cal V})\}
	\det_{\ell} \{\overline{\Om} (v_j,u_k|{\cal U}, {\cal V})\} \epp
\end{multline}
The sets ${\cal U} = \{u_j\}_{j=1}^\nex$, ${\cal V} = \{v_j\}_{j=1}^\nex$
gather all integration variables. The contours ${\cal C}_h$ and ${\cal C}_p$
are intervals $[-\p/2,\p/2]$ shifted down or up in the complex plane
(cf.\ (\ref{defchcp})). The notation $\sum_{\la \in {\cal U} \ominus {\cal V}}$
means that summands indexed by elements of ${\cal U}$ come with a
plus sign, while summands indexed by elements of ${\cal V}$ get
a minus sign (for a formal definition see (\ref{defsetsumprod}) below).
The remaining functions $\Si_0$, $\Ps$ and $\Om$ are explicitly defined
in terms of theta, $q$-gamma and $q$-Barnes functions. Their precise
definition will be given in equations (\ref{defsizero}), (\ref{defpsi})
and (\ref{defombarom}) in the main body of the text. $\Si_0$ and the
functions $\Om$ and $\overline \Om$ depend parametrically on the
`twist parameter' $\a$.
\end{theorem*}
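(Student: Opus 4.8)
\emph{Strategy of the proof.} The starting point will be the thermal form factor expansion of the generating function of the longitudinal correlation functions from \cite{GKKKS17}: at finite temperature $T$ and finite Trotter number the generating function is a sum over the excited states of the quantum transfer matrix, each summand being the product of a ratio of quantum transfer matrix eigenvalues — which produces the momentum and energy exponentials — and a form factor amplitude. By \cite{DGK13a} this amplitude factorises into a universal, a factorising and a determinant part; in the Trotter limit the first two are exponentials of one- and two-dimensional integrals, whereas the determinant part is a ratio of two Fredholm determinants in the numerator over two Fredholm determinants in the denominator. The first task is to rewrite the determinant part. Following the factorisation identity of \cite{KiKu19}, I would split each Fredholm determinant in the numerator as a scalar factor times a determinant of a finite matrix whose size equals the number $\ell$ of particle--hole pairs of the excitation, thereby trading the complicated Fredholm determinants for $\ell \times \ell$ determinants. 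The scalar factors generated in this step must be carried along and absorbed into the universal and factorising parts.

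The second and principal step is the zero-temperature limit. Here I would use the complete description of the spectrum and of the Bethe root patterns of the quantum transfer matrix in the antiferromagnetic massive regime obtained in \cite{DGKS15b}: as $T \to 0$ every excited state is labelled by finite sets of particle and hole roots that lie on two simple curves in the upper and lower half-plane (Figure~\ref{fig:regimes}), together with a topological index $k \in {\mathbb Z}/2{\mathbb Z}$. Substituting these asymptotic data into the building blocks, one must then evaluate in closed form: (i) the one- and two-dimensional integrals of the universal and factorising parts, together with the scalar factors from the determinant splitting, which should recombine into the prefactor $\dh_2^2(\Si_0({\cal U},{\cal V}))/\dh_2^2$, the product $\prod_{\la, \m} \Ps(\la - \m)$ and part of the entries of $\Om$ and $\overline \Om$, all expressible through $q$-gamma, $q$-Barnes and Jacobi theta functions; (ii) the finite determinants, whose entries go over to $\Om$ and $\overline \Om$; (iii) the logarithm of the eigenvalue ratio, which tends to $-\i \sum_{\la \in {\cal U} \ominus {\cal V}}(m p(\la) + t \e(\la|h))$ with $p$ and $\e$ the dressed momentum and dressed energy. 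The sum over excited states carrying a fixed number $\ell$ of particle--hole pairs then becomes an $\ell$-fold contour integral over ${\cal C}_h$ and an $\ell$-fold contour integral over ${\cal C}_p$, with symmetry factor $1/(\ell!)^2$, while the two $\ell = 0$ states — the dominant state and its asymptotic partner — produce the two explicit theta-function terms, the sign $(-1)^{km}$ reflecting the topological index $k$.

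Finally, $\bigl\<\s_1^z \s_{m+1}^z(t)\bigr\>$ is recovered from ${\cal G}^{(0)}$ by applying the second difference operator $D_m^2$ and extracting the relevant order in the twist parameter through $\2 \6_{\g \a}^2(\,\cdot\,)|_{\a = 0}$, exactly as in the finite-temperature identity of \cite{GKKKS17} specialised to $T = 0$. I expect the main obstacle to be the simultaneous bookkeeping of all scalar prefactors during the determinant splitting and the zero-temperature limit: the Fredholm determinants in the numerator, those in the denominator and the exponentials of integrals individually diverge or vanish in the relevant limits, and only their combination stays finite; controlling this cancellation while bringing everything into the compact form in terms of $\Si_0$, $\Ps$, $\Om$ and $\overline \Om$ is the technical heart of the argument. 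A secondary difficulty is to check that the curves on which the particle and hole roots accumulate can be deformed onto the shifted intervals ${\cal C}_h$ and ${\cal C}_p$ of (\ref{defchcp}) without crossing singularities of the integrand.
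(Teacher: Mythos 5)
Your outline reproduces the paper's overall architecture (thermal form factor series, a \cite{KiKu19}-type determinant factorisation, the low-$T$ root patterns of \cite{DGKS15b}, conversion of the sum into multiple integrals, and extraction of the two-point function via $\2\6_{\g\a}^2 D_m^2$), but the decisive step is misplaced, and as stated it would not go through. You propose to start from the Trotter-limit representation of \cite{DGK13a,DGKS16b}, in which the determinant part is a ratio of Fredholm determinants, and to split each Fredholm determinant in the numerator into a scalar factor times an $\nex\times\nex$ determinant ``following \cite{KiKu19}''. That is exactly the step that is not available: the kernels of those numerator Fredholm determinants are too involved for any known factorisation, and this impasse is what motivated the present work. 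The paper instead retreats to finite Trotter number, where the amplitude is a ratio of finite $M\times M$ Slavnov-type determinants (\ref{start}), and performs the \cite{KiKu19}-style block manipulations at that finite level (Lemmas~\ref{lem:det1} and~\ref{lem:det2}), splitting each $M\times M$ determinant into a ``large'' and a ``small'' ($\nex\times\nex$ or $2\nex\times2\nex$) factor. The large determinants are then written as the explicitly known elliptic Cauchy determinant $\det_M\{g_0\}$ (\ref{detgzeromain}) — whose explicit inverse (\ref{gzerozerostarinverse}) is also needed to build the entries $\Om$, $\overline\Om$ — times Fredholm determinants $\det(\id-\widehat U)$, $\det(\id-\widehat{U}^*)$ whose kernels (\ref{defukernels}) vanish to all orders in $T$; the Fredholm determinants that survive thus trivialise rather than require factorisation. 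Without this reorganisation the ``scalar factors to be carried along'' in your plan are not defined, and the cancellation you correctly identify as the technical heart cannot be exhibited.

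Two further ingredients are missing from your sketch. First, the mechanism that converts the sum over excited states with $\nex$ particle–hole pairs into $\nex$-fold integrals with weight $1/(\nex!)^2$ is the appearance of the Jacobian $\det_{2\nex}\{{\cal J}\}$ of the higher-level Bethe Ansatz map (\ref{defjtilde}) in the denominator of the amplitude (Lemma~\ref{lem:measurefromdet}): the summands are multiple residues, and the residue-theorem conversion must be carried out at finite Trotter number, before $N\to\infty$, because only then are the poles of $\r_N(\cdot)^{\pm N/2}$ compensated by poles of $\fa^\pm$; taking the Trotter limit first produces essential singularities and the conversion fails (cf.\ (\ref{intrepgfiniten})). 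Your proposal is silent on this order-of-limits issue. Second, the prefactors are controlled not by ad hoc bookkeeping but by the summation lemma, which turns products over all Bethe roots into contour integrals of $\fz$ whose low-$T$ evaluation produces the $\Ps$-products and the factor $\dh_2^2(\Si_0)/\dh_2^2$. A minor attribution point: the identity $\bigl\<\s_1^z\s_{m+1}^z(t)\bigr\> = \2\,\6_{\g\a}^2 D_m^2\,{\cal G}^{(0)}(m+1,t,\a)\bigr|_{\a=0}$ is not taken ready-made from \cite{GKKKS17}; it is derived here from the two-field trick (\ref{azzagen}) in Appendix~\ref{app:genfun}, with $\a=(h-h')/(2\g T)$ becoming an independent variable only in the limit $T\to0$.
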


This work is organised as follows. In Section~2 we recall the
results of \cite{GKKKS17} on the thermal form-factor series
representation of dynamical correlation functions. Section~3
contains a review of the low-$T$ analysis of the spectrum
and Bethe root patterns of the quantum transfer matrix of the
XXZ chain in the antiferromagnetic massive regime obtained in
\cite{DGKS15b}. We also provide appropriate forms of non-linear
integral equations that will be needed in the sequel. Section~4
contains the main technical part of this work, a low-$T$ analysis
of the amplitudes in the thermal form factor series. The
new form of the amplitudes is then used in Section~5 in order
to obtain our novel form factor series expansion of the longitudinal 
two-point function. We also discuss the explicit evaluation of the
remaining finite determinants in terms of basic hypergeometric
series. Section~6 is devoted to the isotropic limit, and Section~7
to some conclusions of this work. A few supplementary technical
details are deferred to two appendices.

\section{The thermal form factor series}
In previous work \cite{GKKKS17} we have developed a thermal form
factor approach to dynamical correlation functions of Yang-Baxter
integrable models. It is based on a `vertex-model representation'
of the canonical density matrix and the time evolution operator
\cite{Sakai07}. In the following we shall apply this approach to
the longitudinal two-point function of the XXZ chain in the
antiferromagnetic massive regime in the low-temperature limit.
\subsection{The dynamical quantum transfer matrix}
The basic object in this approach is a `dynamical quantum
transfer matrix' which can be associated with any fundamental
solution of the Yang-Baxter equation. For the XXZ chain in the
antiferromagnetic massive regime we use the parametrization 
\begin{equation} \label{rmatrix}
     \begin{array}{cc}
     R(\la,\m) = \begin{pmatrix}
                  1 & 0 & 0 & 0 \\
		  0 & b(\la,\m) & c(\la,\m) & 0 \\
		  0 & c(\la,\m) & b(\la,\m) & 0 \\
		  0 & 0 & 0 & 1
		 \end{pmatrix} \epc &
     \begin{array}{c}
     b(\la, \m) = \frac{\sin(\m - \la)}{\sin(\m - \la + \i \g)} \\[2ex]
     c(\la, \m) = \frac{\sin(\i \g)}{\sin(\m - \la + \i \g)}
    \end{array}
    \end{array} \epp
\end{equation}
of the $R$-matrix. Here $\g > 0$, and we set $q = \re^{- \g}$ for
later convenience.

We define a staggered, twisted and inhomogeneous monodromy matrix acting
on `vertical spaces' with `site indices' $1, \dots, 2N + 2$, and on
a `horizontal auxiliary space' indexed $a$,
\begin{multline} \label{stagmon}
     T_a (\la|h) = \re^{h \s_a^z/2T} R_{2N+2, a}^{t_1} (\x_{2N+2},\la + \i \g/2)
                   R_{a, 2N+1} (\la + \i \g/2, \x_{2N+1}) \\[1ex] \times \dots \times
                   R_{2, a}^{t_1} (\x_2, \la) R_{a, 1} (\la + \i \g/2, \x_1) \epp
\end{multline}
The number $N$ will be called the Trotter number. The superscript
$t_1$ denotes transposition with respect to the first space $R$
is acting on, and $\x_1, \dots, \x_{2N+2}$ are $2N+2$ complex
`inhomogeneity parameters'. Following \cite{GKKKS17,Sakai07} we fix
these parameters to the values
\begin{equation} \label{trotterdecomp}
    \x_{2k-1} = - \x_{2k} =
                  \begin{cases}
		     - \frac{\i t}{\ks N} & k = 1, \dots, \frac{N}{2} \\[.5ex]
		     \e & k = \frac{N}{2} + 1 \\[1ex]
		     \frac{\i t + 1/T}{\ks N} & k = \frac{N}{2} + 2, \dots, N+1 \epc
                  \end{cases}
\end{equation}
where
\begin{equation} \label{tsuz}
    1/\ks = - 2 \i J \sh(\g) \epp
\end{equation}
The corresponding transfer matrix
\begin{equation} \label{defqtm}
     t (\la|h) = \tr_a \{T_a (\la|h)\}
\end{equation}
is then what we call a dynamical quantum transfer matrix of the
XXZ chain. As we have demonstrated in \cite{GKKKS17} the dynamical
two-point functions of the XXZ chain can be written as a series
involving the eigenvalues and form factors of the dynamical quantum
transfer matrix. This series will be the starting point of our
actual considerations. In order to define it we have to briefly review
the algebraic Bethe Ansatz construction of the eigenvectors and
eigenvalues of $t(\la|h)$.

\subsection{Eigenvectors and eigenvalues}
The monodromy matrix (\ref{stagmon}) fulfills the two prerequisites
for an algebraic Bethe Ansatz. It satisfies the Yang-Baxter algebra
relations
\begin{equation}
     R_{ab} (\la,\m) T_a (\la|h) T_b (\m|h) = T_b (\m|h) T_a (\la|h) R_{ab} (\la,\m)
\end{equation}
by construction, and it has a pseudo vacuum state
\begin{equation}
     |0\> = \bigl[ \tst{\binom{1}{0} \otimes \binom{0}{1}} \bigr]^{\otimes (N+1)}
\end{equation}
on which its action is triangular. More precisely, setting
\begin{equation}
     T_a (\la|h) = \begin{pmatrix}
                     A(\la|h) & B(\la|h) \\ C(\la|h) & D(\la|h)
                  \end{pmatrix}_a \epc
\end{equation}
it is not difficult to see that
\begin{equation}
     C(\la|h) |0\> = 0 \epc \qd 
     A(\la|h) |0\> = a(\la|h) |0\> \epc \qd 
     D(\la|h) |0\> = d(\la|h) |0\> \epc
\end{equation}
where $a(\la|h)$ and $d(\la|h)$ are the pseudo vacuum expectation values
\begin{align}
     a(\la|h) & = \re^\frac{h}{2T}
	       \frac{\sin\bigl(\la + \frac{\i \g}{2} + \e \bigr)}
	                   {\sin\bigl(\la + \frac{3 \i \g}{2} + \e \bigr)}
               \Biggl[
	       \frac{\sin\bigl(\la + \frac{\i \g}{2} - \frac{\i t }{\ks N} \bigr)}
	                   {\sin\bigl(\la + \frac{3 \i \g}{2} - \frac{\i t }{\ks N} \bigr)}
	       \frac{\sin\bigl(\la + \frac{\i \g}{2} + \frac{\i t + 1/T}{\ks N} \bigr)}
	                   {\sin\bigl(\la + \frac{3 \i \g}{2} + \frac{\i t + 1/T}{\ks N} \bigr)}
			    \Biggr]^\frac{N}{2}, \notag \\[1ex]
     d(\la|h) & = \re^{- \frac{h}{2T}}
	       \frac{\sin\bigl(\la + \frac{\i \g}{2} - \e \bigr)}
	                   {\sin\bigl(\la - \frac{\i \g}{2} - \e \bigr)}
               \Biggl[
	       \frac{\sin\bigl(\la + \frac{\i \g}{2} + \frac{\i t }{\ks N} \bigr)}
	                   {\sin\bigl(\la - \frac{\i \g}{2} + \frac{\i t }{\ks N} \bigr)}
	       \frac{\sin\bigl(\la + \frac{\i \g}{2} - \frac{\i t + 1/T}{\ks N} \bigr)}
	                   {\sin\bigl(\la - \frac{\i \g}{2} - \frac{\i t + 1/T}{\ks N} \bigr)}
			    \Biggr]^\frac{N}{2}.
\end{align}

The eigenvalues and eigenstates of the quantum transfer matrix can
be parameterized by sets of Bethe roots. These are defined with the
aid of an auxiliary function
\begin{equation} \label{baaux}
     \fa \bigl(\la \big| \{\la_k\}_{k=1}^M , h\bigr) =
	\frac{d(\la|h)}{a(\la|h)}
	\prod_{k=1}^M \frac{\sin(\la - \la_k - \i \g)}{\sin(\la - \la_k + \i \g)} 
\end{equation}
as the solutions of the `Bethe Ansatz equations'
\begin{equation} \label{baes}
     \fa \bigl(\la_j  \big| \{\la_k  \}_{k=1}^M, h\bigr) = - 1 \epc
        \qd j = 1, \dots, M \epp
\end{equation}
Note that the Bethe roots depend parametrically on $t$, $T$, $\g$ and $h$.
In order to simplify our notation we number the solutions consecutively
as $\{\la_j^{(n)}  \}_{j=1}^{M_n}$ and denote the corresponding auxiliary
functions by
\begin{equation}
     \fa_n (\la|h) = \fa \bigl(\la \big| \{\la_k^{(n)}  \}_{k=1}^{M_n}, h\bigr) \epp
\end{equation}

Left and right eigenvectors of the quantum transfer matrix can be
constructed from left and right `off-shell Bethe vectors' which, for
a given subset $\{\n\} = \{\n_j\}_{j=1}^M \subset {\mathbb C}$, are
defined as
\begin{equation} \label{offshellbv}
     \bigl\<\{\n\}, h \bigr| = \<0| C(\n_1 |h) \dots C(\n_M |h) \epc \qd
     \bigl|\{\n\}, h \bigr\> = B(\n_M |h) \dots B(\n_1 |h) |0\> \epp
\end{equation}
If $\{\la_j^{(n)} \}_{j=1}^{M_n}$ is a solution of the Bethe Ansatz equations
(\ref{baes}), then
\begin{equation}
    \<n, h| = \bigl\<\{\la_j^{(n)}  \}_{j=1}^{M_n}, h \bigr| \epc \qd 
    |n, h\> = \bigl|\{\la_j^{(n)}  \}_{j=1}^{M_n}, h \bigr\>
\end{equation}
are left and right eigenvectors of the quantum transfer matrix
satisfying
\begin{equation}
     \<n, h| t(\la|h) = \<n, h| \La_n (\la|h) \epc \qd
     t(\la|h) |n, h\> = \La_n (\la|h) |n, h\> \epc
\end{equation}
where
\begin{equation}
     \La_n (\la | h) =
        \bigl(1 + \fa_n (\la|h)\bigr) a(\la|h)
        \prod_{j=1}^{M_n} \frac{\sin\bigl(\la - \la_j^{(n)} + \i \g\bigr)}
	                       {\sin\bigl(\la - \la_j^{(n)}\bigr)}
\end{equation}
is the eigenvalue function.

There is a unique so-called dominant eigenvalue $\La_0$, say, which
is real and maximal in the sense that $\La_0 (- \i \g/2|h) >
|\La_n (- \i \g/2|h)|$ for all $n > 0$. We know that $M_0 = N + 1$.
To further simplify the notation we drop the index `$0$' for the
dominant state, such that $\La = \La_0$, $M = M_0$, $|h\> = |0, h\>$,
$\fa (\cdot|h) = \fa_0(\cdot|h)$, and $\{\la_j\}_{j=1}^M =
\{\la_j^{(0)} \}_{j=1}^{M_0}$.

\subsection{The thermal form factor series}
Let
\begin{equation}
     \Si^z (\la|h) = \tr_a \{\s_a^z T_a (\la|h)\} = 2T \6_h t(\la|h)
\end{equation}
and
\begin{equation}
     \r_n(\la|h, h') = \La_n (\la|h')/\La (\la|h) \epp
\end{equation}
Here and hereafter we keep the magnetic fields $h$ pertaining to the
dominant state and $h'$ pertaining to the $n$th excited state as independent
parameters. This is a common trick that will allow us to introduce a
generating function of the longitudinal two-point function. In the
end of our calculation in Section~\ref{sec:ffseries} we shall see
that this generating function depends only on a `twist parameter'
\begin{equation} \label{defalpha}
     \a = (h - h')/(2 \g T)
\end{equation}
which is introduced here for later convenience. It becomes an
independent variable in the zero-temperature limit.

Then, according to Theorem~1 of \cite{GKKKS17},
\begin{multline} \label{deflongffseries}
     \bigl\<\s_1^z \s_{m+1}^z (t)\bigr\>_T =
        \lim_{N \rightarrow \infty} \lim_{\e \rightarrow 0}
	\sum_n \frac{\<h|\Si^z (-\i \g/2|h)|n, h\>\<n, h|\Si^z (-\i \g/2|h)|h\>}
	            {\La_n (- \i \g/2|h) \<h|h\> \La (- \i \g/2|h) \<n, h|n, h\>} \\[-1ex]
        \times \r_n (- \i \g/2|h, h)^m
               \biggl(\frac{\r_n (- \i \g/2 + \i t/\ks N|h,h)}
	                   {\r_n (- \i \g/2 - \i t/\ks N|h,h)}\biggr)^\frac{N}{2} \epp
\end{multline}
Due to pseudo spin conservation the sum over the excited states can 
be restricted to all $n$ with $M_n = M$. The limit $N \rightarrow
\infty$ is called the Trotter limit. It is only in this limit that
the original quantum problem is recovered. The right hand side of
(\ref{deflongffseries}) is a thermal form factor series representation
of the correlation function. Rather than working directly with this
series we shall introduce a generating function, which will simplify
the subsequent calculations to a certain extent.

We call the functions
\begin{equation} \label{defazz}
     A_n^{zz} (h) = \frac{\<h|\Si^z (-\i \g/2|h)|n, h\>\<n, h|\Si^z (-\i \g/2|h)|h\>}
	             {\La_n (- \i \g/2|h) \<h|h\> \La (- \i \g/2|h) \<n, h|n, h\>}
\end{equation}
the form factor amplitudes of the longitudinal correlation function.
The right hand side of (\ref{defazz}) can be recast as
\begin{equation} \label{azzagen}
     A_n^{zz} (h) = 2 T^2 \6_{h'}^2 \,
                \frac{\<h|n, h'\>\<n, h'|h\>}{\<h|h\>\<n, h'|n, h'\>}
                \biggl[
		   \frac{\La_n (- \i \g/2|h')}{\La (- \i \g/2|h)} - 2 +
		   \frac{\La (- \i \g/2|h)}{\La_n (- \i \g/2|h')} \biggr]_{h' = h}
\end{equation}
(see Appendix~\ref{app:genfun} for a derivation). This form of
the amplitudes allows us to rewrite the form factor series
(\ref{deflongffseries}) in terms of a generating function.
We introduce the amplitudes of its form factor series,
\begin{equation} \label{ampgenfun}
     A_n (h,h') = \frac{\<h|n,h'\> \<n,h'|h\>}{\<h|h\> \<n,h'|n,h'\>} \epc
\end{equation}
and the formal difference operator $D_m$ acting as $D_m f_m
= f_m - f_{m-1}$ on sequences. Then
\begin{equation} \label{szszgenerated}
     \bigl\<\s_1^z \s_{m+1}^z (t)\bigr\>_T =
        2 T^2 \6_{h'}^2 D_m^2 {\cal G} (m+1, t, T, h, h')\bigr|_{h' = h} \epc
\end{equation}
where
\begin{multline} \label{defgenfun}
     {\cal G} (m, t, T, h, h') =
        \lim_{N \rightarrow \infty} \lim_{\e \rightarrow 0}
	\sum_n A_n (h, h') \r_n (- \i \g/2|h, h')^m \\[-2ex] \times
               \biggl(\frac{\r_n (- \i \g/2 + \i t/\ks N|h,h')}
	                   {\r_n (- \i \g/2 - \i t/\ks N|h,h')}\biggr)^\frac{N}{2} \epp
\end{multline}

\subsection{Amplitudes at finite Trotter number}
The two functions
\begin{align}
     \re (x) & = \ctg(x) - \ctg(x - \i \g) \epc \\[1ex] \label{defk}
     K(x) & = \ctg(x - \i \g) - \ctg(x + \i \g) = - \re (x) - \re (-x)
\end{align}
will be used throughout these notes. We will call them the bare
energy and the kernel function.

The main objective of this work is to obtain a manageable expression
for the Trotter limit, $N \rightarrow \infty$, of the amplitudes
(\ref{ampgenfun}) appearing in the form factor series of the
generating function (\ref{defgenfun}). To begin with, we will express
the amplitudes for a given excited state $|n, h\>$ at finite Trotter
number $N$ in terms of the two sets of Bethe roots of the excited
state and the dominant state. For the XXZ chain this is possible
due to the well-known scalar product formula of Nikita Slavnov
\cite{Slavnov89} (for a recent constructive proof see \cite{BeSl19}).

\begin{lemma}
{\bf Slavnov formula.}\\ Let $\<n, h|$ be a left eigenstate of the
dynamical quantum transfer matrix (\ref{defqtm}) with $M_n$
Bethe rapidities and $|\{\n\}, h\>$ be an off-shell Bethe vector
(\ref{offshellbv}) with $\card \{\n\} = M_n$. The scalar product of
these two states can be represented as 
\begin{equation} \label{slavnov}
     \bigl\<n, h\big|\{\n\},h \bigr\> = 
        \Biggl[ \prod_{j=1}^{M_n} d\bigl(\la_j^{(n)}\big|h\bigr) \La_n (\n_j |h) \Biggr]
	\frac{\det_{M_n} \Bigl\{\frac{\re(\la_j^{(n)} - \n_k)}{1 + \fa_n (\n_k|h)}
	                    - \frac{\re(\n_k - \la_j^{(n)})}{1 + 1/\fa_n (\n_k|h)}\Bigr\}}
             {\det_{M_n} \Bigl\{\frac{1}{\sin(\la_j^{(n)} - \n_k)}\Bigr\}} \epp
\end{equation}
\end{lemma}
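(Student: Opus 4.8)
The plan is to apply Slavnov's scalar product formula in its general algebraic-Bethe-Ansatz form to the monodromy matrix $T_a(\la|h)$ of \eqref{stagmon}, and then specialise the abstract $R$-matrix data to our parametrization \eqref{rmatrix}. First I would recall that Slavnov's theorem computes the scalar product $\<\{\la_j\}| \{\n_k\}\>$ of an on-shell Bethe vector (built from Bethe roots $\{\la_j\}$ solving the Bethe Ansatz equations) with an off-shell Bethe vector (built from arbitrary $\{\n_k\}$ of the same cardinality), in terms of a single determinant divided by the Cauchy determinant $\det\{1/\sin(\la_j-\n_k)\}$ (or its additive analogue $\det\{1/(\la_j-\n_k)\}$ in the rational case). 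The crucial structural input is only the Yang--Baxter algebra and the action of $A,D$ on the pseudo vacuum, both of which we have verified: the Yang--Baxter relation holds by construction, and the eigenvalues $a(\la|h)$, $d(\la|h)$ on $|0\>$ are as displayed. Since $\<n,h|$ is a \emph{left} eigenstate, I would use the version of Slavnov's formula appropriate to a left on-shell / right off-shell pairing, which is obtained from the standard statement by the transposition symmetry of the monodromy matrix entries (or, equivalently, cited directly from \cite{Slavnov89,BeSl19}).

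The key steps, in order, are: (i) write the generic Slavnov determinant with entries built from the function $t(\la,\m)=c(\la,\m)/b(\la,\m)$-type objects and the quantum determinant ratios $d/a$; (ii) substitute our specific $b,c$ from \eqref{rmatrix}, which turns the generic building block into $\ctg(\la-\m) - \ctg(\la-\m+\i\g)$ up to normalisation — this is exactly the bare energy function $\re$ introduced in \eqref{defk} (note $\re(x)=\ctg(x)-\ctg(x-\i\g)$, so the matrix entries involve $\re(\la_j^{(n)}-\n_k)$ and $\re(\n_k-\la_j^{(n)})$); (iii) express the coefficients multiplying the two terms inside the determinant in terms of $\fa_n(\n_k|h)$, using the definition \eqref{baaux} of the auxiliary function and the Bethe equations \eqref{baes} satisfied by $\{\la_j^{(n)}\}$ — this produces the denominators $1+\fa_n(\n_k|h)$ and $1+1/\fa_n(\n_k|h)$; (iv) collect the prefactor: the product $\prod_j d(\la_j^{(n)}|h)$ comes from the quantum determinant / normalisation of the $D$-operators, while $\prod_j \La_n(\n_j|h)$ arises because $\<n,h|$ is an eigenstate and each $B(\n_j|h)$ acting to the left against the (transposed) on-shell vector can be reorganised using the eigenvalue relation; (v) identify the Cauchy determinant in the denominator as $\det_{M_n}\{1/\sin(\la_j^{(n)}-\n_k)\}$, which is the trigonometric Cauchy determinant natural for the $R$-matrix \eqref{rmatrix}.

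The main obstacle is purely bookkeeping rather than conceptual: matching the normalisation conventions of the cited general Slavnov formula to the specific conventions of \eqref{stagmon}–\eqref{baaux}, in particular the placement of the shifts $\la+\i\g/2$ in the monodromy matrix, the twist factor $\re^{h\s_a^z/2T}$, and the staggering, all of which are already absorbed into $a(\la|h)$, $d(\la|h)$ and hence into $\fa_n$ and $\La_n$. One must check that with these absorbed, no stray factors of $\sin(\i\g)$ or sign factors survive. Since $\fa_n$, $\La_n$, $d(\cdot|h)$ and $\re$ have all been defined precisely above so as to make this work out cleanly, the verification reduces to a direct substitution; I would present it as a specialisation of \cite{Slavnov89} and refer to \cite{BeSl19} for the constructive derivation, recording only the substitution $b,c \mapsto \re$ and the use of \eqref{baes} as the steps that need to be spelled out.
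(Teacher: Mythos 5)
Your proposal is correct and follows essentially the same route as the paper: the lemma is not proved independently there either, but is quoted as the specialisation of Slavnov's general scalar-product formula \cite{Slavnov89,BeSl19} to the monodromy matrix (\ref{stagmon}), with all model-specific data (staggering, twist, shifts) absorbed into $a$, $d$, $\fa_n$ and $\La_n$ exactly as you describe. The only work is the convention bookkeeping you identify in steps (i)--(v), which is what the paper implicitly relies on as well.
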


\begin{corollary}
Let us now fix an excited state $|n, h'\>$ with $M_n = M$ and
denote the corresponding set of Bethe roots $\{\m_j\}_{j=1}^M
= \{\la_j^{(n)} \}_{j=1}^{M_n}$ for brevity. Then
\begin{multline} \label{start}
     A_n (h,h') = \Biggl[\prod_{j=1}^M \frac{\r_n (\la_j|h, h')}{\r_n (\m_j|h, h')}\Biggr]
                   \frac{\det_M \Bigl\{\frac{\re(\la_j - \m_k)}{1 + \fa (\m_k|h)}
		                       - \frac{\re(\m_k - \la_j)}{1 + 1/\fa (\m_k|h)}\Bigr\}}
                        {\det_M \Bigl\{\de^j_k + \frac{K (\la_j - \la_k)}{\fa' (\la_k|h)}\Bigr\}
			 \det_M \Bigl\{\frac{1}{\sin(\la_j - \m_k)}\Bigr\}} \\[1ex] \times
                   \frac{\det_M \Bigl\{\frac{\re(\m_j - \la_k)}{1 + \fa_n (\la_k|h')}
		                       - \frac{\re(\la_k - \m_j)}{1 + 1/\fa_n (\la_k|h')}\Bigr\}}
                        {\det_M \Bigl\{\de^j_k + \frac{K (\m_j - \m_k)}{\fa_n' (\m_k|h')}\Bigr\}
			 \det_M \Bigl\{\frac{1}{\sin(\m_j - \la_k)}\Bigr\}} \epp
\end{multline}
\end{corollary}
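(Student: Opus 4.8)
The plan is to derive (\ref{start}) by inserting the Slavnov formula of the Lemma into each of the four scalar products that make up $A_n(h,h')$ in (\ref{ampgenfun}) and then carefully tracking the scalar prefactors.

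The first point to settle is that the Lemma refers to a single magnetic field, whereas $A_n(h,h')$ mixes the two fields $h$ and $h'$. The only place where $h$ enters the monodromy matrix (\ref{stagmon}) is the abelian twist $\re^{h \s_a^z/2T}$ acting on the auxiliary space; hence the entries of $T_a(\la|h)$ obey $B(\la|h') = \re^{(h'-h)/2T} B(\la|h)$ and $C(\la|h') = \re^{-(h'-h)/2T} C(\la|h)$. Therefore $|n,h'\> = \re^{M(h'-h)/2T}|\{\m\},h\>$ and $\<n,h'| = \re^{-M(h'-h)/2T}\<\{\m\},h|$ with $\{\m\} = \{\la_j^{(n)}\}_{j=1}^M$, the vectors on the right being the off-shell Bethe vectors (\ref{offshellbv}) built with field $h$. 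The four factors $\re^{\pm M(h'-h)/2T}$ enter $A_n(h,h')$ with total weight zero, so they cancel, and one may work with field-$h$ off-shell vectors throughout while keeping $\fa_n(\cdot|h')$, $\La_n(\cdot|h')$ as the data of the excited state.

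Next I would evaluate the four scalar products. For $\<h|n,h'\>$ the Lemma applies with $\<h| = \<0,h|$ as the on-shell eigenstate (roots $\{\la_j\}$, auxiliary function $\fa(\cdot|h)$, eigenvalue $\La(\cdot|h)$) and $\{\m_j\}$ as the off-shell rapidities; this yields the first numerator determinant of (\ref{start}) over $\det_M\{1/\sin(\la_j-\m_k)\}$, times $\prod_j d(\la_j|h)\La(\m_j|h)$. For $\<n,h'|h\>$ the Lemma applies with $\<n,h'|$ as the on-shell eigenstate (roots $\{\m_j\}$, auxiliary function $\fa_n(\cdot|h')$, eigenvalue $\La_n(\cdot|h')$) and $\{\la_j\}$ as the off-shell rapidities, giving the second numerator determinant over $\det_M\{1/\sin(\m_j-\la_k)\}$, times $\prod_j d(\m_j|h')\La_n(\la_j|h')$. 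The two norms $\<h|h\>$ and $\<n,h'|n,h'\>$ are the coinciding-argument limits of the Lemma, i.e.\ the Gaudin--Korepin norm formula, producing $\det_M\{\de^j_k + K(\la_j-\la_k)/\fa'(\la_k|h)\}$ and $\det_M\{\de^j_k + K(\m_j-\m_k)/\fa_n'(\m_k|h')\}$, respectively, times $\prod_j d(\la_j|h)\La(\la_j|h)$ and $\prod_j d(\m_j|h')\La_n(\m_j|h')$.

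Finally, forming $A_n(h,h') = \bigl(\<h|n,h'\>/\<h|h\>\bigr)\bigl(\<n,h'|h\>/\<n,h'|n,h'\>\bigr)$, the factor $\prod_j d(\la_j|h)$ cancels between $\<h|n,h'\>$ and $\<h|h\>$, and $\prod_j d(\m_j|h')$ cancels between $\<n,h'|h\>$ and $\<n,h'|n,h'\>$; the remaining eigenvalue factors combine into $\prod_j\bigl(\La_n(\la_j|h')/\La(\la_j|h)\bigr)\bigl(\La(\m_j|h)/\La_n(\m_j|h')\bigr) = \prod_j \r_n(\la_j|h,h')/\r_n(\m_j|h,h')$, and the four surviving determinants are exactly those in (\ref{start}). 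The delicate part, and the main obstacle, is this very last step: verifying that \emph{all} normalization constants and signs match, in particular that the extra $\fa'$-type factors and Cauchy-like products generated in the coinciding-argument limit of the Gaudin--Korepin formula cancel precisely against those implicit in the Slavnov evaluations of $\<h|n,h'\>$ and $\<n,h'|h\>$, so that the clean prefactor $\prod_j\r_n(\la_j|h,h')/\r_n(\m_j|h,h')$ is left with no stray factor. This is a routine but careful computation that relies on the Bethe equations (\ref{baes}) and on the explicit expressions for $a(\la|h)$, $d(\la|h)$ and $\La_n(\la|h)$.
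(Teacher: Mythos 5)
Your proposal is correct and takes essentially the same route as the paper: your observation that the field enters $B$ and $C$ only through the scalar twist $\re^{\pm(h'-h)/2T}$ is exactly what underlies the paper's identity (\ref{scaproams}), after which both arguments evaluate all four scalar products via the Slavnov formula and reduce the two norms to the Gaudin-type determinants, with the prefactors combining into the ratio $\prod_j \r_n(\la_j|h,h')/\r_n(\m_j|h,h')$. The only difference is presentational: the paper keeps generic off-shell parameters $\n_k,\tilde\n_k$ and carries out the coinciding-argument limit explicitly (column manipulation, Bethe equations (\ref{baes}), l'Hospital), whereas you invoke the Gaudin--Korepin norm formula and defer that same routine limit, whose outcome is indeed the clean factor $\bigl[\prod_j d\,\La\bigr]\det_M\{\de^j_k + K/\fa'\}$ you assume.
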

\begin{proof}
For the proof note that
\begin{equation} \label{scaproams}
     \frac{\<h|\{\n\}, h'\>\<n, h'|\{\tilde \n\}, h\>}
          {\<h|\{\tilde \n\}, h\>\<n, h'|\{\n\}, h'\>} =
     \frac{\<h|\{\n\}, h\>\<n, h'|\{\tilde \n\}, h'\>}
          {\<h|\{\tilde \n\}, h\>\<n, h'|\{\n\}, h'\>}
\end{equation}
for any two off-shell Bethe states $|\{\n\}, h'\>$, $|\{\tilde \n\}, h\>$
with $\card\{\n\} = \card\{\tilde \n\} = M$. Then use (\ref{slavnov})
in (\ref{scaproams}) and send $\n_k \rightarrow \m_k$, $\tilde \n_k
\rightarrow \la_k$. In order to obtain the limits in the denominator
write
\begin{multline}
	\frac{\det_M \Bigl\{\frac{\re(\m_j - \n_k)}{1 + \fa_n (\n_k|h)}
	                    - \frac{\re(\n_k - \m_j)}{1 + 1/\fa_n (\n_k|h)}\Bigr\}}
             {\det_M \Bigl\{\frac{1}{\sin(\m_j - \n_k)}\Bigr\}} = \\
	\frac{\det_M \bigl\{\re(\n_k - \m_j) (1 + \fa_n (\n_k|h))
	                    - \re(\m_j - \n_k) - \re(\n_k - \m_j)\bigr\}}
             {\det_M \Bigl\{\frac{1 + \fa_n (\n_k|h)}{\sin(\n_k - \m_j)}\Bigr\}}
\end{multline}
and use (\ref{defk}), the Bethe Ansatz equations (\ref{baes}) and
l'Hospital's rule, and similarly for the other determinant in the
denominator.
\end{proof}

\section{\boldmath Bethe root patterns and eigenvalues in the low-$T$ limit}
In \cite{DGKS15b} we considered the low-$T$ spectrum of the quantum
transfer matrix of the XXZ chain in the antiferromagnetic massive regime.
In this regime a characteristic feature of the sets of Bethe roots
for fixed $h > 0$ and $T \rightarrow 0+$ is the absence of strings,
regular patterns in the complex plane which would appear at $h = 0$
or for the usual transfer matrix. All excitations of the quantum
transfer matrix in the low-$T$ limit can rather be interpreted as
particle-hole excitations.

In \cite{Sakai07,GKKKS17} it was observed that the spectra and Bethe
root patterns of the quantum transfer matrix and of the dynamical
quantum transfer matrix coincide in the Trotter limit, $N \rightarrow \infty$.
This follows from the fact that the driving terms in the associated
non-linear integral equations have the same Trotter limit which is
independent of $t$. For this reason the results of \cite{DGKS15b}
can be used in the dynamical case as well. We shall recall them
below. They are based on a thorough analysis of the solutions of the
non-linear integral equations in the entire complex plane. As not
only their solutions but the non-linear integral equations themselves 
will be needed in the sequel, we include a short a posteriori derivation
at the end of this section.

\subsection{A reminder of the set of functions that determine the
low-temperature spectrum of correlation lengths and the universal part}
\label{sec:lowtfun}
The basic functions that eventually appeared in our previous low-$T$
analysis of the correlation lengths and form factors of the XXZ chain in the
antiferromagnetic massive regime were $q$-gamma and $q$-Barnes functions.
They may be expressed in terms of (infinite) $q$-multi factorials which,
for $|q_j| < 1$ and $a \in {\mathbb C}$, are defined as
\begin{equation}
     (a;q_1, \dots, q_p) =
        \prod_{n_1, \dots, n_p = 0}^\infty (1 - a q_1^{n_1} \dots q_p^{n_p}) \epp
\end{equation}
Based on this definition we introduce the $q$-gamma and $q$-Barnes
functions $\G_q$ and $G_q$,
\begin{subequations}
\label{defgammaqgq}
\begin{align}
     \G_q (x) & = (1 - q)^{1 - x} \frac{(q;q)}{(q^x;q)} \epc \\[1ex]
     G_q (x) & = (1 - q)^{- \2 (1 - x)(2 - x)} (q;q)^{x - 1}
               \frac{(q^x;q,q)}{(q;q,q)} \epp
\end{align}
\end{subequations}
They satisfy the normalisation conditions
\begin{equation}
     \G_q (1) = G_q (1) = 1
\end{equation}
and the basic functional equations
\begin{equation} \label{gammabarnesfun}
     [x]_q \G_q (x) = \G_q (x + 1) \epc \qd \G_q (x) G_q (x) = G_q (x + 1) \epc
\end{equation}
where
\begin{equation} \label{gaussnumber}
     [x]_q = \frac{1 - q^x}{1 - q} \epp
\end{equation}
is a familiar form of the $q$-number.

A closely related family of functions are the Jacobi theta functions
$\dh_j (x) = \dh_j (x|q)$, $j = 1, \dots, 4$, where
\begin{equation} \label{proddh4}
     \dh_4 (x|q) = (q^2;q^2) (e^{- 2\i x} q; q^2) (\re^{2 \i x} q; q^2)
\end{equation}
and
\begin{align} \label{defotherdhs}
     \dh_1 (x) & = - \i q^\4 \re^{\i x} \dh_4 (x + \i \g/2) \epc \qd
     \dh_2 (x) = q^\4 \re^{\i x} \dh_4 (x + \i \g/2 + \p/2) \epc
        \notag \\[1ex]
     \dh_3 (x) & = \dh_4 (x + \p/2) \epp
\end{align}
These functions are related to the $q$-gamma functions through the
second functional relation of the latter, which can, for instance,
be written as
\begin{equation} \label{scdgammafuneq}
     \frac{\dh_4 (x)}{\dh_4} =
        \frac{\G_{q^2}^2 \bigl(\tst{\2}\bigr)}
	     {\G_{q^2} \bigl(\tst{\2 - \frac{\i x}{\g}}\bigr)
	      \G_{q^2} \bigl(\tst{\2 + \frac{\i x}{\g}}\bigr)} \epp
\end{equation}
Here we have employed a common convention \cite{WhWa63} for
theta constants,
\begin{equation}
     \dh_1' = \dh_1' (0) \epc \qd
     \dh_j = \dh_j (0) \epc \qd j = 2, 3, 4 \epp
\end{equation}

Using the above set of functions we can proceed with those functions
that determine the physical properties of the XXZ chain in the
low-$T$ limit. These are the dressed momentum $p$, the dressed energy
$\e$ and the dressed phase $\ph$. Dressed momentum and dressed
energy are defined as
\begin{align} \label{ptheta4}
      p(\la) & = \frac{\p}{2} + \la
             - \i \ln \biggl(
	       \frac{\dh_4 (\la + \i \g/2| q^2)}{\dh_4 (\la - \i \g/2| q^2)}
	       \biggr) \epc \\[1ex] \label{dressede}
     \e(\la|h) & = \frac h 2 - 2 J \sh (\g) \dh_3 \dh_4
                           \frac{\dh_3 (\la)}{\dh_4 (\la)} \epp
\end{align}
The dressed phase is the function
\begin{equation} \label{dressedphase}
     \ph(\la_1, \la_2) = \i \Bigl( \frac \p 2 + \la_{12} \Bigr)
        + \ln \Biggl\{ \frac{\G_{q^4} \bigl(1 + \frac{\i \la_{12}}{2\g}\bigr)
	                     \G_{q^4} \bigl(\2 - \frac{\i \la_{12}}{2\g}\bigr)}
		            {\G_{q^4} \bigl(1 - \frac{\i \la_{12}}{2\g}\bigr)
			     \G_{q^4} \bigl(\2 + \frac{\i \la_{12}}{2\g}\bigr)}
			     \Biggr\} \epc
\end{equation}
where $\la_{12} = \la_1 - \la_2$, $|\Im \la_2| < \g$.

In order to define the shift function, which fixes the Bethe root
patterns in the low-$T$ limit, we first introduce a convention for
sums and products that will be used throughout this work. Let 
${\cal X}, {\cal Y} \subset {\mathbb C}$ be two discrete sets.
We shall write
\begin{equation} \label{defsetsumprod}
     \sum_{\la \in {\cal X} \ominus {\cal Y}} f(\la)
        = \sum_{\la \in {\cal X}} f(\la)
          - \sum_{\la \in {\cal Y}} f(\la) \epc \qd
     \prod_{\la \in {\cal X} \ominus {\cal Y}} f(\la)
        = \frac{\prod_{\la \in {\cal X}} f(\la)}{\prod_{\la \in {\cal Y}} f(\la)} \epp
\end{equation}
Then the shift function $F$ is defined as
\begin{equation}
    F(\la|{\cal X}, {\cal Y}) = \frac{1}{2 \p \i}
        \sum_{\m \in {\cal Y} \ominus {\cal X}} \ph(\la, \m) \epp
\end{equation}

Note that the dressed energy depends parametrically on the magnetic field
$h$. The condition $\e(\frac{\p}{2}|h_\ell) = 0$ determines the lower
critical field
\begin{equation} \label{hlow}
     h_\ell = 4 J \sh(\g) \dh_4^2 \epc
\end{equation}
i.e.\ the boundary between the antiferromagnetic massive and massless
regimes (see Figure~\ref{fig:phasediagram}). The curves $\Re \e (\la|h) 
= 0$ are shown in Figure~\ref{fig:regimes}. These are the curves on which
the Bethe roots `condense' at low $T$.
\begin{figure}
\begin{center}
\includegraphics[width=.70\textwidth]{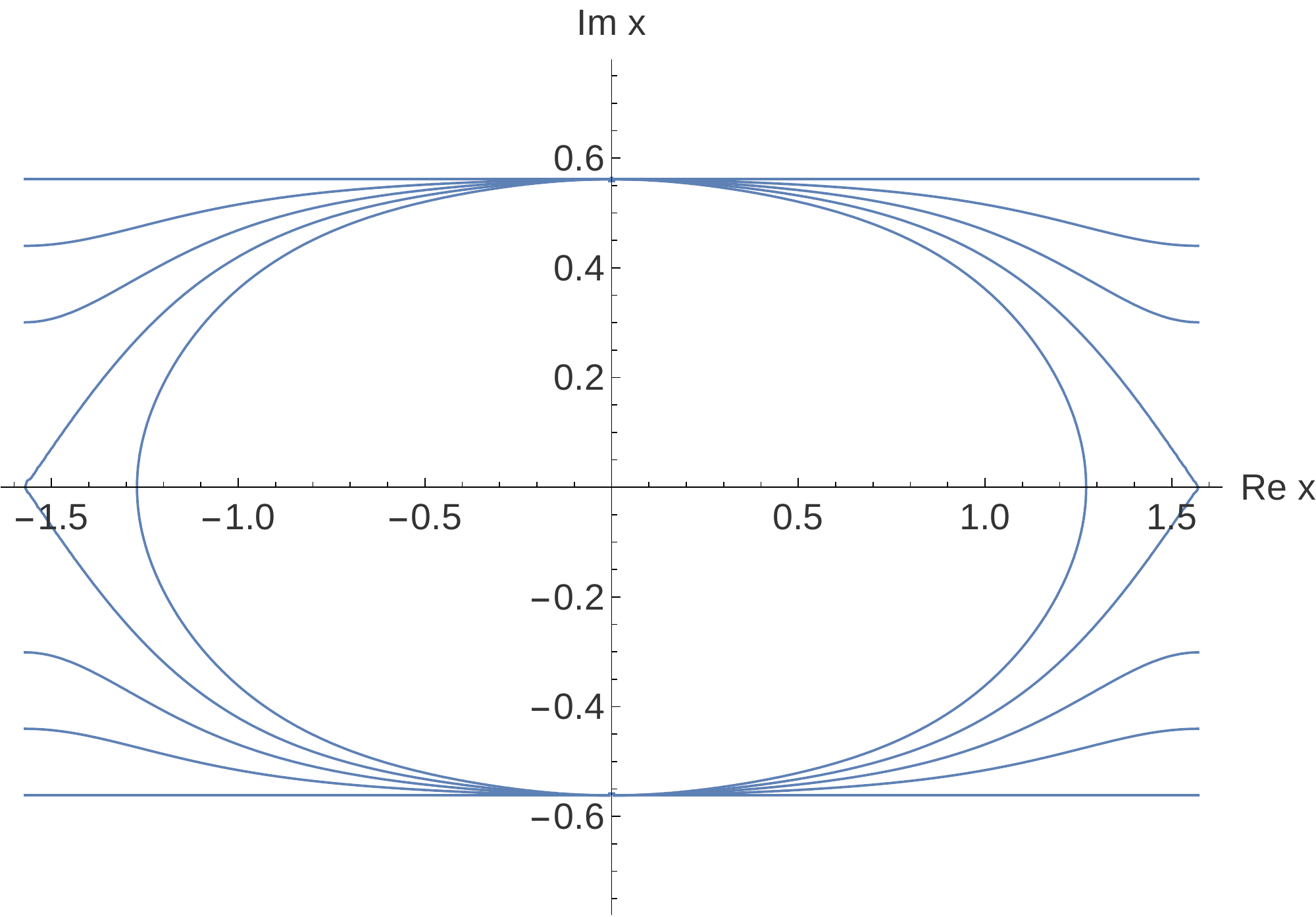}
\end{center}
\caption{\label{fig:regimes} The curves $\Re \e (\la|h) = 0$ for various
values of the magnetic field. Here $\D = 1.7$, $h_\ell/J = 0.76$.
The values of the magnetic field decrease proceeding from the inner
to the outer curve: $h/h_\ell = 1.34, 1, 2/3, 1/3, 0$. In the
critical regime $h_\ell < h < h_u = 4J (1 + \D)$ the curves are
closed. In this regime $\e$ is not given by (\ref{dressede}), but
is rather defined as a solution of a linear integral equation.
At the lower critical field $h = h_\ell$ the closed curves develop
two cusps, and a gap opens for $0 < h < h_\ell$, which is the
parameter regime considered in this work.}
\end{figure}

\begin{conjecture}
\label{con:nostrings}
{\bf \boldmath Low-$T$ Bethe root patterns at $0 < h < h_\ell$ \cite{DGKS15b}.}
\begin{enumerate}
\item
All excitations of the quantum transfer matrix at low $T$ and
large enough Trotter number $N$ can be parameterized by an even
number of complex parameters located inside the strip $|\Im \la| < \g/2$
and by an index $\fe \in {\mathbb Z}\slash 2 {\mathbb Z} $.
Referring to \cite{DGKS15b} we call the parameters in the upper
half plane particles, the parameters in the lower half plane holes.
We shall denote the set of particles by $\cal Y$, the set of
holes by $\cal X$.
\item
For excited states with $M_n = M = N + 1$ we have $\card {\cal X} =
\card {\cal Y} = \nex$.
\item
Up to corrections of the order $T^\infty$ the particles and holes
are determined by the low-$T$ higher-level Bethe Ansatz equations
\begin{subequations}
\label{hlbaes}
\begin{align}
     \e(y|h) & = 2 \p \i T \bigl(\ell_y + \fe/2 + F(y|{\cal X}, {\cal Y})\bigr) \epc
			     && \forall\ y \in {\cal Y} \epc \\[1ex]
     \e(x|h) & = 2 \p \i T \bigl(m_x + \fe/2 + F(x|{\cal X}, {\cal Y})\bigr) \epc
			     && \forall\ x \in {\cal X} \epc
\end{align}
\end{subequations}
where $\ell_y, m_x \in {\mathbb Z}$ and where the $\ell_y$ and the
$m_x$ are mutually distinct.
\item
For solution ${\cal X}_n$, ${\cal Y}_n$ of (\ref{hlbaes}) with index
$\fe = \fe_n$ we denote $F (\la|{\cal X}_n, {\cal Y}_n)  = F_n (\la)$. The
auxiliary function corresponding to this solution is
\begin{equation} \label{anlowt}
     \fa_n (\la|h) = (-1)^{\fe_n} \re^{- \e(\la|h)/T + 2\p\i F_n (\la)}
                      \bigl(1 + {\cal O} (T^\infty)\bigr)
\end{equation}
uniformly for $|\Im \la| < \g/2$, away from the points $\pm \i \g/2$.
\end{enumerate}
\end{conjecture}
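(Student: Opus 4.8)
The statement is the low-$T$ picture of the quantum transfer matrix spectrum obtained in \cite{DGKS15b}. Since, as already observed, the driving terms of the governing non-linear integral equations have a Trotter limit independent of $t$, the same conclusions carry over to the \emph{dynamical} quantum transfer matrix, and it suffices to verify that the hypotheses of the analysis of \cite{DGKS15b} hold here, which follows from \cite{Sakai07,GKKKS17}. What I would actually supply at this point --- the ``short a posteriori derivation'' announced above --- is the derivation of the non-linear integral equation for the auxiliary function $\fa_n$ from the particle--hole Ansatz, from which items~(iii) and~(iv) follow by a low-$T$ expansion.

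First, by (\ref{baes}) the roots $\{\la_j^{(n)}\}$ are the zeros of $1+\fa_n$ inside the physical strip. Postulating that, for an excited state with $M_n=N+1$, this set is obtained from the dominant sea by deleting the holes ${\cal X}_n$ and adjoining the particles ${\cal Y}_n$, the bookkeeping $M_n=(N+1)-\card{\cal X}_n+\card{\cal Y}_n$ at once yields $\card{\cal X}_n=\card{\cal Y}_n=\nex$, which is item~(ii); a discrete branch ambiguity in passing to the logarithmic form of (\ref{baes}) accounts for the index $\fe_n$ of item~(i). Taking the logarithm of (\ref{baaux}), the root product splits into the dominant-state contribution plus a finite correction indexed by ${\cal Y}_n\ominus{\cal X}_n$; using that the $N+1$ dominant roots accumulate densely, in the Trotter limit, on a fixed curve ${\cal C}$ through $-\i\g/2$ \cite{Kluemper93,DGKS15b}, the dominant-state sum is converted by a contour-deformation argument into a convolution of the kernel $K$ of (\ref{defk}) with $\ln(1+\fa_n)$. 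The outcome is a non-linear integral equation of the schematic form
\begin{multline*}
     \ln\fa_n(\la|h) = -\frac{1}{T}\,\re(\la|h) + \i\p\,\fe_n
        + 2\p\i\!\!\!\sum_{\m\in{\cal Y}_n\ominus{\cal X}_n}\!\!\!\ph_0(\la,\m) \\
        + \int_{\cal C} K(\la-\m)\,\ln\bigl(1+\fa_n(\m|h)\bigr)\,\frac{\rd\m}{2\p\i}\epc
\end{multline*}
in which $\re(\la|h)$ stands for the bare-energy driving term, built from the bare energy of (\ref{defk}) and the magnetic field with the normalisation fixed by (\ref{tsuz}), and $\ph_0$ is the bare phase.

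Next, inside $|\Im\la|<\g/2$ and away from $\pm\i\g/2$ the driving term is enhanced by $1/T$. Inserting the Ansatz $\fa_n\approx(-1)^{\fe_n}\re^{-\e(\la|h)/T+2\p\i F_n(\la)}$ and using that, in the gapped regime $0<h<h_\ell$, the quantity $\ln(1+\fa_n)$ (resp.\ $\ln(1+1/\fa_n)$) is exponentially small on the part of ${\cal C}$ where $\Re\e>0$ (resp.\ $\Re\e<0$), the convolution term reduces, up to $O(T^\infty)$, to the standard dressing integral; self-consistency then forces $\e$ to be the dressed energy (\ref{dressede}) and $F_n$ the shift function built from the dressed phase (\ref{dressedphase}). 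This is exactly (\ref{anlowt}). Finally, imposing $\fa_n(y)=\fa_n(x)=-1$ at the particles $y\in{\cal Y}_n$ and the holes $x\in{\cal X}_n$ and taking logarithms of (\ref{anlowt}) gives the higher-level Bethe Ansatz equations (\ref{hlbaes}), with $\ell_y,m_x\in{\mathbb Z}$ the branch labels; their mutual distinctness is precisely the statement that distinct Bethe roots are distinct simple zeros of $1+\fa_n$.

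\textbf{Main obstacle.} The genuinely delicate point --- the substance of \cite{DGKS15b}, not of the formal manipulation above --- is to justify the heuristic: that in the massive regime $0<h<h_\ell$ the curve $\Re\e(\la|h)=0$ has exactly the gapped geometry of Figure~\ref{fig:regimes}, that the dominant and excited Bethe roots do organise into particle--hole patterns free of string solutions, and that the $O(T^\infty)$ estimates hold \emph{uniformly} on $|\Im\la|<\g/2$ up to the excluded points $\pm\i\g/2$. Establishing this requires the global analysis of the solutions of the non-linear integral equation in the entire complex plane carried out in \cite{DGKS15b}, together with control of its sub-leading terms; in the present work this is taken as input, and only its transfer to the dynamical quantum transfer matrix --- secured by the $t$-independence of the Trotter limit of the driving terms --- needs to be recorded.
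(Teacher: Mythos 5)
This statement is posed as a Conjecture in the paper and is not proven there: it is imported wholesale from \cite{DGKS15b}, with the paper only recording its transfer to the dynamical quantum transfer matrix (via the $t$-independence of the Trotter limit of the driving terms, following \cite{Sakai07,GKKKS17}) and supplying the short a posteriori derivation of the non-linear integral equation used later. Your proposal does essentially the same thing — heuristic particle--hole bookkeeping for (i)--(ii), the NLIE and its low-$T$ reduction yielding (\ref{hlbaes}) and (\ref{anlowt}), with the genuinely hard global analysis (absence of strings, gapped geometry of $\Re\e=0$, uniform $\CO(T^\infty)$ control) honestly deferred to \cite{DGKS15b} — so it matches the paper's treatment.
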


In the limit $T \rightarrow 0+$ the higher-level Bethe Ansatz equations
(\ref{hlbaes}) decouple, $\i \p \ell_y T$ and $\i \p m_x T$ turn into
independent continuous variables, and the particles and holes become
free parameters on the curves
\begin{equation} \label{defbpm}
     {\cal B}_\pm = \bigl\{ \la \in {\mathbb C} \big|
                            \Re \e (\la|h) = 0, - \p/2 \le \Re \la \le \p/2,
			    0 < \pm \Im \la < \g \bigr\} \epp
\end{equation}
These curves are shown in Figure~\ref{fig:regimes}. As we can see, the
massive regime is distinguished from the massless regime by the
opening of a `mass gap' at the critical field $h_\ell$.

\subsection{Low-temperature limit of the eigenvalue ratios}
\label{sec:evratlowt}
The main result of our work \cite{DGKS15b} was an explicit formula
for all correlation lengths, or rather all eigenvalue ratios, in
the low-temperature regime. At low enough temperatures all excitations
are parameterized by solutions ${\cal X}_n$, ${\cal Y}_n$ of the
higher-level Bethe Ansatz equations (\ref{hlbaes}). Let $k = \fe_n -
\fe_0$. Then, for $- \g < \Im \la < 0$, the corresponding eigenvalue
ratios in the Trotter limit and at finite magnetic field
\cite{DGKS15b,DGKS16b} can be expressed as
\begin{equation} \label{evarat}
     \r_n (\la|h, h') = (-1)^k 
        \exp \biggl\{\i \mspace{-8.mu}
	       \sum_{z \in {\cal X}_n \ominus {\cal Y}_n} \mspace{-16.mu}
	                        p(\la - z + \i \g/2)\biggr\} \epc
\end{equation}
this being valid up to multiplicative corrections of the form
$\bigl(1 + {\cal O} (T^\infty)\bigr)$. Note that the value of the
magnetic field $h'$ enters here only through the particle and hole
parameters.

From equation (\ref{evarat}) we obtain the eigenvalue ratios entering
the form factor series (\ref{defgenfun}),
\begin{equation} \label{evaratzero}
     \r_n (- \i \g/2|h, h') =
        (-1)^k \exp \biggl\{\i 
	       \mspace{-8.mu}
	        \sum_{z \in {\cal Y}_n \ominus {\cal X}_n}
	       \mspace{-16.mu} p(z) \biggr\}
	= (-1)^{k} \mspace{-8.mu} \prod_{z \in {\cal Y}_n \ominus {\cal X}_n}
	           \frac{\dh_1(z - \i \g/2| q^2)}{\dh_4(z - \i \g/2| q^2)} \epc
\end{equation}
and
\begin{multline}
     \lim_{\substack{N \rightarrow \infty\\ \e \rightarrow 0}}
        \biggl(\frac{\r_n (- \i \g/2 + \i t/\ks N|h,h')}
	            {\r_n (- \i \g/2 - \i t/\ks N|h,h')}\biggr)^\frac{N}{2} =
        \exp \bigl\{(\i t/\ks) \6_\la \ln \r_n (\la|h, h')\bigr\}_{\la = - \i \g/2} \\[1ex]
     = \exp \biggl\{2 \i t J \sh(\g) \mspace{-16.mu}
                    \sum_{z \in {\cal X}_n \ominus {\cal Y}_n}
	            \mspace{-16.mu} p'(z) \biggr\}
     = \exp \biggl\{\i t \mspace{-14.mu}
                     \sum_{z \in {\cal Y}_n \ominus {\cal X}_n}
	            \mspace{-16.mu} \e(z|h) \biggr\} \epp
\end{multline}
Both formula are again valid up to multiplicative temperature
corrections of the form $\bigl(1 + {\cal O} (T^\infty)\bigr)$.

\subsection{Non-linear integral equations}
\label{subsec:nlie}
The low-$T$ analysis can be based on a non-linear integral
equation satisfied by the auxiliary functions $\fa_n (\cdot|h)$.
Let us briefly recall this equation. We start with definitions
and useful notation. Let
\begin{equation}
     \th(\la|\g) = \int_{\G_\la} \rd \m \:
                    \bigl(\ctg(\m - \i \g) - \ctg(\m + \i \g)\bigr) \epc
\end{equation}
where $\G_\la$ is a piecewise straight contour beginning at
$- \frac{\p}{2}$, running parallel to the imaginary axis to
$- \frac{\p}{2} + \Im \la$, then continuing parallel to the
real axis from $- \frac{\p}{2} + \Im \la$ to $\la$. This
function has branch cuts along $(- \infty \pm \i \g,
- \p \pm \i \g] \cup [\pm \i \g, + \infty \pm \i \g)$
and satisfies the quasi-periodicity, quasi-reflection and
asymptotic conditions
\begin{subequations}
\label{qppkth}
\begin{align}
     \th (\la + \p|\g) & = \th(\la|\g) +
                          \begin{cases} 
			     2 \p \i & |\Im \la| < \g \\
                             0 & |\Im \la| > \g,
                          \end{cases}\\[1ex]
     \label{thetaminus}
     \th (- \la|\g) & = - \th(\la|\g) +
                         \begin{cases} 
			    2 \p \i & |\Im \la| < \g \\
                            0 & |\Im \la| > \g,
                         \end{cases} \\[1ex]
     \lim_{\Re \la \rightarrow \pm \infty} \th(\la|\g) & = \mp 2 \g \epp
\end{align}
\end{subequations}
It is constructed in such a way that
\begin{equation} \label{etotheta}
     \re^{\th(\la|\g)} = \frac{\sin(\la - \i \g)}{\sin(\la + \i \g)} \epp
\end{equation}
For $\g > 0$ we also set
\begin{equation}
     K_0 (\la|\g) = \frac{\th'(\la|\g)}{2 \p \i} \epp
\end{equation}
Then $K_0 (\la|\g) > 0$ for $\la \in {\mathbb R}$. Now define
\begin{align} \label{defepsnulneps}
     & \e^{(N,\e)}_0 (\la|h) =
        h + T \bigl\{\th\bigl(\la - \e\big|\tst{\frac{\g}{2}}\bigr)
                     - \th\bigl(\la + \e\big|\tst{\frac{\g}{2}}\bigr)\bigr\} \\[1ex]
	\notag & \mspace{.9mu}
        + \frac{NT}{2} \Bigl\{
	  \th\bigl(\la + \tst{\frac{\i t}{\ks N}}\big|\tst{\frac{\g}{2}}\bigr)
	  - \th\bigl(\la - \tst{\frac{\i t}{\ks N}}\big|\tst{\frac{\g}{2}}\bigr)
	  + \th\bigl(\la - \tst{\frac{\i t + 1/T}{\ks N}}\big|\tst{\frac{\g}{2}}\bigr)
	  - \th\bigl(\la + \tst{\frac{\i t + 1/T}{\ks N}}\big|\tst{\frac{\g}{2}}\bigr)
	  \Bigr\} \epp
\end{align}
This function has the point-wise limit
\begin{equation} \label{defepszero}
     \lim_{N \rightarrow \infty} \lim_{\e \rightarrow 0} \e^{(N,\e)}_0 (\la|h) =
        h - \th' \bigl(\la|\tst{\frac{\g}{2}}\bigr)/\ks =
	h - 4 \p J \sh(\g) K_0 \bigl(\la|\tst{\frac{\g}{2}}\bigr) =
	\e_0 (\la|h)
\end{equation}
which is independent of $t$ and $T$.

For a shortcut in the derivation of the non-linear integral
equation we shall now use our insight from previous work
\cite{DGKS15b} summarized above as an input. Recall that we
have fixed a solution $\{\m_j\}_{j=1}^M = \{\la_j^{(n)}\}_{j=1}^{M_n}$
of the Bethe ansatz equations (\ref{baes}) which determines an
auxiliary function $\fa_n(\cdot|h')$ and that we are keeping the
magnetic field $h$ pertaining to the dominant state and $h'$
pertaining to the $n$th excited state independent. Let us now
assume that for fixed small $T$ and $N$ large enough all Bethe
roots $\m_j$ are close to ${\cal B}_\pm$, such that $|\Im \m_j|
< \g/2$, $j = 1, \dots, M$. Using (\ref{etotheta}) and
(\ref{defepsnulneps}) in (\ref{baaux}) we see that
\begin{multline} \label{deflnan}
     \ln \fa_n (\la|h') = - \frac{\e_0^{(N,\e)}(\la|h')}{T}
        + \sum_{j=1}^{N+1} \th(\la - \m_j|\g) 
	\\
        - \frac{N}{2} \Bigl\{
	  \th\bigl(\la + \tst{\frac{\i \g}{2}} - \tst{\frac{\i t}{\ks N}}\big|\g\bigr)
	  + \th\bigl(\la + \tst{\frac{\i \g}{2}} + \tst{\frac{\i t + 1/T}{\ks N}}\big|\g\bigr)
	  \Bigr\}
	- \th\bigl(\la + \tst{\frac{\i \g}{2}} + \e \big|\g\bigr)
\end{multline}
is a possible definition of the logarithm of $\fa_n (\cdot|h')$.
This definition guarantees that $\ln \fa_n (\cdot|h')$ is $\p$-periodic
and analytic for $|\Im \la| < \g/2$ in the following sense. For a given
$\la$ with $|\Im \la| < \g/2$ and small $T > 0$ there is an $N_0 \in
{\mathbb N}$ such that $\ln \fa_n (\cdot|h')$ is analytic in $\la$ and
\begin{equation}
     \ln \fa_n (\la + \p|h') = \ln \fa_n (\la|h')
\end{equation}
for all $N > N_0$.

We fix a simple closed and positively oriented contour ${\cal C} = {\cal C}_l
+ {\cal C}_- + {\cal C}_r + {\cal C}_+$, where
\begin{align}
     {\cal C}_l & = [-\p/2, -\p/2 - \i \g/2 - \i 0] \epc &
     {\cal C}_r & = - {\cal C}_l + \p \epc \\[1ex]
     {\cal C}_+ & = [\p/2, -\p/2] \epc &
     {\cal C}_- & = - {\cal C}_+ - \i \g/2 - \i 0  \epp
\end{align}
Here the regularization $\i 0$ means that the pole of $\fa_n (\la|h')$
at $\la = - \frac{\i \g}{2} + \frac{\i t }{\ks N}$ is inside ${\cal C}$.

Let
\begin{equation} \label{defzset}
     {\cal Z}_n = \{\la \in {\mathbb C}|1 + \fa_n (\la|h') = 0\}
                  \cap \Int {\cal C} \epp
\end{equation}
With $\{\m_j\}_{j=1}^M$ we associate sets of `particles' ${\cal Y}_n$
and `holes' ${\cal X}_n$ relative to ${\cal C}$, setting
\begin{equation} \label{defxyset}
     {\cal Y}_n = \{\m_j\}_{j=1}^M \cap \Ext {\cal C} \epc \qd
     {\cal X}_n = {\cal Z}_n \setminus \bigl(\{\m_j\}_{j=1}^M
                  \cap \Int {\cal C}\bigr) \epp
\end{equation}
Using the residue theorem in (\ref{deflnan}) then implies that
\begin{equation} \label{prenlie}
     \ln \fa_n (\la|h') =
        - \frac{\e_0^{(N, \e)} (\la|h')}{T}
	             + \mspace{-18.mu}
		          \sum_{z \in {\cal Y}_n \ominus {\cal X}_n}
		       \mspace{-18.mu} \th(\la - z|\g)
		     + \int_{\cal C} \frac{\rd \m}{2 \p \i} \th(\la - \m|\g)
		       \frac{\6_\m \fa_n (\m|h')}{1 + \fa_n (\m|h')}
\end{equation}
for all $\la \in {\mathbb C}$ with $|\Im \la| < \g/2$. Here
we have used that, due to our assumptions, $\m_j - \i \g
\notin \Int {\cal C}$.

Equation (\ref{prenlie}) turns into a non-linear integral
equation for $\fa_n (\cdot|h')$ upon applying partial integration
to the integral. For this purpose a proper definition of
the logarithm of $1 + \fa_n (\cdot|h')$ is required. Let us
assume (\ref{anlowt}) to hold approximately for small $T$ and
$N$ large enough. Then $|\fa_n (\la|h')| = 1$, $- \g/2 < \Im \la < 0$
determines a curve ${\cal B}_n$ close to ${\cal B}_-$ which
intersects the line $\Re \la = - \frac{\p}{2}$ in a point
$- \frac{\p}{2} - \i x_0$ with $- \g/2 < x_0 < 0$. Due to the
$\p$-periodicity of $\fa_n (\cdot|h')$ the curve ${\cal B}_n$
passes through $\frac{\p}{2} - \i x_0$ as well. Define a domain
${\cal D}_\auf$ located between ${\cal B}_n$ and the real axis
and a domain ${\cal D}_\ab$ located between the line $\Im \la
= - \frac{\g}{2}$ and ${\cal B}_n$. Then
\begin{equation}
     |\fa_n (\la|h')| \begin{cases}
		         > 1 & \text{for $\la \in {\cal D}_\auf$} \\
		         < 1 & \text{for $\la \in {\cal D}_\ab$.}
                      \end{cases}
\end{equation}
This allows us to define the logarithm of $1 + \fa_n (\cdot|h')$ on
${\cal D}_\auf \cup {\cal D}_\ab$ as an analytic and $\p$-periodic
function,
\begin{equation} \label{deflogoneplusa}
     \Ln (1 + \fa_n) (\la| h') =
        \begin{cases}
	     \ln \fa_n (\la|h') + \ln \bigl(1 + 1/\fa_n (\la|h')\bigr) &
	     \text{for $\la \in {\cal D}_\auf$} \\[.5ex]
	     \ln \bigl(1 + \fa_n (\la|h')\bigr) & \text{for $\la \in {\cal D}_\ab$,}
        \end{cases}
\end{equation}
where $\ln \fa_n (\cdot|h')$ is defined by (\ref{deflnan}) and the
other logarithms on the right hand side are defined by the principal
branch. By construction $\Ln (1 + \fa_n) (\cdot|h')$ has jump
discontinuities across ${\cal B}_n$. In particular, at $\mp \frac{\p}{2}
- \i x_0$
\begin{equation} \label{jumplnoneplusa}
     \Ln (1 + \fa_n) \bigl(\mp \tst{\frac{\p}{2}} - \i x_0^-\big| h'\bigr)
        - \Ln (1 + \fa_n) \bigl(\mp \tst{\frac{\p}{2}} - \i x_0^+\big| h'\bigr)
	= 2 \p \i \fe_n
\end{equation}
for some $\fe_n \in {\mathbb Z}$.

Using the properties of the function $\Ln (1 + \fa_n) (\cdot|h')$ we
may now calculate the index of $1 + \fa_n (\cdot|h')$ along ${\cal C}$,
\begin{multline} \label{indexoneplusa}
     \int_{\cal C} \frac{\rd \m}{2 \p \i}
        \frac{\6_\m \fa_n (\m|h')}{1 + \fa_n (\m|h')} =
        \Ln (1 + \fa_n) \bigl(- \tst{\frac{\p}{2}} - \i x_0^-\big| h'\bigr)
        - \Ln (1 + \fa_n) \bigl(\tst{\frac{\p}{2}} - \i x_0^-\big| h'\bigr) \\
        + \Ln (1 + \fa_n) \bigl(\tst{\frac{\p}{2}} - \i x_0^+\big| h'\bigr)
        - \Ln (1 + \fa_n) \bigl(- \tst{\frac{\p}{2}} - \i x_0^+\big| h'\bigr)
	= 0 \epp
\end{multline}
Here we have used the analyticity of $\Ln (1 + \fa_n) (\cdot|h')$
in ${\cal D}_\auf$ und ${\cal D}_\ab$ in the first equation and
(\ref{jumplnoneplusa}) in the second equation. Calculating the same
integral by means of the residue theorem we see that
\begin{equation}
     \int_{\cal C} \frac{\rd \m}{2 \p \i}
        \frac{\6_\m \fa_n (\m|h')}{1 + \fa_n (\m|h')} =
	\card {\cal Y}_n - \card {\cal X}_n \epc
\end{equation}
implying that
\begin{equation}
	\card {\cal Y}_n = \card {\cal X}_n = \nex \epp
\end{equation}

Performing a similar calculation as in (\ref{indexoneplusa}) and using
also the quasi-periodicity (\ref{qppkth}) of the function $\th(\cdot|\g)$
we may now rewrite the integral on the right hand side of
(\ref{prenlie}),
\begin{equation}
     \int_{\cal C} \frac{\rd \m}{2 \p \i} \th(\la - \m|\g)
        \frac{\6_\m \fa_n (\m|h')}{1 + \fa_n (\m|h')} = 2 \p \i \fe_n +
        \int_{\cal C} \rd \m \: K_0 (\la - \m|\g) \Ln (1 + \fa_n) (\m|h') \epp
\end{equation}
Inserting this back into (\ref{prenlie}) we have converted the latter
equation into a non-linear integral equation for the function
$\fa_n (\cdot|h')$,
\begin{multline} \label{nlie1}
     \ln \fa_n (\la|h') =
        - \frac{\e_0^{(N, \e)} (\la|h')}{T} \\ + 2 \p \i \fe_n 
	             + \mspace{-18.mu}
		          \sum_{z \in {\cal Y}_n \ominus {\cal X}_n}
		       \mspace{-18.mu} \th(\la - z|\g) +
          \int_{\cal C} \rd \m \: K_0 (\la - \m|\g) \Ln (1 + \fa_n) (\m|h') \epp
\end{multline}
This equation determines $\fa_n (\cdot|h')$ on ${\cal C}$ and then,
by analytic continuation in the entire complex plane. Recall that,
$\fa_n (x|h') = \fa_n (y|h') = - 1$ for all $x \in {\cal X}_n$,
$y \in {\cal Y}_n$ by construction.
\begin{Remark}
The non-linear integral equation (\ref{nlie1}) is appropriate for
taking the Trotter limit of the function $\fa_n (\cdot|h')$. Formally
we just have to replace the function $\e_0^{(N,\e)} (\cdot|h')$
by its limit $\e_0 (\cdot|h')$ defined in (\ref{defepszero}).
\end{Remark}
\begin{Remark}
Reversing the steps in the derivation of (\ref{indexoneplusa})
and (\ref{nlie1}) we can go back to the Bethe Ansatz equations
(\ref{baes}). This means that every solution of the non-linear integral
equation for which $1 + \fa_n (\cdot|h')$ has index zero corresponds
to a solution of the Bethe Ansatz equation and to an eigenstate of
the dynamical quantum transfer matrix.  In \cite{DGKS15b} we argued
that, in the low-$T$ limit, all eigenstates can be obtained this way.
\end{Remark}

We shall also need an `off-shell version' of the auxiliary function
$\fa_n (\cdot|h')$ which is defined by means of a non-linear integral
equation similar to (\ref{nlie1}). In order to construct it we first
of all set
\begin{align} \label{defepsnulnepsbar}
     & \overline{\e}_0^{(N,\e)} (\la|h) =
        h + T \bigl\{\th\bigl(\la + \i \g + \e\big|\tst{\frac{\g}{2}}\bigr)
                     - \th\bigl(\la + \i \g - \e\big|\tst{\frac{\g}{2}}\bigr)\bigr\}
        \notag \\[1ex]
	\notag & \mspace{72.mu}
        + \frac{NT}{2} \Bigl\{
	  \th\bigl(\la + \i \g - \tst{\frac{\i t}{\ks N}}\big|\tst{\frac{\g}{2}}\bigr)
	  - \th\bigl(\la + \i \g + \tst{\frac{\i t}{\ks N}}\big|\tst{\frac{\g}{2}}\bigr)
        \\ & \mspace{162.mu}
	  + \th\bigl(\la + \i \g + \tst{\frac{\i t + 1/T}{\ks N}}\big|\tst{\frac{\g}{2}}\bigr)
	  - \th\bigl(\la + \i \g - \tst{\frac{\i t + 1/T}{\ks N}}\big|\tst{\frac{\g}{2}}\bigr)
	  \Bigr\}
\end{align}
Then, for every pair of sets ${\cal U} \subset \Int {\cal C}$,
${\cal V} \subset \Ext {\cal C}$ with $\card {\cal U} = \card {\cal V}
= \nex$, the function $\fa^- (\cdot|{\cal U}, {\cal V}, h')$ is the
solution of the non-linear integral equation
\begin{multline} \label{nlieoffshell1}
     \ln \fa^- (\la|{\cal U}, {\cal V}, h') =
        \frac{\overline{\e}_0^{(N, \e)} (\la|h')}{T} \\ - 2 \p \i \fe_n'
	             - \mspace{-12.mu}
		          \sum_{z \in {\cal V} \ominus {\cal U}}
		       \mspace{-8.mu} \th(\la - z|\g) -
          \int_{\cal C} \rd \m \: K_0 (\la - \m|\g)
	     \Ln^- (1 + \fa^-) (\m|{\cal U}, {\cal V}, h') \epp
\end{multline}
Here
\begin{multline}
     \Ln^- (1 + \fa^-) (\la|{\cal U}, {\cal V}, h') \\ =
        \begin{cases}
	     \ln \bigl(1 + \fa^- (\la|{\cal U}, {\cal V}, h')\bigr) &
	     \text{for $\la \in {\cal D}_\auf^-$} \\[.5ex]
	     \ln \fa^- (\la|{\cal U}, {\cal V}, h')\bigr) +
	     \ln \bigl(1 + 1/\fa^- (\la|{\cal U}, {\cal V}, h')\bigr) &
	     \text{for $\la \in {\cal D}_\ab^-$,}
        \end{cases}
\end{multline}
and ${\cal D}_\auf^-$ and ${\cal D}_\ab^-$ are separated by the
line, where $|\fa^- (\la|{\cal U}, {\cal V}, h')\bigr)| = 1$.
The contour ${\cal C}$ is the same as in (\ref{nlie1}) but
properly regularized such as to include the point
$- \frac{\i \g}{2} + \frac{\i t + 1/T}{\ks N}$ for sufficiently
large values of $N$. The integral equation (\ref{nlieoffshell1})
is constructed in such a way that
\begin{equation}
     \fa_n (\cdot|h') = 1/\fa^-(\cdot|{\cal X}_n, {\cal Y}_n, h')
\end{equation}
if ${\cal X}_n$ and ${\cal Y}_n$ are solutions of the subsidiary
conditions $\fa^-(x|{\cal X}_n, {\cal Y}_n, h') = -1$
for all $x \in {\cal X}_n$, and $\fa^-(y|{\cal X}_n, {\cal Y}_n, h') = - 1$
for all $y \in {\cal Y}_n$.

Hence, it is natural to define an off-shell function
\begin{equation} \label{defapm}
     \fa^+ (\cdot|{\cal U}, {\cal V}, h') = 1/\fa^- (\cdot|{\cal U}, {\cal V}, h') \epp
\end{equation}
Then
\begin{subequations}
\begin{align}
     \fa_n (\cdot|h') & = \fa^+ (\cdot|{\cal X}_n, {\cal Y}_n, h')
                        = \fa_n^+ (\cdot|h') \epc \\[1ex]
     1/\fa_n (\cdot|h') & = \fa^- (\cdot|{\cal X}_n, {\cal Y}_n, h')
                        = \fa_n^- (\cdot|h') \epp
\end{align}
\end{subequations}

\section{Low-temperature analysis of the amplitudes}
In this section we derive explicit expressions for the amplitudes
(\ref{ampgenfun}) in the low-temperature limit.
\subsection{A factorization of determinants}
Denoting the elements of ${\cal X}_n$ by $x_j$, the elements of
${\cal Y}_n$ by $y_j$ and the elements of ${\cal Z}_n$ by $z_j$
(cf.\ (\ref{defzset}), (\ref{defxyset})) we may label them in such
a way that
\begin{align}
     z_j & = \m_j \epc & & j = 1, \dots, M - \nex \epc \notag \\[1ex]
     y_j & = \m_{M - \nex + j} \epc & & j = 1, \dots, \nex \epc \notag \\[1ex]
     x_j & = z_{M - \nex + j} \epc & & j = 1, \dots, \nex \epp
\end{align}

Using this notation we shall separate the particle- hole contributions
from the determinants on the right hand side of (\ref{start}). We first
define a number of auxiliary matrices through their matrix elements. For
these we shall use a convention in which upper indices refer to rows
and lower indices to columns. Let
\begin{align}
     K^j_k & = \frac{K(\la_j - \la_k)}{\fa' (\la_k|h)} \epc && j,k = 1, \dots, M \epc 
               \notag \\[1ex]
     V^j_k & = \frac{\re(\la_j - \m_k)}{1 + \fa(\m_k|h)} -
               \frac{\re(\m_k - \la_j)}{1 + 1/\fa (\m_k|h)} \epc && j,k = 1, \dots, M \epp 
\end{align}
Further define
\begin{align}
     {V^{(i)}}^j_k & = V^j_k \epc
                       && j = 1, \dots, M ; k = 1, \dots, M - \nex \epc \notag \\[1ex]
     {V^{(p)}}^j_k & = V^j_{M - \nex + k} \epc
                       && j = 1, \dots, M ; k = 1, \dots, \nex \epc \notag \\[1ex]
     {V^{(h)}}^j_k & = \frac{\re(\la_j - x_k)}{1 + \fa(x_k|h)} -
                       \frac{\re(x_k - \la_j)}{1 + 1/\fa (x_k|h)} \epc
		       && j = 1, \dots, M ; k = 1, \dots, \nex
\end{align}
and
\begin{equation} \label{defgmat}
     G = (I_M + K)^{-1} \bigl(V^{(i)}, V^{(h)}\bigr) \epc \qd
     D = (0, I_{\nex}) G^{-1} (I_M + K)^{-1} V^{(p)} \epp
\end{equation}
Here and in the following $I_n$ denotes the $n \times n$ unit matrix,
and we use a block-matrix notation. E.g., in $\bigl(V^{(i)}, V^{(h)}\bigr)$
we combine the $M \times (M - \ell)$ block $V^{(i)}$ and the
$M \times \ell$ block $V^{(h)}$ into an $M \times M$ square matrix.
Similarly, $(0, I_{\nex})$ consists of an $\ell \times (M - \ell)$
block of zeros, combined with the $\ell \times \ell$ unit matrix
into an $\ell \times M$ matrix. This convention is particularly 
convenient for the proofs of the two following lemmata.

\begin{lemma} \label{lem:det1}
{\bf A factorization of determinants.}\\
We have the following factorization of determinants,
\begin{equation} \label{firstfactorsplit}
     \frac{\det_M \Bigl\{\frac{\re(\la_j - \m_k)}{1 + \fa (\m_k|h)}
           - \frac{\re(\m_k - \la_j)}{1 + 1/\fa (\m_k|h)}\Bigr\}}
          {\det_M \Bigl\{\de^j_k + \frac{K (\la_j - \la_k)}{\fa' (\la_k|h)}\Bigr\}} =
	  \det_M \{G\} \det_{\nex} \{D\} \epp
\end{equation}
\end{lemma}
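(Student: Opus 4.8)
The plan is to recognize the left-hand side as a ratio in which the denominator determinant is essentially $\det_M(I_M+K)$, and then to view the numerator matrix $V$ as having its columns split into three blocks — the $M-\nex$ "interior" columns $V^{(i)}$ indexed by the common roots $z_j=\m_j$, $j\le M-\nex$, the $\nex$ "particle" columns $V^{(p)}$, and a set of $\nex$ "hole" columns $V^{(h)}$ built from the hole rapidities $x_k$. The key algebraic fact to exploit is that, by the Bethe ansatz equations and the definitions in the Corollary, the hole columns $V^{(h)}$ are \emph{not} independent new data: a column $V^{(p)}_{\,\cdot\,M-\nex+k}$ of $V$ associated with a particle that has been removed is, up to the residue/l'Hospital manipulation already carried out in the proof of the Corollary, expressible through the matrix $I_M+K$ acting on the hole column. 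Concretely I would first establish the column identity
\begin{equation} \label{eq:colident}
     V^{(p)} = (I_M+K)\,G\,\binom{0}{I_{\nex}} \cdot(\text{something})
\end{equation}
— more precisely, that multiplying the numerator matrix $\bigl(V^{(i)},V^{(p)}\bigr)$ on the left by $(I_M+K)^{-1}$ and then performing column operations that replace the particle block by $V^{(h)}$ costs exactly the factor $\det_\nex\{D\}$.

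The cleanest way to organize this is: \textbf{Step 1.} Write $\det_M\{V^{(i)},V^{(p)}\} / \det_M(I_M+K) = \det_M\{(I_M+K)^{-1}(V^{(i)},V^{(p)})\}$. \textbf{Step 2.} Consider the $M\times M$ matrix $(I_M+K)^{-1}(V^{(i)},V^{(h)}) = G$ from \eqref{defgmat}, whose determinant will be one of the two factors on the right. \textbf{Step 3.} Express $(I_M+K)^{-1}V^{(p)}$ in the basis of columns of $G$: since $G$ is (generically) invertible, $(I_M+K)^{-1}V^{(p)} = G\,C$ for the $M\times\nex$ coefficient matrix $C = G^{-1}(I_M+K)^{-1}V^{(p)}$. \textbf{Step 4.} Now compute $\det_M\{(I_M+K)^{-1}(V^{(i)},V^{(p)})\}$ by replacing its last $\nex$ columns (which equal $G\,C$) and expanding: writing $G=(G^{(i)},G^{(h)})$ with $G^{(i)}$ the first $M-\nex$ columns and $G^{(h)}$ the last $\nex$, one has $(I_M+K)^{-1}(V^{(i)},V^{(p)}) = ((I_M+K)^{-1}V^{(i)},\,G^{(i)}C^{(i)}+G^{(h)}C^{(h)})$ where $C=\binom{C^{(i)}}{C^{(h)}}$. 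But $(I_M+K)^{-1}V^{(i)}=G^{(i)}$ by definition of $G$, so the first $M-\nex$ columns already span the column space of $G^{(i)}$; subtracting appropriate multiples of them kills the $G^{(i)}C^{(i)}$ contribution in the last block, leaving $\det_M(G^{(i)},\,G^{(h)}C^{(h)}) = \det_M(G)\det_\nex(C^{(h)})$. \textbf{Step 5.} Identify $C^{(h)} = (0,I_\nex)C = (0,I_\nex)G^{-1}(I_M+K)^{-1}V^{(p)} = D$, which is exactly \eqref{defgmat}, completing the factorization.

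The one genuinely nontrivial input is the identification in Step 1–Step 2 of the denominator $\det_M\{\de^j_k + K(\la_j-\la_k)/\fa'(\la_k|h)\}$ with $\det_M(I_M+K)$ and, more importantly, the verification that the \emph{interior} columns of $(I_M+K)^{-1}$ applied to $V^{(i)}$ really do produce the matrix denoted $G^{(i)}$, i.e. that the block structure in \eqref{defgmat} is consistent — this is just bookkeeping once the conventions (rows = upper indices, columns = lower indices) are pinned down. I expect the main obstacle to be purely notational: keeping straight which index runs over $\{1,\dots,M-\nex\}$ versus $\{M-\nex+1,\dots,M\}$ versus the relabelled hole set $\{x_k\}$, and making sure the column operations in Step 4 are legitimate (they are, since adding a multiple of one column to another leaves the determinant unchanged, and the columns being added are among the columns of the matrix whose determinant we compute). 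No analytic estimates are needed here; everything is finite-dimensional linear algebra, and the genericity assumption (invertibility of $I_M+K$ and of $G$) is the standard one under which such form-factor identities are stated, with the general case following by continuity.
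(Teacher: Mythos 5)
Your proposal is correct and takes essentially the same route as the paper: both reduce the left-hand side to $\det_M\bigl\{(I_M+K)^{-1}(V^{(i)},V^{(p)})\bigr\}$ and extract $\det_M\{G\}\,\det_\nex\{D\}$ by elementary column manipulations built on the same block split $(V^{(i)},V^{(p)},V^{(h)})$ and the definitions (\ref{defgmat}). The only cosmetic difference is organizational — the paper borders the determinant to size $M+\nex$ and permutes columns, while you expand $(I_M+K)^{-1}V^{(p)}=GC$ and factor $\bigl(G^{(i)},G^{(h)}C^{(h)}\bigr)=G\,\mathrm{diag}\bigl(I_{M-\nex},D\bigr)$ — and your opening appeal to the Bethe equations is unnecessary, since, as your Steps 1--5 show, the identity is pure finite-dimensional linear algebra under the standard genericity (invertibility) assumptions.
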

\begin{proof}
With the notations above we can write
\begin{equation}
     \frac{\det_M \Bigl\{\frac{\re(\la_j - \m_k)}{1 + \fa (\m_k|h)}
           - \frac{\re(\m_k - \la_j)}{1 + 1/\fa (\m_k|h)}\Bigr\}}
          {\det_M \Bigl\{\de^j_k + \frac{K (\la_j - \la_k)}{\fa' (\la_k|h)}\Bigr\}} =
        \det_M \bigl((I_M + K)^{-1}\bigr) \det_M \bigl((V^{(i)},V^{(p)})\bigr) \epp
\end{equation}
Now
\begin{equation}
     \det_M \bigl((V^{(i)},V^{(p)})\bigr) =
        \det_{M + \nex}
	   \begin{bmatrix}
	      V^{(i)} & V^{(p)} & V^{(h)} \\
	      0 & 0 & I_\nex
	   \end{bmatrix} =
        \det_{M + \nex}
	   \begin{bmatrix}
	      V^{(i)} & V^{(h)} & V^{(p)} \\
	      0 & - I_\nex & 0
	   \end{bmatrix}.
\end{equation}
Hence,
\begin{multline}
     \det_M \bigl((I_M + K)^{-1}\bigr) \det_M \bigl((V^{(i)},V^{(p)})\bigr) =
        \det_{M + \nex}
	   \begin{bmatrix}
	      G & (I_M + K)^{-1} V^{(p)} \\
	      (0, - I_\nex) & 0
	   \end{bmatrix} \\[1ex] =
        \det_M \{G\}
        \det_{M + \nex}
	   \begin{bmatrix}
	      I_M & G^{-1} (I_M + K)^{-1} V^{(p)} \\
	      (0, - I_\nex) & 0
	   \end{bmatrix} \\[1ex] =
        \det_M \{G\}
        \det_{M + \nex}
	   \begin{bmatrix}
	      I_M & G^{-1} (I_M + K)^{-1} V^{(p)} \\
	      0 & (0, I_\nex) G^{-1} (I_M + K)^{-1} V^{(p)}
	   \end{bmatrix} 
\end{multline}
which implies (\ref{firstfactorsplit}).
\end{proof}

In order to perform a similar calculation with the second such ratio
on the right hand side of (\ref{start}) we have to decompose the
norm determinant in the denominator first. For this purpose let
\begin{align}
     L^j_k & = \frac{K(\m_j - \m_k)}{\fa_n' (\m_k|h')} \epc
               && j,k = 1, \dots, M \epc \notag \\[1ex]
     {L^{(ii)}}^j_k & = \frac{K(\m_j - \m_k)}{\fa_n' (\m_k|h')} \epc
                        && j,k = 1, \dots, M - \nex \epc \notag \\[1ex]
     {L^{(ip)}}^j_k & = \frac{K(\m_j - y_k)}{\fa_n' (y_k|h')} \epc
                        && j = 1, \dots, M - \nex; k = 1, \dots, \nex \epc \notag \\[1ex]
     {L^{(pi)}}^j_k & = \frac{K(y_j - \m_k)}{\fa_n' (\m_k|h')} \epc
                        && j = 1, \dots, \nex; k = 1, \dots, M - \nex \epc \notag \\[1ex]
     {L^{(pp)}}^j_k & = \frac{K(y_j - y_k)}{\fa_n' (y_k|h')} \epc
                        && j, k = 1, \dots, \nex \epc \notag \\[1ex]
     {L^{(ih)}}^j_k & = \frac{K(\m_j - x_k)}{\fa_n' (x_k|h')} \epc
                        && j = 1, \dots, M - \nex; k = 1, \dots, \nex \epc \notag \\[1ex]
     {L^{(hi)}}^j_k & = \frac{K(x_j - \m_k)}{\fa_n' (\m_k|h')} \epc
                        && j = 1, \dots, \nex; k = 1, \dots, M - \nex \epc \notag \\[1ex]
     {L^{(hh)}}^j_k & = \frac{K(x_j - x_k)}{\fa_n' (x_k|h')} \epc
                        && j, k = 1, \dots, \nex \epc \notag \\[1ex]
     {L^{(ph)}}^j_k & = \frac{K(y_j - x_k)}{\fa_n' (x_k|h')} \epc
                        && j, k = 1, \dots, \nex \epc \notag \\[1ex]
     {L^{(hp)}}^j_k & = \frac{K(x_j - y_k)}{\fa_n' (y_k|h')} \epc
                        && j, k = 1, \dots, \nex
\end{align}
and
\begin{align}
     & K^* = \begin{pmatrix}
                L^{(ii)} & L^{(ih)} \\ L^{(hi)} & L^{(hh)}
             \end{pmatrix}, \notag \\[1ex]
     & C = \begin{pmatrix}
              L^{(hi)} & L^{(hh)} \\ L^{(pi)} & L^{(ph)}
           \end{pmatrix},\
     B = \begin{pmatrix}
              L^{(ih)} & - L^{(ip)} \\ L^{(hh)} & - L^{(hp)}
           \end{pmatrix},\
     D = I_{2 \nex} + \begin{pmatrix}
                         - L^{(hh)} & L^{(hp)} \\ - L^{(ph)} & L^{(pp)}
                       \end{pmatrix}, \notag \\[1ex]
     & J = D + C (I_M + K^*)^{-1} B \epp
\end{align}
Finally define
\begin{align}
     {W^{(i)}}^j_k & = \frac{\re(\m_j - \la_k)}{1 + \fa_n (\la_k|h')} -
                       \frac{\re(\la_k - \m_j)}{1 + 1/\fa_n (\la_k|h')} \epc
		       && j = 1, \dots, M - \nex; k = 1, \dots, M \epc \notag \\[1ex]
     {W^{(p)}}^j_k & = \frac{\re(y_j - \la_k)}{1 + \fa_n (\la_k|h')} -
                       \frac{\re(\la_k - y_j)}{1 + 1/\fa_n (\la_k|h')} \epc
		       && j = 1, \dots, \nex; k = 1, \dots, M \epc \notag \\[1ex]
     {W^{(h)}}^j_k & = \frac{\re(x_j - \la_k)}{1 + \fa_n (\la_k|h')} -
                       \frac{\re(\la_k - x_j)}{1 + 1/\fa_n (\la_k|h')} \epc
		       && j = 1, \dots, \nex; k = 1, \dots, M
\end{align}
and
\begin{equation} \label{defgstarmat}
     G^* = (I_M + K^*)^{-1} \begin{pmatrix} W^{(i)} \\ W^{(h)} \end{pmatrix} \epc \qd
     D^* = W^{(p)} {G^*}^{-1} (I_M + K^*)^{-1} \binom{0}{I_{\nex}} \epp
\end{equation}
With this notation we can state our next lemma.
\begin{lemma} \label{lem:det2}
{\bf Another factorization of determinants.}\\
The remaining determinants in (\ref{start}) admit the factorization
\begin{subequations}
\label{secondsplit}
\begin{align} \label{splitexcnorm}
     & \det_M \biggl\{\de^j_k + \frac{K (\m_j - \m_k)}{\fa_n' (\m_k|h')}\biggr\} =
        \det_M \{I_M + K^*\} \det_{2\nex} \{ J \} \epc
	\\[2ex]
     & \frac{\det_M \Bigl\{\frac{\re(\m_j - \la_k)}{1 + \fa_n (\la_k|h')}
                         - \frac{\re(\la_k - \m_j)}{1 + 1/\fa_n (\la_k|h')}\Bigr\}}
          {\det_M \{I_M + K^*\}} = \det_M \{G^*\} \det_{\nex} \{D^*\} \epp
	  \label{secondfactorsplit}
\end{align}
\end{subequations}
\end{lemma}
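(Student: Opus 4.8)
The plan is to derive both identities in (\ref{secondsplit}) from the block-determinant (Schur complement) expansion
\begin{equation*}
     \det \begin{pmatrix} A & B \\ C & D \end{pmatrix} = \det A \cdot \det \bigl( D - C A^{-1} B \bigr)
\end{equation*}
applied to suitably bordered matrices, in the same spirit as the proof of Lemma~\ref{lem:det1}. The only genuinely new ingredient is the bordering construction for (\ref{splitexcnorm}); it compensates for the fact that the excited-state Bethe roots $\{\m_j\}_{j=1}^M$ split into the $M-\nex$ interior roots $\m_1,\dots,\m_{M-\nex}$ together with the $\nex$ particles $y_j = \m_{M-\nex+j}$, whereas the holes $x_j$ are those zeros of $1+\fa_n(\cdot|h')$ inside ${\cal C}$ that are not Bethe roots. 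Throughout one may assume $I_M+K^*$, as well as the relevant minors, to be invertible (which holds, e.g., for small $T$, where $K^*$ is small); the general identity then follows by continuity.

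For (\ref{splitexcnorm}): by the Schur-complement formula and the definition $J = D + C (I_M+K^*)^{-1} B$ one has
\begin{equation*}
     \det_M\{I_M + K^*\} \det_{2\nex}\{J\} = \det_{M+2\nex} \begin{pmatrix} I_M + K^* & B \\ -C & D \end{pmatrix} \epp
\end{equation*}
Rows and columns of the $(M+2\nex)\times(M+2\nex)$ matrix on the right are labelled by (interior roots, holes) for the first $M$ and by (holes, particles) for the last $2\nex$; in particular the hole rapidities index \emph{two} blocks of $\nex$ rows and of $\nex$ columns. I would then add the second block of $\nex$ hole rows to the first, and afterwards subtract the first block of $\nex$ hole columns from the second. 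Using the explicit form of $K^*$, $B$, $C$, $D$ — and, crucially, the fact that the particle and hole blocks of all these matrices are built from one and the same kernel $K(\cdot-\cdot)/\fa_n'(\cdot|h')$ — these two elementary operations leave the second hole column block equal to $I_\nex$ sitting in the hole-row block (all other entries in that column block vanish), and, after cofactor expansion removes those $\nex$ rows and columns, a further expansion along the (now $I_\nex$-valued) first hole column block removes the remaining hole blocks. What is left is $\det_M\bigl(\begin{smallmatrix} I_{M-\nex}+L^{(ii)} & -L^{(ip)} \\ -L^{(pi)} & I_\nex+L^{(pp)} \end{smallmatrix}\bigr)$, which, after flipping the signs of the particle rows and of the particle columns, equals $\det_M\{\de^j_k + K(\m_j-\m_k)/\fa_n'(\m_k|h')\}$, i.e.\ the left-hand side of (\ref{splitexcnorm}). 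A check of the row and column indices occurring in the two Laplace expansions shows that the accumulated sign is $+1$.

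For (\ref{secondfactorsplit}): this is the verbatim analogue of Lemma~\ref{lem:det1}, with $K$ replaced by $K^*$, the matrices $V^{(\cdot)}$ replaced by $W^{(\cdot)}$, and the roles of rows and columns interchanged, since now the excited-state rapidities index the rows. The numerator of (\ref{secondfactorsplit}) is $\det_M\bigl(\begin{smallmatrix} W^{(i)} \\ W^{(p)} \end{smallmatrix}\bigr)$. I would border this $M\times M$ matrix into
\begin{equation*}
     \det_{M+\nex} \begin{pmatrix} W^{(i)} & 0 \\ W^{(h)} & I_\nex \\ W^{(p)} & 0 \end{pmatrix}
\end{equation*}
and evaluate it in two ways: a Laplace expansion along the last $\nex$ columns reproduces the numerator, while recognising the top $M$ rows as $\bigl(\begin{smallmatrix} W^{(i)} \\ W^{(h)} \end{smallmatrix} \mid \binom{0}{I_\nex}\bigr)$ and applying the Schur-complement formula yields $\det_M\bigl(\begin{smallmatrix} W^{(i)} \\ W^{(h)} \end{smallmatrix}\bigr) \cdot \det_\nex\{D^*\}$ up to sign, where one uses ${G^*}^{-1}(I_M+K^*)^{-1} = \bigl(\begin{smallmatrix} W^{(i)} \\ W^{(h)} \end{smallmatrix}\bigr)^{-1}$ and $\det_M\{G^*\} = \det_M\bigl(\begin{smallmatrix} W^{(i)} \\ W^{(h)} \end{smallmatrix}\bigr)\big/\det_M\{I_M+K^*\}$. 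Equating the two evaluations and dividing by $\det_M\{I_M+K^*\}$ gives (\ref{secondfactorsplit}).

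I expect the conceptual content of neither identity to be the difficulty; the main obstacle will be the bookkeeping in (\ref{splitexcnorm}). One has to fix a consistent ordering of the three blocks of rapidities — interior, hole, particle, with the hole block duplicated — choose exactly the sequence of block row/column operations for which the cancellations between $-C$ and the $K^*$-block, and between $D$ and the $B$-block, actually occur, and then track the signs of the successive Laplace expansions together with the final particle sign flip. These signs are arranged, by the very definitions of $B$, $C$, $D$ and $J$ following the determinant factorisation of \cite{KiKu19}, to combine to $+1$, but verifying this is the one step that requires genuine care.
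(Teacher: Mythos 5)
Your proposal is correct and follows essentially the route the paper intends: its proof of Lemma~\ref{lem:det2} is only the remark that one proceeds as in Lemma~\ref{lem:det1} by elementary row and column manipulations, and your bordered-matrix/Schur-complement computations (the $(M+2\nex)$-bordering with blocks $I_M+K^*$, $B$, $-C$, $D$ for (\ref{splitexcnorm}), and the $(M+\nex)$-bordering with the extra column $\binom{0}{I_\nex}$ for (\ref{secondfactorsplit})) supply exactly those manipulations, with the net signs indeed working out to $+1$ — in the second identity both evaluations of the bordered determinant carry the same factor $(-1)^\nex$, which cancels. The only blemishes are cosmetic: after your row/column operations it is the hole \emph{row} block, not the first hole column block, that reduces to $(0,I_\nex,0)$ and drives the second cofactor expansion, and the sign cancellation just mentioned deserves to be stated explicitly rather than left as ``up to sign''.
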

\begin{proof}
The proof is similar to the proof of Lemma~\ref{lem:det1} and relies
on elementary row- and column manipulations of determinants.
\end{proof}

In lemma~\ref{lem:det1} and \ref{lem:det2} we have factorized the
determinants in the original representation (\ref{start}) of the
form factor amplitudes for the generating function in a way that
will allow us to take the Trotter limit and the zero temperature
limit. Two of the determinants are of size $M$, which goes to infinity
for $N \rightarrow \infty$. We shall call them the `large determinants'.
The three remaining determinants are of size $\nex$, or $2 \nex$ and
will be called the `small determinants'. In the following subsections
the five determinants will be considered one by one. We shall see that
the matrix elements of $G$, $G^*$, $(I_M + K)^{-1}$ and $(I_M + K^*)^{-1}$
can be expressed in terms of functions that solve linear integral
equations and can be explicitly calculated in the limit $T \rightarrow 0+$.

\subsection{The small determinant in the denominator}
We start with the $2 \nex \times 2 \nex$ determinant in (\ref{splitexcnorm})
whose structure is familiar to us from previous work \cite{DGKS16b,GKKKS17}.
We first of all observe that
\begin{multline} \label{densmalldet2}
     J = I_{2 \nex} \\[1ex] +
         \begin{pmatrix}
	    - L^{(hh)} + L^{(hv)} (I_M + K^*)^{-1} L^{(vh)} &
	    L^{(hp)} - L^{(hv)} (I_M + K^*)^{-1} L^{(vp)} \\
	    - L^{(ph)} + L^{(pv)} (I_M + K^*)^{-1} L^{(vh)} &
	    L^{(pp)} - L^{(pv)} (I_M + K^*)^{-1} L^{(vp)}
	 \end{pmatrix} \epc
\end{multline}
where
\begin{subequations}
\begin{align}
     L^{(hv)} & = \bigl(L^{(hi)}, L^{(hh)}\bigr) \epc &&
     L^{(pv)} = \bigl(L^{(pi)}, L^{(ph)}\bigr) \epc \\[1ex]
     L^{(vh)} & = \binom{L^{(ih)}}{L^{(hh)}} \epc &&
     L^{(vp)} = \binom{L^{(ip)}}{L^{(hp)}} \epp
\end{align}
\end{subequations}

In order to simplify (\ref{densmalldet2}) we define a resolvent kernel as the
solution of the linear integral equation
\begin{equation} \label{rstarrightint}
     R^* (\la,\m) = K(\la - \m) - \int_{\cal C} \frac{\rd \nu}{2 \p \i} 
                               \frac{K(\la - \nu) R^* (\nu, \m)}{1 + \fa_n (\nu|h')} \epp
\end{equation}
Recall that $R^* (\la,\m)$ then also satisfies the integral equation
\begin{equation} \label{rstarleftint}
     R^* (\la,\m) = K(\la - \m) - \int_{\cal C} \frac{\rd \nu}{2 \p \i} 
                               \frac{R^* (\la, \nu) K(\nu - \m)}{1 + \fa_n (\nu|h')} \epp
\end{equation}
Let $\la \in \Int {\cal C}$. The contour ${\cal C}$ is constructed in
such a way that $\m \in \Int {\cal C} \cup {\cal Y}_n$ implies that
$\m \pm \i \g \in \Ext {\cal C}$. Hence, $R^* (\cdot, \m)$ is holomorphic
in $\Int {\cal C}$ and on the boundary of this domain. Then the residue
theorem applied to (\ref{rstarrightint}) implies that
\begin{equation} \label{rstarrightsum}
     R^* (\la,\m) = K(\la - \m)
               - \sum_{l = 1}^M \frac{K(\la - z_l) R^* (z_l, \m)}
	                                {\fa_n' (z_l|h')} 
\end{equation}
for all $\m \in {\cal Z}_n \cup {\cal Y}_n$ and for all $\la \in \Int {\cal C}$.
Equation (\ref{rstarrightsum}) also defines the analytic continuation
of $R^* (\cdot, \m)$ to the complex plane. Similarly, (\ref{rstarleftint})
implies that
\begin{equation} \label{rstarleftsum}
     R^* (\la,\m) = K(\la - \m)
        - \sum_{l = 1}^M \frac{R^* (\la, z_l) K(z_l - \m)}
	                         {\fa_n' (z_l|h')}
\end{equation}
for all $\la \in {\cal Z}_n \cup {\cal Y}_n$, for all $\m \in
\Int {\cal C}$, and also for $\m \in {\mathbb C}$ for which the
right hand side is defined by analytic continuation.

Upon defining a matrix $R^*$ with matrix elements
\begin{equation}
     {R^*}^j_k = \frac{R^* (z_j, z_k)}{\fa_n' (z_k|h')}
\end{equation}
equation (\ref{rstarrightsum}) implies that
\begin{equation} \label{rstark}
     (I_M + K^*)^{-1} = I_M - R^* \epp
\end{equation}
This can be used to derive the following
\begin{lemma} \label{lem:measurefromdet}
{\bf Jacobian of higher level Bethe Ansatz equations \cite{GKKKS17}.}\\
The small determinant $\det_{2 \nex} \{J\}$ has
the representations
\begin{equation} \label{detjrform}
     \det_{2 \nex} \{ J\} = \det_{2 \nex}
        \begin{pmatrix}
	   \de^j_k - \frac{R^* (x_j, x_k)}{\fa_n' (x_k|h')} &
	   \frac{R^* (x_j, y_k)}{\fa_n' (y_k|h')} \\[2ex]
	   - \frac{R^* (y_j, x_k)}{\fa_n' (x_k|h')} &
	   \de^j_k + \frac{R^* (y_j, y_k)}{\fa_n' (y_k|h')}
        \end{pmatrix}
\end{equation}
and
\begin{equation}
     \det_{2 \nex} \{J\} =
        \Biggl[\prod_{j=1}^{\nex} \frac{-1}{\fa_n' (x_j|h') \fa_n' (y_j|h')}\Biggr]
        \det_{2 \nex} \{{\cal J}\} \epc
\end{equation}
where
\begin{equation} \label{defjtilde}
     \det_{2 \nex} \{{\cal J}\} = \det_{2 \nex}
	\begin{pmatrix}
	   \6_{u_k} \fa^- (u_j |{\cal U},{\cal V},h') &
	   \6_{v_k} \fa^- (u_j |{\cal U},{\cal V},h') \\
	   \6_{u_k} \fa^+ (v_j |{\cal U},{\cal V},h') &
	   \6_{v_k} \fa^+ (v_j |{\cal U},{\cal V},h')
	\end{pmatrix}_{\substack{{\cal U} = {\cal X}_n \\ {\cal V} = {\cal Y}_n}} \epp
\end{equation}
\end{lemma}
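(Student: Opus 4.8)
The lemma splits into two independent claims: the representation (\ref{detjrform}) of $\det_{2\nex}\{J\}$ in terms of the resolvent $R^*$, which is pure linear algebra, and the identification of the resulting matrix with the Jacobian of the higher-level Bethe Ansatz equations, which is analytic and rests on differentiating the off-shell non-linear integral equation (\ref{nlieoffshell1}). I would establish them in this order.

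\textbf{Proof of (\ref{detjrform}).} The idea is to collapse the $M$-dimensional block $(I_M+K^*)^{-1}$ in $J=D+C(I_M+K^*)^{-1}B$ (equivalently in (\ref{densmalldet2})) using the summation form (\ref{rstarrightsum}) of the resolvent equation. Each column of $B$ equals, up to the scalar $1/\fa_n'(w|h')$ and possibly a sign, a vector $\bigl(K(z_j-w)\bigr)_{j=1}^M$ with $w$ a hole $x_k$ or a particle $y_k$, and each row of $C$ is likewise proportional to $\bigl(K(w-z_l)\bigr)_{l=1}^M$. Evaluating (\ref{rstarrightsum}) at $\la=z_j\in\Int{\cal C}$ yields $\bigl(K(z_j-w)\bigr)_j=(I_M+K^*)\bigl(R^*(z_l,w)\bigr)_l$, hence the columns of $(I_M+K^*)^{-1}B$ are $\frac{1}{\fa_n'(w|h')}\bigl(R^*(z_l,w)\bigr)_l$. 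Contracting with $C$ and using (\ref{rstarrightsum}) a second time --- now with $\la=w_a\in{\cal X}_n\cup{\cal Y}_n$, which is allowed since (\ref{rstarrightsum}) also supplies the analytic continuation of $R^*(\cdot,\m)$ off ${\cal C}$ --- turns the remaining $M$-fold sum into $K(w_a-w)-R^*(w_a,w)$. As the off-diagonal blocks of $D$ are precisely the bare terms $\mp K(w_a-w)/\fa_n'(w|h')$, the $K$-parts cancel and the matrix in (\ref{detjrform}) remains. (Equivalently one may insert $(I_M+K^*)^{-1}=I_M-R^*$ from (\ref{rstark}) into the Schur-complement form of $J$.)

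\textbf{Proof of the Jacobian identity.} Rescale the $k$-th column of the matrix in (\ref{detjrform}) by $-\fa_n'(x_k|h')$ for the hole columns and by $-\fa_n'(y_k|h')$ for the particle columns, and flip the sign of the $\nex$ particle rows; this multiplies the determinant by $(-1)^{\nex}\prod_{j=1}^{\nex}\fa_n'(x_j|h')\fa_n'(y_j|h')$, the reciprocal of the prefactor in the lemma. It then remains to identify the transformed matrix with ${\cal J}$. For this, differentiate (\ref{nlieoffshell1}) with respect to a parameter $u_k$ at fixed $\la$. The driving sum produces $-\th'(\la-u_k|\g)=-K(\la-u_k)$, while differentiating the integral term --- using $\6_{u_k}\Ln^-(1+\fa^-)=\6_{u_k}\fa^-/(1+\fa^-)$ (independent of the branch), $\fa^-=1/\fa_n$ on the on-shell sets, and $K_0(\cdot|\g)=\th'(\cdot|\g)/(2\p\i)$ --- reproduces the integral operator on ${\cal C}$ with kernel $K(\la-\m)/\bigl(2\p\i(1+\fa_n(\m|h'))\bigr)$, i.e.\ the one defining $R^*$ in (\ref{rstarrightint}). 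By uniqueness of the solution, $\6_{u_k}\ln\fa^-(\la|{\cal X}_n,{\cal Y}_n,h')=-R^*(\la,x_k)$ and, in the same way, $\6_{v_k}\ln\fa^-(\la|{\cal X}_n,{\cal Y}_n,h')=R^*(\la,y_k)$. Evaluating at $\la=x_j$ or $\la=y_j$, where $\fa^-=\fa^+=-1$, $\6_\la\fa^-(x_k)=-\fa_n'(x_k|h')$, $\6_\la\fa^+(y_k)=\fa_n'(y_k|h')$, and adding the diagonal $\6_\la$-contributions --- present only in the $u$-$u$ and $v$-$v$ blocks, since $\6_{u_k}\fa^-(u_j|\dots)$ in (\ref{defjtilde}) is the total derivative along the on-shell condition --- reproduces exactly the four blocks of ${\cal J}$.

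\textbf{Main obstacle.} The column and row manipulations and the index bookkeeping of the first part are routine. The delicate step is the differentiation of (\ref{nlieoffshell1}): one must justify differentiating under the contour integral and, in particular, that the contour ${\cal C}$ and the splitting ${\cal D}_\auf^-\cup{\cal D}_\ab^-$ defining $\Ln^-$ may be held fixed as ${\cal U},{\cal V}$ vary, i.e.\ that the zeros of $1+\fa^-$ near ${\cal C}$ do not cross it; and one must be careful about whether $\6_{u_k}$ in (\ref{defjtilde}) means the partial or the total derivative, since the $\de^j_k$ terms --- and hence the whole match with $J$ --- depend on it.
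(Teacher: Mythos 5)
Your proposal is correct and follows essentially the same route as the paper: the first identity is obtained by collapsing $(I_M+K^*)^{-1}$ via the resolvent summation identities (the paper contracts $C(I_M+K^*)^{-1}$ first and then invokes the left equation (\ref{rstarleftsum}), while you contract $(I_M+K^*)^{-1}B$ and reuse the analytically continued (\ref{rstarrightsum}) --- equivalent bookkeeping), and the second by differentiating the off-shell equation (\ref{nlieoffshell1}), identifying the parametric log-derivatives of $\fa^{\mp}$ with $\mp R^*(\cdot,x_k)$ and $\pm R^*(\cdot,y_k)$ through uniqueness of the solution of (\ref{rstarrightint}), and evaluating on shell with the diagonal $\6_\la$ contributions before rescaling rows and columns. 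Your reading of $\6_{u_k}$ in (\ref{defjtilde}) as the total derivative of the map (\ref{aasamap}) is indeed the intended one, and your sign and prefactor bookkeeping checks out.
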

\begin{proof}
Let us consider, for instance, the upper left element of $J$. First
of all, (\ref{rstarrightsum}) implies that
\begin{equation}
     \sum_{l = 1}^M {L^{(hv)}}^j_l \bigl[(I_M + K^*)^{-1}\bigr]^l_k =
       \sum_{l = 1}^M \frac{K(x_j - z_l)}{\fa_n' (z_l|h')} (\de^l_k - {R^*}^l_k) =
       \frac{R^* (x_j, z_k)}{\fa_n' (z_k|h')}
\end{equation}
for $j = 1, \dots, \nex$, $k = 1, \dots, M$. Using (\ref{rstarleftsum})
we infer
\begin{multline}
     \sum_{l, m = 1}^M {L^{(hv)}}^j_l \bigl[(I_M + K^*)^{-1}\bigr]^l_m {L^{(vh)}}^m_k \\
        = \sum_{m=1}^M \frac{R^* (x_j, z_m)}{\fa_n' (z_m|h')}
                     \frac{K(z_m - x_k)}{\fa_n' (x_k|h')} =
        \frac{K(x_j - x_k)}{\fa_n' (x_k|h')} - \frac{R^* (x_j, x_k)}{\fa_n' (x_k|h')}
\end{multline}
which is equivalent to
\begin{equation}
     \de^j_k - {L^{(hh)}}^j_k + \bigl[{L^{(hv)}} (I_M + K^*)^{-1} {L^{(vh)}}\bigr]^j_k
        = \de^j_k - \frac{R^* (x_j, x_k)}{\fa_n' (x_k|h')} \epp
\end{equation}
Thus, we have obtained the upper left element of $J$. The other
elements are obtained in a similar way.

Taking the derivative of (\ref{nlieoffshell1}) with respect to $u_k$ we
obtain
\begin{multline}
     \frac{\6_{u_k} \fa^- (\la|{\cal U},{\cal V},h')}{\fa^- (\la|{\cal U},{\cal V},h')} =
        - \frac{\6_{u_k} \fa^+ (\la|{\cal U},{\cal V},h')}{\fa^+ (\la|{\cal U},{\cal V},h')}
	\\[1ex]
	= - K(\la - u_k)
	  - \int_{\cal C} \frac{\rd \m}{2 \p \i} K(\la - \m)
	    \frac{\6_{u_k} \fa^- (\m|{\cal U},{\cal V},h')}
	         {1 + \fa^- (\m|{\cal U},{\cal V},h')} \epp
\end{multline}
Using the first equation in the second equation and comparing with
equation (\ref{rstarrightint}) we conclude that
\begin{equation}
     R^* (\la, x_k) = \pm
        \frac{\6_{u_k} \fa^\pm (\la|{\cal U},{\cal V},h')}{\fa^\pm (\la|{\cal U},{\cal V},h')}
	\biggl|_{\substack{{\cal U} = {\cal X}_n \\ {\cal V} = {\cal Y}_n}} \epp
\end{equation}
Thus, using the Bethe Ansatz equations,
\begin{equation}
     \frac{\6_{u_k} \fa^+ (u_j |{\cal U},{\cal V},h')}{\fa_n' (x_k|h')}
	\biggl|_{\substack{{\cal U} = {\cal X}_n \\ {\cal V} = {\cal Y}_n}} =
     - \frac{\6_{u_k} \fa^- (u_j |{\cal U},{\cal V},h')}{\fa_n' (x_k|h')}
	\biggl|_{\substack{{\cal U} = {\cal X}_n \\ {\cal V} = {\cal Y}_n}} =
	\de^j_k - \frac{R^* (\la_j, x_k)}{\fa_n' (x_k|h')} \epp
\end{equation}
The other entries of the matrix $J$ can be treated in a similar way.
\end{proof}

Note that Lemma~\ref{lem:measurefromdet} will allow us to convert our
sum over excited states (\ref{defgenfun}) into a sum over classes of
excited states, each class being represented by a multiple integral.

\subsection{\boldmath The functions $G$ and $G^*$}
In order to prepare for the calculation of the remaining determinants
in (\ref{firstfactorsplit}) and in (\ref{secondfactorsplit}) we
introduce a pair of functions $G$, $G^*$ which, for every 
$\m \in \Int {\cal C}$, are defined as the solutions of the linear
integral equation
\begin{subequations}
\label{defggstar}
\begin{align} \label{defg}
     G (\la,\m) & = \re(\m - \la) - \int_{\cal C} \frac{\rd \nu}{2 \p \i} 
                              \frac{K(\la - \nu) G (\nu, \m)}{1 + \fa (\nu|h)} \epc \\[1ex]
     G^* (\la,\m) & = \re(\m - \la)
                 - \int_{\cal C} \frac{\rd \nu}{2 \p \i} 
		      \frac{K(\la - \nu) G^* (\nu, \m)}{1 + \fa_n (\nu|h')} \epp
		      \label{defgstar}
\end{align}
\end{subequations}
\begin{lemma} \label{lem:matfung}
{\bf \boldmath Matrices and functions $G$ and $G^*$.}\\ The matrix
elements of $G$ and $G^*$ defined in equations (\ref{defgmat})
and (\ref{defgstarmat}) are naturally expressed in terms of the
function $G$ and $G^*$ defined in (\ref{defggstar}),
\begin{equation} \label{ggstarmatfun}
     G^j_k = - G(\la_j, z_k) \epc \qd {G^*}^j_k = - G^* (z_j,\la_k) \epc
\end{equation}
$j, k = 1, \dots, M$.
\end{lemma}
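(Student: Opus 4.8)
The plan is to verify the two identities in (\ref{ggstarmatfun}) column by column for $G$ and row by row for $G^*$, by showing that the relevant vector of values of the \emph{function} $G$ (resp.\ $G^*$) solves the linear system which, by (\ref{defgmat}) (resp.\ (\ref{defgstarmat})), defines the corresponding column (resp.\ row) of the \emph{matrix} $G$ (resp.\ $G^*$).

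First I would convert the defining integral equations (\ref{defggstar}) into residue sums. Viewing the integrand of (\ref{defg}) as a function of the integration variable $\nu$, its poles inside ${\cal C}$ are those produced by the zeros of $1+\fa(\cdot|h)$ --- which for the dominant state are exactly the $M$ Bethe roots $\la_l$, since the dominant state carries no particles or holes --- together with the pole of $G(\cdot,\m)$ at $\m$. The latter is inherited from the inhomogeneous term $\re(\m-\cdot)$, the integral term in (\ref{defg}) being holomorphic near that point; a local expansion of $\re$ gives $\res_{\nu=\m}G(\nu,\m)=-1$. Hence, for $\la\in\Int{\cal C}$,
\[
     G(\la,\m) = \re(\m-\la)
        - \sum_{l=1}^M \frac{K(\la-\la_l)\,G(\la_l,\m)}{\fa'(\la_l|h)}
        + \frac{K(\la-\m)}{1+\fa(\m|h)} \epc
\]
and, by the identical argument with ${\cal Z}_n=\{z_l\}_{l=1}^M$ playing the role of $\{\la_l\}$ (compare (\ref{rstarrightsum})),
\[
     G^*(\la,\m) = \re(\m-\la)
        - \sum_{l=1}^M \frac{K(\la-z_l)\,G^*(z_l,\m)}{\fa_n'(z_l|h')}
        + \frac{K(\la-\m)}{1+\fa_n(\m|h')} \epp
\]
The point calling for care here is that the ``diagonal'' pole at $\nu=\m$ must not coincide with a Bethe-root pole: in the first identity $\m=z_k$ is a zero of $1+\fa_n(\cdot|h')$ but not of $1+\fa(\cdot|h)$, while in the second $\m=\la_k$ is a zero of $1+\fa(\cdot|h)$ but not of $1+\fa_n(\cdot|h')$, so in either case the two families of poles stay disjoint and the ordinary simple-pole residue formula applies unchanged.

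Second comes a purely algebraic rewriting of the matrix entries. Using $K(x)=-\re(x)-\re(-x)$ and $1/(1+1/a)=a/(1+a)$, and recalling the labelling of the $z_j$ fixed at the beginning of Section~4, the $(j,k)$-entry of $\bigl(V^{(i)},V^{(h)}\bigr)$ reduces to $-\re(z_k-\la_j)-K(\la_j-z_k)/(1+\fa(z_k|h))$, and the $(j,k)$-entry of $\begin{pmatrix}W^{(i)}\\W^{(h)}\end{pmatrix}$ reduces to $-\re(\la_k-z_j)-K(z_j-\la_k)/(1+\fa_n(\la_k|h'))$; the interior- and hole-sub-blocks merge into these single uniform formulas.

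Finally I would combine the two ingredients. Setting $\la=\la_j$ and $\m=z_k$ in the residue formula for $G$ and recalling $K^j_k=K(\la_j-\la_k)/\fa'(\la_k|h)$, one sees that the vector $v$ with entries $v_l=-G(\la_l,z_k)$, $l=1,\dots,M$, has the property that $(I_M+K)\,v$ equals the $k$-th column of $\bigl(V^{(i)},V^{(h)}\bigr)$; by (\ref{defgmat}) this $v$ is then the $k$-th column of the matrix $G$, that is $G^j_k=-G(\la_j,z_k)$. Symmetrically, taking $\la=z_j$ and $\m=\la_k$ in the residue formula for $G^*$ shows that the vector $w$ with $w_l=-G^*(z_l,\la_k)$ satisfies that $(I_M+K^*)\,w$ equals the $k$-th column of $\begin{pmatrix}W^{(i)}\\W^{(h)}\end{pmatrix}$, whence by (\ref{defgstarmat}) one obtains ${G^*}^j_k=-G^*(z_j,\la_k)$. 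I expect the only genuinely delicate step to be the bookkeeping of the diagonal residue at $\nu=\m$, together with the check that it never merges with a Bethe-root pole; for this reason I would set up the two residue formulas first and carefully, everything afterwards reducing to the matching of two explicit expressions.
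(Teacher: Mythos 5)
Your proposal is correct and follows essentially the same route as the paper: apply the residue theorem to the defining integral equations (picking up the Bethe-root poles of $1/(1+\fa)$ resp.\ $1/(1+\fa_n)$ and the simple pole of $G(\cdot,\m)$ resp.\ $G^*(\cdot,\m)$ at $\m$ with residue $-1$), then compare the resulting finite linear system with the one equivalent to (\ref{defgmat}) and (\ref{defgstarmat}). The only cosmetic difference is that you absorb the identity $K(x)=-\re(x)-\re(-x)$ into the matrix entries $V$, $W$, while the paper folds it into the driving term of the residue sum; the content is identical.
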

\begin{proof}
Recall that $\re(\m - \la) = \ctg(\m - \la) - \ctg(\m - \la - \i \g)$.
Thus, $\m - \i \g \in \Ext {\cal C}$, if $\m \in \Int {\cal C}$ and
$G(\cdot, \m)$ then has a single simple pole with residue $- 1$ at $\m$.
It follows that
\begin{equation} \label{gsum}
     G(\la,\m) = \frac{\re(\m - \la)}{1 + 1/\fa (\m|h)} -
              \frac{\re(\la - \m)}{1 + \fa (\m|h)}
             - \sum_{l = 1}^M \frac{K(\la - \la_l) G (\la_l, \m)}
	                              {\fa' (\la_l|h)} \epp
\end{equation}
Here we have used the second equation (\ref{defk}). We compare
(\ref{gsum}) with equation (\ref{defgmat}), which is equivalent to
\begin{equation}
     G^j_k + \sum_{l = 1}^M \frac{K(\la_j - \la_l) G^l_k}{\fa' (\la_l|h)}
            = \frac{\re(\la_j - z_k)}{1 + \fa (z_k|h)}
	      - \frac{\re(z_k - \la_j)}{1 + 1/\fa (z_k|h)} \epp
\end{equation}
Then the first equation (\ref{ggstarmatfun}) follows.

Similarly,
\begin{equation} \label{gstarsum}
     G^* (\la,\m) = \frac{\re(\m - \la)}{1 + 1/\fa_n (\m|h')} -
                 \frac{\re(\la - \m)}{1 + \fa_n (\m|h')}
                - \sum_{l = 1}^M \frac{K(\la - z_l) G^* (z_l, \m)}
	                                 {\fa_n' (z_l|h')} \epc
\end{equation}
while, on the other hand, the definition of the matrix $G^*$ in
(\ref{defgstarmat}) is equivalent to
\begin{equation} \label{gstarmatsum}
     {G^*}^j_k + \sum_{l = 1}^M \frac{K(z_j - z_l) {G^*}^l_k}
                                        {\fa_n' (z_l|h')}
        = \frac{\re(z_j - \la_k)}{1 + \fa_n (\la_k|h')} -
	  \frac{\re(\la_k - z_j)}{1 + 1/\fa_n (\la_k|h')} \epp
\end{equation}
Comparison of (\ref{gstarsum}) and (\ref{gstarmatsum}) shows that
the second equation (\ref{ggstarmatfun}) holds as well.
\end{proof}

For the actual calculation of the determinants of the matrices $G$, $G^*$
and of their inverses that appear in the remaining small determinants,
we need a closer characterization of the functions $G$, $G^*$. Important
tools in this context are the resolvent kernels $R^*$ defined in
(\ref{rstarrightint}) and its counterpart associated with the dominant
state, which may be defined as the solution of the integral equation
\begin{equation} \label{rrightint}
     R(\la,\m) = K(\la - \m) - \int_{\cal C} \frac{\rd \nu}{2 \p \i} 
                            \frac{R(\la, \nu) K(\nu - \m)}{1 + \fa (\nu|h)} \epp
\end{equation}
\begin{lemma} \label{lem:gresolve}
{\bf Representation of the functions $G$ and $G^*$ by means of resolvent
kernels and implications.}
\begin{enumerate}
\item
The functions $G$ and $G^*$ introduced in (\ref{defggstar}) can be
represented as
\begin{subequations}
\label{ggstarresolv}
\begin{align}
     G (\la,\m) & = \re(\m - \la) - \int_{\cal C} \frac{\rd \nu}{2 \p \i} 
                              \frac{R(\la,\nu) \re (\m - \nu)}{1 + \fa (\nu|h)} \epc \\[1ex]
     G^* (\la,\m) & = \re(\m - \la)
                 - \int_{\cal C} \frac{\rd \nu}{2 \p \i} 
		      \frac{R^* (\la,\nu) \re(\m - \nu)}{1 + \fa_n (\nu|h')} \epp
\end{align}
\end{subequations}
\item
For $\la \in \Int {\cal C}$ the functions $G(\la, \cdot)$ and $G^* (\la, \cdot)$
are meromorphic inside ${\cal C}$, where they both have a single simple pole with
residue $+1$ at $\la$.
\end{enumerate}
\end{lemma}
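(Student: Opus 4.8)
The plan is to derive part~(i) from the elementary fact that the solution of a linear integral equation of the second kind is obtained by acting with the resolvent, and then to extract part~(ii) directly from the formula so obtained.

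For part~(i), I would first observe that \eqref{defg} says that, for each fixed $\m \in \Int {\cal C}$, the function $\la \mapsto G(\la,\m)$ solves $(\id + \mathsf{K})\, G(\cdot,\m) = \re(\m - \cdot)$, where $\mathsf{K}$ denotes the integral operator on ${\cal C}$ with kernel $K(\la - \nu)/[2\p\i\,(1 + \fa(\nu|h))]$. Now \eqref{rrightint} is precisely the statement that the resolvent operator $\mathsf{R}$, with kernel $R(\la,\nu)/[2\p\i\,(1 + \fa(\nu|h))]$, satisfies $\mathsf{R} = \mathsf{K} - \mathsf{R}\mathsf{K}$; as is standard, $\mathsf{R}$ then also obeys the companion identity $\mathsf{R} = \mathsf{K} - \mathsf{K}\mathsf{R}$, that is,
\begin{equation*}
     R(\la,\m) = K(\la - \m) - \int_{\cal C} \frac{\rd\nu}{2\p\i}\, \frac{K(\la - \nu)\, R(\nu,\m)}{1 + \fa(\nu|h)} \epc
\end{equation*}
just as, for $R^*$, both \eqref{rstarrightint} and \eqref{rstarleftint} hold. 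Together these two identities give $(\id + \mathsf{K})^{-1} = \id - \mathsf{R}$, so applying $\id - \mathsf{R}$ to \eqref{defg} yields the first line of \eqref{ggstarresolv}. Equivalently, and avoiding operator language, one substitutes the claimed right-hand side of \eqref{ggstarresolv} into the integral term of \eqref{defg}, interchanges the two integrations, carries out the inner one with the companion identity, and checks that the two terms containing only $K$ and $\re$ cancel; uniqueness of the solution of \eqref{defg} (which holds because $\mathsf{R}$ exists) then identifies the result with $G$. The statement for $G^*$ follows verbatim with $\fa(\cdot|h)$ replaced by $\fa_n(\cdot|h')$ and $R$ by $R^*$.

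For part~(ii), fix $\la \in \Int {\cal C}$ and argue from \eqref{ggstarresolv}. Since $\re(x) = \ctg(x) - \ctg(x - \i \g)$ is singular only at the points congruent to $0$ and to $\i \g$ modulo $\p$, the map $\m \mapsto \re(\m - \nu)$ is holomorphic at every $\m$ with $\m \notin {\cal C}$ and $\m - \i \g \notin {\cal C}$, uniformly for $\nu$ on ${\cal C}$. By the construction of ${\cal C}$, every $\m \in \Int {\cal C}$ satisfies $\m \pm \i \g \in \Ext {\cal C}$, and for $\la \in \Int {\cal C}$ the kernels $R(\la,\cdot)$ and $R^*(\la,\cdot)$ are holomorphic in a neighbourhood of ${\cal C}$; hence the integral in \eqref{ggstarresolv} is holomorphic in $\m$ throughout $\Int {\cal C}$. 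Consequently the only singularities of $G(\la,\cdot)$ inside ${\cal C}$ are those of $\re(\m - \la)$: the summand $\ctg(\m - \la - \i \g)$ is regular there because $\la + \i \g \in \Ext {\cal C}$, whereas $\ctg(\m - \la)$ contributes a single simple pole at $\m = \la$, with residue $+1$ because $\ctg(x) \sim 1/x$ as $x \to 0$. The same reasoning gives the corresponding statement for $G^*(\la,\cdot)$. The only step demanding care is the bookkeeping of which translates of points on ${\cal C}$ lie inside or outside ${\cal C}$; this is fixed entirely by the explicit shape of ${\cal C}$ — a thin rectangle of height $\g/2$ just below the real axis, together with its $\i 0$-regularisations — and coincides with the bookkeeping already used in the proof of Lemma~\ref{lem:matfung}. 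I do not expect any deeper obstacle.
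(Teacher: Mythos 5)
Your proposal is correct and follows essentially the same route as the paper, whose proof simply states that (i) is an immediate consequence of the definitions of the resolvent kernels and that (ii) holds because $R(\la,\cdot)$ and $R^*(\la,\cdot)$ are holomorphic on and inside ${\cal C}$. You merely spell out the standard resolvent identities $(\id-\mathsf{R})(\id+\mathsf{K})=\id$ and the pole bookkeeping for $\re(\m-\la)$ (pole at $\m=\la$ inside ${\cal C}$, pole at $\m=\la+\i\g$ outside) that the paper leaves implicit.
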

\begin{proof}
(i) is an immediate consequence of the definitions of the resolvent kernels.
(ii) holds, since $R(\la, \cdot)$ and $R^*(\la, \cdot)$ are holomorphic on
and inside ${\cal C}$ as can be seen from (\ref{rstarrightint}),
(\ref{rrightint}).
\end{proof}
Define two functions
\begin{multline}
     R_0 (\la) = \frac{1}{2\p} \\ + \frac{1}{4 \p \g}
                 \Bigl\{\ps_{q^4} \Bigl(\tst{1 + \frac{\i \la}{2 \g}}\Bigr)
                         + \ps_{q^4} \Bigl(\tst{1 - \frac{\i \la}{2 \g}}\Bigr)
                         - \ps_{q^4} \Bigl(\tst{\2 + \frac{\i \la}{2 \g}}\Bigr)
                         - \ps_{q^4} \Bigl(\tst{\2 - \frac{\i \la}{2 \g}}\Bigr)\Bigr\} \epc
\end{multline}
where $\ps_q (\la) = \6_\la \ln \G_q (\la)$ is the $q$-digamma function, and
\begin{equation} \label{defgalpha}
     g_\a (\la,\m) = \frac{\dh_1' \dh_2 (\m - \la - \a)}{\dh_2 (\a) \dh_1 (\m - \la)} \epc
\end{equation}
where $\a \in {\mathbb C}$ is a parameter (recall the definitions of the
$q$-gamma, $q$-Barnes and theta functions in Section~\ref{sec:lowtfun}).
\begin{lemma} \label{lem:ggstarlowt}
{\bf \boldmath Low-$T$ form of the integral equations for $G$ and $G^*$.}
\begin{enumerate}
\item
Let $\m \in \Int {\cal C}$. The functions $G(\cdot, \m)$, $G^*(\cdot, \m)$
satisfy the linear integral equations
\begin{subequations}
\label{ggstarlowtform}
\begin{align} \label{glowtform}
     G(\la,\m) & = g_0 (\la,\m) -
	\sum_{\s = \pm} \s
	   \int_{{\cal C}_\s} \rd \nu
	      \frac{R_0 (\la - \nu) G (\nu, \m)}{1 + \fa^\s (\nu|h)} \epc \\[1ex]
     G^* (\la,\m) & = g_0 (\la,\m) -
	\sum_{\s = \pm} \s
	   \int_{{\cal C}_\s} \rd \nu
	      \frac{R_0 (\la - \nu) G^* (\nu, \m)}{1 + \fa_n^\s (\nu|h')} \epp
	      \label{gstarlowtform}
\end{align}
\end{subequations}
\item
If $\m \in \Int {\cal C}$, then
\begin{equation} \label{glim}
     G(\la, \m) = g_0 (\la, \m) + \CO(T^\infty) \epc \qd
     G^* (\la, \m) = g_0 (\la, \m) + \CO(T^\infty) \epp
\end{equation}
\item
The functions $G(\la, \cdot)$, $G^* (\la, \cdot)$ can be analytically
continued to the upper half plane. In particular, for $\m \in {\cal Y}_n$
we have the representation
\begin{equation} \label{gcontinuedmu}
     G(\la,\m) = g_0 (\la,\m) + \frac{2 \p \i R_0 (\la - \m)}{1 + \fa (\m|h)} - 
	\sum_{\s = \pm} \s
	   \int_{{\cal C}_\s} \rd \nu
	      \frac{R_0 (\la - \nu) G (\nu, \m)}{1 + \fa^\s (\nu|h)} \epc
\end{equation}
implying the low-$T$ asymptotic behaviour
\begin{equation} \label{glimyparticle}
     G(\la, \m) = g_0 (\la,\m)
        + \frac{2 \p \i R_0 (\la - \m)}{1 - \frac{\fa(\m|h)}{\fa_n (\m|h')}}
	+ \CO(T^\infty) \epp
\end{equation}
\end{enumerate}
\end{lemma}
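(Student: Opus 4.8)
The plan is to run, for $G$ and $G^{*}$, the low-temperature reduction of a linear integral equation over ${\cal C}$ already carried out in Section~\ref{subsec:nlie} when passing from the Bethe Ansatz equations to the non-linear integral equation~(\ref{nlie1}). Since~(\ref{defg}) and~(\ref{defgstar}) differ only by $\fa(\cdot|h)\leftrightarrow\fa_{n}(\cdot|h')$, it suffices to treat $G$ and copy the argument; and only part~(i) needs work, parts~(ii) and~(iii) being corollaries of it. I would split ${\cal C}={\cal C}_{l}+{\cal C}_{-}+{\cal C}_{r}+{\cal C}_{+}$ in~(\ref{defg}) and drop the vertical pieces: as $K$, $\re$ and, by~(\ref{gsum}), $G(\cdot,\m)$ are $\p$-periodic and ${\cal C}_{r}=-{\cal C}_{l}+\p$, the contributions of ${\cal C}_{l}$ and ${\cal C}_{r}$ cancel. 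On the two horizontal pieces I would insert the low-$T$ form~(\ref{anlowt}) of $\fa$: for $0<h<h_{\ell}$ one has $\e(\la|h)<0$ on $[-\p/2,\p/2]$ (immediate from~(\ref{dressede}) and~(\ref{hlow})), so $|\fa(\nu|h)|\to\infty$ along ${\cal C}_{+}$ and $|\fa(\nu|h)|\to 0$ along ${\cal C}_{-}$. Writing $1/(1+\fa)=1/(1+\fa^{+})$ on ${\cal C}_{+}$ and $1/(1+\fa)=1-1/(1+\fa^{-})$ on ${\cal C}_{-}$ (recall $\fa^{+}=\fa$, $\fa^{-}=1/\fa$), equation~(\ref{defg}) becomes
\begin{multline*}
  G(\la,\m)=\re(\m-\la)-\int_{{\cal C}_{-}}\frac{\rd\nu}{2\p\i}\,K(\la-\nu)G(\nu,\m)\\
   +\int_{{\cal C}_{-}}\frac{\rd\nu}{2\p\i}\,\frac{K(\la-\nu)G(\nu,\m)}{1+\fa^{-}(\nu|h)}
   -\int_{{\cal C}_{+}}\frac{\rd\nu}{2\p\i}\,\frac{K(\la-\nu)G(\nu,\m)}{1+\fa^{+}(\nu|h)}\epp
\end{multline*}

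The technical heart is to remove the middle ``bulk'' term, which by $\th'(\cdot|\g)=K$ equals $-\int_{{\cal C}_{-}}\rd\nu\,K_{0}(\la-\nu|\g)G(\nu,\m)$. A short Fourier computation identifies $R_{0}$ as the corresponding resolvent, $\bigl(\one+\int_{{\cal C}_{-}}\rd\rho\,K_{0}(\cdot-\rho|\g)\bigr)^{-1}=\one-\int_{{\cal C}_{-}}\rd\rho\,R_{0}(\cdot-\rho)$, with $K_{0}-R_{0}\ast K_{0}=R_{0}$. Applying this operator to the displayed identity, the bulk term disappears, the kernel $K$ of each surviving integral turns into $2\p\i\,R_{0}$ (the $2\p\i$ cancelling the $1/(2\p\i)$ of the measure, whence the bare $\rd\nu$ in the statement), and $\re(\m-\la)$ is replaced by $\re(\m-\la)-\int_{{\cal C}_{-}}\rd\nu\,R_{0}(\la-\nu)\re(\m-\nu)=g_{0}(\la,\m)=g_{\a}(\la,\m)|_{\a=0}$, the last equality being of the same theta/$q$-gamma type as~(\ref{scdgammafuneq}) and the dressed phase~(\ref{dressedphase}) and available from \cite{DGKS15b,DGKS16b}. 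An inessential deformation of ${\cal C}_{\pm}$ towards ${\cal B}_{\pm}$ then yields~(\ref{ggstarlowtform}); it is residue-free because, under the standing assumption $|\Im\la_{l}|<\g/2$, the images $\la_{l}\pm\i\g$ of the dominant Bethe roots leave the strip, $R_{0}(\la-\cdot)$ is holomorphic there, and the sole pole of $G(\cdot,\m)$ in the strip, at $\m$, is not crossed while $\m$ stays between the two contours.

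Part~(ii) is then immediate: along ${\cal C}_{\pm}$ the densities $1/(1+\fa^{\pm}(\nu|h))$ are $\CO(T^{\infty})$ by the sign analysis above, while $R_{0}(\la-\cdot)$ and $G(\cdot,\m)$ stay bounded, so both integrals in~(\ref{glowtform}) are $\CO(T^{\infty})$. For part~(iii) I would continue the solution of~(\ref{glowtform}) in its second argument: when $\m$ moves from $\Int{\cal C}$ up towards ${\cal B}_{+}$ it crosses ${\cal C}_{+}$, and the pole of the integrand at $\nu=\m$ — the simple pole of $G(\nu,\cdot)$ with residue $+1$ from Lemma~\ref{lem:gresolve}(ii) — contributes the extra term $2\p\i R_{0}(\la-\m)/(1+\fa(\m|h))$ of~(\ref{gcontinuedmu}). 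Taking $T\to0$ as in part~(ii) leaves $g_{0}(\la,\m)$ plus that term, and, since $\fa_{n}(\m|h')=-1$ for $\m\in{\cal Y}_{n}$, rewriting $1+\fa(\m|h)=1-\fa(\m|h)/\fa_{n}(\m|h')$ gives~(\ref{glimyparticle}).

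I expect the main obstacle to be this middle step: showing cleanly that $R_{0}$ is the bulk resolvent and that the dressed form of $\re(\m-\la)$ is exactly $g_{0}(\la,\m)$. These are identities among $q$-gamma, $q$-Barnes and Jacobi theta functions, and the delicate point is the bookkeeping — the factors of $2\p$ and $\i$, the base changes $q\to q^{2}\to q^{4}$ produced by the resummation, and the precise contour and period on which the resolvent is formed. A secondary nuisance is controlling the contour deformations and, in part~(iii), the orientation of ${\cal C}_{+}$ and hence the sign with which the $\nu=\m$ residue enters.
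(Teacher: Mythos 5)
Your proposal is correct and follows essentially the same route as the paper: split ${\cal C}$ into ${\cal C}_+ + {\cal C}_-$ by $\p$-periodicity, isolate the ${\cal C}_-$ ``bulk'' term using the sign of $\Re \e$ and (\ref{anlowt}), invert that bulk operator by Fourier analysis — your resolvent identity $(\one + K_0 \ast)^{-1} = \one - R_0 \ast$ on the period-$\p$ contour and the driving-term identity $\re(\m-\cdot) - R_0 \ast \re(\m-\cdot) = g_0(\cdot,\m)$, which you defer, are exactly the content of the paper's mode-by-mode solve for the Fourier coefficients $g_l(\m)$ and do hold — and for part (iii) pick up the residue of the pole of $G(\nu,\cdot)$ at $\nu = \m$ with residue $+1$ when continuing across ${\cal C}_+$, then use $\fa_n(\m|h') = -1$. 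The only cosmetic differences are that your operator samples $G$ only on ${\cal C}_-$, so the pole of $G(\cdot,\m)$ at $\m$ never enters on the left-hand side (the paper instead accounts for it explicitly when relating the ${\cal C}_-$ Fourier integral of $G$ to its coefficients), and that your final ``deformation of ${\cal C}_\pm$ towards ${\cal B}_\pm$'' is unnecessary, since (\ref{glowtform}) is already stated on the contours ${\cal C}_\pm$.
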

\begin{proof}
(i) The functions under the integral on the right hand side of (\ref{defg})
are $\p$-periodic. Hence, the contributions from ${\cal C}_l$
and ${\cal C}_r$ cancel each other, and we may replace ${\cal C}$ by
${\cal C}_+ + {\cal C}_-$. For $T \rightarrow 0+$ the function
$\fa (\la|h)$ behaves as
\begin{equation}
     \fa (\la|h) = \re^{- \frac{\e(\la|h)}{T}} \bigl(1 + {\cal O} (T^\infty)\bigr) \epc
\end{equation}
(see (\ref{anlowt})) and $\Re \e(\la|h) < 0$ on ${\cal C}_+$, $\Re \e(\la|h) > 0$
on ${\cal C}_-$. This suggests to decompose
\begin{multline}
     \int_{\cal C} \frac{\rd \nu}{2 \p \i}
        \frac{K(\la - \nu) G (\nu, \m)}{1 + \fa (\nu|h)} = 
     \sum_{\s = \pm} \int_{{\cal C}_\s} \frac{\rd \nu}{2 \p \i}
                        \frac{K(\la - \nu) G (\nu, \m)}{1 + \fa (\nu|h)} \\ =
     \int_{{\cal C}_-} \frac{\rd \nu}{2 \p \i} K(\la - \nu) G (\nu, \m) +
     \sum_{\s = \pm} \s \int_{{\cal C}_\s} \frac{\rd \nu}{2 \p \i}
                           \frac{K(\la - \nu) G (\nu, \m)}{1 + \fa^\s (\nu|h)}
\end{multline}
and to rewrite the integral equation (\ref{defg}) as
\begin{multline} \label{intgsplit}
     G(\la,\m) + \int_{{\cal C}_-} \frac{\rd \nu}{2 \p \i} K(\la - \nu) G (\nu, \m) \\ =
        \re(\m - \la) -
	\sum_{\s = \pm} \s \int_{{\cal C}_\s} \frac{\rd \nu}{2 \p \i}
                              \frac{K(\la - \nu) G (\nu, \m)}{1 + \fa^\s (\nu|h)} \epp
\end{multline}

This can be further transformed using Fourier series techniques.
The functions $K(\la)$ and $\re(\m - \la)$ have Fourier series
representations
\begin{equation} \label{fourierke}
     K(\la) = \sum_{l = - \infty}^\infty K_l \re^{2 \i l \la} \epc \qd
     \re(\m - \la) = \sum_{l = - \infty}^\infty \re_l (\m) \re^{2 \i l \la} \epc
\end{equation}
where
\begin{equation}
     K_l = 2 \i \re^{- 2 \g |l|} \epc \qd
     \re_l (\m) = 2 \i \begin{cases}
                            0 & l \le 0 \\
			    \re^{- 2 \i l \m} - \re^{- 2 \i l (\m - \i \g)}
			    & l > 0
                          \end{cases} \epp
\end{equation}
For the derivation recall that $\m \in \Int {\cal C}$. Inserting
the Fourier series (\ref{fourierke}) into (\ref{intgsplit}) we see
that $G(\la,\m)$ has a Fourier series representation
\begin{equation}
     G(\la,\m) = \sum_{l = - \infty}^\infty g_l (\m) \re^{2 \i l \la}
\end{equation}
with Fourier coefficients that satisfy the equation
\begin{equation} \label{gfouriercoeffs}
     g_l (\m) + \frac{K_l}{2\i}
        \int_{{\cal C}_-} \frac{\rd \nu}{\p} \re^{- 2 \i l \nu} G (\nu, \m) =
        \re_l (\m) -
	\sum_{\s = \pm} \s
	   \int_{{\cal C}_\s} \frac{\rd \nu}{2 \p \i}
	      \frac{K_l \re^{- 2 \i l \nu} G (\nu, \m)}{1 + \fa^\s (\nu|h)} \epp
\end{equation}
Using that $\m \in \Int {\cal C}$ we see that
\begin{equation}
     \int_{{\cal C}_-} \frac{\rd \nu}{\p} \re^{- 2 \i l \nu} G (\nu, \m) =
        - 2 \i \re^{- 2 \i l \m} + g_l (\m) \epp
\end{equation}
Inserting this back into (\ref{gfouriercoeffs}), solving for $g_l (\m)$
and performing the back transformation we arrive at (\ref{glowtform}).

The integral equation (\ref{defgstar}) for $G^* (\cdot, \m)$ is of the
same form as the integral equation (\ref{defg}) for $G(\cdot,\m)$, just
$\fa (\cdot|h)$ is replaced by $\fa_n (\cdot|h')$ whose low-$T$ behaviour
is displayed in (\ref{anlowt}). As above, one can use the $\p$-periodicity
of the functions $G^* (\cdot, \m)$, $\re$ and $K$ to argue that
(\ref{gstarlowtform}) must hold.

(ii) The limits (\ref{glim}) are an obvious implication of equations
(\ref{ggstarlowtform}).

(iii) Using Lemma~\ref{lem:gresolve} and analytically continuing
$G(\la, \cdot)$ through ${\cal C}_+$ we see that (\ref{gcontinuedmu})
must hold. The limit (\ref{glimyparticle}) follows from (\ref{gcontinuedmu}),
using (\ref{anlowt}) and the fact that $\fa_n (\m|h') = - 1$, since
$\m \in {\cal Y}_n$.
\end{proof}
Following the same procedure as above we obtain the low-$T$ form
of the integral equation for $R^*$ which will be needed below.
\begin{lemma}
{\bf \boldmath Low-$T$ limit of $R^*$.}
\begin{equation} \label{rstarlowtform}
     R^* (\la,\m) = 2 \p \i R_0 (\la - \m) -
	\sum_{\s = \pm} \s \int_{{\cal C}_\s} \rd \n \:
	   \frac{R_0 (\la - \n) R^* (\n,\m)}{1 + \fa_n^\s (\n|h')} \epc
\end{equation}
implying that
\begin{equation} \label{rstarzero}
     R^* (\la,\m) = 2 \p \i R_0 (\la - \m) + \CO(T^\infty) \epp
\end{equation}
\end{lemma}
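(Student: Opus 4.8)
The plan is to transcribe, essentially line by line, the proof of Lemma~\ref{lem:ggstarlowt}~(i) for the function $G^*$, replacing the driving term $\re(\m - \la)$ by $K(\la - \m)$ and $\fa(\cdot|h)$ by $\fa_n(\cdot|h')$ throughout, since the defining equation (\ref{rstarrightint}) for $R^*(\cdot,\m)$ has exactly the same structure as (\ref{defgstar}). First I would note that the integrand in (\ref{rstarrightint}) is $\p$-periodic, so the contributions of ${\cal C}_l$ and ${\cal C}_r$ cancel and ${\cal C}$ may be replaced by ${\cal C}_+ + {\cal C}_-$; then, using $\frac{1}{1 + \fa_n} = 1 - \frac{1}{1 + \fa_n^-}$ on ${\cal C}_-$ (with $\fa_n^- = 1/\fa_n$, cf.\ (\ref{defapm})), one splits off a ``free'' part and brings (\ref{rstarrightint}) to the form
\[
   R^*(\la,\m) + \int_{{\cal C}_-} \frac{\rd\nu}{2\p\i}\, K(\la - \nu)\, R^*(\nu,\m)
   = K(\la - \m) - \sum_{\s = \pm} \s \int_{{\cal C}_\s} \frac{\rd\nu}{2\p\i}\,
     \frac{K(\la - \nu)\, R^*(\nu,\m)}{1 + \fa_n^\s(\nu|h')} \epc
\]
the exact analogue of (\ref{intgsplit}).

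Next I would pass to Fourier series. By (\ref{rstarrightint}) (equivalently (\ref{rstarrightsum})) the function $R^*(\cdot,\m)$ is $\p$-periodic and holomorphic on a neighbourhood of the strip bounded by ${\cal C}_+$ and ${\cal C}_-$, hence $R^*(\la,\m) = \sum_{l \in {\mathbb Z}} r_l(\m)\, \re^{2\i l\la}$ there. In contrast with the function $G(\cdot,\m)$ of the previous lemma, $R^*(\cdot,\m)$ has \emph{no} pole at $\m$ (the poles of $K(\la - \m)$ sit at $\la = \m \pm \i\g \notin \Int {\cal C}$), so that $\int_{{\cal C}_-}\frac{\rd\nu}{\p}\, \re^{-2\i l\nu} R^*(\nu,\m) = r_l(\m)$ with no residue correction. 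Inserting this and the Fourier expansion (\ref{fourierke}) of $K$ one obtains, mode by mode,
\[
   r_l(\m)\bigl(1 + K_l/(2\i)\bigr) = K_l\, \re^{-2\i l\m}
     - \sum_{\s = \pm} \s \int_{{\cal C}_\s} \frac{\rd\nu}{2\p\i}\,
       \frac{K_l\, \re^{-2\i l\nu}\, R^*(\nu,\m)}{1 + \fa_n^\s(\nu|h')} \epp
\]
Solving for $r_l(\m)$, using $K_l = 2\i\re^{-2\g|l|}$, and transforming back, the driving term becomes $\sum_l \frac{K_l \re^{-2\i l\m}}{1 + \re^{-2\g|l|}}\, \re^{2\i l\la} = 2\p\i R_0(\la - \m)$, while the kernel $K(\la - \nu)$ under the remaining integrals over ${\cal C}_\pm$ is replaced by $R_0(\la - \nu)$, the $\frac{1}{2\p\i}$ of the measure cancelling against the $2\p\i$ produced by the inversion; this is exactly (\ref{rstarlowtform}). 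The estimate (\ref{rstarzero}) then follows at once, since on ${\cal C}_\s$ one has $|1/(1 + \fa_n^\s(\nu|h'))| = {\cal O}(T^\infty)$ by (\ref{anlowt}) (as $\s\Re\e(\nu|h') > 0$ there), so that the integral term in (\ref{rstarlowtform}) is exponentially small --- exactly as in Lemma~\ref{lem:ggstarlowt}~(ii).

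The only step that is not a verbatim copy of the $G$, $G^*$ computation, and the one I expect to need the most care, is the identification of the Fourier-inverted driving term with $2\p\i R_0(\la - \m)$, i.e.\ the statement that the $q$-digamma expression defining $R_0$ has Fourier coefficients $\p^{-1}\re^{-2\g|l|}/(1 + \re^{-2\g|l|})$ (consistently, its zero mode is the constant $\frac{1}{2\p}$). This can be settled either by expanding the $q$-digamma function as a Lambert-type series and matching coefficients, or --- more robustly --- by checking directly that $2\p\i R_0(\la - \m)$ is the unique $\p$-periodic solution of $\rho(\la,\m) + \int_{{\cal C}_-}\frac{\rd\nu}{2\p\i} K(\la - \nu)\rho(\nu,\m) = K(\la - \m)$, using only the pole structure and quasi-periodicity of $R_0$ together with the functional equations recalled in Section~\ref{sec:lowtfun}.
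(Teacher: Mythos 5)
Your proposal is correct and coincides with the paper's own (essentially implicit) proof: the paper simply states that (\ref{rstarlowtform}) follows ``by the same procedure'' as Lemma~\ref{lem:ggstarlowt}, i.e.\ exactly the contour splitting into ${\cal C}_+ + {\cal C}_-$ and Fourier-mode inversion you transcribe, with the simplification you correctly identify that $R^*(\cdot,\m)$ has no pole at $\m$ inside ${\cal C}$, so the inverted driving term is just $\sum_l K_l \re^{-2\i l \m}\re^{2\i l\la}/(1+\re^{-2\g|l|}) = 2\p\i R_0(\la-\m)$, which your Lambert-series check of the $q$-digamma expansion of $R_0$ (the one piece the paper leaves tacit) confirms. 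Only a cosmetic slip in your parenthetical for (\ref{rstarzero}): on ${\cal C}_\s$ one has $\s \Re \e(\nu|h') < 0$, equivalently $|\fa_n^\s(\nu|h')|$ exponentially large, not $\s\Re\e(\nu|h')>0$; the conclusion $1/(1+\fa_n^\s(\nu|h')) = \CO(T^\infty)$ on ${\cal C}_\s$ is unaffected.
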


\subsection{\boldmath Determinant and inverse of $g_0$}
The determinants of the matrices with matrix elements ${G_0}^j_k
= g_0 (\la_j, z_k)$ and ${G_0^*}^j_k = g_0 (z_j, \la_k)$ and the
inverses of these matrices can be evaluated in explicit form.
This will be important for our further reasoning. In order to write
the resulting expressions compactly we introduce the shorthand notation
\begin{equation} \label{defsi}
     \Si = \sum_{j=1}^M (\la_j - z_j)
\end{equation}
and the function
\begin{equation} \label{defphi}
     \Ph (\la) = \prod_{j=1}^M
                 \frac{\dh_1 (\la - \la_j)}{\dh_1 (\la - z_j)} \epp
\end{equation}

\begin{lemma}
{\bf \boldmath Properties of the matrices $G_0$ and $G_0^*$.}
\begin{enumerate}
\item
The elliptic Cauchy determinant.
\begin{equation} \label{detgzeromain}
     \det_M \{ g_0 (\la_j, z_k) \} =
	\frac{\dh_2 (\Si) \dh_1'^M}{\dh_2}
        \frac{\prod_{1 \le j < k \le M} \dh_1 (\la_j - \la_k) \dh_1 (z_k - z_j)}
	     {\prod_{j,k = 1}^M \dh_1 (z_j - \la_k)} \epp
\end{equation}
\item
The inversion formulae.
\begin{equation} \label{gzerozerostarinverse}
     {G_0^{-1}}^j_k = \frac{g_\Si (z_j, \la_k)}{(1/\Ph)' (z_j) \Ph' (\la_k)} \epc \qd
     {{G_0^*}^{-1}}^j_k = \frac{g_{- \Si} (\la_j, z_k)}{\Ph' (\la_j) (1/\Ph)' (z_k)} \epp
\end{equation}
\item
The square of the determinant.
\begin{equation} \label{thesquare}
     \det_M \{ g_0 (\la_j, z_k) \} \det_M \{ g_0 (z_j, \la_k) \} =
	\frac{\dh_2^2 (\Si)}{\dh_2^2} \prod_{j=1}^M \Ph' (\la_j) (1/\Ph)' (z_j) \epp
\end{equation}
\end{enumerate}
\end{lemma}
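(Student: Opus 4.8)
The plan is to treat the three parts in turn, the first two being classical facts about elliptic Cauchy matrices and the third a bookkeeping consequence of them. The starting remark is that, by the definitions \eqref{proddh4}--\eqref{defotherdhs}, one has $\dh_2(x)=\dh_1(x+\p/2)$, so that
\begin{equation*}
     g_0(\la,\m)=\dh_1'\,\frac{\dh_1\bigl((\m-\la)+\p/2\bigr)}{\dh_1(\p/2)\,\dh_1(\m-\la)}\epc
\end{equation*}
i.e.\ the matrix $G_0=\{g_0(\la_j,z_k)\}$ is, after pulling the scalar $\dh_1'^M$ out of the rows, a Frobenius (elliptic Cauchy) matrix with shift parameter $\p/2$. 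I would therefore derive (i) from the Frobenius determinant identity
\begin{equation*}
     \det_M\biggl[\frac{\dh_1(c+u_j-w_k)}{\dh_1(c)\,\dh_1(u_j-w_k)}\biggr]
     =\frac{\dh_1\bigl(c+\sum_j u_j-\sum_j w_j\bigr)}{\dh_1(c)}\cdot
       \frac{\prod_{j<k}\dh_1(u_j-u_k)\,\dh_1(w_k-w_j)}{\prod_{j,k}\dh_1(u_j-w_k)}\epc
\end{equation*}
specialised to $c=\p/2$, $u_j=-\la_j$, $w_k=-z_k$ (whence $\sum_j u_j-\sum_j w_j=-\Si$ and $\dh_1(\p/2-\Si)=\dh_2(\Si)$, $\dh_1(\p/2)=\dh_2$), the remaining signs from $\dh_1(-x)=-\dh_1(x)$ cancelling in pairs to reproduce exactly \eqref{detgzeromain}. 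If a self-contained argument for the Frobenius identity is wanted, it follows by viewing both sides as elliptic functions of a single variable, say $\la_1$, with matching simple poles and quasi-periodicity multipliers, so that their ratio is a constant fixed by one residue (or, alternatively, by induction on $M$).

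For (ii) I would verify the two inversion formulae directly, by a residue computation on the elliptic curve. Fix a row index $j$ and a column index $k$ and consider $\Psi(\la)=\Ph(\la)\,g_0(\la_j,\la)\,g_\Si(\la,\la_k)$. The crucial observation is that $\Psi$ is \emph{elliptic} in $\la$, with periods $\p$ and $\i\g$: under $\la\mapsto\la+\i\g$ the three factors acquire the multipliers $\re^{2\i\Si}$ (from $\Ph$), $-1$ (from $g_0(\la_j,\cdot)$) and $-\re^{-2\i\Si}$ (from $g_\Si(\cdot,\la_k)$), whose product is $1$ --- and this cancellation holds \emph{precisely because} the shift entering $g_\Si$ is $\Si=\sum_j(\la_j-z_j)$, which is what forces $\Si$ into the inverse. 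Since $\Ph$ has simple poles at the $z_m$ and simple zeros at the $\la_m$, the zero at $\la=\la_j$ cancels the pole of $g_0(\la_j,\cdot)$ there, and the only poles of $\Psi$ in a period parallelogram are the $z_m$ together with, when $j=k$, the point $\la_k$. Using $\res_{\la=z_m}\Ph=1/(1/\Ph)'(z_m)$ and $\res_{\la=\la_k}g_\Si(\la,\la_k)=-1$, vanishing of the total residue of $\Psi$ gives $\sum_m g_0(\la_j,z_m)g_\Si(z_m,\la_k)/(1/\Ph)'(z_m)=-\res_{\la=\la_k}\Psi$, and a short local expansion at $\la_k$ shows the right-hand side equals $0$ for $j\ne k$ and $-\Ph'(\la_k)$ for $j=k$. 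This is precisely $G_0\,G_0^{-1}=I_M$ with $G_0^{-1}$ as in \eqref{gzerozerostarinverse}, and since $G_0$ is square this settles the first formula. The second one follows verbatim, applying the same argument to $\widetilde\Psi(\la)=g_0(z_j,\la)\,g_{-\Si}(\la,z_k)/\Ph(\la)$ (the roles of $\{\la_j\}$ and $\{z_j\}$ exchanged, $\Ph$ replaced by $1/\Ph$ and $\Si$ by $-\Si=\sum_j(z_j-\la_j)$).

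Part (iii) then follows by combining (i) and (ii). The most transparent route is to apply (i) once more to $\det_M\{g_0(z_j,\la_k)\}$ (interchanging $\la\leftrightarrow z$ and using that $\dh_2$ is even, so $\dh_2(-\Si)=\dh_2(\Si)$), and separately to evaluate $\prod_{j=1}^M\Ph'(\la_j)(1/\Ph)'(z_j)$ by reading off the leading coefficients of $\Ph$ at its zeros $\la_j$ and of $1/\Ph$ at its zeros $z_j$, namely $\Ph'(\la_j)=\dh_1'\prod_{m\ne j}\dh_1(\la_j-\la_m)/\prod_m\dh_1(\la_j-z_m)$ and the symmetric expression for $(1/\Ph)'(z_j)$. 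Multiplying everything out, the theta-function products on the two sides of \eqref{thesquare} coincide, as do the signs $(-1)^{M^2}=(-1)^M$, which establishes the claim. (Equivalently, taking determinants in (ii) yields $\det\{g_0(\la_j,z_k)\}\det\{g_\Si(z_j,\la_k)\}=\prod_j\Ph'(\la_j)(1/\Ph)'(z_j)$, and applying (i) to the two Frobenius determinants $\det\{g_\Si(z_j,\la_k)\}$ and $\det\{g_0(z_j,\la_k)\}$ shows they differ exactly by the factor $\dh_2^2/\dh_2^2(\Si)$.)

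The only genuinely delicate step I expect is the ellipticity check for $\Psi$ in part (ii): one has to push the quasi-periodicity multipliers of $\dh_1$ (and of $\dh_2=\dh_1(\cdot+\p/2)$) through $\Ph$, $g_0$ and $g_\Si$ and see them conspire to cancel, and it is exactly this cancellation that singles out the value $\Si$ of the shift parameter in the inverse matrices. Everything else is classical elliptic-function bookkeeping --- part (i) being the Frobenius determinant identity, and part (iii) pure algebra once (i) and the explicit forms of $\Ph'(\la_j)$, $(1/\Ph)'(z_j)$ are in hand.
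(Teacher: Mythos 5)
Your proposal is correct in substance, and for parts (i) and (iii) it follows essentially the paper's own route: the elliptic Cauchy (Frobenius) determinant is established by exactly the ellipticity-plus-residue argument the paper gives in Appendix~B, and (iii) is the same bookkeeping with the explicit values $\Ph'(\la_j)$, $(1/\Ph)'(z_j)$. The genuine difference is in part (ii): the paper obtains the inversion formulae from Cramer's rule combined with (\ref{detgzeromain}) (the cofactors being again elliptic Cauchy determinants of size $M-1$, so that the ratio of the two product formulae collapses to $g_\Si(z_j,\la_k)/[(1/\Ph)'(z_j)\Ph'(\la_k)]$), whereas you verify $G_0\,G_0^{-1}=I_M$ directly by showing that $\Psi(\la)=\Ph(\la)\,g_0(\la_j,\la)\,g_\Si(\la,\la_k)$ is elliptic with periods $\p$, $\i\g$ and summing its residues. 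Your quasi-periodicity check is the right one and is precisely what forces the shift $\Si$; the argument is logically independent of (i), which is a small structural advantage (it would survive even without the closed determinant formula), while the paper's Cramer-rule derivation is shorter once (i) is in hand. One slip to fix in your write-up of (ii): since $\res_{\la=\la_k} g_\Si(\la,\la_k)=-1$ and $\Ph(\la)\,g_0(\la_k,\la)\to\Ph'(\la_k)$ as $\la\to\la_k$, one has $\res_{\la=\la_k}\Psi=-\Ph'(\la_k)$, so the right-hand side $-\res_{\la=\la_k}\Psi$ equals $+\Ph'(\la_k)$ for $j=k$ (not $-\Ph'(\la_k)$ as written); with this sign the sum rule reads $\sum_m g_0(\la_j,z_m)g_\Si(z_m,\la_k)/(1/\Ph)'(z_m)=\de^j_k\,\Ph'(\la_k)$, which is exactly $G_0\,G_0^{-1}=I_M$, i.e.\ the conclusion you state. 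The same comment applies verbatim to $\widetilde\Psi$ for the second formula.
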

\begin{proof}
(i) The determinant formula is a variant of a classical result due to
Frobenius \cite{Frobenius1882}. It appeared in the literature on the
XXZ chain in \cite{KMT99b} and has many interesting generalizations 
\cite{RoSc06}. For the sake of self-containedness of this work we
present a proof in Appendix~\ref{app:ellipticcauchy}.

(ii) is obtained by Cramer's rule and (\ref{detgzeromain}). (iii) is
a direct consequence of (\ref{detgzeromain}) and the definition
(\ref{defphi}) of $\Ph$.
\end{proof}

\subsection{The large determinants}
For $\la, \m \in \Int {\cal C}$ and else by analytic continuation
define a pair of kernel functions
\begin{subequations}
\label{defukernels}
\begin{align}
     U(\la,\m) & = \sum_{\s = \pm}
                \s \int_{{\cal C}_\s} \frac{\rd \x}{1 + \fa^\s (\x|h)}
		   \int_{\cal C} \frac{\rd \z}{2 \p \i \Ph (\z)} \,
		      g_\Si (\la, \z) R_0 (\z - \x) G(\x,\m) \epc \\[1ex]
     U^* (\la,\m) & = \sum_{\s = \pm}
                   \s \int_{{\cal C}_\s} \frac{\rd \x}{1 + \fa_n^\s (\x|h')}
		      \int_{\cal C} \frac{\rd \z \: \Ph (\z)}{2 \p \i}
		         g_{- \Si} (\la, \z) R_0 (\z - \x) G^* (\x,\m) \epp
\end{align}
\end{subequations}

\begin{lemma}
{\bf \boldmath Low-$T$ factorization of $G$ and $G^*$.}
\begin{subequations}
\begin{align} \label{lowfactorg}
     G^j_k & = - \sum_{l = 1}^M g_0 (\la_j, z_l)
        \biggl(\de^l_k - \frac{U(z_l, z_k)}{(1/\Ph)'(z_l)} \biggr) \epc \\
     {G^*}^j_k & = - \sum_{l = 1}^M g_0 (z_j, \la_l)
        \biggl(\de^j_k - \frac{U^* (\la_l, \la_k)}{\Ph'(\la_l)} \biggr) \epp
	\label{lowfactorgstar}
\end{align}
\end{subequations}
\end{lemma}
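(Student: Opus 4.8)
The idea is to \emph{solve} the low-$T$ integral equations~\eqref{ggstarlowtform} for the functions $G(\cdot,\m)$ and $G^*(\cdot,\m)$ by extracting a single factor of the elliptic Cauchy kernel $g_0$, using the explicit inverses of the matrices $G_0 = (g_0(\la_j,z_k))_{j,k=1}^M$ and $G_0^* = (g_0(z_j,\la_k))_{j,k=1}^M$ provided by the inversion formulae~\eqref{gzerozerostarinverse}. I shall describe the argument for $G$; the one for $G^*$ is entirely parallel.

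First I would specialize~\eqref{glowtform} to $\la = \la_j$, $\m = z_k$ and insert $G^j_k = - G(\la_j, z_k)$ from Lemma~\ref{lem:matfung}, so that
\begin{equation}
  g_0(\la_j, z_k) - G(\la_j, z_k) =
    \sum_{\s = \pm} \s \int_{{\cal C}_\s} \rd \nu \,
      \frac{R_0(\la_j - \nu)\, G(\nu, z_k)}{1 + \fa^\s(\nu|h)} \epp
\end{equation}
The heart of the proof is then the reproducing identity
\begin{equation} \label{reproid}
  R_0(\la_j - \nu) =
    \int_{\cal C} \frac{\rd \z}{2 \p \i\, \Ph(\z)}
      \biggl( \sum_{l=1}^M \frac{g_0(\la_j, z_l)\, g_\Si(z_l, \z)}{(1/\Ph)'(z_l)} \biggr)
      R_0(\z - \nu) \epc
\end{equation}
valid for $\nu$ lying on ${\cal C}_+$ or ${\cal C}_-$. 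To establish it I would note that, as a function of $\z$, the bracketed sum divided by $\Ph(\z)$ is regular at each $z_l$ (the simple pole of $g_\Si(z_l,\cdot)$ at $z_l$ being cancelled by the simple zero of $1/\Ph$ there) while $R_0(\cdot - \nu)$ is holomorphic throughout $\Int {\cal C}$; hence inside ${\cal C}$ the integrand has only simple poles, located at the points $\la_m$. Since the residue of $1/\Ph$ at $\la_m$ is $1/\Ph'(\la_m)$, the residue of the integrand there equals $R_0(\la_m - \nu)\sum_l g_0(\la_j, z_l)\, {G_0^{-1}}^l_m = \de^j_m R_0(\la_m - \nu)$, where I used the first formula in~\eqref{gzerozerostarinverse} and $G_0 G_0^{-1} = I_M$. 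Summing the residues over $m$ then yields~\eqref{reproid}.

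Substituting~\eqref{reproid} into the specialized integral equation and moving the finite sum over $l$ in front of the contour integrals, the right-hand side becomes $\sum_{l=1}^M g_0(\la_j, z_l)\, U(z_l, z_k)/(1/\Ph)'(z_l)$ once one recognizes the definition~\eqref{defukernels} of $U$; hence $G(\la_j, z_k) = g_0(\la_j, z_k) - \sum_{l=1}^M g_0(\la_j, z_l)\, U(z_l, z_k)/(1/\Ph)'(z_l)$, which is~\eqref{lowfactorg} after reinserting $G^j_k = - G(\la_j, z_k)$. For $G^*$ the same chain of steps applies with $\fa(\cdot|h)$ replaced by $\fa_n(\cdot|h')$, $g_\Si$ replaced by $g_{-\Si}$, the weight $1/\Ph$ replaced by $\Ph$, and the roles of $\{\la_j\}$ and $\{z_j\}$ interchanged; this time the relevant poles of the integrand in the analogue of~\eqref{reproid} sit at the $z_m$ (the apparent poles at the $\la_l$ being cancelled by the zeros of $\Ph$), and the residues are read off from the second formula in~\eqref{gzerozerostarinverse} together with $G_0^* {G_0^*}^{-1} = I_M$, which gives~\eqref{lowfactorgstar}.

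The step I expect to be the main obstacle is the analytic bookkeeping behind~\eqref{reproid} and its $G^*$-counterpart: one has to check that all the $\la_m$ and all the $z_m$ (each $M = N+1$ in number for the dominant reference state, by construction of ${\cal C}$) lie in $\Int {\cal C}$; that $R_0$, $g_\Si$ and $g_{-\Si}$ carry no further singularities inside ${\cal C}$; that the apparent poles at the $z_l$ (resp.\ $\la_l$) genuinely cancel against the zeros of $1/\Ph$ (resp.\ $\Ph$); and that the interchange of the iterated contour integral with the finite $l$-sum, together with the reduction to the restricted contours ${\cal C}_\pm$ via the $\p$-periodicity argument and the low-$T$ behaviour of $\fa$ and $\fa_n$, is carried out exactly as in the proof of Lemma~\ref{lem:ggstarlowt}, so that $R_0$ and ${\cal C}_\pm$ enter consistently on both sides.
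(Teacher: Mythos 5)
Your proposal is correct and uses essentially the same argument as the paper: the same residue evaluation of the $\z$-integral over ${\cal C}$ (with the pole of $g_{\pm\Si}$ cancelled by the zero of $1/\Ph$ resp.\ $\Ph$), the inversion formula (\ref{gzerozerostarinverse}), and the integral equations of Lemma~\ref{lem:ggstarlowt}, merely run in the opposite direction by packaging the contour-plus-inversion step as a reproducing identity for $R_0$ and substituting it into (\ref{glowtform}), whereas the paper starts from $\sum_l g_0(\la_j,z_l)\,U(z_l,z_k)/(1/\Ph)'(z_l)$ and reduces it to $g_0(\la_j,z_k)-G(\la_j,z_k)$.
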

\begin{proof}
The proof is by straightforward calculation,
\begin{multline} \label{ginvonrzero}
     \sum_{l = 1}^M g_0 (\la_j, z_l) \frac{U(z_l, z_k)}{(1/\Ph)'(z_l)} \\
        = \sum_{\s = \pm} \s \int_{{\cal C}_\s} \frac{\rd \x}{1 + \fa^\s (\x|h)}
          \sum_{l = 1}^M \frac{g_0 (\la_j, z_l)}{(1/\Ph)'(z_l)}
          \sum_{m = 1}^M \frac{g_\Si (z_l, \la_m)}{\Ph'(\la_m)}
	                 R_0 (\la_m - \x) G(\x, z_k) \\[1ex]
        = \sum_{\s = \pm} \s \int_{{\cal C}_\s} \frac{\rd \x}{1 + \fa^\s (\x|h)}
	                     R_0 (\la_j - \x) G(\x, z_k)
        = g_0 (\la_j, z_k) - G(\la_j, z_k) \epp
\end{multline}
Here we have inserted the definition of $U$ in the first equation,
the inversion formula (\ref{gzerozerostarinverse}) in the second equation
and Lemma~\ref{lem:ggstarlowt} in the third equation. Equation
(\ref{lowfactorg}) then follows with Lemma~\ref{lem:matfung}. The proof
of (\ref{lowfactorgstar}) is similar.
\end{proof}

\begin{corollary}
{\bf \boldmath Low-$T$ form of the large determinants.}
\begin{subequations}
\label{detggstarlowt}
\begin{align}
     \det_M \{G\} & = \det_M \{- g_0 (\la_j, z_k)\}
        \det_M \biggl\{\de^j_k - \frac{U(z_j, z_k)}{(1/\Ph)'(z_j)} \biggr\} \epc \\[1ex]
     \det_M \{G^*\} & = \det_M \{- g_0 (z_j, \la_k)\}
        \det_M \biggl\{\de^j_k - \frac{U^* (\la_j, \la_k)}{\Ph'(\la_j)} \biggr\} \epp
\end{align}
\end{subequations}
\end{corollary}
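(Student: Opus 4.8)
The plan is to read the two componentwise identities of the preceding lemma, equations (\ref{lowfactorg}) and (\ref{lowfactorgstar}), as genuine identities between products of $M\times M$ matrices, and then to invoke the multiplicativity of the determinant. Let $G_0$ and $G_0^*$ be the matrices of the previous subsection, with entries ${G_0}^j_k = g_0(\la_j,z_k)$ and ${G_0^*}^j_k = g_0(z_j,\la_k)$, and let $\widetilde U$ and $\widetilde U^*$ denote the $M\times M$ matrices with entries $\widetilde U^l_k = \de^l_k - U(z_l,z_k)/(1/\Ph)'(z_l)$ and $(\widetilde U^*)^l_k = \de^l_k - U^*(\la_l,\la_k)/\Ph'(\la_l)$, the index $l$ playing the role of the inner matrix index. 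Then (\ref{lowfactorg}) and (\ref{lowfactorgstar}) say exactly that $G = -\,G_0\,\widetilde U$ and $G^* = -\,G_0^*\,\widetilde U^*$ as products of square matrices.

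Taking determinants and using $\det(AB)=\det A\,\det B$ immediately yields $\det_M\{G\} = \det_M\{-G_0\}\,\det_M\{\widetilde U\}$ and $\det_M\{G^*\} = \det_M\{-G_0^*\}\,\det_M\{\widetilde U^*\}$. Here $\det_M\{-G_0\} = \det_M\{-g_0(\la_j,z_k)\}$ and $\det_M\{-G_0^*\} = \det_M\{-g_0(z_j,\la_k)\}$ by the definition of $G_0$, $G_0^*$, while $\det_M\{\widetilde U\}$ is unaffected by the relabelling $l\mapsto j$ of its common row/column index and hence equals $\det_M\{\de^j_k - U(z_j,z_k)/(1/\Ph)'(z_j)\}$, and similarly for $\widetilde U^*$. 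This is precisely the claimed factorization (\ref{detggstarlowt}).

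There is essentially no obstacle here: all the analytic work has already been done in the preceding lemma, and the corollary is simply its determinant. The only points requiring care are bookkeeping ones — which set of Bethe rapidities indexes rows and which indexes columns in each of $G_0$, $\widetilde U$ and their starred counterparts, and the overall minus sign, which contributes a factor $(-1)^M$ that I absorb into $\det_M\{-g_0\}$. I would also emphasise the purpose of stating this as a corollary: together with the closed-form elliptic Cauchy determinant (\ref{detgzeromain}) for $\det_M\{-g_0\}$ and the inversion formulae (\ref{gzerozerostarinverse}), it isolates in $\det_M\{\widetilde U\}$ and $\det_M\{\widetilde U^*\}$ the entire residual temperature dependence of the large determinants — which enters only through $\fa^\s$ and $\fa_n^\s$ in the kernels $U$, $U^*$ — and so prepares the $T\rightarrow 0+$ evaluation carried out next.
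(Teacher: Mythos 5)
Your proposal is correct and is essentially the paper's own (implicit) argument: the corollary is nothing more than reading (\ref{lowfactorg}), (\ref{lowfactorgstar}) as the matrix factorizations $G=-G_0\widetilde U$, $G^*=-G_0^*\widetilde U^*$ and applying multiplicativity of the determinant, with the sign absorbed into $\det_M\{-g_0\}$. Your bookkeeping of row/column indices (in particular reading the index in the second lemma equation as the summed inner index) matches the intended statement, so nothing is missing.
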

This form is now suitable for taking the Trotter limit and the limit
$T \rightarrow 0+$, since the first factors on the right hand side
can be written as products (see (\ref{detgzeromain})) and since the
kernel functions $U$, $U^*$ vanish for $T \rightarrow 0+$.

\subsection{The small determinants}
The kernels $U(\la, y)$ and $U^* (\la,y)$ are holomorphic functions of
both of their arguments $\la$, $\m$ for $\la, \m \in \Int {\cal C}$. For
the first argument this is obvious from the definition (\ref{defukernels}),
for the second argument it follows from Lemma~\ref{lem:gresolve}.
Let us consider the corresponding resolvent kernels defined as solutions
of the linear integral equations
\begin{subequations}
\label{defskernels}
\begin{align}
     S(\la,\m) & =
        U(\la,\m) + \int_{\cal C} \frac{\rd \nu \: \Ph (\nu)}{2 \p \i}
	                          U(\la,\nu) S(\nu,\m) \epc \\[1ex]
     S^* (\la,\m) & =
        U^* (\la,\m) + \int_{\cal C} \frac{\rd \nu}{2 \p \i \Ph (\nu)}
                                    U^* (\la,\nu) S^* (\nu,\m) \epp
\end{align}
\end{subequations}
These resolvents can be used to derive representations of the matrices
$G^{-1}$ and ${G^*}^{-1}$ that are suitable for performing the low-$T$
limit.
\begin{lemma}
{\bf \boldmath Low-$T$ factorization of $G^{-1}$ and ${G^*}^{-1}$.}
The inverse matrices $G^{-1}$ and ${G^*}^{-1}$ admit the following
representation in terms of the resolvent $S$,
\begin{subequations}
\begin{align} \label{lowfactorginv}
     {G^{-1}}^j_k & =
        - \frac{g_\Si (z_j, \la_k)}{(1/\Ph)' (z_j) \Ph' (\la_k)}
	- \int_{\cal C} \frac{\rd \nu \: \Ph (\nu)}{2 \p \i}
          \frac{S(z_j, \nu) g_\Si (\nu, \la_k)}{(1/\Ph)' (z_j) \Ph' (\la_k)} \epc \\[1ex]
     {{G^*}^{-1}}^j_k & =
        - \frac{g_{- \Si} (\la_j, z_k)}{\Ph' (\la_j) (1/\Ph)' (z_k)}
	- \int_{\cal C} \frac{\rd \nu}{2 \p \i \Ph (\nu)}
          \frac{S^*(\la_j, \nu) g_{- \Si} (\nu, z_k)}{\Ph' (\la_j) (1/\Ph)' (z_k)} \epp
	  \label{lowfactorgstarinv}
\end{align}
\end{subequations}
\end{lemma}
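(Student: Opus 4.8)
The plan is to feed the low-$T$ factorisations (\ref{lowfactorg})--(\ref{lowfactorgstar}) of the preceding lemma into the explicit inversion formulae (\ref{gzerozerostarinverse}) for $G_0$ and $G_0^*$, and then to convert the remaining finite-matrix inverses into the integral resolvent kernels $S$, $S^*$ of (\ref{defskernels}) by means of the residue theorem. Concretely, I would rewrite (\ref{lowfactorg}) and (\ref{lowfactorgstar}) as $G = -G_0(I_M - \widehat U)$ and $G^* = -G_0^*(I_M - \widehat V)$, where $(G_0)^j_l = g_0(\la_j, z_l)$, $(G_0^*)^j_l = g_0(z_j, \la_l)$, $\widehat U^l_k = U(z_l, z_k)/(1/\Ph)'(z_l)$ and $\widehat V^l_k = U^*(\la_l, \la_k)/\Ph'(\la_l)$. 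Since $G_0$, $G_0^*$ are invertible by (\ref{detgzeromain}) and, by (\ref{detggstarlowt}) together with the vanishing of $U$, $U^*$ as $T \to 0+$, the determinants $\det_M\{I_M - \widehat U\}$, $\det_M\{I_M - \widehat V\}$ are non-zero for $T$ small, all matrices in sight are invertible and $G^{-1} = -(I_M - \widehat U)^{-1} G_0^{-1}$, ${G^*}^{-1} = -(I_M - \widehat V)^{-1}{G_0^*}^{-1}$.

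The first real step is the pair of identities $(I_M - \widehat U)^{-1} = I_M + \widehat S$ and $(I_M - \widehat V)^{-1} = I_M + \widehat T$, with $\widehat S^j_k = S(z_j, z_k)/(1/\Ph)'(z_j)$ and $\widehat T^j_k = S^*(\la_j, \la_k)/\Ph'(\la_j)$. These rest on the residue identities
\[
   \int_{\cal C} \frac{\rd \nu \: \Ph(\nu)}{2\p\i} \, f(\nu)
      = \sum_{l=1}^M \frac{f(z_l)}{(1/\Ph)'(z_l)} \epc \qquad
   \int_{\cal C} \frac{\rd \nu}{2\p\i \, \Ph(\nu)} \, f(\nu)
      = \sum_{l=1}^M \frac{f(\la_l)}{\Ph'(\la_l)} \epc
\]
valid for $f$ holomorphic on and inside ${\cal C}$, which hold because, inside ${\cal C}$, $\Ph$ of (\ref{defphi}) has exactly the simple poles $z_1, \dots, z_M$ with residues $1/(1/\Ph)'(z_l)$ and the simple zeros $\la_1, \dots, \la_M$, and dually for $1/\Ph$. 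Because $U(\la, \cdot)$, $U^*(\la, \cdot)$, and hence the resolvents $S(\la, \cdot)$, $S^*(\la, \cdot)$, are holomorphic inside ${\cal C}$ (the kernels $U$, $U^*$ being holomorphic in both arguments on $\Int{\cal C}$; cf.\ Lemma~\ref{lem:gresolve}), sampling the integral equations (\ref{defskernels}) at $\la = z_j$, $\m = z_k$ (resp.\ $\la = \la_j$, $\m = \la_k$) and applying the relevant residue identity turns the convolution integral into a finite sum; dividing by $(1/\Ph)'(z_j)$ (resp.\ $\Ph'(\la_j)$) then yields $\widehat S = \widehat U + \widehat U \widehat S$ (resp.\ $\widehat T = \widehat V + \widehat V \widehat T$), whence $(I_M - \widehat U)(I_M + \widehat S) = I_M$ (resp.\ $(I_M - \widehat V)(I_M + \widehat T) = I_M$).

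It then remains to expand $G^{-1} = -(I_M + \widehat S) G_0^{-1}$ and ${G^*}^{-1} = -(I_M + \widehat T){G_0^*}^{-1}$ entrywise, to substitute the inversion formulae (\ref{gzerozerostarinverse}), and to re-sum the resulting finite sums into contour integrals. Inserting ${G_0^{-1}}^j_k = g_\Si(z_j, \la_k)/[(1/\Ph)'(z_j)\Ph'(\la_k)]$ gives
\[
   {G^{-1}}^j_k = -\frac{g_\Si(z_j, \la_k)}{(1/\Ph)'(z_j)\Ph'(\la_k)}
      - \frac{1}{(1/\Ph)'(z_j)\Ph'(\la_k)} \sum_{l=1}^M \frac{S(z_j, z_l) g_\Si(z_l, \la_k)}{(1/\Ph)'(z_l)} \epc
\]
and the sum equals $\int_{\cal C} \frac{\rd \nu \: \Ph(\nu)}{2\p\i} S(z_j, \nu) g_\Si(\nu, \la_k)$ by the first residue identity. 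The single point needing care is that $\Ph(\nu) S(z_j, \nu) g_\Si(\nu, \la_k)$ has no poles inside ${\cal C}$ other than those of $\Ph$ at the $z_l$: the would-be pole of $g_\Si(\nu, \la_k)$ at $\nu = \la_k$ is cancelled by the simple zero of $\Ph$ there. This produces (\ref{lowfactorginv}); and (\ref{lowfactorgstarinv}) comes out identically, now with ${{G_0^*}^{-1}}^j_k = g_{-\Si}(\la_j, z_k)/[\Ph'(\la_j)(1/\Ph)'(z_k)]$ and the spurious pole of $g_{-\Si}(\nu, z_k)$ at $\nu = z_k$ killed by the zero of $1/\Ph$ there.

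The matrix algebra and the residue computations are routine; the genuinely delicate part — the only place where the argument could break — is the analyticity bookkeeping, namely that $S(\la, \cdot)$, $S^*(\la, \cdot)$ are holomorphic inside ${\cal C}$ and that the spurious poles of $g_{\pm\Si}$ cancel against the zeros of $\Ph^{\mp 1}$, since these facts underpin every application of the residue theorem. They follow from the construction of $U$, $U^*$ in (\ref{defukernels}) together with Lemma~\ref{lem:gresolve}, so that no new ingredient is needed.
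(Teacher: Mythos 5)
Your argument is correct and follows essentially the same route as the paper: holomorphy of $S$, $S^*$ inside ${\cal C}$ lets the resolvent equations (\ref{defskernels}) be sampled at the Bethe-root points so that $(I_M-\widehat U)(I_M+\widehat S)=I_M$, which combined with (\ref{lowfactorg}), (\ref{lowfactorgstar}) and the inversion formulae (\ref{gzerozerostarinverse}) gives the entrywise sums that re-sum into the contour integrals. Your extra remarks (invertibility via the vanishing of $U$, $U^*$ at low $T$, and the cancellation of the spurious poles of $g_{\pm\Si}$ by the zeros of $\Ph^{\mp 1}$) are consistent with, and only elaborate on, the paper's proof.
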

\begin{proof}
Clearly $S(\la,\m)$ is a holomorphic function of $\la$ for $\la \in \Int {\cal C}$.
Thus,
\begin{equation}
     S(\la,\m) = U(\la,\m) + \sum_{l = 1}^M
                          \frac{U(\la,z_l)S(z_l,y)}{(1/\Ph)'(z_l)} \epc
\end{equation}
implying that
\begin{equation} \label{resu}
     \sum_{l = 1}^M
        \biggl(\de^j_l - \frac{U(z_j,z_l)}{(1/\Ph)'(z_j)}\biggr)
	\biggl(\de^l_k + \frac{S(z_l,z_k)}{(1/\Ph)'(z_l)}\biggr)
	   = \de^j_k \epp
\end{equation}
Combining the latter equation with (\ref{gzerozerostarinverse}) and
(\ref{lowfactorg}) we see that
\begin{equation}
     {G^{-1}}^j_k = - \sum_{l = 1}^M
                      \biggl(\de^j_l + \frac{S(z_j,z_l)}{(1/\Ph)'(z_j)}\biggr)
		      \frac{g_\Si (z_l, \la_k)}{(1/\Ph)' (z_l) \Ph' (\la_k)} \epc
\end{equation}
which is equivalent to (\ref{lowfactorginv}). The proof of
(\ref{lowfactorgstarinv}) is similar.
\end{proof}

We fix a contour ${\cal C}' \subset \Int {\cal C}$ in such a way
that $\{\la\} \subset \Int {\cal C}'$. In preparation of the next
lemma we define for every $\la \in \Ext {\cal C}'$ a function
\begin{equation} \label{defh}
     H(\la,x_k) = g_{- \Si} (\la,x_k)
        - \int_{{\cal C}'} \frac{\rd \z \: \Ph (\z)}{2 \p \i}
          g_{- \Si} (\la,\z) R^* (\z,x_k) \frac{(1/\Ph)' (x_k)}{\fa_n' (x_k|h')} \epp
\end{equation}

\begin{lemma}
{\bf \boldmath Low-$T$ form of the matrices in the small determinants.}
\begin{enumerate}
\item
The matrix in the first small determinant can be recast as
\begin{multline} \label{firstsmalldetlow}
     {\cal D}^j_k = (1/\Ph)'(x_j) D^j_k =
	   \int_{\cal C} \frac{\rd \z}{2 \p \i \Ph(\z)} g_\Si (x_j,\z) G(\z,y_k) \\[1ex]
	 + \int_{\cal C} \frac{\rd \x \: \Ph(\x)}{2 \p \i}
	   \int_{{\cal C}'} \frac{\rd \z}{2 \p \i \Ph(\z)}
	      S(x_j,\x) g_\Si (\x,\z) G(\z,y_k) \epc
\end{multline}
$j, k = 1, \dots, \nex$.
\item
The matrix in the second small determinant can be expressed as
\begin{multline} \label{scdsmalldetzero}
     {\cal D^*}^j_k = (1/\Ph)'(x_k) {D^*}^j_k =
        R^* (y_j,x_k) \frac{(1/\Ph)' (x_k)}{\fa_n' (x_k|h')} \\[1ex]
        + \int_{\cal C} \frac{\rd \x}{2 \p \i \Ph (\x)}
	  \biggl(
	  \frac{\re(\x - y_j)}{1 + 1/\fa_n (\x|h')} -
          \frac{\re(y_j - \x)}{1 + \fa_n (\x|h')} \biggr) \\ \times
	  \biggl(H(\x,x_k) + \int_{\cal C} \frac{\rd \z}{2 \p \i \Ph (\z)}
	                        S^* (\x,\z) H(\z,x_k) \biggr) \epc
\end{multline}
$j, k = 1, \dots, \nex$.
\end{enumerate}
\end{lemma}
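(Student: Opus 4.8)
The strategy for both parts is the same: substitute the low-$T$ factorizations of the large inverse matrices into the definitions (\ref{defgmat}) of $D$ and (\ref{defgstarmat}) of $D^*$, and then recognise the surviving finite sums over Bethe roots as contour integrals. For part~(i) I would first identify the rectangular factor: comparing the summed integral equation (\ref{gsum}) for $G(\la,\m)$ with the matrix relation $(I_M+K)G=\bigl(V^{(i)},V^{(h)}\bigr)$ of (\ref{defgmat}), and accounting for the fact that continuing $G(\cdot,\cdot)$ in its second argument from $\Int{\cal C}$ to the particle $y_k$ produces an extra resolvent term (Lemma~\ref{lem:gresolve}(ii), Lemma~\ref{lem:ggstarlowt}(iii)), one gets $\bigl[(I_M+K)^{-1}V^{(p)}\bigr]^l_k=-G(\la_l,y_k)$ with $G(\cdot,y_k)$ the continued function. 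Inserting this together with the factored form (\ref{lowfactorginv}) of ${G^{-1}}^{M-\nex+j}_l$ (recall $z_{M-\nex+j}=x_j$) into $D^j_k=\sum_l{G^{-1}}^{M-\nex+j}_l\bigl[(I_M+K)^{-1}V^{(p)}\bigr]^l_k$ and multiplying by $(1/\Ph)'(x_j)$ to cancel the factor $1/(1/\Ph)'(x_j)$ in (\ref{lowfactorginv}) leaves ${\cal D}^j_k$ as a sum of terms of the form $\sum_l g_\Si(\la,\la_l)G(\la_l,y_k)/\Ph'(\la_l)$, one of them convolved over ${\cal C}$ with the resolvent $S(x_j,\cdot)$.

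The pivotal step is then $\sum_l g_\Si(\la,\la_l)G(\la_l,y_k)/\Ph'(\la_l)=\oint\frac{\rd\z}{2\p\i}\,g_\Si(\la,\z)G(\z,y_k)/\Ph(\z)$: since $\Ph$ has a simple zero at each dominant root $\la_l$, the sum is the sum of the residues of the integrand, and since the integrand is $\p$-periodic in $\z$ (from the quasi-periodicity of $\dh_1,\dh_2$ and the $\p$-periodicity of $G(\cdot,y_k)$) the ${\cal C}_l$ and ${\cal C}_r$ pieces of ${\cal C}$ cancel, leaving a contour that encircles the roots. One must check that no spurious pole lies inside: the diagonal pole of $g_\Si(\la,\cdot)$ at $\z=\la$ is cancelled by the simple zero of $1/\Ph$ there whenever $\la$ is a hole, and $G(\cdot,y_k)$ is holomorphic in $\Int{\cal C}$ because its driving term $\re(y_k-\cdot)$ has poles only at $y_k$ and $y_k-\i\g$, both outside ${\cal C}$. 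For the first summand one may integrate over ${\cal C}$ itself; for the $S$-summand the variable $\x$ lies on ${\cal C}$, so the inner contour must be pulled in to ${\cal C}'\subset\Int{\cal C}$ (still enclosing all $\la_l$) in order to keep the pole of $g_\Si(\x,\cdot)$ off it. This is precisely what ${\cal C}'$ is for, and collecting the two pieces yields (\ref{firstsmalldetlow}).

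Part~(ii) runs along the same lines, using (\ref{gstarsum}) to turn the $W^{(p)}$-block into values of $G^*$, (\ref{rstark}) for $(I_M+K^*)^{-1}=I_M-R^*$, (\ref{rstarrightsum})--(\ref{rstarleftsum}) to evaluate $R^*$ on the roots, and (\ref{lowfactorgstarinv}) for ${{G^*}^{-1}}^m_l$ with $z_{M-\nex+k}=x_k$, followed by the same residue-to-integral conversion (now $\sum_l(\cdots)/(1/\Ph)'(z_l)=\oint\frac{\rd\z}{2\p\i}\,\Ph(\z)(\cdots)$, since $\Ph$ has simple poles at the $z_l$). The only genuinely new feature is the extra factor $I_M-R^*$: its identity part, together with the residue picked up at the hole $x_k$ — where $1/\Ph$ vanishes simply, $1/(1+\fa_n)$ has a simple pole weighted by $1/\fa_n'(x_k|h')$, and $g_{-\Si}(\cdot,x_k)$ contributes its diagonal pole — reorganises the bare kernel $g_{-\Si}(\cdot,x_k)$ into the dressed function $H(\cdot,x_k)$ of (\ref{defh}), whose ${\cal C}'$-integral is exactly the interior-root part of the residue sum, and splits off the stand-alone term $R^*(y_j,x_k)\,(1/\Ph)'(x_k)/\fa_n'(x_k|h')$, with $S^*$ in the role of $S$. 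This produces (\ref{scdsmalldetzero}).

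The step I expect to be the main obstacle is this contour bookkeeping: for each root sum one has to decide whether to integrate over ${\cal C}$ or ${\cal C}'$, and to verify that the apparent extra poles — the diagonal poles of $g_{\pm\Si}$, the poles of $\Ph$ or $1/\Ph$ at the interior roots, holes and particles, and the poles of $1/(1+\fa_n)$ on ${\cal Z}_n$ — either cancel against a zero of $1/\Ph$ or $\Ph$, or lie on the wrong side of the contour, or are precisely the ones that generate the $H$-dressing and the leading $R^*(y_j,x_k)$ term in part~(ii); the accompanying continuation terms for $G$ and $G^*$ at the particles must be tracked in the same breath. The holomorphy statements of Lemmas~\ref{lem:gresolve} and \ref{lem:ggstarlowt}, and the pole-locating discussion around (\ref{rstarrightsum}), are what make this go through.
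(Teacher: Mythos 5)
Your proposal is correct and follows essentially the same route as the paper: part~(i) is exactly the paper's argument (the analytically continued relation $(I_M+K)\,G(\cdot,y_k)=-V^{(p)}$, insertion of the factorized inverse (\ref{lowfactorginv}), and the $1/\Ph$-residue conversion of the root sums, with the inner contour pulled in to ${\cal C}'$ for the $S$-term), and your part~(ii) describes the same mechanism the paper isolates in (\ref{wpgid}) — the residue at $z_m$ where the simple zero of $1/\Ph$ meets the simple pole of $1/(1+\fa_n)$ and the diagonal pole of $g_{-\Si}$ — followed by the recombination of the $(I_M-R^*)$ column into $H$ of (\ref{defh}) and the stand-alone term $R^*(y_j,x_k)\,(1/\Ph)'(x_k)/\fa_n'(x_k|h')$ via (\ref{rstarrightsum}). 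The only cosmetic deviation is your opening suggestion to rewrite the $W^{(p)}$ block through (\ref{gstarsum}) as values of a continued $G^*(y_j,\cdot)$: the paper keeps the explicit $\re$-combination (which is what actually survives in (\ref{scdsmalldetzero})) and never needs that step, so it can simply be dropped without affecting the rest of your argument.
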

\begin{proof}
(i) It follows from Lemma~\ref{lem:gresolve} that $G(\la,\cdot)$ can be analytically
continued to $\Ext {\cal C}$. Using (\ref{defg}) we conclude that
\begin{equation}
     \sum_{l=1}^M (\de^j_l + K^j_l) G(\la_l,y_k) =
        \frac{\re(y_k - \la_j)}{1 + 1/\fa (y_k|h)} -
	\frac{\re(\la_j - y_k)}{1 + \fa (y_k|h)} = - {V^{(p)}}^j_k \epp
\end{equation}
Hence,
\begin{equation} \label{smalldetg}
     D^j_k = \bigl[(0, I_{\nex}) G^{-1} (I_M + K)^{-1} V^{(p)}\bigr]^j_k =
        - \sum_{m=1}^M {G^{-1}}^{M - \nex +j}_m G(\la_m,y_k) \epc
\end{equation}
$j, k = 1, \dots, \nex$. Inserting (\ref{lowfactorginv}) on the right
hand side and using the residue theorem we end up with (\ref{firstsmalldetlow}).

(ii) For the second matrix $D^*$ we start with the definition (\ref{defgstarmat})
and insert (\ref{rstark}),
\begin{multline} \label{smalldetgstar}
     {D^*}^j_k =
     \biggl[ W^{(p)} {G^*}^{-1} (I_M + K^*)^{-1} \binom{0}{I_{\nex}} \biggr]^j_k = \\[1ex]
        \sum_{i, m = 1}^M \biggl(
          \frac{\re(y_j - \la_i)}{1 + \fa_n (\la_i|h')} -
	  \frac{\re(\la_i - y_j)}{1 + 1/\fa_n (\la_i|h')}\biggr)
	  {{G^*}^{-1}}^i_m \biggl(\de^m_{M - \nex + k} -
	  \frac{R^* (z_m,x_k)}{\fa_n' (x_k|h')}\biggr) \epc
\end{multline}
$j, k = 1, \dots, \nex$. Then we substitute (\ref{lowfactorgstarinv})
and convert the sums into integrals by means of the residue theorem.
The only slightly subtle part of the calculation is the following:
\begin{multline} \label{wpgid}
     \sum_{l = 1}^M \biggl(
        \frac{\re(y_j - \la_l)}{1 + \fa_n (\la_l|h')} -
	\frac{\re(\la_l - y_j)}{1 + 1/\fa_n (\la_l|h')}\biggr)
	\frac{g_{- \Si} (\la_l, z_m)}{\Ph' (\la_l)} \\
     - \int_{\cal C} \frac{\rd \x}{2 \p \i \Ph(\x)} \biggl(
        \frac{\re(y_j - \x)}{1 + \fa_n (\x|h')} -
	\frac{\re(\x - y_j)}{1 + 1/\fa_n (\x|h')}\biggr) g_{- \Si} (\x, z_m) \\[1ex]
     = \re(z_m - y_j) \lim_{\x \rightarrow z_m} \frac{1/\Ph(\x)}{1 + 1/\fa_n (\x|h')}
       - \re(y_j - z_m) \lim_{\x \rightarrow z_m} \frac{1/\Ph(\x)}{1 + \fa_n (\x|h')} \\[1ex]
     = K(y_j - z_m) \frac{(1/\Ph)' (z_m)}{\fa_n' (z_m|h')} \epp
\end{multline}
Here the zeros of $1 + \fa_n (\x|h')$ are canceled by the poles of
$\Ph (\x)$ and only the pole of $g_{- \Si} (\x, z_m)$ at $z_m$ contributes
to the sum over all residues. In the last equation we have used
(\ref{defk}). Taking into account (\ref{wpgid}) and the definition
(\ref{defh}) of the function $H$, we easily see, that (\ref{smalldetgstar})
implies (\ref{scdsmalldetzero}).
\end{proof}

\subsection{An intermediate summary}
At this point we have factorized the determinants in a way that
will allow us to take the Trotter limit and low-$T$ limit. What remains
to be done is to collect all the prefactors and to also rewrite them in
a form that is appropriate for taking these limits. For this purpose we
introduce the function
\begin{equation} \label{defphis}
     \Ph_s (\la) = \prod_{k = 1}^M \frac{\sin (\la - \la_k)}{\sin (\la - z_k)}
\end{equation}
and two matrices $U$ and $U^*$ with matrix elements
\begin{equation} \label{defmatsu}
        U^j_k = \frac{U(z_j, z_k)}{(1/\Ph)'(z_j)} \epc \qd
        {U^*}^j_k = \frac{U^* (\la_j, \la_k)}{\Ph'(\la_j)} \epp
\end{equation}
Then the amplitudes (\ref{ampgenfun}) can be represented as
\begin{multline} \label{prelimsum}
     A_n (h,h') = \det_M (I_M - U) \det_M (I_M - U^*)
        \frac{\det_\nex ({\cal D}) \det_\nex ({\cal D}^*)}{\det_{2 \nex} ({\cal J})}
	\\ \times
	(-1)^\nex \biggl[ \prod_{j=1}^\nex (1 + \fa (x_j|h))(1 + \fa(y_j|h)) \biggr]
	\biggl[ \prod_{j=1}^\nex
	        \frac{\fa_n' (x_j|h')}
		     {(1 + \fa(x_j|h))(1/\Ph)' (x_j)} \biggr]^2 \\ \times
	\biggl[ \prod_{j=1}^M \frac{\Ph_s (z_j + \i \g)}{\Ph_s (\la_j + \i \g)} \biggr]
	\biggl[ \prod_{\m \in {\cal Y}_n \ominus {\cal X}_n}
	        \Ph_s (\m - \i \g) \Ph_s (\m + \i \g) \biggr]
	\biggl[ \prod_{\m, \nu \in {\cal Y}_n \ominus {\cal X}_n}
	        \frac{1}{\sin(\m - \nu + \i \g)} \biggr] \\ \times
	\frac{\dh_2^2 (\Si)}{\dh_2^2} \biggl[ \prod_{j=1}^M \Ph' (\la_j) (1/\Ph)' (z_j) \biggr]
	\biggl[ \prod_{j=1}^M \frac{1 + \fa_n (\la_j|h')}{\fa' (\la_j|h)}
	                      \frac{1 + \fa (z_j|h)}{\fa_n' (z_j|h')} \biggr] \epp
\end{multline}

In order to see this one has to first realize that
\begin{multline}
     \frac{\prod_{j=1}^M \frac{\r_n (\la_j|h, h')}{\r_n (\m_j|h, h')}}
          {\det_M \Bigl\{\frac{1}{\sin(\la_j - \m_k)}\Bigr\}
	   \det_M \Bigl\{\frac{1}{\sin(\m_j - \la_k)}\Bigr\}} \\[1ex]
        = \prod_{j=1}^M
	  \frac{1 + \fa_n (\la_j|h')}{\fa' (\la_j|h)}
	  \frac{1 + \fa (\m_j|h)}{\fa_n' (z_j|h')}
	  \prod_{k=1}^M
	  \frac{\sin(\la_j - \m_k + \i \g) \sin(\m_j - \la_k + \i \g)}
	       {\sin(\la_j - \la_k + \i \g) \sin(\m_j - \m_k + \i \g)} \epc
\end{multline}
then use equations (\ref{start}), (\ref{firstfactorsplit}),
(\ref{secondsplit}), (\ref{defjtilde}), (\ref{detggstarlowt}),
(\ref{thesquare}), (\ref{firstsmalldetlow}), (\ref{scdsmalldetzero}),
and the definitions (\ref{defphis}), (\ref{defmatsu}).

\subsection{The Trotter limit}
Next we shall rewrite the remaining sums and products over all Bethe
roots in terms of integrals involving the auxiliary functions $\fa$
and $\fa_n$. In fact, we will be always dealing with differences
between excited state quantities and quantities pertaining to the
dominant state. These will be taken care of by means of the function
\begin{equation}
     \fz (\la) = \frac{1}{2\p\i}
                 \bigl( \Ln (1 + \fa_n) (\la|h') - \Ln (1 + \fa) (\la|h) \bigr) \epc
\end{equation}
where the logarithms were defined in (\ref{deflogoneplusa}).
\begin{lemma} \label{lem:sumroots}
{\bf Summation lemma.}\\ If $f$ is holomorphic on and inside
${\cal C}$, then
\begin{equation}
     \sum_{j=1}^M \bigl(f(z_j) - f(\la_j)\bigr) =
        - k \bigl(\tst{f(\frac{\p}{2} - \i x_0) - f(- \frac{\p}{2} - \i x_0)}\bigr)
	- \int_{\cal C} \rd \m \: f'(\m) \fz(\m) \epc
\end{equation}
where $x_0$ was introduced in Section~\ref{subsec:nlie} and
$k = \fe_n - \fe_0$.
\end{lemma}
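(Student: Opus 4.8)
The plan is to rewrite the sum over Bethe roots as a contour integral over ${\cal C}$ by the residue theorem, to recognise the integrand as a total $\m$-derivative of $2\p\i\,\fz$, and then to integrate by parts; the term $-k\bigl(f(\p/2-\i x_0)-f(-\p/2-\i x_0)\bigr)$ will come out of the jump discontinuities of $\fz$ on the vertical parts of ${\cal C}$. Concretely, by construction $\{z_j\}_{j=1}^M={\cal Z}_n$ is exactly the set of zeros of $1+\fa_n(\cdot|h')$ inside ${\cal C}$ (cf.\ (\ref{defzset}), (\ref{defxyset})), while $\{\la_j\}_{j=1}^M$ is exactly the set of zeros of $1+\fa(\cdot|h)$ inside ${\cal C}$ (the dominant state has no particle and no hole roots). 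Assuming, as usual, that all these zeros are simple, I would apply the residue theorem to $f(\m)\,\6_\m\fa_n(\m|h')/\bigl(1+\fa_n(\m|h')\bigr)$ and to $f(\m)\,\6_\m\fa(\m|h)/\bigl(1+\fa(\m|h)\bigr)$ and subtract, obtaining
\[
     \sum_{j=1}^M \bigl(f(z_j)-f(\la_j)\bigr)
        = \int_{\cal C} \frac{\rd\m}{2\p\i}\, f(\m)
          \biggl( \frac{\6_\m\fa_n(\m|h')}{1+\fa_n(\m|h')}
                - \frac{\6_\m\fa(\m|h)}{1+\fa(\m|h)} \biggr) \epc
\]
provided the residues at the poles of $\fa_n$ and of $\fa$ inside ${\cal C}$ cancel in the difference. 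They do, by the same mechanism as in the derivation of (\ref{prenlie}): the poles of the products $\prod_k\sin(\la-\la_k\mp\i\g)$, $\prod_k\sin(\la-\m_k\mp\i\g)$ lie outside ${\cal C}$, so the only singularities of $\fa$, $\fa_n$ inside ${\cal C}$ --- including the one at $-\frac{\i\g}{2}+\frac{\i t}{\ks N}$ retained by the $\i 0$-regularisation of ${\cal C}$ --- stem from the ratio $d/a$ of pseudo-vacuum eigenvalues, whose singular part depends neither on the magnetic field nor on the Bethe roots; hence $\fa$ and $\fa_n$ have the same poles of the same orders inside ${\cal C}$, and these cancel.

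For the next step, one checks from the two branches of (\ref{deflogoneplusa}) --- trivially for $\la\in{\cal D}_\ab$, and by a one-line computation with $\ln\fa_n+\ln(1+1/\fa_n)$ for $\la\in{\cal D}_\auf$ --- that, wherever these logarithms are defined, $\6_\m\bigl[\Ln(1+\fa_n)(\m|h')\bigr]=\6_\m\fa_n(\m|h')/(1+\fa_n(\m|h'))$ and likewise for the dominant state. Hence the integrand above equals $f(\m)\,\6_\m\bigl(2\p\i\,\fz(\m)\bigr)$, so that
\[
     \sum_{j=1}^M \bigl(f(z_j)-f(\la_j)\bigr)
        = \int_{\cal C} \rd\m \; f(\m)\, \6_\m\fz(\m) \epp
\]
I would then integrate this by parts around the closed contour ${\cal C}$. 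The function $\fz$ is analytic on ${\cal C}_+$ and on ${\cal C}_-$, and it is continuous at the four corners of ${\cal C}$, since the curves $|\fa|=1$ and $|\fa_n|=1$ meet the vertical sides ${\cal C}_l$, ${\cal C}_r$ strictly between the corners, at the points $\mp\p/2-\i x_0$ with $x_0$ as in Section~\ref{subsec:nlie}. Therefore the corner contributions of the integration by parts telescope to zero. The only boundary terms that survive come from the jump discontinuities of $\fz$ on ${\cal C}_l$ and ${\cal C}_r$: by (\ref{jumplnoneplusa}), applied to both $\fa_n$ and $\fa$, the function $\fz$ jumps there by $k=\fe_n-\fe_0$, and since ${\cal C}_l$ is traversed downwards and ${\cal C}_r$ upwards these two jump terms combine into $-k\bigl(f(\p/2-\i x_0)-f(-\p/2-\i x_0)\bigr)$. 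Together with the $-\int_{\cal C}\rd\m\,f'(\m)\fz(\m)$ produced by the partial integration this gives the asserted identity.

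The step I expect to be the main obstacle is the last one: keeping track of which branch of $\Ln(1+\,\cdot\,)$ is in force on each side of the curves where $|\fa_n|=1$ and $|\fa|=1$, getting the two signs of the jump contributions on ${\cal C}_l$ and ${\cal C}_r$ right, and justifying that the jumps of $\Ln(1+\fa_n)(\cdot|h')$ and of $\Ln(1+\fa)(\cdot|h)$ can be taken at one and the same point $\mp\p/2-\i x_0$, so that a single coefficient $k=\fe_n-\fe_0$ appears rather than two separate contributions. The cancellation of the pole contributions in the first step is a second, minor, point to be careful about.
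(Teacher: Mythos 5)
Your argument is correct and follows essentially the same route as the paper's (one-sentence) proof: represent $\sum_j\bigl(f(z_j)-f(\la_j)\bigr)$ as the index-type contour integral of $f\,\6_\m\fz$, using that the poles of $\fa$ and $\fa_n$ inside ${\cal C}$ coincide and cancel, and then integrate by parts with the contour decomposed at the crossing points $\mp\p/2-\i x_0$ exactly as in the computation of (\ref{indexoneplusa}), the jumps (\ref{jumplnoneplusa}) producing the term $-k\bigl(f(\p/2-\i x_0)-f(-\p/2-\i x_0)\bigr)$. The subtlety you flag at the end — that the curves $|\fa|=1$ and $|\fa_n|=1$ need not cross the vertical edges at exactly the same point, so that strictly one gets $\fe_n$ and $\fe_0$ contributions at slightly different heights — is glossed over in the paper's statement as well, and is harmless in all uses of the lemma (for $f(\la)=-\la$ and for logarithms of $\p$-(anti)periodic functions the boundary term is independent of the crossing height).
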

\begin{proof}
Decompose the contour as in the calculation of the index in
(\ref{indexoneplusa}) and use partial integration.
\end{proof}
In order to have a compact notation in the following corollary
and below we introduce the `indicator function' $\one$ defined
by
\begin{equation}
     \one_{\rm condition} = \begin{cases}
                               1 & \text{if condition is satisfied} \\
			       0 & \text{else.}
                            \end{cases}
\end{equation}
\begin{corollary}
{\bf \boldmath Integral representations for $\Si$, $\Ph$ and $\Ph_s$.}
\begin{enumerate}
\item
The constant $\Si$ defined in (\ref{defsi}) has the integral
representation
\begin{equation} \label{sigmaint}
     \Si = k \p + \int_{\cal C} \rd \m \: \fz(\m) \epp
\end{equation}
\item
For the function $\Ph_s$ defined in (\ref{defphis}) we have for
all $\la \in {\mathbb C} \setminus {\cal C}$
\begin{equation} \label{phsint}
     \Ph_s (\la) =
        \biggl(\frac{1 + \fa (\la|h)}
	            {1 + \fa_n (\la|h')}\biggr)^{\one_{\la \in \Int {\cal C}}} (-1)^k
        \exp \biggl\{- \int_{\cal C} \rd \m \: \ctg (\la - \m) \fz(\m) \biggr\} \epc
\end{equation}
\item
while for $\Ph$ defined in (\ref{defphi}) it holds for any $\la$ with
$|\Im \la| < \g/2$ that
\begin{align} \label{phint}
     \Ph (\la) =
        \biggl(\frac{1 + \fa (\la|h)}
	            {1 + \fa_n (\la|h')}\biggr)^{\one_{\la \in \Int {\cal C}}} (-1)^k
        \exp \biggl\{- \int_{\cal C} \rd \m \:
	             \frac{\dh_1' (\la - \m)}{\dh_1 (\la - \m)} \fz(\m) \biggr\} \epp
\end{align}
\end{enumerate}
\end{corollary}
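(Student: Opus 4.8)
The plan is to derive all three identities from the Summation Lemma (Lemma~\ref{lem:sumroots}), using (\ref{sigmaint}) as an input for the other two. I would prove (\ref{sigmaint}) first, then (\ref{phsint}) and (\ref{phint}) for $\la$ outside ${\cal C}$, and finally extend the latter two across ${\cal C}$, which is where the indicator $\one_{\la \in \Int {\cal C}}$ comes from.

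For (\ref{sigmaint}) one applies Lemma~\ref{lem:sumroots} to the entire function $f(\la) = \la$. Then $f' \equiv 1$, the boundary term is $- k\bigl((\tst{\frac\p2} - \i x_0) - (- \tst{\frac\p2} - \i x_0)\bigr) = - k\p$, and the lemma gives $\sum_{j=1}^M (z_j - \la_j) = - k\p - \int_{\cal C} \rd\m\: \fz(\m)$; negating and using the definition (\ref{defsi}) of $\Si$ gives (\ref{sigmaint}).

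For (\ref{phsint}) with $\la \in \Ext {\cal C}$ I would differentiate and integrate back. By (\ref{defphis}), $\6_\la \ln \Ph_s(\la) = \sum_{j=1}^M \bigl(\ctg(\la - \la_j) - \ctg(\la - z_j)\bigr)$. The kernel $f(\m) = \ctg(\la - \m)$ is $\p$-periodic, so the boundary term in Lemma~\ref{lem:sumroots} cancels, and $f$ is holomorphic on and inside ${\cal C}$ since its only pole, at $\m = \la$, lies outside; as $\6_\m \ctg(\la - \m) = - \6_\la \ctg(\la - \m)$, the lemma yields $\6_\la \ln \Ph_s(\la) = - \6_\la \int_{\cal C} \rd\m\: \ctg(\la - \m) \fz(\m)$, whence $\Ph_s(\la) = C \exp\bigl\{ - \int_{\cal C} \rd\m\: \ctg(\la - \m)\fz(\m)\bigr\}$ with $C$ independent of $\la$. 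Sending $\Im\la \to + \infty$, where $\Ph_s(\la) \to \re^{\i\Si}$ and $\ctg(\la - \m) \to - \i$, the right hand side tends, by (\ref{sigmaint}), to $C\re^{\i(\Si - k\p)} = C(-1)^k\re^{\i\Si}$, forcing $C = (-1)^k$; this is (\ref{phsint}) for $\la \in \Ext {\cal C}$. Identity (\ref{phint}) for $\la$ outside ${\cal C}$ and $|\Im\la| < \g/2$ follows in exactly the same way with $\ctg(\la - \m)$ replaced by $\dh_1'(\la - \m)/\dh_1(\la - \m)$, which is also $\p$-periodic; the integration constant is again $(-1)^k$, which — the $\Im\la\to+\infty$ device being unavailable in the strip — I would obtain by comparing with (\ref{phsint}): the difference $\dh_1'(\la-\m)/\dh_1(\la-\m) - \ctg(\la-\m)$ is $\p$-periodic and holomorphic on $|\Im(\la - \m)| < \g$, so $\Ph/\Ph_s$ satisfies the analogue of (\ref{phsint}) with trivial constant, as one checks by averaging over one real period, where the kernel contribution integrates to zero.

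The remaining and only delicate step is the passage from the exterior to the interior of ${\cal C}$. The functions $\Ph_s$, $\Ph$ are meromorphic with all their zeros at the $\la_j$ and all their poles at the $z_j = {\cal Z}_n$, hence are analytic in a neighbourhood of ${\cal C}$; $\fz$ is holomorphic near ${\cal C}$; and each kernel, $\ctg(\la - \m)$ or $\dh_1'(\la - \m)/\dh_1(\la - \m)$, has a simple pole at $\m = \la$ with residue $-1$ in $\m$. Hence, as $\la$ crosses ${\cal C}$ into $\Int {\cal C}$, the contour integral $\int_{\cal C}\rd\m\:\ctg(\la-\m)\fz(\m)$ (and likewise its $\dh_1'/\dh_1$ version) changes, for the positive orientation of ${\cal C}$, by $-2\p\i\,\fz(\la)$, so the exponential in (\ref{phsint})--(\ref{phint}) picks up a factor $\re^{2\p\i\,\fz(\la)} = (1 + \fa_n(\la|h'))/(1 + \fa(\la|h))$ by the definition of $\fz$ and (\ref{deflogoneplusa}). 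Multiplying by $\bigl((1 + \fa(\la|h))/(1 + \fa_n(\la|h'))\bigr)^{\one_{\la \in \Int {\cal C}}}$ exactly cancels this jump and restores continuity of the right hand side across ${\cal C}$, so the corrected expressions agree with $\Ph_s$, $\Ph$ everywhere. The main obstacle is purely the bookkeeping here: tracking the orientation of ${\cal C}$ in the residue computation and the fact that $\fz$ carries its own integer jumps across the curves ${\cal B}_n$, ${\cal B}_0$ lying inside ${\cal C}$, so the crossing argument must be applied from the side of ${\cal C}$ on which $\fz$ is still given by its boundary values; the differentiation and re-integration are otherwise routine.
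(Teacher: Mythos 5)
Your argument is correct and follows essentially the paper's own route: all three parts rest on the Summation Lemma, with (\ref{phsint}) and (\ref{phint}) first established for $\la$ in the exterior of ${\cal C}$ and then extended across the contour by analytic continuation, the indicator prefactor $\bigl((1+\fa)/(1+\fa_n)\bigr)^{\one_{\la \in \Int {\cal C}}}$ compensating exactly the residue jump $-2\p\i\,\fz(\la)$ of the contour integral, which is the content of the paper's phrase ``obtained by means of analytic continuation''. The only deviation is in how the constant $(-1)^k$ is fixed: the paper applies the lemma directly to $\ln \ph(\la-\m)$ for a $\p$-antiperiodic $\ph$ (respectively to $\ln\dh_1(\la-\m)$), so the sign emerges from the boundary term, whereas you apply it to the derivative kernels and pin the constant down via the $\Im\la\to+\infty$ limit together with (\ref{sigmaint}) and, for $\Ph/\Ph_s$, a period-averaging argument — both determinations are valid.
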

\begin{proof}
For (i) apply Lemma~\ref{lem:sumroots} to $f(\la) = - \la$. For the
proof of (ii) consider any holomorphic function $\ph$ that is $\p$
anti-periodic and has a single zero inside its periodicity strip
at $\m = 0$. Then apply Lemma~\ref{lem:sumroots} to $\ln \ph (\la - \m)$
considered as a functions of $\m$ with $\la \in \Ext {\cal C}$. This
gives (\ref{phsint}) for $\la \in \Ext {\cal C}$. The prefactor for
$\la \in \Int {\cal C}$ is obtained by means of analytic continuation.
The proof of (iii) is similar. The restriction $|\Im \la| < \g/2$
guarantees that the zeros of $\dh_1$ at $\pm m \i \g$, $m \in {\mathbb N}$
stay outside ${\cal C}$.
\end{proof}
\begin{corollary}
{\bf The double integrals.}
\begin{align} \label{prodphisbyphis}
     & \prod_{j=1}^M \frac{\Ph_s (z_j + \i \g)}{\Ph_s (\la_j + \i \g)} =
        \exp \biggl\{\int_{\cal C} \rd \la \int_{\cal C} \rd \m \:
	             \ctg' (\la - \m + \i \g) \fz(\la) \fz(\m) \biggr\} \epc \\
     & \prod_{j=1}^M \Ph' (\la_j) (1/\Ph)' (z_j)
        \frac{1 + \fa_n (\la_j|h')}{\fa' (\la_j|h)}
	\frac{1 + \fa (z_j|h)}{\fa_n' (z_j|h')} \notag \\ & \mspace{180.mu} =
        \exp \biggl\{- \int_{{\cal C}' \subset {\cal C}} \rd \la \int_{\cal C} \rd \m \:
	               [\ln \dh_1]'' (\la - \m) \fz(\la) \fz(\m) \biggr\} \epp
		       \label{largedets}
\end{align}
Here ${\cal C}'$ is a simple closed contour, tightly enclosed by
${\cal C}$ in such way that $\{\la\}, {\cal Z}_n \subset \Int {\cal C}'$.
\end{corollary}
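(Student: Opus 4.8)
The plan is to derive both product formulas from the Summation Lemma (Lemma~\ref{lem:sumroots}) combined with the integral representations (\ref{phsint}) and (\ref{phint}) of the preceding corollary. In each case I would rewrite the product over Bethe roots as $\exp\bigl\{\sum_{j=1}^M(f(\la_j)-f(z_j))\bigr\}$ for a function $f$ that is holomorphic on and inside ${\cal C}$, apply Lemma~\ref{lem:sumroots}, observe that the boundary term drops out because the relevant $f$ turns out to be $\p$-periodic, and finally substitute the explicit formula for $f'$ to turn the resulting single contour integral into the advertised double integral.

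For (\ref{prodphisbyphis}) I would take $f(\la)=\ln\Ph_s(\la+\i\g)$. For $\la$ on or inside ${\cal C}$ the shifted argument $\la+\i\g$ lies in $\Ext {\cal C}$, so (\ref{phsint}) applies with trivial indicator and gives $f(\la)=\ln\bigl((-1)^k\bigr)-\int_{\cal C}\rd\m\,\ctg(\la+\i\g-\m)\,\fz(\m)$. Here $\la+\i\g-\m$ stays bounded away from $\p{\mathbb Z}$ for $\la,\m$ near ${\cal C}$, so $f$ is holomorphic on and inside ${\cal C}$, and it is $\p$-periodic since $\ctg$ has period $\p$; hence $f(\tst{\frac\p2}-\i x_0)=f(-\tst{\frac\p2}-\i x_0)$ and the boundary term in Lemma~\ref{lem:sumroots} vanishes. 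Differentiating, $f'(\la)=-\int_{\cal C}\rd\m\,\ctg'(\la+\i\g-\m)\,\fz(\m)$, and inserting this into $-\int_{\cal C}\rd\la\,f'(\la)\fz(\la)$ and exponentiating yields (\ref{prodphisbyphis}).

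For (\ref{largedets}) one preparatory algebraic step is needed. Put $\Phi_{\rm reg}(\la)=\Ph(\la)\,(1+\fa_n(\la|h'))/(1+\fa(\la|h))$. Inside ${\cal C}$ the zeros of $\Ph$ and those of $1+\fa(\cdot|h)$ both lie precisely at $\{\la_j\}_{j=1}^M$, while the poles of $\Ph$ and the zeros of $1+\fa_n(\cdot|h')$ both lie precisely at $\{z_j\}_{j=1}^M={\cal Z}_n$; these cancel, so $\Phi_{\rm reg}$ is holomorphic and non-vanishing on and inside ${\cal C}$. Because $\Ph$ has a simple zero at $\la_j$ while $1+\fa(\cdot|h)$ has a simple zero there with slope $\fa'(\la_j|h)$, and because $1/\Ph$ has a simple zero at $z_j$ while $1+\fa_n(\cdot|h')$ has a simple zero there with slope $\fa_n'(z_j|h')$, l'Hospital's rule gives
\[
     \Ph'(\la_j)\,\frac{1+\fa_n(\la_j|h')}{\fa'(\la_j|h)}=\Phi_{\rm reg}(\la_j),\qquad
     (1/\Ph)'(z_j)\,\frac{1+\fa(z_j|h)}{\fa_n'(z_j|h')}=\frac{1}{\Phi_{\rm reg}(z_j)}.
\]
Thus the left-hand side of (\ref{largedets}) collapses to $\prod_{j=1}^M\Phi_{\rm reg}(\la_j)/\Phi_{\rm reg}(z_j)=\exp\bigl\{\sum_{j=1}^M(f(\la_j)-f(z_j))\bigr\}$ with $f=\ln\Phi_{\rm reg}$, and (\ref{phint}) (indicator $1$) identifies $f(\la)=\ln\bigl((-1)^k\bigr)-\int_{\cal C}\rd\m\,\frac{\dh_1'(\la-\m)}{\dh_1(\la-\m)}\,\fz(\m)$ for $\la\in\Int {\cal C}$ with $|\Im\la|<\g/2$, which is $\p$-periodic by the quasi-periodicity of $\dh_1$, so the boundary term again vanishes and $f'(\la)=-\int_{\cal C}\rd\m\,[\ln\dh_1]''(\la-\m)\,\fz(\m)$. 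Feeding this into Lemma~\ref{lem:sumroots} produces the double integral with kernel $-[\ln\dh_1]''(\la-\m)$.

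The one point that requires care — and the main obstacle — is that the formula for $f'$ just used is valid only for $\la$ strictly inside ${\cal C}$ and off the lines $|\Im\la|=\g/2$, whereas the $\la$-integral handed back by Lemma~\ref{lem:sumroots} runs along ${\cal C}$ itself, where moreover $[\ln\dh_1]''(\la-\m)$ has a pole on the diagonal $\la=\m$. I would resolve this by deforming the outer $\la$-contour slightly inward to a simple closed contour ${\cal C}'$ with $\{\la_j\}_{j=1}^M\cup{\cal Z}_n\subset\Int {\cal C}'$: in the region between ${\cal C}'$ and ${\cal C}$ the integrand $f'(\la)\fz(\la)$ is holomorphic, since $\Phi_{\rm reg}$ is holomorphic and non-vanishing throughout $\Int {\cal C}$ and all jump curves of $\fz$ (which carry the Bethe roots) sit inside ${\cal C}'$, so the deformation leaves the value unchanged. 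This produces exactly the asymmetric domain ${\cal C}'\subset{\cal C}$ of (\ref{largedets}), and at the same time stays within the validity strip of (\ref{phint}). No such deformation is needed for (\ref{prodphisbyphis}), the shift by $\i\g$ already keeping the argument of $\ctg'$ away from its poles.
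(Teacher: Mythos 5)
Your proposal follows the paper's own route almost step by step: (\ref{prodphisbyphis}) from (\ref{phsint}) and Lemma~\ref{lem:sumroots} using $\la_j+\i\g,\,z_j+\i\g\in\Ext{\cal C}$, and (\ref{largedets}) by first collapsing the left hand side via l'Hospital to $\prod_{j=1}^M\Phi_{\rm reg}(\la_j)/\Phi_{\rm reg}(z_j)$ with $\Phi_{\rm reg}=\Ph\,(1+\fa_n)/(1+\fa)$ — this is precisely the limit written in the paper's proof — and then feeding (\ref{phint}) into the summation lemma, with the ${\cal C}'$ regularization taking care of the diagonal singularity of $[\ln\dh_1]''$. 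The signs, the identification of $f$ through (\ref{phsint}) and (\ref{phint}), and the vanishing of the boundary terms by $\p$-periodicity are all correct.

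Two corrections to your justifications, though, at the step you yourself single out as the main obstacle. First, it is not true that the jump curves of $\fz$ ``sit inside ${\cal C}'$'': the curves ${\cal B}_n$ and ${\cal B}_0$ across which $\Ln(1+\fa_n)$ and $\Ln(1+\fa)$ jump (cf.\ (\ref{deflogoneplusa}), (\ref{jumplnoneplusa})) meet the vertical edges of ${\cal C}$ at $\mp\frac{\p}{2}-\i x_0$, so they unavoidably traverse the thin region between ${\cal C}'$ and ${\cal C}$ near those two points, and $f'\fz$ is \emph{not} holomorphic throughout that region. The deformation is still legitimate, but for a different reason: the discrepancy between the two contour integrals is $f'$ times the bounded, integer-valued jump of $\fz$ integrated over the two short arcs of ${\cal B}_n\cup{\cal B}_0$ caught between ${\cal C}'$ and ${\cal C}$, and this vanishes as ${\cal C}'$ approaches ${\cal C}$ — i.e.\ exactly in the ``tightly enclosed'' limit in which the double integral in (\ref{largedets}), singular on the diagonal, has to be read anyway; this is the same level of regularization the paper itself employs. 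Second, when you argue that $\Phi_{\rm reg}$ is holomorphic and non-vanishing on $\Int{\cal C}$, the cancellation of the zeros at $\{\la_j\}$ and $\{z_j\}$ is not the whole story: $\fa(\cdot|h)$ and $\fa_n(\cdot|h')$ also have poles inside ${\cal C}$ (at the zeros of $a$, whose locations do not depend on the field), so $1+\fa$ and $1+\fa_n$ have coinciding poles of equal order there, and these too must — and do — cancel in the ratio.
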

\begin{proof}
Equation (\ref{prodphisbyphis}) follows from Lemma~\ref{lem:sumroots}
and (\ref{phsint}), since $\la_j + \i \g, z_j + \i \g \in \Ext {\cal C}$
for $j = 1, \dots, M$. For the proof of (\ref{largedets}) notice that
\begin{multline}
     \Ph' (\la_j) (1/\Ph)' (z_j)
        \frac{1 + \fa_n (\la_j|h')}{\fa' (\la_j|h)}
	\frac{1 + \fa (z_j|h)}{\fa_n' (z_j|h')} \\ =
     \lim_{\substack{\la \rightarrow \la_j\\z \rightarrow z_j}}
        \frac{(1 + \fa_n (\la|h'))\Ph(\la)}{1 + \fa (\la|h)}
	\frac{1 + \fa (z|h)}{(1 + \fa_n (z|h')) \Ph(z)} \\ =
     \exp \biggl\{\int_{\cal C} \rd \m \:
                  \biggl(\frac{\dh_1' (z_j - \m)}{\dh_1 (z_j - \m)} -
                         \frac{\dh_1' (\la_j - \m)}{\dh_1 (\la_j - \m)}\biggr)
			 \fz(\m) \biggr\} \epc
\end{multline}
where we have used (\ref{phint}) in the last equation. Using once
more (\ref{phint}) and the properties of the contour ${\cal C}'$
we arrive at (\ref{largedets}).
\end{proof}
\begin{corollary}
\begin{equation}
     \biggl[ \prod_{j=1}^\nex
             \frac{\fa_n' (x_j|h')}
	          {(1 + \fa(x_j|h))(1/\Ph)' (x_j)} \biggr]^2 =
     \exp \biggl\{- 2 \sum_{\la \in {\cal X}_n} \int_{\cal C} \rd \m \:
	             \frac{\dh_1' (\la - \m)}{\dh_1 (\la - \m)} \fz(\m) \biggr\} \epp
\end{equation}
\end{corollary}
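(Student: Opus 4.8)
The plan is to read the claim off from the integral representation (\ref{phint}) of $\Ph$, after a short Laurent-expansion analysis of the two singularities that collide at each hole $x_j$. The starting observation is that, for each $j = 1,\dots,\nex$, the point $x_j \in {\cal X}_n \subset {\cal Z}_n$ lies in the \emph{open} interior $\Int {\cal C}$ and satisfies $|\Im x_j| < \g/2$; moreover $x_j$ is a simple zero of $1 + \fa_n(\cdot|h')$, while (for generic configurations, i.e.\ the $z_k$ mutually distinct and distinct from the dominant roots $\la_k$) it is a simple pole of $\Ph$, coming from the factor $\dh_1(\la - z_{M-\nex+j}) = \dh_1(\la - x_j)$ in the denominator of (\ref{defphi}). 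Hence the product $(1 + \fa_n(\la|h'))\Ph(\la)$ is holomorphic and nonzero at $\la = x_j$, and a first-order expansion of each factor gives
\begin{equation*}
     \lim_{\la \to x_j}(1 + \fa_n(\la|h'))\Ph(\la) = \fa_n'(x_j|h')\, \res_{\la = x_j}\Ph(\la) = \frac{\fa_n'(x_j|h')}{(1/\Ph)'(x_j)} \epc
\end{equation*}
where in the last step I use that $1/\Ph$ has a simple zero at $x_j$, so that $\res_{x_j}\Ph = 1/(1/\Ph)'(x_j)$.

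Next I would rewrite (\ref{phint}), valid for $\la \in \Int {\cal C}$ with $|\Im \la| < \g/2$, as the identity of meromorphic functions
\begin{equation*}
     (1 + \fa_n(\la|h'))\Ph(\la) = (1 + \fa(\la|h))(-1)^k
        \exp\biggl\{- \int_{\cal C}\rd\m\: \frac{\dh_1'(\la - \m)}{\dh_1(\la - \m)}\fz(\m)\biggr\} \epp
\end{equation*}
The right-hand side is holomorphic near $\la = x_j$: the factor $1 + \fa(\la|h)$ is, and the exponential is, because for $\la$ close to $x_j \in \Int{\cal C}$ the kernel $\dh_1'(\la-\m)/\dh_1(\la-\m)$ stays regular in $\m$ along ${\cal C}$ (its poles in $\m$ sit at $\la$ and $\la \pm \i\g$, all off the contour). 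Evaluating both sides at $\la = x_j$, comparing with the limit above and dividing by $1 + \fa(x_j|h)$ (nonzero for generic configurations) yields
\begin{equation*}
     \frac{\fa_n'(x_j|h')}{(1 + \fa(x_j|h))(1/\Ph)'(x_j)} = (-1)^k
        \exp\biggl\{- \int_{\cal C}\rd\m\: \frac{\dh_1'(x_j - \m)}{\dh_1(x_j - \m)}\fz(\m)\biggr\} \epp
\end{equation*}

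Finally I would take the product over $j = 1,\dots,\nex$, which converts the sum of integrals into $\sum_{\la \in {\cal X}_n}$ and produces an overall sign $(-1)^{k\nex}$, and then square. Squaring removes the sign entirely, since $(-1)^{2k\nex} = 1$, and gives precisely the asserted identity. The only points needing care are the genericity assumptions (simplicity of the pole of $\Ph$ at $x_j$ and $1 + \fa(x_j|h) \neq 0$) and the bookkeeping at the colliding zero/pole; the computation $\res_{x_j}\Ph = 1/(1/\Ph)'(x_j)$ combined with the already established (\ref{phint}) does all the substantive work, so I do not expect a real obstacle here.
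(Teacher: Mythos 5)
Your proposal is correct and follows essentially the same route as the paper, which simply invokes the reasoning of the preceding corollary: write $\fa_n'(x_j|h')/(1/\Ph)'(x_j)$ as the limit $\lim_{\la\to x_j}(1+\fa_n(\la|h'))\Ph(\la)$ (a L'Hospital/residue computation) and then evaluate it with the integral representation (\ref{phint}), take the product over $j$, and square so that the factor $(-1)^{k\nex}$ disappears. Your extra remarks on the simplicity of the pole of $\Ph$ at $x_j$ and the regularity of the kernel on ${\cal C}$ are consistent with the paper's standing assumptions and do not change the argument.
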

\begin{proof}
Here the same reasoning as in the proof of the previous corollary
applies.
\end{proof}

Let us collect the single and double integrals in the exponents
and denote their sums
\begin{subequations}
\begin{align}
     \D_{1, {\cal C}} [\fz] & =
        - 2 \sum_{\la \in {\cal X}_n}
	    \int_{\cal C} \rd \m \:
	    \frac{\dh_1' (\la - \m)}{\dh_1 (\la - \m)} \fz(\m) \notag \\ & \mspace{153.mu}
	    - \sum_{\la \in {\cal Y}_n \ominus {\cal X}_n} \sum_{\s = \pm}
              \int_{\cal C} \rd \m \: \ctg (\la - \m + \s \i \g) \fz(\m) \epc \\[1ex]
     \D_{2, {\cal C}} [\fz] & =
        \int_{{\cal C}' \subset {\cal C}} \rd \la \int_{\cal C} \rd \m \:
	   \bigl(\ctg' (\la - \m + \i \g) - [\ln \dh_1]'' (\la - \m)\bigr)
	   \fz(\la) \fz(\m) \epp
\end{align}
\end{subequations}
\begin{lemma}
{\bf Re-interpretation of the remaining large determinants as Fredholm
determinants.}
\begin{equation}
	\det_M (I_M - U) = \det_{{\cal C}'} (\id - \widehat U) \epc \qd
	\det_M (I_M - U^*) = \det_{{\cal C}'} (\id - \widehat{U}^*) \epc
\end{equation}
where the expressions of the right hand side of these equations are
the Fredholm determinants of the integral operators $\widehat U$
and $\widehat{U}^*$ acting on functions $f$ holomorphic on $\Int {\cal C}$
and defined by
\begin{subequations}
\begin{align}
     \widehat{U} f (\x) & =
        \int_{{\cal C}'} \frac{\rd \z}{2 \p \i} \Ph(\z) f(\z) U(\z,\x) \epc \\[1ex]
     \widehat{U}^* f (\x) & =
        \int_{{\cal C}'} \frac{\rd \z}{2 \p \i} \Ph(\z) f(\z) U^* (\z,\x) \epc
\end{align}
where ${\cal C}'$ is again a simple closed contour tightly enclosed
by $\cal C$.
\end{subequations}
\end{lemma}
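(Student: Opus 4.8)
The plan is to exploit the fact that, although the matrices $U$ and $U^*$ of (\ref{defmatsu}) have size $M = N + 1$ growing with the Trotter number, the kernels $U(\la,\m)$ and $U^*(\la,\m)$ are ``degenerate'' in the sense that, when tested against functions holomorphic on $\Int {\cal C}$, their defining contour integrals localise on finitely many residues. The associated integral operators are therefore of finite rank, and their Fredholm determinants can be matched to the finite determinants by comparing all power traces. First I would record the pole structure of the weights. By (\ref{defphi}) the function $\Ph$ is meromorphic with simple poles precisely at the points of ${\cal Z}_n = \{z_j\}_{j=1}^M$ and simple zeros at the $\la_j$, so $1/\Ph$ has simple poles at the $\la_j$. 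Since ${\cal C}'$ is chosen tightly inside ${\cal C}$ with $\{\la\}, {\cal Z}_n \subset \Int {\cal C}'$, the only poles of $\Ph$ (resp.\ of $1/\Ph$) inside ${\cal C}'$ are the $z_j$ (resp.\ the $\la_j$) --- the remaining zeros of $\dh_1$ at $\pm \i m \g$ stay outside ${\cal C}$, exactly as in the proof of (\ref{phint}) --- with $\res_{\z = z_j}\Ph(\z) = 1/(1/\Ph)'(z_j)$ and $\res_{\z = \la_j} 1/\Ph(\z) = 1/\Ph'(\la_j)$.

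Next I would use that, by the discussion preceding (\ref{defskernels}) together with Lemma~\ref{lem:gresolve}, both $U(\la,\m)$ and $U^*(\la,\m)$ are holomorphic in each argument for $\la,\m \in \Int {\cal C} \supset \Int {\cal C}'$. Hence for $f$ holomorphic on $\Int {\cal C}$ the residue theorem applied to the $\z$-integral defining $\widehat{U}$ collapses it to a finite sum,
\begin{equation}
     \widehat{U} f(\x) = \int_{{\cal C}'} \frac{\rd \z}{2 \p \i}\, \Ph(\z) f(\z) U(\z,\x)
        = \sum_{j=1}^M \frac{f(z_j)}{(1/\Ph)'(z_j)}\, U(z_j,\x) \epc
\end{equation}
so that $\widehat{U}$ is of finite rank, with range spanned by the $M$ functions $U(z_j,\cdot)$; the same computation, with the weight $1/\Ph$, yields $\widehat{U}^* f(\x) = \sum_{j=1}^M f(\la_j) U^*(\la_j,\x)/\Ph'(\la_j)$, the weight in $\widehat{U}^*$ thus being $1/\Ph$, so that its poles sit at the $\la_j$ and reproduce (\ref{defmatsu}). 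Iterating this residue evaluation in
\begin{equation}
     \tr(\widehat{U}^m) = \int_{({\cal C}')^m} \prod_{a=1}^m \frac{\rd \z_a}{2 \p \i}\,
        \Ph(\z_a)\, U(\z_a,\z_{a+1}) \epc \qquad \z_{m+1} \equiv \z_1 \epc
\end{equation}
every integration becomes a sum over ${\cal Z}_n$, and one reads off $\tr(\widehat{U}^m) = \tr(U^m)$ for every $m \ge 1$, with $U$ the matrix in (\ref{defmatsu}); likewise $\tr((\widehat{U}^*)^m) = \tr((U^*)^m)$. As $\widehat{U}$ and $\widehat{U}^*$ are trace class (indeed finite rank), coincidence of all power traces forces coincidence of the non-zero spectra, whence
\begin{equation}
     \det_{{\cal C}'}(\id - \widehat{U})
        = \exp\Bigl\{- \sum_{m \ge 1} \tfrac{1}{m}\, \tr(\widehat{U}^m)\Bigr\}
        = \exp\Bigl\{- \sum_{m \ge 1} \tfrac{1}{m}\, \tr(U^m)\Bigr\}
        = \det_M(I_M - U) \epc
\end{equation}
and likewise $\det_{{\cal C}'}(\id - \widehat{U}^*) = \det_M(I_M - U^*)$.

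The one place where care is required is the pole bookkeeping in the middle step: one must check that inside ${\cal C}'$ the integrand $\Ph(\z) f(\z) U(\z,\x)$, and its iterates in the trace computation, carry no singularities beyond the simple poles of the weight function --- which is exactly what the holomorphy of $U$ and $U^*$ on $\Int {\cal C}$ and the tight nesting ${\cal C}' \subset \Int {\cal C}$ guarantee. The harmless transpose/diagonal-conjugation discrepancy that would arise from comparing $\widehat{U}$ directly with the matrix of (\ref{defmatsu}) is automatically absorbed by working with traces. Everything else is routine.
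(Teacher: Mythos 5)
The paper states this lemma without supplying a proof, so there is no internal argument to compare against; your residue-localisation argument is exactly the mechanism the paper relies on elsewhere (e.g.\ in passing from (\ref{rstarrightint}) to (\ref{rstarrightsum})), and it is correct. Since $U(\la,\m)$ and $U^*(\la,\m)$ are holomorphic on $\Int {\cal C}$ and ${\cal C}'$ encloses both $\{\la\}$ and ${\cal Z}_n$, the weight functions localise every iterated contour integral onto the points $z_j$, resp.\ $\la_j$, with residues $1/(1/\Ph)'(z_j)$, resp.\ $1/\Ph'(\la_j)$, so all power traces of the finite-rank operators coincide with those of the matrices in (\ref{defmatsu}), and equality of the (non-zero) spectra then gives equality of the determinants — the intermediate $\exp\{-\sum_m \tr(\widehat U^m)/m\}$ step is only valid inside its radius of convergence, but your finite-rank spectral argument covers the general case, so this is harmless. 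One point you handled correctly but should flag explicitly as a correction to the statement: as printed, the lemma puts the weight $\Ph(\z)$ in \emph{both} operators, which cannot reproduce ${U^*}^j_k = U^* (\la_j,\la_k)/\Ph'(\la_j)$, since $\Ph$ has no poles at the $\la_j$; the weight in $\widehat{U}^*$ must be $1/\Ph(\z)$, in agreement with the resolvent equation (\ref{defskernels}), and it is under this reading — which you adopt — that the asserted identities hold.
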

Using this lemma and the above notation and corollaries as well
as the definition (\ref{gaussnumber}) of the $q$-numbers in equation
(\ref{prelimsum}), we obtain our final result for the amplitudes
at finite Trotter number.
\begin{proposition} \label{prop:amps}
{\bf Factorized amplitudes at finite (and infinite) Trotter number.}
\begin{multline} \label{ampsfinalfiniten}
     A_n (h,h') = (-1)^\nex \re^{\D_{1, {\cal C}} [\fz] + \D_{2, {\cal C}} [\fz]}
        \frac{\det_\nex ({\cal D}) \det_\nex ({\cal D}^*)}{\det_{2 \nex} ({\cal J})}
	\det_{{\cal C}'} (\id - \widehat U) \det_{{\cal C}'} (\id - \widehat{U}^*)
	\\[1ex] \times
	\frac{\dh_2^2 (\Si)}{\dh_2^2}
        \biggl[\prod_{\m \in {\cal X}_n \cup {\cal Y}_n}
	       \biggl(1 - \frac{\fa (\m|h)}{\fa_n (\m|h')}\biggr)\biggr]
        \biggl[\prod_{\m, \nu \in {\cal Y}_n \ominus {\cal X}_n}
	       \frac{1}{\bigl[\2 - \frac{\i (\m - \n)}{2 \g}\bigr]_{q^4}} \biggr] \epp
\end{multline}
\end{proposition}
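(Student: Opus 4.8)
The plan is to prove Proposition~\ref{prop:amps} by purely algebraic bookkeeping: one starts from the intermediate representation (\ref{prelimsum}) of $A_n(h,h')$ and rewrites every factor on its right hand side by means of the lemmata and corollaries assembled in the preceding subsections, keeping a careful ledger of all signs and constant prefactors so as to verify that they collapse to exactly the stated $(-1)^\nex$.

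First I would dispose of the large determinants: the lemma re-interpreting them as Fredholm determinants replaces $\det_M(I_M-U)\det_M(I_M-U^*)$ by $\det_{{\cal C}'}(\id-\widehat U)\det_{{\cal C}'}(\id-\widehat{U}^*)$, while the prefactor $(-1)^\nex$, the ratio $\det_\nex({\cal D})\det_\nex({\cal D}^*)/\det_{2\nex}({\cal J})$ of small determinants and the constant $\dh_2^2(\Si)/\dh_2^2$ are carried over untouched. Next I would exponentiate all products over the full set of Bethe roots. By (\ref{prodphisbyphis}) the factor $\prod_{j=1}^M \Ph_s(z_j+\i\g)/\Ph_s(\la_j+\i\g)$ is the exponential of the double integral with kernel $\ctg'(\la-\m+\i\g)$, whose outer $\la$-contour may be contracted from ${\cal C}$ to ${\cal C}'$ at no cost; by (\ref{largedets}) the factor $\prod_{j=1}^M \Ph'(\la_j)(1/\Ph)'(z_j)\frac{1+\fa_n(\la_j|h')}{\fa'(\la_j|h)}\frac{1+\fa(z_j|h)}{\fa_n'(z_j|h')}$ is the exponential of the double integral over ${\cal C}'\times{\cal C}$ with kernel $-[\ln\dh_1]''(\la-\m)$, so that their product is $\re^{\D_{2,{\cal C}}[\fz]}$. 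In the same vein the last of the corollaries above turns $\bigl[\prod_{j=1}^\nex \fa_n'(x_j|h')/\bigl((1+\fa(x_j|h))(1/\Ph)'(x_j)\bigr)\bigr]^2$ into $\exp\{-2\sum_{\la\in{\cal X}_n}\int_{\cal C}\rd\m\,\frac{\dh_1'(\la-\m)}{\dh_1(\la-\m)}\fz(\m)\}$, while evaluating (\ref{phsint}) at the points $\m\pm\i\g$ — which lie in $\Ext{\cal C}$ since $|\Im\m|<\g/2$ for $\m\in{\cal X}_n\cup{\cal Y}_n$, so that the indicator prefactor $\one_{\la\in\Int{\cal C}}$ there is absent — turns $\prod_{\m\in{\cal Y}_n\ominus{\cal X}_n}\Ph_s(\m-\i\g)\Ph_s(\m+\i\g)$ into the single integral with kernel $\ctg(\la-\m+\s\i\g)$ summed over $\s=\pm$; the $(-1)^k$ prefactors from (\ref{phsint}) cancel pairwise because $\card {\cal X}_n=\card {\cal Y}_n=\nex$, so the sum of these two exponents is precisely $\D_{1,{\cal C}}[\fz]$.

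What then remains are the genuinely particle/hole-dependent prefactors. Since $\fa_n(\m|h')=-1$ for every $\m\in{\cal X}_n\cup{\cal Y}_n$ — by definition of ${\cal Z}_n\supset{\cal X}_n$ and by the Bethe Ansatz equations for the elements of ${\cal Y}_n\subset\{\m_j\}_{j=1}^M$ — I would rewrite $\prod_{j=1}^\nex(1+\fa(x_j|h))(1+\fa(y_j|h))$ as $\prod_{\m\in{\cal X}_n\cup{\cal Y}_n}\bigl(1-\fa(\m|h)/\fa_n(\m|h')\bigr)$. Finally, inserting the elementary identity
\[
     \frac{1}{\sin(\la+\i\g)} = \frac{-2\i\,\re^{\i\la-\g}}{(1-q^4)\,[\2-\frac{\i\la}{2\g}]_{q^4}} \epc
\]
which follows at once from $q=\re^{-\g}$ and the definition (\ref{gaussnumber}) of the $q$-number, into $\prod_{\m,\nu\in{\cal Y}_n\ominus{\cal X}_n}1/\sin(\m-\nu+\i\g)$, and observing that every factor there other than the $q$-number is either a constant or an exponential whose exponent is affine-linear in $\m$ and $\nu$, hence enters with exponent $\sum_{\m,\nu\in{\cal Y}_n\ominus{\cal X}_n}1=0$ (again by the balance of particles and holes), one finds that this product collapses to $\prod_{\m,\nu\in{\cal Y}_n\ominus{\cal X}_n}1/[\2-\frac{\i(\m-\nu)}{2\g}]_{q^4}$. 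Assembling all contributions yields (\ref{ampsfinalfiniten}). The claim at infinite Trotter number is then immediate: by the Remark following (\ref{nlie1}), each ingredient of the right hand side — the auxiliary functions $\fa$, $\fa_n$, hence $\fz$, $\Ph$, $\Ph_s$, the resolvents $R^*$, $S$, $S^*$ and the Fredholm and finite determinants built from them — has a well-defined $N\to\infty$ limit obtained simply by replacing $\e_0^{(N,\e)}$ with $\e_0$.

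The only real difficulty I anticipate is the pedantic accounting rather than any conceptual point. One must match the numerous $(-1)^k$ and $(-1)^\nex$ factors produced by (\ref{phint}), (\ref{phsint}), the inversion formulae and the $\sin$-to-$q$-number conversion, and one must ensure that all contour adjustments (${\cal C}$ versus ${\cal C}'$, the placement of the regularizing $\i 0$, the residue-theorem manipulations underlying (\ref{prodphisbyphis}) and (\ref{largedets})) are mutually consistent. Concretely this amounts to maintaining a precise record of which of the points $\la_j$, $z_j$, $x_j$, $y_j$, $\m\pm\i\g$ lie inside and which outside each contour, and of the multiplicity with which every prefactor occurs; once that record is in place, the identity is merely a matter of assembling the terms.
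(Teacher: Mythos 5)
Your proposal is correct and follows essentially the same route as the paper, which proves Proposition~\ref{prop:amps} by exactly this assembly: start from (\ref{prelimsum}), replace the large determinants by the Fredholm determinants, exponentiate the root products via (\ref{phsint}), (\ref{prodphisbyphis}), (\ref{largedets}) and the corollary for the squared hole product to produce $\D_{1,{\cal C}}[\fz]+\D_{2,{\cal C}}[\fz]$, use $\fa_n(\m|h')=-1$ on ${\cal X}_n\cup{\cal Y}_n$ and the $q$-number identity for $1/\sin(\m-\nu+\i\g)$, and take the Trotter limit by replacing $\e_0^{(N,\e)}$ with $\e_0$ in the non-linear integral equations. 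Your sign and contour bookkeeping (cancellation of the $(-1)^k$ factors, absence of the indicator prefactor at $\m\pm\i\g$, vanishing of the signed counts in the double product) is accurate and fills in details the paper leaves implicit.
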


Equation (\ref{ampsfinalfiniten}) still is a finite Trotter number
representation of the amplitudes. Its derivation is only based on
certain assumptions on the location of the Bethe roots $\{\la\}$ of
the dominant state and $\{\m\}$ of the excited states relative
to the reference contour ${\cal C}$, that are described above.
Conjecture~\ref{con:nostrings}, cited from our previous work \cite{DGKS15b},
implies that these assumptions are satisfied for all excited states
if $T$ is low enough and $N$ is large enough and that they continue to
hold in the Trotter limit if $T$ is low enough. This means that in the
Trotter limit the auxiliary functions $\fa$ and $\fa_n$, which fully
determine (\ref{ampsfinalfiniten}), are the solutions of (\ref{nlie1})
with $\e^{(N,\e)}$ replaced by its limit $\e_0$ (see (\ref{defepszero})).
Hence, the Trotter limit of (\ref{ampsfinalfiniten}) is obtained by
replacing $\fa$ and $\fa_n$ by their limit functions.

\subsection{\boldmath The low-$T$ limit}
The calculation of the low-$T$ limit of the form factor amplitudes
is based on Conjecture~\ref{con:nostrings} and on the following corollary.
\begin{corollary} \label{cor:fzzero}
The function $\fz$ is $\p$-periodic and on ${\cal C}_+ \cup {\cal C}_-$
has the low-$T$ asymptotics
\begin{equation} \label{zzzero}
     \fz (\la) = \fz_0 (\la) \one_{\la \in {\cal C}_+}
                 + \one_{\la \in {\cal C}_+ \cup {\cal C}_-}
		   \CO \bigl(T^\infty\bigr) \epc
\end{equation}
where
\begin{equation}
     \fz_0 (\la) = \frac{k}{2} + \frac{h - h'}{4 \p \i T} + F_n (\la)
\end{equation}
with $F_n$ as defined above (\ref{anlowt}).
\end{corollary}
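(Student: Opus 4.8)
The plan is the following. The $\pi$-periodicity of $\fz$ is immediate: by definition $\fz$ is the difference of $\Ln(1+\fa_n)(\cdot|h')$ and $\Ln(1+\fa)(\cdot|h)$, both of which were shown to be $\pi$-periodic in Section~\ref{subsec:nlie}. For the low-$T$ asymptotics I would treat the two pieces ${\cal C}_+$ and ${\cal C}_-$ of the contour separately. The key inputs are that $\Re\e(\la|h) < 0$ on ${\cal C}_+$ and $\Re\e(\la|h) > 0$ on ${\cal C}_-$ (as already used in the proof of Lemma~\ref{lem:ggstarlowt}), that $\e(\la|h) - \e(\la|h') = (h-h')/2$ by (\ref{dressede}), and that this difference is of order $T$ since $\a = (h-h')/(2\g T)$ from (\ref{defalpha}) is kept finite, so that the same sign statements hold with $h$ replaced by $h'$. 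Consequently, in the low-$T$ limit ${\cal C}_+$ sits in the regions ${\cal D}_\auf$ resp.\ ${\cal D}_\auf^-$ where $|\fa| > 1$ resp.\ $|\fa_n| > 1$, while ${\cal C}_-$ sits in ${\cal D}_\ab$ resp.\ ${\cal D}_\ab^-$ where these moduli are $< 1$.

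On ${\cal C}_+$ I would therefore invoke the first branch of (\ref{deflogoneplusa}), namely $\Ln(1+\fa_n)(\la|h') = \ln\fa_n(\la|h') + \ln(1 + 1/\fa_n(\la|h'))$, and the analogous splitting for $\fa$; since $|\fa|$ and $|\fa_n|$ tend to infinity uniformly on ${\cal C}_+$ as $T \to 0+$, the correction terms $\ln(1+1/\fa_n)$ and $\ln(1+1/\fa)$ are ${\cal O}(T^\infty)$, so that $2\p\i\,\fz(\la) = \ln\fa_n(\la|h') - \ln\fa(\la|h) + {\cal O}(T^\infty)$ there. Into the right hand side I would substitute (\ref{anlowt}) for $\fa_n(\cdot|h')$ — with $h$ replaced by $h'$ understood throughout, and using that the $\ln\fa_n$ entering (\ref{deflnan}) is the relevant branch of the logarithm of that asymptotic product — together with the facts that the dominant state carries no particles or holes (so $\card {\cal X}_0 = \card {\cal Y}_0 = 0$ and hence $F_0 \equiv 0$) and that $k = \fe_n - \fe_0$. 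Collecting terms, the bare-energy contributions combine to $(\e(\la|h) - \e(\la|h'))/(2\p\i T) = (h-h')/(4\p\i T)$ by (\ref{dressede}), the $(-1)^{\fe_n}$ against $(-1)^{\fe_0}$ gives $k/2$, and the dressed-phase term gives $F_n(\la)$; this is precisely $\fz_0(\la)$, up to ${\cal O}(T^\infty)$.

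On ${\cal C}_-$ I would instead use the second branch of (\ref{deflogoneplusa}), i.e.\ $\Ln(1+\fa_n)(\la|h') = \ln(1+\fa_n(\la|h'))$, and likewise for $\fa$. Here $|\fa|$ and $|\fa_n|$ tend to $0$ uniformly on ${\cal C}_-$, so both of these logarithms — and hence $\fz$ itself — are ${\cal O}(T^\infty)$, which matches the indicator $\one_{\la\in{\cal C}_+}$ in (\ref{zzzero}). A little care is needed near the point $-\i\g/2$, which ${\cal C}_-$ approaches and at which (\ref{anlowt}) degenerates; but the $-\i0$ regularization defining ${\cal C}_-$ keeps one strictly on the ${\cal D}_\ab$ side, so the estimate survives.

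I expect the only genuinely delicate point to be the bookkeeping of the integer constants and the logarithm branches: one has to check that the branch conventions fixed in (\ref{deflnan}), (\ref{deflogoneplusa}) and (\ref{jumplnoneplusa}) are consistent with the prefactor $(-1)^{\fe_n}$ appearing in (\ref{anlowt}), so that $\frac{1}{2\p\i}(\i\p\fe_n - \i\p\fe_0)$ indeed equals $k/2$ with the correct sign, and that all the ${\cal O}(T^\infty)$ estimates are uniform on ${\cal C}_\pm$ away from $\pm\i\g/2$. Once this is settled, the remainder is a routine substitution of (\ref{anlowt}) and its $h'$-analogue.
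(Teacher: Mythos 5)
Your argument is correct and is precisely the reasoning the paper intends: the corollary is stated without proof as an immediate consequence of Conjecture~\ref{con:nostrings}, i.e.\ one splits $\Ln(1+\fa_n)$ and $\Ln(1+\fa)$ according to (\ref{deflogoneplusa}) on ${\cal C}_\pm$, inserts (\ref{anlowt}) with $F_0\equiv 0$ for the dominant state, and uses $\e(\la|h)-\e(\la|h')=(h-h')/2$ together with $k=\fe_n-\fe_0$, exactly as you do. Your cautionary remarks about branch bookkeeping and about uniformity of (\ref{anlowt}) near $-\i\g/2$ on ${\cal C}_-$ are the only genuinely delicate points, and they are handled (or tacitly assumed) in the paper in the same way.
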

In fact, by means of (\ref{zzzero}), most of the integrals that remain
in the low-$T$ limit reduce to convolution-type integrals involving
$\fz_0$. In order to deal with those integrals we introduce the notation
\begin{equation} \label{defsizero}
     \Si_0 = \Si_0 ({\cal X}_n, {\cal Y}_n) = - \frac{\p k}{2} - \frac{h - h'}{4 \i T}
           + \2 
	        \sum_{\m \in {\cal Y}_n \ominus {\cal X}_n} \mspace{-10.mu} \m \epc \qd
     \ups (\la) = \frac{(\re^{2 \i \la} q^4;q^4)}
                         {(\re^{2 \i \la} q^2;q^4)} \epp
\end{equation}
Then
\begin{equation} \label{zzeroups}
     2 \p \i \fz_0 (\la) = - 2 \i \Si_0
       + \sum_{\m \in {\cal Y}_n \ominus {\cal X}_n}
         \bigl\{ \ln\bigl(\ups(\la - \m)\bigr)
	         - \ln\bigl(\ups(\m - \la)\bigr) \bigr\} \epp
\end{equation}
\begin{lemma} \label{lem:baseint}
{\bf Basic integration lemma.}\\ Let $\s = \sign \bigl(\Im (\la)\bigr)$.
For any $\la \notin {\cal C}$
\begin{equation} \label{ctgint}
     \int_{\cal C} \rd \m \: \ctg(\la - \m) \fz (\m) =
        - \s \i \Si_0
        + \sum_{\m \in {\cal Y}_n \ominus {\cal X}_n}
          \ln\bigl(\ups(\s(\la - \m))\bigr) + \CO \bigl(T^\infty\bigr) \epp
\end{equation}
\end{lemma}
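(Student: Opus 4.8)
The plan is to strip the closed-contour integral down to an explicit convolution on ${\cal C}_+$ and to evaluate that convolution by Fourier analysis. Since $\ctg(\la-\cdot)$ and $\fz$ are both $\p$-periodic, the contributions of the vertical segments ${\cal C}_l$ and ${\cal C}_r=-{\cal C}_l+\p$ to $\int_{\cal C}$ are opposite and cancel, so that $\int_{\cal C}=\int_{{\cal C}_+}+\int_{{\cal C}_-}$. By Corollary~\ref{cor:fzzero}, $\fz=\CO(T^\infty)$ on ${\cal C}_-$; since $\la\notin{\cal C}$ the kernel $\ctg(\la-\cdot)$ is continuous, hence bounded, on the compact arc ${\cal C}_-$, so that piece is $\CO(T^\infty)$. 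On ${\cal C}_+$ the same corollary gives $\fz(\m)=\fz_0(\m)+\CO(T^\infty)$, and the remainder once more integrates to $\CO(T^\infty)$. Thus, up to $\CO(T^\infty)$, one is reduced to $\int_{{\cal C}_+}\rd\m\,\ctg(\la-\m)\,\fz_0(\m)$, into which I insert the explicit expression (\ref{zzeroups}) for $2\p\i\,\fz_0$.

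The constant piece $-2\i\Si_0$ in (\ref{zzeroups}) is dealt with directly. Using the primitive $\int\ctg=-\ln\sin$ together with $\sin(\la\pm\p/2)=\pm\cos\la$, one gets $\int_{{\cal C}_+}\rd\m\,\ctg(\la-\m)=\s\i\p$ for $\la\notin{\cal C}_+$; the cleanest way is to observe that this integral is $\p$-periodic in $\la$ and has vanishing $\la$-derivative (a telescoping of $\ctg(\la\mp\p/2)=\mp\tan\la$), hence is constant on each half-plane, so that evaluating it at a single convenient point — with due care about the argument of $\sin$ along the horizontal segment — suffices. Multiplying by $-2\i\Si_0/(2\p\i)$ contributes the term $-\s\i\Si_0$.

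The core is the identity, for each $\m\in{\cal Y}_n\ominus{\cal X}_n$ (so $|\Im\m|<\g/2$) and $\la\notin{\cal C}$,
\begin{equation*}
\frac{1}{2\p\i}\int_{{\cal C}_+}\rd\nu\,\ctg(\la-\nu)\bigl(\ln\ups(\nu-\m)-\ln\ups(\m-\nu)\bigr)=\ln\ups\bigl(\s(\la-\m)\bigr)\epp
\end{equation*}
I would prove it by Fourier analysis in the integration variable $\nu$. First, as a function of $\re^{2\i\nu}$, $\ups$ is holomorphic and non-vanishing on $|\re^{2\i\nu}|<q^{-2}$ and takes the value $1$ at the origin of that variable, so $\ln\ups(\nu)=\sum_{k\ge1}c_k\,\re^{2\i k\nu}$ whenever $\Im\nu>-\g$, a series carrying only strictly positive Fourier modes. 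Second, for real $\nu$ one has $\ctg(\la-\nu)=-\i-2\i\sum_{m\ge1}\re^{2\i m(\la-\nu)}$ if $\Im\la>0$ and $\ctg(\la-\nu)=\i+2\i\sum_{m\ge1}\re^{-2\i m(\la-\nu)}$ if $\Im\la<0$. Inserting both series, interchanging sum and integral by uniform convergence, using $\int_{-\p/2}^{\p/2}\re^{2\i l\nu}\rd\nu=\p\,\de_{l,0}$, and keeping in mind that ${\cal C}_+$ is the interval $[-\p/2,\p/2]$ run backwards, orthogonality selects exactly one mode: for $\s=+1$ the term $\ln\ups(\nu-\m)$ produces $\ln\ups(\la-\m)$ while $\ln\ups(\m-\nu)$, which only has negative $\nu$-modes, integrates to $0$; for $\s=-1$ the two terms swap and the outcome is $\ln\ups(\m-\la)=\ln\ups(\s(\la-\m))$. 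The constraints $|\Im\m|<\g/2$ and $\s\Im\la>0$ keep the argument of the surviving $\ups$ at imaginary part $>-\g/2>-\g$, where its series converges. Summing over $\m\in{\cal Y}_n\ominus{\cal X}_n$ and adding the $-\s\i\Si_0$ term yields (\ref{ctgint}).

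The main obstacle I anticipate is the sign and mode bookkeeping in the last step: one must keep straight which sign of $\Im\la$ selects which $\ctg$ expansion, hence which of $\ln\ups(\nu-\m)$ and $\ln\ups(\m-\nu)$ survives the $\nu$-integration, all the while staying inside the narrow convergence strips of the various $\ln\ups$ series and not confusing the orientation of ${\cal C}_+$. A second, genuinely delicate point is fixing the branch of $-\ln\sin$ that makes $\int_{{\cal C}_+}\rd\m\,\ctg(\la-\m)=\s\i\p$, since an error there would propagate straight into the $\Si_0$ term.
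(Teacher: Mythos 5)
Your proof is correct and follows essentially the same route as the paper: reduce via Corollary~\ref{cor:fzzero} and (\ref{zzeroups}) to explicit integrals over $[-\p/2,\p/2]$, with the constant piece producing $-\s\i\Si_0$ and the $\ln\ups$-convolution producing the sum over ${\cal Y}_n \ominus {\cal X}_n$. The only difference is the technical device at the end: you extract the surviving term by Fourier-mode orthogonality, using that $\ln\ups(\nu)$ carries only strictly positive modes for $\Im \nu > -\g$, whereas the paper shifts the contours to $[-\p/2,\p/2] \pm \i\infty$ and invokes the residue theorem together with the asymptotics and pole of $\ctg$ --- two equivalent mechanisms resting on the same analytic facts, and your sign, orientation and branch bookkeeping (including $\int_{{\cal C}_+} \rd\m\, \ctg(\la-\m) = \s\i\p$) comes out right.
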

\begin{proof}
Equation (\ref{zzzero}) and (\ref{zzeroups}) imply that
\begin{multline}
     \int_{\cal C} \rd \m \: \ctg(\la - \m) \fz (\m) =
     - 2 \i \Si_0 \int_{- \frac{\p}{2}}^\frac{\p}{2} \frac{\rd \m}{2 \p \i} \ctg(\m - \la) \\
       + \sum_{\n \in {\cal Y}_n \ominus {\cal X}_n}
         \int_{- \frac{\p}{2}}^\frac{\p}{2} \frac{\rd \m}{2 \p \i}
         \bigl\{ \ln\bigl(\ups(\m - \n)\bigr) - \ln\bigl(\ups(\n - \m)\bigr) \bigr\}
	 \ctg(\m - \la) + \CO \bigl(T^\infty\bigr) \epp
\end{multline}
The remaining integrals can be evaluated by means of the residue
theorem. For this purpose one may use that $\ctg$ is a $\p$-periodic
function with its only simple poles at $\la = 0 \mod \p$ and
residue $1$ at $\la = 0$, which behaves asymptotically as
$\lim_{\Im \la \rightarrow \pm \infty} \ctg (\la) = \mp \i$. With
this information the first integral on the right hand side may
be calculated by shifting the integration contour, for instance,
to $[- \p/2,\p/2] + \i \infty$.

For the evaluation of the second integral note that $\ups$ is
a meromorphic, $\p$-periodic function on ${\mathbb C}$ which is
free of poles and zeros in ${\mathbb H}_+ - \i \g$ and decays
exponentially fast for $\Im \la \rightarrow + \infty$. Shifting
the integration contour to $[- \p/2,\p/2] + \i \infty$ for the
first term in the second integral on the right hand side and
to  $[- \p/2,\p/2] - \i \infty$ for the second term in the
second integral on the right hand side we obtain a single
residue contribution from the pole of the cotangent function,
which is either in ${\mathbb H}_+$ or ${\mathbb H}_-$, and we
have arrived at (\ref{ctgint}).
\end{proof}

For $\s = \pm$ define
\begin{equation} \label{defphpm}
     \Ph^{(\s)} (\la) =
        \re^{\s \i \Si_0}
	\prod_{\m \in {\cal X}_n \ominus {\cal Y}_n}
	\tst{\G_{q^4} \Bigl(\2 - \frac{\s \i (\la - \m)}{2 \g}\Bigr)
	     \G_{q^4} \Bigl(1 + \frac{\s \i (\la - \m)}{2 \g}\Bigr)} \epp
\end{equation}

\begin{corollary} \label{cor:ethpectg}
For any $\la \notin {\cal C}$ and $\s = \sign \bigl(\Im (\la)\bigr)$
\begin{multline} \label{intzctg}
     \exp \biggl\{- \int_{\cal C} \rd \m \: \ctg (\la - \m) \fz(\m) \biggr\} = \\
        \re^{\s \i \Si_0}
	\prod_{\m \in {\cal X}_n \ominus {\cal Y}_n}
	\frac{\G_{q^4} \Bigl(\2 - \frac{\s \i (\la - \m)}{2 \g}\Bigr)}
	     {\G_{q^4} \Bigl(1 - \frac{\s \i (\la - \m)}{2 \g}\Bigr)}
	     \bigl(1 + \CO (T^\infty)\bigr) \epp
\end{multline}
If, in addition, $|\Im \la| < \g/2$, then
\begin{equation} \label{intzlogpth}
     \exp \biggl\{- \int_{\cal C} \rd \m \:
                    \frac{\dh_1' (\la - \m)}{\dh_1 (\la - \m)} \fz(\m) \biggr\} =
		    \Ph^{(\s)} (\la) \bigl(1 + \CO (T^\infty)\bigr) \epp
\end{equation}
\end{corollary}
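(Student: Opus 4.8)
The plan is to derive both identities from the integration Lemma~\ref{lem:baseint} together with the dictionary between $\ups$ and $\G_{q^4}$ built into the definitions~(\ref{defgammaqgq}) and~(\ref{defsizero}).

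Part~(i) is almost immediate: exponentiate~(\ref{ctgint}), use the product convention~(\ref{defsetsumprod}) to turn $\prod_{\m\in{\cal Y}_n\ominus{\cal X}_n}\ups(\s(\la-\m))^{-1}$ into $\prod_{\m\in{\cal X}_n\ominus{\cal Y}_n}\ups(\s(\la-\m))$, and feed the $q$-multifactorial formula~(\ref{defgammaqgq}) into the definition of $\ups$. Comparing $q$-factorials gives $\ups(\la)=(1-q^4)^{-1/2}\,\G_{q^4}(\tfrac12-\tfrac{\i\la}{2\g})/\G_{q^4}(1-\tfrac{\i\la}{2\g})$; since $\card{\cal X}_n=\card{\cal Y}_n=\nex$, the $\m$-independent factor $(1-q^4)^{-1/2}$ cancels out of the $\ominus$-product, and one lands on the right-hand side of~(\ref{intzctg}) with the same $\CO(T^\infty)$ error.

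For part~(ii) I would mimic the proof of Lemma~\ref{lem:baseint} with the kernel $\ctg$ replaced by $\dh_1'/\dh_1$. The key input is the Mittag--Leffler expansion
\[
  \frac{\dh_1'(x)}{\dh_1(x)} = \ctg(x) + \sum_{k\ge1}\bigl[\ctg(x-k\i\g)+\ctg(x+k\i\g)\bigr],
\]
which follows from the product formulae~(\ref{proddh4})--(\ref{defotherdhs}): the zeros of $\dh_1$ sit at $\p\mathbb Z+\i\g\mathbb Z$ and are simple with residue $1$, so the difference with the cotangent series is entire, bounded and odd, hence zero. Inserting this under $\int_{\cal C}\rd\m\,(\cdot)\fz(\m)$ and keeping the $+k$ and $-k$ terms grouped --- which makes the series converge and makes the $\Si_0$-contributions cancel within each pair --- one applies~(\ref{ctgint}) to each block with $\la$ replaced by $\la\pm k\i\g$ ($|\Im\la|<\g/2$ keeps these points off ${\cal C}$, a harmless infinitesimal deformation covering the marginal case $k=-1$; $\sign(\Im(\la\pm k\i\g))=\pm1$ for $k\ge1$). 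This produces
\begin{multline*}
  \int_{\cal C}\rd\m\,\frac{\dh_1'(\la-\m)}{\dh_1(\la-\m)}\fz(\m) = -\s\i\Si_0 \\
  + \sum_{\m\in{\cal Y}_n\ominus{\cal X}_n}\Bigl[\ln\ups(\s(\la-\m)) + \sum_{k\ge1}\bigl(\ln\ups(\la+k\i\g-\m)+\ln\ups(\m-\la+k\i\g)\bigr)\Bigr] + \CO(T^\infty).
\end{multline*}
Exponentiating, flipping the $\ominus$-product, and substituting the $\ups\leftrightarrow\G_{q^4}$ dictionary of part~(i), the truncated products telescope to $\G_{q^4}(1\mp b)/\G_{q^4}(1\mp b+\tfrac K2)$ (with $b=\tfrac{\s\i(\la-\m)}{2\g}$), and letting $K\to\infty$ with $\G_{q^4}(x)\sim(1-q^4)^{1-x}(q^4;q^4)$ as $\Re x\to+\infty$, the surviving factor is $(q^4;q^4)^{-2}(1-q^4)^{-1/2}\,\G_{q^4}(\tfrac12-b)\,\G_{q^4}(1+b)$, whose $\m$-independent part once more cancels in the $\ominus$-product. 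Together with the prefactor $\re^{\s\i\Si_0}$ this is exactly $\Ph^{(\s)}(\la)$ from~(\ref{defphpm}), and since the errors are additive in the exponent this gives~(\ref{intzlogpth}).

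The main obstacle is the bookkeeping inside the $k$-summation: for each shift one must determine whether the relevant cotangent pole lies above or below the real segment (hence whether a residue is collected) and check that the boundary terms at $\Im\m\to\pm\infty$ vanish, using that $\ups-1$ decays exponentially at $+\i\infty$ and that $\ups$ has no poles or zeros in ${\mathbb H}_+-\i\g$. Pinning down the signs of the $\Si_0$-terms and the branch of $\ln\ups$, and checking that the infinite $\ups$-products really collapse to the stated \emph{finite} $q$-gamma expression with the precise arguments $\tfrac12$ and $1$, is the delicate step; the constraint $|\Im\la|<\g/2$ is exactly what keeps the $k=\pm1$ cotangent poles on the correct side. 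The rest reduces to routine use of the functional equations~(\ref{gammabarnesfun}).
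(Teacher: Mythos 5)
Your proof is correct and takes essentially the same route as the paper: both parts rest on Lemma~\ref{lem:baseint} together with the cotangent decomposition of $\dh_1'/\dh_1$ (the paper's equation (\ref{dhctgdecomp})) and the conversion of $\ups$ into ratios of $q$-gamma functions via (\ref{defgammaqgq}). The only cosmetic difference is that you evaluate the product over the shifts $k\i\g$ by telescoping the $q$-gamma functional equation and taking $K \rightarrow \infty$, whereas the paper computes the same infinite product directly as the $q$-Pochhammer identity (\ref{produps}).
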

\begin{proof}
Equation (\ref{intzctg}) is a direct consequence of (\ref{ctgint}) and 
of the definition (\ref{defgammaqgq}) of the $q$-gamma function.

For the proof of (\ref{intzlogpth}) we note that
\begin{equation} \label{dhctgdecomp}
     \frac{\dh_1' (\la)}{\dh_1 (\la)} =
        \ctg(\la) + \sum_{k=1}^\infty \bigl(\ctg(\la - k \i \g) + \ctg(\la + k \i \g)\bigr)
\end{equation}
which follows from (\ref{proddh4}), (\ref{defotherdhs}). Assuming that
$|\Im \la| < \g/2$, multiplying by $\fz (\la)$, integrating over $\cal C$
and using (\ref{ctgint}), we obtain
\begin{multline} \label{dhintups}
     \int_{\cal C} \rd \m \: \frac{\dh_1' (\la - \m)}{\dh_1 (\la - \m)} \fz(\m) =
        - \s \i \Si_0 + \sum_{\m \in {\cal Y}_n \ominus {\cal X}_n}
	                   \ln \bigl(\ups(\s(\la - \m))\bigr) \\
        + \sum_{\m \in {\cal Y}_n \ominus {\cal X}_n} \sum_{k=1}^\infty
	  \Bigl\{ \ln \bigl(\ups(\la - \m + k \i \g)\bigr)
	          + \ln \bigl(\ups( \m - \la + k \i \g)\bigr) \Bigr\} \epp
\end{multline}
By straightforward direct calculation
\begin{equation} \label{produps}
     \prod_{k=1}^\infty \ups (\la - \m + k \i \g) =
        \frac{1}{(\re^{2\i (\la - \m)} q^4; q^4)} \epp
\end{equation}
Finally, (\ref{dhintups}), (\ref{produps}) and (\ref{ctgint})
imply (\ref{intzlogpth}).
\end{proof}

\begin{corollary}
\label{cor:siphilowt}
If $\la \in {\cal C}_+ \cup {\cal C}_-$, then
\begin{equation} \label{phicpcm}
     \Ph (\la) = (-1)^k \bigl\{\Ph^{(+)} (\la) \one_{\la \in {\cal C}_+}
                               + \Ph^{(-)} (\la) \one_{\la \in {\cal C}_-}\bigr\}
			       \bigl(1 + \CO(T^\infty)\bigr) \epp
\end{equation}
Moreover,
\begin{equation} \label{silim}
     \Si = k \p + \Si_0 + \CO(T^\infty) \epp
\end{equation}
\end{corollary}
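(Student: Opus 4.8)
The plan is to read off both assertions from the integral representations (\ref{phint}) and (\ref{sigmaint}), feeding in the low-$T$ data of Corollaries~\ref{cor:fzzero} and~\ref{cor:ethpectg}. For the statement about $\Ph$, I would take a point $\la$ on ${\cal C}_+$ or on ${\cal C}_-$ and regard it as a boundary value of $\Ph$ approached from the \emph{exterior} of ${\cal C}$ — from $\Im\la>0$ for $\la\in{\cal C}_+$, and from below the line $\Im\la=-\g/2$ for $\la\in{\cal C}_-$. On that side $\one_{\la\in\Int{\cal C}}=0$, so the prefactor $\bigl(\tfrac{1+\fa(\la|h)}{1+\fa_n(\la|h')}\bigr)^{\one_{\la\in\Int{\cal C}}}$ in (\ref{phint}) is trivial and the formula reduces to
\[
     \Ph(\la) = (-1)^k \exp\biggl\{- \int_{\cal C} \rd\m \:
                \frac{\dh_1' (\la-\m)}{\dh_1 (\la-\m)}\, \fz(\m)\biggr\}\epp
\]
Then I apply (\ref{intzlogpth}) with $\s=\sign(\Im\la)$, which is $+1$ on ${\cal C}_+$ and $-1$ on ${\cal C}_-$: the exponential becomes $\Ph^{(\s)}(\la)\bigl(1+\CO(T^\infty)\bigr)$, and assembling the two cases yields (\ref{phicpcm}).

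For $\Si$, I would start from $\Si=k\p+\int_{\cal C}\rd\m\,\fz(\m)$ (equation (\ref{sigmaint})) and evaluate the integral by letting $\la\to+\i\infty$ in the basic integration Lemma~\ref{lem:baseint}: there $\ctg(\la-\m)\to-\i$ uniformly for $\m\in{\cal C}$, $\ups(\la-\m)\to 1$, and the error in (\ref{ctgint}) remains $\CO(T^\infty)$ uniformly, so (\ref{ctgint}) with $\s=+1$ gives $-\i\int_{\cal C}\rd\m\,\fz(\m)=-\i\Si_0+\CO(T^\infty)$, hence $\int_{\cal C}\rd\m\,\fz(\m)=\Si_0+\CO(T^\infty)$ and (\ref{silim}) follows. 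Equivalently, one may localise $\int_{\cal C}\fz$ on ${\cal C}_+$ by (\ref{zzzero}), substitute (\ref{zzeroups}), extract $\Si_0$ from the constant term, and note that each pair $\ln\ups(\la-\m)-\ln\ups(\m-\la)$ in (\ref{zzeroups}) integrates to zero over $\la\in[-\p/2,\p/2]$, $\ups$ being $\p$-periodic, analytic and non-vanishing near the real axis.

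The genuine inputs — the asymptotics (\ref{zzzero}) of $\fz$ and the explicit convolution identities (\ref{ctgint}) and (\ref{intzlogpth}) — are already established, so what is left is essentially bookkeeping. The one point requiring real care is pinning down, on each of ${\cal C}_+$ and ${\cal C}_-$, the side from which the contour is approached: that is what kills the $\one_{\la\in\Int{\cal C}}$ prefactor and picks out the correct branch $\Ph^{(\pm)}$. The related subtlety is that ${\cal C}_-$ lies infinitesimally outside the line $\Im\la=-\g/2$, i.e.\ on the edge of the strip in which (\ref{phint}) and (\ref{intzlogpth}) were derived; I would dispose of this by observing that $\fz=\CO(T^\infty)$ on ${\cal C}_-$, so the relevant integrals localise on ${\cal C}_+$, on which $\la-\m$ stays uniformly away from the zeros $\pm m\i\g$ ($m\in{\mathbb N}$) of $\dh_1$, and the identities then extend up to the boundary by continuity.
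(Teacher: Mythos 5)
Your proof is correct and follows essentially the paper's own route: (\ref{phicpcm}) is read off from (\ref{phint}) together with (\ref{intzlogpth}) by taking boundary values of $\Ph$ on ${\cal C}_+$ and ${\cal C}_-$ (the paper phrases this as analyticity of $\Ph$ across ${\cal C}_+$ and the jump developing across ${\cal B}_-$, which is the same observation that fixes the sign $\s$ and removes the indicator prefactor), and your treatment of the edge case on ${\cal C}_-$ via $\fz = \CO(T^\infty)$ there is a legitimate filling-in of a detail the paper leaves implicit. For (\ref{silim}) you do exactly what the paper does, namely let $\Im \la \rightarrow + \infty$ in (\ref{ctgint}), where $\ctg(\la - \m) \rightarrow - \i$ and $\ln \ups \rightarrow 0$, and combine with (\ref{sigmaint}); your alternative evaluation by localising $\int_{\cal C} \fz$ on ${\cal C}_+$ is an equivalent bookkeeping of the same identity.
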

\begin{proof}
Equation (\ref{phicpcm}) follows directly from (\ref{phint}) and
(\ref{intzlogpth}), taking into account that $\Ph$ is analytic
across ${\cal C}_+$ and develops a jump singularity across ${\cal B}_-$
in the limit $T \rightarrow 0+$.

Equation (\ref{silim}) is obtained from (\ref{ctgint}) by taking the
limit $\Im \la \rightarrow + \infty$ and taking into account that
in this limit $\ctg (\m - \la) \rightarrow \i$ and
$\ln \bigl(\ups (\la - \m)\bigr) \rightarrow 0$.
\end{proof}

\begin{corollary}
\label{cor:singleint}
\begin{equation}
     \re^{\D_{1, {\cal C}} [\fz]} =
        \biggl[ \prod_{\la \in {\cal X}_n} \Ph^{(-)} (\la) \biggr]^2
        \Biggl[ \prod_{\la, \m \in {\cal X}_n \ominus {\cal Y}_n}
	\frac{\G_{q^4} \Bigl(\frac{3}{2} - \frac{\i (\la - \m)}{2 \g}\Bigr)}
	     {\G_{q^4} \Bigl(1 - \frac{\i (\la - \m)}{2 \g}\Bigr)}
	        \Biggr]^2
	     \bigl(1 + \CO (T^\infty)\bigr) \epp
\end{equation}
\end{corollary}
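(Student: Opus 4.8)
The plan is to evaluate the two kinds of single integrals occurring in $\D_{1, {\cal C}} [\fz]$ separately, using Corollary~\ref{cor:ethpectg} for the $\dh_1'/\dh_1$-part and the basic integration Lemma~\ref{lem:baseint} for the cotangent part. Write $\D_{1, {\cal C}} [\fz] = - 2 A - B$ with $A = \sum_{\la \in {\cal X}_n} \int_{\cal C} \rd \m \: \frac{\dh_1' (\la - \m)}{\dh_1 (\la - \m)} \fz(\m)$ and $B = \sum_{\la \in {\cal Y}_n \ominus {\cal X}_n} \sum_{\s = \pm} \int_{\cal C} \rd \m \: \ctg (\la - \m + \s \i \g) \fz(\m)$, so that $\re^{\D_{1, {\cal C}} [\fz]} = \re^{- 2 A} \re^{- B}$.

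For $\re^{- 2 A}$: by Conjecture~\ref{con:nostrings}(i) every hole $\la \in {\cal X}_n$ lies in the strip $|\Im \la| < \g/2$ with $\Im \la < 0$, so $\s = \sign (\Im \la) = -$. Hence (\ref{intzlogpth}) applies termwise and yields $\re^{- \int_{\cal C} \rd \m \: \frac{\dh_1' (\la - \m)}{\dh_1 (\la - \m)} \fz(\m)} = \Ph^{(-)} (\la) \bigl(1 + \CO(T^\infty)\bigr)$. Squaring and taking the product over ${\cal X}_n$ gives $\re^{- 2 A} = \bigl[ \prod_{\la \in {\cal X}_n} \Ph^{(-)} (\la) \bigr]^2 \bigl(1 + \CO(T^\infty)\bigr)$, which is the first bracket of the claim.

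For $\re^{- B}$: fix $\la \in {\cal Y}_n \ominus {\cal X}_n$; again $|\Im \la| < \g/2$, so $\la + \s \i \g \notin {\cal C}$ and $\sign \bigl(\Im (\la + \s \i \g)\bigr) = \s$. I would apply Lemma~\ref{lem:baseint} with $\la$ replaced by $\la + \s \i \g$, using $\s \bigl( (\la + \s \i \g) - \m \bigr) = \s(\la - \m) + \i \g$, to obtain $\int_{\cal C} \rd \m \: \ctg (\la - \m + \s \i \g) \fz(\m) = - \s \i \Si_0 + \sum_{\m \in {\cal Y}_n \ominus {\cal X}_n} \ln \bigl( \ups (\s(\la - \m) + \i \g) \bigr) + \CO(T^\infty)$. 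Summing over $\s = \pm$ the $\Si_0$-terms cancel, and a relabeling $\la \leftrightarrow \m$ of the summation index (which runs over the same signed set ${\cal Y}_n \ominus {\cal X}_n$) collapses $B$ to $2 \sum_{\la, \m \in {\cal Y}_n \ominus {\cal X}_n} \ln \bigl( \ups (\la - \m + \i \g) \bigr) + \CO(T^\infty)$, whence $\re^{- B} = \bigl[ \prod_{\la, \m \in {\cal Y}_n \ominus {\cal X}_n} \ups (\la - \m + \i \g)^{-1} \bigr]^2 \bigl(1 + \CO(T^\infty)\bigr)$. Since the exponent $\s_\la \s_\m$ in such a double product is invariant under exchanging particles and holes, this equals the corresponding product over ${\cal X}_n \ominus {\cal Y}_n$.

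Finally I would convert $\ups$ into $q$-gamma functions: inserting $q = \re^{- \g}$ into the definition (\ref{defsizero}) of $\ups$ and the definition (\ref{defgammaqgq}) of $\G_{q^4}$, a short direct computation gives $\ups (\la - \m + \i \g)^{-1} = (1 - q^4)^{1/2} \frac{\G_{q^4} (\frac{3}{2} - \frac{\i (\la - \m)}{2 \g})}{\G_{q^4} (1 - \frac{\i (\la - \m)}{2 \g})}$. In the product over ${\cal X}_n \ominus {\cal Y}_n$ the spurious prefactor contributes $(1 - q^4)^{\frac12 (\card {\cal X}_n - \card {\cal Y}_n)^2} = 1$ by Conjecture~\ref{con:nostrings}(ii), so $\re^{- B}$ is exactly the square of the second bracket of the claim; multiplying by $\re^{- 2 A}$ finishes the proof. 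I expect the only delicate points to be the verification that the shifted points $\la + \s \i \g$ stay off ${\cal C}$ in the half-plane of sign $\s$ (so that Lemma~\ref{lem:baseint} applies with the correct $\s$ and without hitting a pole of $\ups$) and the two cancellations, of the $\Si_0$-terms and of the $(1 - q^4)$-powers, both of which rest on the particle--hole cardinality balance $\card {\cal X}_n = \card {\cal Y}_n = \nex$; everything else is routine bookkeeping.
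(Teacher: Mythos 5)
Your proposal is correct and follows essentially the same route as the paper: the paper's one-line proof invokes Corollary~\ref{cor:ethpectg}, i.e.\ (\ref{intzlogpth}) for the $\dh_1'/\dh_1$-terms and (\ref{intzctg}) with the argument shifted by $\pm\i\g$ (producing the half-unit shifts $1,\tfrac32$ in the $q$-gamma arguments) for the cotangent terms, which is precisely your computation, except that you go one step further back to Lemma~\ref{lem:baseint} and redo the $\ups\to\G_{q^4}$ conversion by hand. Your handling of the delicate points (shifted points off ${\cal C}$, cancellation of the $\Si_0$-terms and of the $(1-q^4)$-powers via $\card{\cal X}_n=\card{\cal Y}_n$, symmetry of the signed double product) matches the paper's setup and is sound.
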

\begin{proof}
This follows easily from Corollary~\ref{cor:ethpectg}.
\end{proof}

\begin{lemma} \label{lem:doubleint}
\begin{equation}
     \re^{\D_{2, {\cal C}} [\fz]} =
        \Biggl[ \prod_{\la, \m \in {\cal X}_n \ominus {\cal Y}_n}
	\frac{G_{q^4} \Bigl(1 - \frac{\i (\la - \m)}{2 \g}\Bigr)
	      G_{q^4} \Bigl(2 - \frac{\i (\la - \m)}{2 \g}\Bigr)}
	     {G_{q^4} \Bigl(\frac{3}{2} - \frac{\i (\la - \m)}{2 \g}\Bigr)
	      G_{q^4} \Bigl(\frac{5}{2} - \frac{\i (\la - \m)}{2 \g}\Bigr)}
	        \Biggr]
	     \bigl(1 + \CO (T^\infty)\bigr) \epp
\end{equation}
\end{lemma}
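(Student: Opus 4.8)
The plan is to peel off the two integrations in $\D_{2,{\cal C}}[\fz]$ one at a time, reusing at each stage the single-integral evaluations already established in Lemma~\ref{lem:baseint} and Corollary~\ref{cor:ethpectg}. First I would invoke Corollary~\ref{cor:fzzero}: on ${\cal C}_+\cup{\cal C}_-$ one has $\fz(\la)=\fz_0(\la)\one_{\la\in{\cal C}_+}+\CO(T^\infty)$, so the contribution of ${\cal C}_-$ (and of the corresponding piece of the inner contour) is $\CO(T^\infty)$, while the contributions of ${\cal C}_l$ and ${\cal C}_r$ cancel those coming from the inner contour because both $\fz$ and the kernel $\ctg'(\cdot+\i\g)-[\ln\dh_1]''$ are $\p$-periodic. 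This brings the double integral to $\int_{{\cal C}'_+}\rd\la\int_{{\cal C}_+}\rd\m\,(\ctg'(\la-\m+\i\g)-[\ln\dh_1]''(\la-\m))\,\fz_0(\la)\fz_0(\m)$ up to $\CO(T^\infty)$, where ${\cal C}_+=[-\p/2,\p/2]$ and ${\cal C}'_+$ is the parallel arc lying strictly inside ${\cal C}$. Keeping this strict nesting is exactly what renders the double pole of $[\ln\dh_1]''$ at the origin harmless, since then $\la\ne\m$ throughout.

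For the inner integration I would use $\ctg'(\la-\m+\i\g)-[\ln\dh_1]''(\la-\m)=\6_\la(\ctg(\la-\m+\i\g)-[\ln\dh_1]'(\la-\m))$, extend ${\cal C}_+$ back to ${\cal C}$ and $\fz_0$ back to $\fz$ (costing only $\CO(T^\infty)$, by the periodicity argument above), and apply the shifted version of (\ref{ctgint}) together with (\ref{dhintups}). The two $\Si_0$-terms cancel and the $k=1$ summand in (\ref{dhintups}) cancels $\sum_\n\ln\ups(\la-\n+\i\g)$, so that
\begin{equation*}
\int_{{\cal C}_+}\rd\m\,(\ctg'(\la-\m+\i\g)-[\ln\dh_1]''(\la-\m))\,\fz_0(\m)=-\6_\la B(\la)+\CO(T^\infty),
\end{equation*}
with $B(\la)=\sum_{\n\in{\cal Y}_n\ominus{\cal X}_n}\bigl[\ln\ups(\la-\n)+\sum_{k\ge2}\ln\ups(\la-\n+k\i\g)+\sum_{k\ge1}\ln\ups(\n-\la+k\i\g)\bigr]$. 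Resumming the $k$-towers by (\ref{produps}) turns $B(\la)$ into a sum of logarithms of $q^4$-multifactorials.

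The outer integration I would then treat by a single integration by parts, $\D_{2,{\cal C}}[\fz]=\int_{{\cal C}'_+}\fz_0'(\la)\,B(\la)\,\rd\la+\CO(T^\infty)$, the boundary terms vanishing by $\p$-periodicity of $\fz_0$ and $B$. Since $\fz_0'(\la)=\tst{\frac{1}{2\p\i}}\sum_\m([\ln\ups]'(\la-\m)+[\ln\ups]'(\m-\la))$, with $[\ln\ups]'(\la-\m)$ decaying as $\Im\la\to+\infty$ and $[\ln\ups]'(\m-\la)$ as $\Im\la\to-\infty$ (all zeros and poles of $\ups$ lying in $\{\Im w\le-\g\}$), I would rewrite the relevant pieces of $B$ through the functional equation $\ups(\la)\ups(\la+\i\g)=(1-q^2\re^{2\i\la})^{-1}$ (equivalent to (\ref{produps})) so that each sub-integral decays at the matching infinity, and shift ${\cal C}'_+$ there, picking up the finitely many residues that are crossed — these sit at $\la\equiv\m$ modulo $\p$ and modulo integer shifts by $\i\g$. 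Exponentiating, summing over $\m,\n\in{\cal Y}_n\ominus{\cal X}_n$, and re-expressing the resulting infinite $q^4$-multifactorials via the definition (\ref{defgammaqgq}) of $G_q$ — the relation $\G_q G_q=G_q(\cdot+1)$ being the bridge from the $\G_{q^4}$'s that appeared at the $\D_{1,{\cal C}}$ level (Corollary~\ref{cor:singleint}) to the $G_{q^4}$'s needed here — I expect exactly the four factors $G_{q^4}(1-w)$, $G_{q^4}(2-w)$, $G_{q^4}(\tfrac32-w)$, $G_{q^4}(\tfrac52-w)$ with $w=\tfrac{\i(\la-\m)}{2\g}$ to emerge in the claimed ratio, the $\ominus$-bookkeeping promoting the single product over $\m,\n$ to the product over pairs in ${\cal X}_n\ominus{\cal Y}_n$.

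The step I expect to be the main obstacle is precisely this final resummation: tracking the doubly-infinite families of shifted $\ups$-values, verifying uniform convergence of the resulting products in the particle- and hole-parameters, and checking that the half-integer arguments $1,2,\tfrac32,\tfrac52$ and the signs of the exponents assemble into $G_{q^4}$ rather than into some neighbouring combination of $q$-gamma and $q$-Barnes functions. A lesser but real subtlety is the treatment of the diagonal ($\la=\m$) contribution of the kernel, whose double pole forces one to preserve the strict nesting ${\cal C}'_+\subset{\cal C}_+$ until the end, so that collapsing both contours onto the real axis does not generate a spurious boundary or residue term.
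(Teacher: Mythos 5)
Your route is essentially the paper's own proof, reorganised. The paper likewise reduces $\D_{2,{\cal C}}[\fz]$ by a partial integration that moves the derivative onto $\fz$, evaluates the inner $\m$-integral with Lemma~\ref{lem:baseint}, and performs the outer $\la$-integral by pushing the contour to $[-\p/2,\p/2]+\i\infty$ and collecting residues; the only organisational difference is the order of resummation. Where you first resum the $\i\g$-towers via (\ref{dhintups}) and (\ref{produps}) into your function $B$ and then do a single residue computation, the paper keeps the decomposition (\ref{dhctgdecomp}) of $[\ln\dh_1]''$ into shifted $\ctg'$'s, computes the one-parameter integral $I_2(\tau)$ in closed form (\ref{itwores}), and only then resums over the shifts using the two elementary product identities (\ref{dzwopartone})--(\ref{dzwopartzwo}) for the function $A(\x)=(\re^{2\i\x}q^4;q^4)\,(\re^{2\i\x}q^6;q^4,q^4)^2/(\re^{2\i\x}q^4;q^4,q^4)^2$, after which the passage to $G_{q^4}$'s via (\ref{defgammaqgq}) and (\ref{gammabarnesfun}) is mechanical. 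So the ``final resummation'' you single out as the main obstacle is real work, but it is exactly the step the paper disposes of with these two identities, and your plan would go through along the same lines.

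Two slips should be corrected. (i) On ${\cal C}'_+$ one has $\Im\la<0$, so in (\ref{dhintups}) the sign is $\s=-1$: the two $\Si_0$-terms do \emph{not} cancel but add to $-2\i\Si_0$, and the $k=0$ factor in your $B(\la)$ should be $\ln\ups(\n-\la)$ rather than $\ln\ups(\la-\n)$. Your displayed intermediate identity is therefore wrong as stated, although the final answer is unaffected: the constant is annihilated by $\6_\la$ (or by integration against $\fz_0'$ over a period), and the two candidate $B$'s differ by $2\p\i\fz_0(\la)$ plus a constant, by (\ref{zzeroups}), whose integral against $\fz_0'$ vanishes by $\p$-periodicity. (ii) The residues picked up when shifting the outer contour are not ``finitely many'': the poles of $[\ln\ups]'$ sit at $\la=\m\pm\i k\g$ for all $k\ge1$ (compare the sum over $n\ge1$ in (\ref{itwores})), and it is precisely this infinite, geometrically convergent tower that resums into the $q$-multifactorials and ultimately into the claimed ratio of $q$-Barnes functions.
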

\begin{proof}
We first of all evaluate the integral
\begin{equation}
     I_2 (\tau) =
        \int_{{\cal C}' \subset {\cal C}} \rd \la \int_{\cal C} \rd \m \:
        \ctg' (\la - \m + \i \tau) \fz (\la) \fz (\m) \epc
\end{equation}
where $\tau \in {\mathbb C}$ is such that $\la - \m + \i \tau \ne 0$
for $\m \in {\cal C}$, $\la \in {\cal C}'$. The regularization ${\cal C}'
\subset {\cal C}$ may be lifted if $\Im \tau \ne 0$ which we shall
assume in the following. Since $\ctg' = - 1/\sin^2$ is an even function,
$I_2 (\tau)$ is even in $\tau$, whence $I_2 (\tau) = I_2 (\s \tau)$,
where $\s = \sign(\Im \tau)$. Recall that $\fz$ is $\p$-periodic.
Hence, a partial integration gives
\begin{equation}
     I_2 (\tau) =
        - \int_{{\cal C}' \subset {\cal C}} \rd \la \fz' (\la) 
	  \int_{\cal C} \rd \m \: \ctg (\la - \m + \i \s \tau) \fz (\m) \epp
\end{equation}
Now we use Lemma~\ref{lem:baseint} for the $\m$-integral and
Corollary~\ref{cor:fzzero} and equation (\ref{zzeroups}) for
$\fz (\la)$. Then
\begin{multline} \label{itwores}
     I_2 (\tau) = \mspace{-18.mu}
        \sum_{\m, \n \in {\cal X}_n \ominus {\cal Y}_n}
	\int_{- \frac{\p}{2}}^\frac{\p}{2} \frac{\rd \la}{2 \p \i}
	\ln \bigl( \ups(\la - \m + \i \s \tau) \bigr)
	\6_\la \bigl\{\ln \bigl( \ups(\la - \n) \bigr)
	                - \ln \bigl( \ups(\n - \la) \bigr)\bigr\} \\[-2ex]
			\mspace{450.mu} + \CO (T^\infty) \\ = \mspace{-18.mu}
        \sum_{\m, \n \in {\cal X}_n \ominus {\cal Y}_n} \sum_{n=1}^\infty
	\bigl\{
	\ln \bigl( \ups(\n - \m + \i \s \tau + (2n - 1) \i \g) \bigr) \\
	- \ln \bigl( \ups(\n - \m + \i \s \tau + 2n \i \g) \bigr)\bigr\}
	+ \CO (T^\infty) \epp
\end{multline}
In the second equation we have shifted the integration contour
to $[-\p/2,\p/2] + \i \infty$ and have used the residue theorem
and the analytic properties of $\ln \ups$ described in the proof of
Lemma~\ref{lem:baseint}.

Equation (\ref{itwores}) implies that
\begin{equation} \label{reizwo}
     \re^{I_2 (\tau)} =
        \biggl[ \prod_{\la, \m \in {\cal X}_n \ominus {\cal Y}_n}
	        A(\la - \m + \i \s \tau) \biggr] \bigl(1 + \CO(T^\infty)\bigr) \epc
\end{equation}
where
\begin{equation}
     A (\x) = (\re^{2 \i \x} q^4; q^4)
              \frac{(\re^{2 \i \x} q^6; q^4, q^4)^2}
	           {(\re^{2 \i \x} q^4; q^4, q^4)^2} \epp
\end{equation}

Note that
\begin{equation} \label{dzwopartone}
     A (\x + \i \g) =
        (\re^{2 \i \x} q^6; q^4)
	\frac{(\re^{2 \i \x} q^8; q^4, q^4)^2}{(\re^{2 \i \x} q^6; q^4, q^4)^2} =
	\frac{(\re^{2 \i \x} q^6; q^4) (\re^{2 \i \x} q^4; q^4, q^4)^2}
	     {(\re^{2 \i \x} q^4; q^4)^2 (\re^{2 \i \x} q^6; q^4, q^4)^2}
\end{equation}
and
\begin{equation} \label{dzwopartzwo}
    A (\x) \prod_{k=1}^\infty A(\x + \i k \g) A(\x - \i k \g) =
       \frac{1}{(\re^{2 \i \x} q^4; q^4)}
\end{equation}
by a straightforward calculation. Using (\ref{dhctgdecomp}) and
equations (\ref{reizwo}), (\ref{dzwopartone}) and (\ref{dzwopartzwo})
it follows that
\begin{multline}
     \re^{\D_{2, {\cal C}} [\fz]} = \re^{I_2 (\g)}
        \re^{- I_2 (0) - \sum_{k=1}^\infty (I_2 (k \g) + I_2 (- k \g))} \\[1ex]
        = \biggl[ \prod_{\la, \m \in {\cal X}_n \ominus {\cal Y}_n}
	\frac{(\re^{2 \i (\la - \m)} q^6; q^4) (\re^{2 \i (\la - \m)} q^4; q^4, q^4)^2}
	     {(\re^{2 \i (\la - \m)} q^4; q^4) (\re^{2 \i (\la - \m)} q^6; q^4, q^4)^2}
	        \biggr] \bigl(1 + \CO(T^\infty)\bigr) \epc
\end{multline}
Replacing the $q$-factorials by $q$-gamma and $q$-Barnes functions
by means of (\ref{defgammaqgq}) and using the functional equation
(\ref{gammabarnesfun}) we establish the claim.
\end{proof}

It remains to consider the low-$T$ limit of the determinants. All our
efforts so far were driven by the desire to be left with Fredholm
determinants that trivialize in the low-$T$ limit.

\begin{lemma}
{\bf \boldmath Low-$T$ limit of the Fredholm determinants.}
\begin{equation}
   \det_{{\cal C}'} (\id - \widehat U) = 1 + \CO (T^\infty) \epc \qd
   \det_{{\cal C}'} (\id - \widehat{U}^*) = 1 + \CO (T^\infty) \epp
\end{equation}
\end{lemma}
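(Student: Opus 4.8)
The plan is to show that the kernels $U$ and $U^*$ of (\ref{defukernels}) are themselves of order $T^\infty$, uniformly on ${\cal C}' \times {\cal C}'$; the Fredholm expansions of $\det_{{\cal C}'}(\id - \widehat U)$ and $\det_{{\cal C}'}(\id - \widehat{U}^*)$ then collapse to their zeroth terms. The mechanism is that the outer $\x$-integration in (\ref{defukernels}) runs over the \emph{compact} contours ${\cal C}_\s$ against the weight $1/(1 + \fa^\s(\x|h))$ (resp.\ $1/(1 + \fa_n^\s(\x|h'))$), which is exponentially small in $1/T$ there, while every companion factor in the integrand stays bounded as $T \to 0+$.

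The first step is the uniform smallness of that weight. By (\ref{anlowt}), $\fa(\x|h) = \re^{-\e(\x|h)/T}(1 + \CO(T^\infty))$, and, as used in the proof of Lemma~\ref{lem:ggstarlowt}, $\Re\e(\x|h) < 0$ on ${\cal C}_+$ and $\Re\e(\x|h) > 0$ on ${\cal C}_-$, the sign being bounded away from $0$ because ${\cal C}_\pm$ are compact. Hence on ${\cal C}_+$ one has $1/(1 + \fa^+(\x|h)) = 1/(1 + \fa(\x|h)) = \CO(\re^{\Re\e(\x|h)/T}) = \CO(T^\infty)$, and on ${\cal C}_-$ one has $1/(1 + \fa^-(\x|h)) = \fa(\x|h)/(1 + \fa(\x|h)) = \CO(\re^{-\Re\e(\x|h)/T}) = \CO(T^\infty)$, uniformly in $\x$. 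For $U^*$ the same estimate holds with $\fa_n$, $h'$: since $\e(\cdot|h') = \e(\cdot|h) - (h-h')/2$ and $h - h' = 2\g T\a = \CO(T)$, the sign of $\Re\e(\cdot|h')$ on ${\cal C}_\pm$ agrees with that of $\Re\e(\cdot|h)$ for $T$ small, and (\ref{anlowt}) again yields $1/(1 + \fa_n^\s(\x|h')) = \CO(T^\infty)$.

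The second step is to verify that the remaining factors in the integrands are holomorphic and uniformly bounded, for $\la, \m \in {\cal C}'$, $\z \in {\cal C}$, $\x \in {\cal C}_\pm$, as $T \to 0+$ with $\a$ kept fixed. By (\ref{silim}), $\Si = k\p + \Si_0 + \CO(T^\infty)$ stays in a compact set, so $\dh_2(\pm\Si)$ is bounded away from $0$, while $\z - \la$ stays away from the zeros of $\dh_1$ because ${\cal C}'$ lies strictly inside ${\cal C}$; hence $g_{\pm\Si}(\la,\z)$ is uniformly bounded. The function $R_0$ is holomorphic on the strip $|\Im(\z - \x)| < \g$, which contains all the relevant differences. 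By Lemma~\ref{lem:ggstarlowt}, $G(\x,\cdot) = g_0(\x,\cdot) + \CO(T^\infty)$ and likewise $G^*(\cdot,\la) = g_0(\cdot,\la) + \CO(T^\infty)$, with $g_0$ bounded since ${\cal C}_\pm$ and ${\cal C}'$ remain at positive distance. Finally, $\Ph$ restricted to ${\cal C}_+ \cup {\cal C}_-$ tends to $(-1)^k \Ph^{(\pm)}$ (Corollary~\ref{cor:siphilowt}, (\ref{defphpm})) with $\Ph^{(\pm)}$ a finite, nonzero product of $q$-gamma functions, and on ${\cal C}'$ it is bounded by the low-$T$ form (\ref{phint})/(\ref{intzlogpth}) together with Conjecture~\ref{con:nostrings}; so $\Ph$ and $1/\Ph$ are bounded wherever they occur. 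Since the $\z$-integrand is $\p$-periodic in $\z$ (the quasi-periodicities of $\dh_1$, $\dh_2$ and of $\Ph$ cancel), the ${\cal C}_l$- and ${\cal C}_r$-contributions to $\int_{\cal C}$ cancel and only ${\cal C}_\pm$ matter. Multiplying the $\CO(T^\infty)$ weight against these bounded factors and integrating over the finite contours ${\cal C}$ and ${\cal C}_\pm$ then gives $\sup_{\la,\m \in {\cal C}'}|U(\la,\m)| = \CO(T^\infty)$, and verbatim the same for $U^*$.

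To conclude, I would expand
\[
     \det_{{\cal C}'}(\id - \widehat U) = \sum_{n=0}^\infty \frac{(-1)^n}{n!}
        \int_{({\cal C}')^n} \frac{\rd\z_1 \cdots \rd\z_n}{(2\p\i)^n}
        \det_{1 \le i,j \le n}\bigl\{\Ph(\z_i)\, U(\z_i,\z_j)\bigr\},
\]
and similarly for $\widehat U^*$. With $\Ph$ bounded on ${\cal C}'$ and $\|U\|_{\infty,{\cal C}' \times {\cal C}'} = \CO(T^\infty)$, Hadamard's inequality bounds the $n$-th summand by $\tfrac{n^{n/2}}{n!}\bigl(\tfrac{|{\cal C}'|}{2\p}\,\|\Ph\|_{\infty,{\cal C}'}\,\|U\|_{\infty,{\cal C}' \times {\cal C}'}\bigr)^n$; the sum over $n \ge 1$ of these bounds is an entire function of $\|U\|_{\infty,{\cal C}' \times {\cal C}'}$ vanishing at the origin, hence $\CO(\|U\|_{\infty,{\cal C}' \times {\cal C}'}) = \CO(T^\infty)$, while the $n = 0$ term equals $1$. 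This gives $\det_{{\cal C}'}(\id - \widehat U) = 1 + \CO(T^\infty)$ and, by the same computation, $\det_{{\cal C}'}(\id - \widehat{U}^*) = 1 + \CO(T^\infty)$. I expect the only genuinely delicate point to be the uniform control in the second step: as $N \to \infty$ and $T \to 0+$ the condensing, infinitely many Bethe roots $\{\la\}$ and $\{z\}$ must not pinch the contours ${\cal C}$, ${\cal C}'$, and $\Si$, $\Ph$, $G$, $G^*$ must converge uniformly on them. This is exactly what Conjecture~\ref{con:nostrings} guarantees and what the preceding reformulation of $\Si$, $\Ph$, $G$, $G^*$ in terms of integrals (Corollary~\ref{cor:siphilowt}, Lemma~\ref{lem:ggstarlowt}) was designed to make manifest; the remaining estimates are routine.
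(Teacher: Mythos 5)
Your argument is correct and follows essentially the same route as the paper, whose proof is just the one-line observation that the kernels $U$, $U^*$ of (\ref{defukernels}) vanish exponentially fast at low $T$ because of the Fermi-type weights $1/(1+\fa^\s)$ on ${\cal C}_\pm$. You simply spell out the uniform bounds on the remaining factors and the Hadamard estimate on the Fredholm expansion, which the paper leaves implicit.
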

\begin{proof}
This holds, since the kernels of the operators $\widehat U$ and
$\widehat{U}^*$ vanish exponentially fast in the low-$T$ limit
as can be seen from the definitions in (\ref{defukernels}).
\end{proof}

In order to express the low-$T$ limit of the finite determinants
in a compact way we introduce two functions $\Om_n (\la,\m) =
\Om (\la, \m|{\cal X}_n, {\cal Y}_n)$ and $\overline{\Om}_n (\la,\m) =
\overline{\Om} (\la,\m|{\cal X}_n, {\cal Y}_n)$ setting
\begin{subequations}
\label{defombarom}
\begin{align}
     \Om (\la, \m|{\cal X}_n, {\cal Y}_n) &
        = g_{\Si_0} (\la, \m)
	  - \int_{- \p/2}^{\p/2} \frac{\rd \z}{2 \p \i}
	                         \frac{\Ph^{(-)} (\m)}{\Ph^{(+)} (\z)} \,
				 g_{\Si_0} (\la, \z) \re(\z - \m) \epc \\[1ex]
     \overline{\Om} (\la,\m|{\cal X}_n, {\cal Y}_n) &
        = g_{- \Si_0} (\la, \m)
	  - \int_{- \p/2}^{\p/2} \frac{\rd \z}{2 \p \i} \:
	                         \frac{\Ph^{(-)} (\z)}{\Ph^{(+)} (\m)} \,
				 g_{- \Si_0} (\la, \z) \re(\m - \z) \epp
\end{align}
\end{subequations}
Then we can formulate the following lemma.
\begin{lemma} \label{lem:smalldetslowt}
{\bf \boldmath Low-$T$ limit of the matrices in the small determinants.}
\begin{subequations}
\begin{align} \label{dlow}
     {\cal D}^l_m & =
        \frac{(-1)^k \, \Om_n (x_l, y_m)}
	     {\Ph^{(+)} (y_m) \bigl(1 - \re^{2 \p \i \fz_0 (y_m)}\bigr)}
	     \bigm(1 + \CO (T^\infty)\bigr) \epc
	     \\[1ex] \label{dstarlow}
     {\cal D^*}^l_m & =
        \frac{(-1)^{k+1} \, \overline{\Om}_n (y_l,x_m)}
	     {\Ph^{(+)} (y_l) \bigl(1 - \re^{2 \p \i \fz_0 (x_m)}\bigr)}
	     \bigm(1 + \CO (T^\infty)\bigr) \epp
\end{align}
\end{subequations}
\end{lemma}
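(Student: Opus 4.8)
The plan is to substitute the low-$T$ asymptotics of every ingredient into the finite-Trotter representations (\ref{firstsmalldetlow}), (\ref{scdsmalldetzero}) of ${\cal D}$, ${\cal D^*}$ and to re-assemble the outcome into (\ref{defombarom}). \emph{First}, I would discard the resolvent contributions: by (\ref{defukernels}) the kernels $U$, $U^*$ carry a factor $1/(1+\fa^\s)$, respectively $1/(1+\fa_n^\s)$, on ${\cal C}_\s$, which is $\CO(T^\infty)$ there, hence the kernels and their resolvents $S$, $S^*$ are $\CO(T^\infty)$ --- the same mechanism that trivialises the Fredholm determinants. Thus the double integrals in (\ref{firstsmalldetlow}) and the $S^*$-integral in (\ref{scdsmalldetzero}) are negligible, leaving ${\cal D}^l_m=\int_{\cal C}\frac{\rd\z}{2\p\i\Ph(\z)}g_\Si(x_l,\z)G(\z,y_m)+\CO(T^\infty)$ and, similarly, ${\cal D^*}^l_m=R^*(y_l,x_m)(1/\Ph)'(x_m)/\fa_n'(x_m|h')$ plus the single ${\cal C}$-integral of $(\cdots)H(\x,x_m)$, up to $\CO(T^\infty)$.

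\emph{Second}, I would insert the low-$T$ form of the remaining pieces: $g_\Si=g_{\Si_0}+\CO(T^\infty)$ by (\ref{silim}) and the $\p$-periodicity of $\a\mapsto g_\a$; $\Ph|_{{\cal C}_\pm}=(-1)^k\Ph^{(\pm)}(1+\CO(T^\infty))$ by Corollary~\ref{cor:siphilowt}; $R^*(\la,\m)=2\p\i R_0(\la-\m)+\CO(T^\infty)$ by (\ref{rstarzero}); $G(\z,y_m)=g_0(\z,y_m)+2\p\i R_0(\z-y_m)/(1-\re^{-2\p\i\fz_0(y_m)})+\CO(T^\infty)$ by the continuation (\ref{gcontinuedmu})--(\ref{glimyparticle}) together with the identity $\fa(w|h)/\fa_n(w|h')=\re^{-2\p\i\fz_0(w)}+\CO(T^\infty)$ for $w\in{\cal X}_n\cup{\cal Y}_n$ (which follows from (\ref{anlowt}), from $\fa_n(w|h')=-1$, and from the definition of $\fz_0$ in Corollary~\ref{cor:fzzero}); the low-$T$ form of $H$, obtained from (\ref{defh}) by the same replacements and a deformation of the ${\cal C}'$-contour therein; and $(1/\Ph)'(x_m)/\fa_n'(x_m|h')=(-1)^k\Ph^{(-)}(x_m)/(1-\re^{-2\p\i\fz_0(x_m)})+\CO(T^\infty)$, read off from (\ref{phint}), from $1+\fa(x_m|h)=1-\re^{-2\p\i\fz_0(x_m)}+\CO(T^\infty)$, and from equation (\ref{intzlogpth}) of Corollary~\ref{cor:ethpectg}. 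A clean fact I would prove separately, as it drives the whole collapse, is the \emph{exact} identity $\Ph^{(-)}(\la)/\Ph^{(+)}(\la)=\re^{2\p\i\fz_0(\la)}$, which comes out of the definition (\ref{defphpm}) of $\Ph^{(\pm)}$, the dressed phase (\ref{dressedphase}) (hence $F_n$), and the definition (\ref{defsizero}) of $\Si_0$.

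\emph{Third} comes the contour algebra. Split $\int_{\cal C}=\int_{{\cal C}_+}+\int_{{\cal C}_-}$ (the lateral segments cancel by $\p$-periodicity of the integrands, which I would verify for $g_{\pm\Si_0}$, $\Ph^{(\pm)}$, $G$ and $H$); on ${\cal C}_\pm$ let the Fermi-type weights in the ${\cal D^*}$-integrand tend to $0$ or $1$; then shift the ${\cal C}_-$-piece up onto $[-\p/2,\p/2]$, which sweeps across the simple pole of $g_{\Si_0}(x_l,\cdot)$, respectively of $g_{-\Si_0}(\cdot,x_m)$ (residue $1$), producing residue contributions. Using $\Ph^{(-)}=\re^{2\p\i\fz_0}\Ph^{(+)}$ the two real-line integrals merge into one with kernel proportional to $(1-\re^{-2\p\i\fz_0(\z)})/\Ph^{(+)}(\z)$, and the residue terms together with the $g_0$-part of this integral reorganise --- via the orthogonality/reproducing identities satisfied by $g_0$, $g_{\Si_0}$, $R_0$ (in particular $\sum_k g_0(\la_j,z_k){G_0^{-1}}^k_l=\de^j_l$ of (\ref{gzerozerostarinverse}) in its low-$T$ integral form, and (\ref{glowtform})) --- into $g_{\Si_0}(x_l,y_m)-\int_{-\p/2}^{\p/2}\frac{\rd\z}{2\p\i}\frac{\Ph^{(-)}(y_m)}{\Ph^{(+)}(\z)}g_{\Si_0}(x_l,\z)\re(\z-y_m)=\Om_n(x_l,y_m)$, the overall scalar $(-1)^k/[\Ph^{(+)}(y_m)(1-\re^{2\p\i\fz_0(y_m)})]$ being produced by the denominator $1-\re^{-2\p\i\fz_0(y_m)}$ in (\ref{glimyparticle}), the identity $1-\re^{2\p\i\fz_0}=-\re^{2\p\i\fz_0}(1-\re^{-2\p\i\fz_0})$, and $\Ph^{(-)}=\re^{2\p\i\fz_0}\Ph^{(+)}$. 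The computation for ${\cal D^*}^l_m$ runs in parallel (with $G$, $g_\Si$, $\Si$ replaced by $H$, $g_{-\Si}$, $-\Si$, and the extra first term of (\ref{scdsmalldetzero}) feeding into the leading $g_{-\Si_0}$-term) and yields $(-1)^{k+1}\overline\Om_n(y_l,x_m)/[\Ph^{(+)}(y_l)(1-\re^{2\p\i\fz_0(x_m)})]$.

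The main obstacle is this last re-assembly. One has to keep scrupulous track of signs and of which of $\Ph^{(+)}$, $\Ph^{(-)}$ stands under each integral as the contour crosses ${\cal B}_-$ and the particle/hole parameters, and then check that the residue contributions produced by those crossings recombine with the real-line integral into exactly the \emph{asymmetric} kernel $\Ph^{(-)}(\m)g_{\Si_0}(\la,\z)\re(\z-\m)/\Ph^{(+)}(\z)$ (and its barred counterpart) of (\ref{defombarom}), with no leftover terms and with the rational prefactors coming out in the precise form stated. A secondary point that needs to be established along the way is that $G^*$ and $H$ admit analytic continuations to the holes of exactly the same type as the continuation (\ref{glimyparticle}) of $G$ to the particles, which one obtains by repeating the residue-theorem argument on their defining integral equations.
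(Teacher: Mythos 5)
Your preparatory steps match the paper's: you drop the $\CO(T^\infty)$ resolvents $S$, $S^*$, insert (\ref{glimyparticle}), (\ref{phicpcm}), (\ref{silim}), (\ref{rstarzero}) and the prefactor limit, and use $\re^{2\p\i\fz_0}=\Ph^{(-)}/\Ph^{(+)}$ (the paper's (\ref{splitetoz})). But the decisive step — collapsing the sum of the ${\cal C}_+$ and ${\cal C}_-$ integrals onto the asymmetric kernel $\Ph^{(-)}(y_m)\,g_{\Si_0}(x_l,\z)\,\re(\z-y_m)/\Ph^{(+)}(\z)$ of (\ref{defombarom}) — is exactly what you leave undone ("the main obstacle"), and the mechanism you sketch for it is not the one that works. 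The paper does not shift ${\cal C}_-$ straight up to the real axis; it first makes the exact substitution $\z\to\z+\i\g$ using the shift relations (\ref{shiftsbyigamma}), in particular $g_\a(\la,\m)=-\re^{2\i\a}g_\a(\la+\i\g,\m)$, $2\p\i R_0(\la)=-2\p\i R_0(\la+\i\g)-\re(\la+\i\g)$ and $\Ph^{(-)}(\la)=\re^{-2\i\Si_0}\Ph^{(+)}(\la+\i\g)$. This turns the ${\cal C}_-$ integrand (weighted by $1/\Ph^{(-)}$) into the ${\cal C}_+$-type integrand (weighted by $1/\Ph^{(+)}$) plus the extra $\re(\z-y_m)$ term; deforming ${\cal C}_-+\i\g$ down to the real axis then crosses only the pole at $\z=y_m$, the $g_0$- and $R_0$-pieces cancel against the original ${\cal C}_+$ integral, the residue at $y_m$ supplies the $g_{\Si_0}(x_l,y_m)$ term, and (\ref{dlow}) follows; (\ref{dstarlow}) is obtained the same way using additionally $\re(-\la)=\re(\la+\i\g)$, (\ref{phpanp}) and (\ref{defhzero}). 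In your route the direct shift of ${\cal C}_-$ crosses instead the pole of $g_{\Si_0}(x_l,\cdot)$ at $x_l$, leaving a residue term containing $R_0(x_l-y_m)/\Ph^{(-)}(x_l)$ and a merged real-line integral in which the $g_0$- and $R_0$-parts are still present and $\fz_0$ sits at the integration variable rather than at $y_m$; the "orthogonality/reproducing identities" of $g_0$, $g_{\Si_0}$, $R_0$ that you invoke (finite-$M$ inversion of $G_0$, equation (\ref{glowtform})) do not remove these pieces — only the $\i\g$-shift relations (\ref{shiftsbyigamma}) do, and they never enter your argument. So the core of the proof is missing, not merely deferred.

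Two smaller points. First, your prefactor is stated upside down: by (\ref{phpanp}) one has $(1/\Ph)'(x_m)/\fa_n'(x_m|h')=(-1)^k/\bigl[\Ph^{(-)}(x_m)\bigl(1-\re^{-2\p\i\fz_0(x_m)}\bigr)\bigr]+\CO(T^\infty)$, i.e.\ $\Ph^{(-)}(x_m)$ belongs in the denominator. Second, for ${\cal D}^*$ the relevant analytic input is not a continuation of $G^*$ to the holes but the low-$T$ form (\ref{defhzero}) of the function $H$ from (\ref{defh}) together with (\ref{rstarzero}); the subsequent manipulation of $H_0$ again rests on (\ref{shiftsbyigamma}).
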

\begin{proof}
$S(x_j, \z) = \CO (T^\infty)$ by (\ref{defukernels}) and (\ref{defskernels}).
Inserting this together with (\ref{glimyparticle}), (\ref{phicpcm}) and
(\ref{silim}) into (\ref{firstsmalldetlow}) we obtain the following
low-$T$ limit for the elements of the `first small determinant',
\begin{multline} \label{smalldettzero}
     {\cal D}^l_m = \int_{\cal C} \frac{\rd \z}{2 \p \i \Ph(\z)}
        g_\Si (x_l,\z) G(\z,y_m) + \CO (T^\infty) = \\ (-1)^k
	\sum_{\s = \pm} \int_{{\cal C}_\s} \frac{\rd \z}{2 \p \i \Ph^{(\s)} (\z)} \,
	   g_{\Si_0} (x_l,\z)
	   \biggl[g_0 (\z, y_m)
	          + \frac{2 \p \i R_0 (\z - y_m)}
		         {1 - \re^{- 2 \p \i \fz_0 (y_m)}} \biggr]
           + \CO (T^\infty) \epp
\end{multline}

The integral can be simplified exploiting the properties of the
functions in the integrand under shifts by $\i \g$,
\begin{subequations}
\label{shiftsbyigamma}
\begin{align}
     g_\a (\la,\m) & = - \re^{2 \i \a} g_\a (\la + \i \g, \m)
                  = - \re^{- 2 \i \a} g_\a (\la, \m + \i \g) \epc \\[1ex]
     2 \p \i R_0 (\la) & = - 2 \p \i R_0 (\la + \i \g) - \re(\la + \i \g) \epc \\[1ex]
     \Ph^{(-)} (\la) & = \re^{- 2 \i \Si_0} \Ph^{(+)} (\la + \i \g) \epp
\end{align}
\end{subequations}
Inserting these relations into the integral over ${\cal C}_-$ we obtain
\begin{multline}
     \int_{{\cal C}_-} \frac{\rd \z}{2 \p \i \Ph^{(-)} (\z)} \,
	   g_{\Si_0} (x_l,\z)
	   \biggl[g_0 (\z, y_m)
	          + \frac{2 \p \i R_0 (\z - y_m)}
		         {1 - \re^{- 2 \p \i \fz_0 (y_m)}} \biggr] \\[1ex] =
     \int_{{\cal C}_- + \i \g} \frac{\rd \z}{2 \p \i \Ph^{(+)} (\z)} \,
	   g_{\Si_0} (x_l,\z)
	   \biggl[g_0 (\z, y_m)
	          + \frac{2 \p \i R_0 (\z - y_m) + \re (\z - y_m)}
		         {1 - \re^{- 2 \p \i \fz_0 (y_m)}} \biggr] \\[1ex] =
     \frac{g_{\Si_0} (x_l, y_m)}{\Ph^{(+)} (y_m)}
        \biggl[1 - \frac{1}{1 - \re^{- 2 \p \i \fz_0 (y_m)}}\biggr] \mspace{288.mu} \\
     - \int_{{\cal C}_+} \frac{\rd \z}{2 \p \i \Ph^{(+)} (\z)} \,
	   g_{\Si_0} (x_l,\z)
	   \biggl[g_0 (\z, y_m)
	          + \frac{2 \p \i R_0 (\z - y_m) + \re (\z - y_m)}
		         {1 - \re^{- 2 \p \i \fz_0 (y_m)}} \biggr] \epp
\end{multline}
Substituting this back into (\ref{smalldettzero}) and using that
\begin{equation} \label{splitetoz}
     \re^{2 \p \i \fz_0 (x)} = \frac{\Ph^{(-)} (x)}{\Ph^{(+)} (x)}
\end{equation}
we arrive at (\ref{dlow}).

As for the entries of the `second small determinant' we first of
all evaluate the low-$T$ limit of the function $H$ introduced
in (\ref{defh}). We start with
\begin{multline} \label{phpanp}
     \frac{(1/\Ph)' (x_m)}{\fa_n' (x_m|h')} =
	\lim_{\la \rightarrow x_m} \frac{1}{\Ph(\la) (1 + \fa_n (\la|h')} =
	\frac{(-1)^k \exp \Bigl\{
	         \int_{\cal C} \rd \m \:
		 \frac{\dh_1' (x_m - \m)}{\dh_1 (x_m - \m)} \fz (\m)\Bigr\}}
             {1 - \frac{\fa (x_m|\k)}{\fa_n (x_m|\k')}} \\[1ex] =
	\frac{(-1)^k (1 + \CO (T^\infty))}
             {\Ph^{(-)} (x_m) (1 - \re^{- 2 \p \i \fz_0 (x_m)})} =
        \frac{(-1)^k (1 + \CO (T^\infty))}{\Ph^{(-)} (x_m) - \Ph^{(+)} (x_m)} \epp
\end{multline}
Here we have employed (\ref{phint}) and the Bethe Ansatz equations in
the second equation, (\ref{intzlogpth}) in the third equation and
(\ref{splitetoz}) in the last equation. Using the equation (\ref{phpanp})
as well as (\ref{rstarzero}), (\ref{phicpcm}), and (\ref{silim}) we obtain
\begin{equation}
     H (\la, x_m) = H_0 (\la, x_m) + \CO(T^\infty) \epc
\end{equation}
where
\begin{multline} \label{defhzero}
     H_0 (\la,x_m) = g_{- \Si_0} (\la,x_m) \\
        - \biggl[\int_{{\cal C}_+'} \rd \z \: \Ph^{(+)} (\z) +
                 \int_{{\cal C}_-'} \rd \z \: \Ph^{(-)} (\z)\biggl]
        \frac{g_{- \Si_0} (\la,\z) R_0(\z - x_m)}{\Ph^{(-)} (x_m) - \Ph^{(+)} (x_m)}
	\epp
\end{multline}

Since $S^* (\x, \z) = \CO (T^\infty)$ by (\ref{defukernels}) and
(\ref{defskernels}), this allows us to conclude that
\begin{multline} \label{d2zero}
     {\cal D^*}^l_m = (-1)^k \biggl\{
        \frac{2 \p \i R_0 (y_l - x_m)}{\Ph^{(-)} (x_m) - \Ph^{(+)} (x_m)}
        + \int_{{\cal C}_+} \frac{\rd \x}{2 \p \i \Ph^{(+)} (\x)} \,
          \re(\x - y_l) H_0 (\x,x_m) \\[1ex]
        - \int_{{\cal C}_-} \frac{\rd \x}{2 \p \i \Ph^{(-)} (\x)} \,
	  \re(y_l - \x) H_0 (\x, x_m) \biggr\} + \CO(T^\infty) \\[1ex] =
        (-1)^k \biggl\{
        \frac{2 \p \i R_0 (y_l - x_m)}{\Ph^{(-)} (x_m) - \Ph^{(+)} (x_m)}
	- \frac{H_0 (y_l, x_m)}{\Ph^{(+)} (y_l)} \biggr\} + \CO(T^\infty) \epp
\end{multline}
Next we use (\ref{shiftsbyigamma}) and $\re(- \la) = \re (\la + \i \g)$
and apply similar manipulations as in the above proof of (\ref{dlow})
to equation (\ref{defhzero}). After some calculation we arrive at
(\ref{dstarlow}).
\end{proof}

\subsection{\boldmath The amplitudes in the low-$T$ limit}
If we now use Corollaries~\ref{cor:siphilowt}, \ref{cor:singleint}
and Lemmata \ref{lem:doubleint}-\ref{lem:smalldetslowt} in
Proposition~\ref{prop:amps} together with the basic functional equations
(\ref{gammabarnesfun}), some cancellations occur, and we obtain our
final result for the amplitudes in the low-$T$ limit. It assumes
a more compact shape if we introduce a function
\begin{equation} \label{defpsi}
     \Ps (\la) = \G_{q^4} \bigl(\tst{\2} - \tst{\frac{\i \la}{2\g}}\bigr)
               \G_{q^4} \bigl(1 - \tst{\frac{\i \la}{2\g}}\bigr)
               \frac{G^2_{q^4} \bigl(1 - \tst{\frac{\i \la}{2\g}}\bigr)}
                    {G^2_{q^4} \bigl(\2 - \tst{\frac{\i \la}{2\g}}\bigr)} \epp
\end{equation}
We summarize it in the following theorem.

\begin{theorem}
{\bf \boldmath Amplitudes at low $T$.}\\
The amplitudes in the form factor expansion of the generating
function of the longitudinal correlations function have the
low-$T$ asymptotic behaviour
\begin{multline} \label{fine}
     A_n (h, h') = 
        \frac{\dh_2^2 (\Si_0)}{\dh_2^2}
	\biggl[
	\prod_{\la, \m \in {\cal X}_n \ominus {\cal Y}_n}
	\mspace{-18.mu} \Ps (\la - \m) \biggr]
       \frac{\det_{\nex} \bigl\{\Om_n (x_j,y_k)\bigr\}
             \det_{\nex} \bigl\{\overline{\Om}_n (y_j,x_k)\bigr\}}
            {\det_{2 \nex} \{{\cal J}\}} \\[-1ex] \times
	\bigl(1 + \CO (T^\infty)\bigr) \epp
\end{multline}
\end{theorem}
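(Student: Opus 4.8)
The plan is to substitute into the finite‑Trotter‑number representation (\ref{ampsfinalfiniten}) of Proposition~\ref{prop:amps} all the low‑$T$ asymptotics assembled in this section, and then to collect the resulting elementary prefactors. First I would replace $\dh_2^2(\Si)/\dh_2^2$ by $\dh_2^2(\Si_0)/\dh_2^2\,(1+\CO(T^\infty))$ using (\ref{silim}) and the $\p$-antiperiodicity of $\dh_2$ (which kills the shift by $k\p$ inside the square); insert the expressions of Corollary~\ref{cor:singleint} and Lemma~\ref{lem:doubleint} for $\re^{\D_{1, {\cal C}}[\fz]}$ and $\re^{\D_{2, {\cal C}}[\fz]}$; use the lemma on the low‑$T$ limit of the Fredholm determinants to set $\det_{{\cal C}'}(\id-\widehat U)\det_{{\cal C}'}(\id-\widehat{U}^*)=1+\CO(T^\infty)$; and insert (\ref{dlow}), (\ref{dstarlow}). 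Since the prefactors in (\ref{dlow}), (\ref{dstarlow}) depend on a single matrix index only, they factor out of the small determinants, so that $\det_\nex({\cal D})=\bigl(\prod_{y\in{\cal Y}_n}\frac{(-1)^k}{\Ph^{(+)}(y)(1-\re^{2\p\i\fz_0(y)})}\bigr)\det_\nex\{\Om_n(x_j,y_k)\}(1+\CO(T^\infty))$ and $\det_\nex({\cal D}^*)=\bigl(\prod_{y\in{\cal Y}_n}\frac{(-1)^{k+1}}{\Ph^{(+)}(y)}\bigr)\bigl(\prod_{x\in{\cal X}_n}\frac{1}{1-\re^{2\p\i\fz_0(x)}}\bigr)\det_\nex\{\overline{\Om}_n(y_j,x_k)\}(1+\CO(T^\infty))$. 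After these substitutions the ratio $\det_\nex\{\Om_n\}\det_\nex\{\overline{\Om}_n\}/\det_{2\nex}\{{\cal J}\}$ and the factor $\dh_2^2(\Si_0)/\dh_2^2$ already have the shape demanded by (\ref{fine}); everything else is scalar and must be shown to collapse onto $\prod_{\la,\m\in{\cal X}_n\ominus{\cal Y}_n}\Ps(\la-\m)$.

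The scalar collection I would carry out in two steps. The sign $(-1)^\nex$ of (\ref{ampsfinalfiniten}) is cancelled by $(-1)^{k\nex}(-1)^{(k+1)\nex}=(-1)^\nex$ coming from the two small determinants. Next I would use the relation $\fa(\m|h)/\fa_n(\m|h')=\re^{-2\p\i\fz_0(\m)}+\CO(T^\infty)$ for $\m\in{\cal X}_n\cup{\cal Y}_n$ — which follows from (\ref{anlowt}), from $\e(\m|h)-\e(\m|h')=(h-h')/2$ (a consequence of (\ref{dressede})), and from the definition of $\fz_0$ in Corollary~\ref{cor:fzzero} — to rewrite $\prod_{\m\in{\cal X}_n\cup{\cal Y}_n}(1-\fa(\m|h)/\fa_n(\m|h'))$ in terms of $\re^{\mp2\p\i\fz_0}$. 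Combining this with the $(1-\re^{2\p\i\fz_0})^{-1}$ factors extracted above, using the elementary identity $(1-x^{-1})/(1-x)=-x^{-1}$ and $\re^{2\p\i\fz_0(x)}=\Ph^{(-)}(x)/\Ph^{(+)}(x)$ of (\ref{splitetoz}), turns every $\re^{\pm2\p\i\fz_0}$ into a ratio of $\Ph^{(\pm)}$ at the roots; together with the factor $[\prod_{x\in{\cal X}_n}\Ph^{(-)}(x)]^2$ from $\re^{\D_{1, {\cal C}}[\fz]}$ and the $\prod_{y\in{\cal Y}_n}\Ph^{(+)}(y)^{-2}$ from the small determinants these reassemble into $\prod_{\m\in{\cal X}_n\ominus{\cal Y}_n}\Ph^{(-)}(\m)\Ph^{(+)}(\m)$, all $\re^{\pm\i\Si_0}$ phases cancelling on the way.

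The last step is to check that the product of $\prod_{\m\in{\cal X}_n\ominus{\cal Y}_n}\Ph^{(-)}(\m)\Ph^{(+)}(\m)$, the $q$-gamma parts of $\re^{\D_{1, {\cal C}}[\fz]}$ and $\re^{\D_{2, {\cal C}}[\fz]}$, the $q$-Barnes part of $\re^{\D_{2, {\cal C}}[\fz]}$, and the last factor $\prod_{\m,\n\in{\cal Y}_n\ominus{\cal X}_n}\frac{1}{\bigl[\2-\frac{\i(\m-\n)}{2\g}\bigr]_{q^4}}$ of (\ref{ampsfinalfiniten}) equals $\prod_{\la,\m\in{\cal X}_n\ominus{\cal Y}_n}\Ps(\la-\m)$. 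For this I would write $\Ph^{(\pm)}$ at the roots through its defining $\G_{q^4}$-product (\ref{defphpm}), the $q$-numbers through $[x]_{q^4}=\G_{q^4}(x+1)/\G_{q^4}(x)$ (cf.\ (\ref{gaussnumber})), and expand the $G_{q^4}$-arguments by repeated use of the functional equation $G_q(x+1)=\G_q(x)G_q(x)$ of (\ref{gammabarnesfun}); a pairwise comparison then reproduces $\G_{q^4}\bigl(\tst{\2}-\tst{\frac{\i(\la-\m)}{2\g}}\bigr)\G_{q^4}\bigl(1-\tst{\frac{\i(\la-\m)}{2\g}}\bigr)G^2_{q^4}\bigl(1-\tst{\frac{\i(\la-\m)}{2\g}}\bigr)/G^2_{q^4}\bigl(\tst{\2}-\tst{\frac{\i(\la-\m)}{2\g}}\bigr)=\Ps(\la-\m)$, which is (\ref{defpsi}). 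Assembling the pieces then gives (\ref{fine}).

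The main obstacle will be precisely this final accounting of the $q$-gamma, $q$-Barnes, sign and $\fz_0$-prefactors: it is long and it is easy to err by a power of $\G_{q^4}$. Two points need care. The residual products all run over $({\cal X}_n\ominus{\cal Y}_n)^2$; since the exponent $s(\la)s(\m)$ of a signed-set product is symmetric under $\la\leftrightarrow\m$, one may freely use $\prod_{\la,\m}f(\la-\m)=\prod_{\la,\m}f(\m-\la)$ as well as $\prod_{\la,\m\in{\cal X}_n\ominus{\cal Y}_n}f(\la-\m)=\prod_{\la,\m\in{\cal Y}_n\ominus{\cal X}_n}f(\la-\m)$, but keeping track of when the argument is reflected is the delicate part. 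Also, the entrywise $(1+\CO(T^\infty))$ corrections on the matrix elements of ${\cal D}$ and ${\cal D}^*$, and on the other low-$T$ substitutions, combine into a single overall $(1+\CO(T^\infty))$ because $\nex$ and $2\nex$ are finite; this is harmless but should be stated.
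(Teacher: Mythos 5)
Your proposal is correct and follows exactly the route the paper takes: the paper's own proof of (\ref{fine}) is just the one-line instruction to insert Corollaries~\ref{cor:siphilowt}, \ref{cor:singleint}, Lemmata~\ref{lem:doubleint}--\ref{lem:smalldetslowt} and the Fredholm-determinant lemma into Proposition~\ref{prop:amps} and use (\ref{gammabarnesfun}), which is precisely your assembly. Your detailed bookkeeping (the sign $(-1)^{\nex}(-1)^{k\nex}(-1)^{(k+1)\nex}=1$, the reduction of $1-\fa/\fa_n$ to $1-\re^{-2\p\i\fz_0}$, the recombination into $\prod_{\m\in{\cal X}_n\ominus{\cal Y}_n}\Ph^{(-)}(\m)\Ph^{(+)}(\m)$ with cancelling $\re^{\pm\i\Si_0}$ phases, and the final reduction to $\Ps$ using $[x]_{q^4}=\G_{q^4}(x+1)/\G_{q^4}(x)$, $G_{q^4}(x+1)=\G_{q^4}(x)G_{q^4}(x)$ and the reflection symmetry of the signed double product) checks out, so it correctly fills in the cancellations the paper leaves implicit.
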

This is the main result of this work. It has many implications.
We would like to emphasize that (\ref{fine}) is explicit and is
defined entirely in terms of functions of the q-gamma family.
In particular, as compared to previous expressions for form-%
factor amplitudes, there are neither multiple integrals nor
Fredholm determinants involved. We like to think of (\ref{fine})
as of a `factorized form' of the amplitudes. The function $\Om$
remotely resembles the function $\om$ (cf.\ equation (50)
of \cite{BoGo09}, where the function is denoted $\Ps$ rather
than $\om$) appearing in the theory of factorized static
correlation functions \cite{JMS08}. The functions $\Om$,
$\overline \Om$ are still defined as single integrals over
explicit functions. In fact, the integrals can be evaluated
in terms of $q$-hypergeometric functions. We further comment
on this and give explicit examples below. The general case will
be considered in a separate subsequent work. We finally note
that taking the isotropic limit toward the Heisenberg chain
is straightforward. This limit will be considered in an extra
section below.

\section{The form factor series}
\label{sec:ffseries}
\subsection{The series for the generating function}
\label{sec:genfunseries}
The amplitudes (\ref{fine}) together with the eigenvalue ratios
at finite Trotter number are the input for the form factor
series (\ref{defgenfun}) of the generating function. A key
feature of the amplitudes is the appearance of the factor
$\det_{2 \nex} \{{\cal J}\}$ in the denominator. It is this
factor that allows us to convert the series over all excitations
into a series over multiple integrals, each accounting for a
whole class of particle-hole excitations. We have described this
in some detail in \cite{DGKS16b,GKKKS17}. The main idea is that
\begin{equation}
     {\cal J} ({\cal U}, {\cal V}, k) = 
	\begin{pmatrix}
	   \6_{u_k} \fa^- (u_j |{\cal U},{\cal V},h') &
	   \6_{v_k} \fa^- (u_j |{\cal U},{\cal V},h') \\
	   \6_{u_k} \fa^+ (v_j |{\cal U},{\cal V},h') &
	   \6_{v_k} \fa^+ (v_j |{\cal U},{\cal V},h')
	\end{pmatrix}
\end{equation}
is the Jacobi matrix $\6(\Uv,\Vv)/\6(\uv,\vv)$ of a transformation
${\mathbb C}^{2 \nex} \mapsto {\mathbb C}^{2 \nex}$, $(\uv, \vv)
\mapsto (\Uv, \Vv)$, where
\begin{subequations}
\label{aasamap}
\begin{align}
     U_j (\uv, \vv) & = 1 + \fa^- (u_j|{\cal U},{\cal V},h') \epc
                        \qd j = 1, \dots, \nex \epc \\[1ex]
     V_k (\uv, \vv) & = 1 + \fa^+ (v_k|{\cal U},{\cal V},h') \epc
                        \qd k = 1, \dots, \nex \epp
\end{align}
\end{subequations}
This transformation maps solutions of the higher level Bethe
Ansatz equations to the origin in ${\mathbb C}^{2 \nex}$.
For this reason the terms in the thermal form factor series
of the generating functions may be interpreted as multiple
residues. Under certain assumptions which are discussed in
detail in Appendix~D of \cite{GKKKS17} the sum over all
excitations then becomes a sum over multiple integrals. For
the derivation of the final series it is important to start
with multiple integral representations of the summands at
finite Trotter number and perform the Trotter limit on the
integrals. This is feasible due to the fact, that the Trotter
number enters our expressions for the amplitudes and eigenvalue
ratios only parametrically through the functions $\fa^\pm$.

For the following consideration let us make the Trotter number
dependence of the various functions explicit by providing an
extra index $N$. We shall denote, in particular,
\begin{equation}
     {\cal A}_N ({\cal U}, {\cal V}|k) =
        A_n (h, h') \det_{2 \nex} \{{\cal J}\}
	\Bigr|_{\substack{{\cal X}_n \rightarrow {\cal U}\\
	                  {\cal Y}_n \rightarrow {\cal V}}}
\end{equation}
the `off-shell' finite Trotter number form factor density.
In a similar way $\r_N (\la|{\cal U}, {\cal V}|k)$ are the
off-shell eigenvalue ratios at finite Trotter number. For
the limits $N \rightarrow \infty$, $\e \rightarrow 0$ we
introduce the notation
\begin{equation}
     \lim_{N \rightarrow \infty} \lim_{\e \rightarrow 0}
        {\cal A}_N ({\cal U}, {\cal V}|k) = {\cal A} ({\cal U}, {\cal V}|k) \epp
\end{equation}

Recall that ${\cal B}_\pm$ are the curves in ${\mathbb H}_\pm$
on which the particle and hole roots condense in the low-$T$
limit, i.e.\ for $T$ very small and $N$ large enough. Let
us assume that these curves are oriented in the direction of
growing real part. We define two simple positively oriented
curves $\G({\cal B}_\pm)$ which go around ${\cal B}_\pm$ and
enclose all particles and holes, respectively, if $T$ is
small enough and $N$ is large enough. Then
\begin{multline} \label{intrepgfiniten}
     {\cal G} (m, t, T, h, h') =
        \lim_{N \rightarrow \infty} \lim_{\e \rightarrow 0} \biggl\{
	{\cal A}_N (\emptyset, \emptyset|0) + {\cal A}_N (\emptyset, \emptyset|1) (-1)^m \\
        + \sum_{\substack{\nex \in {\mathbb N}\\k = 0, 1}}
	  \frac{(-1)^{km}}{(\nex !)^2}
	     \int_{\G({\cal B}_+)^\nex} \frac{\rd^\nex u}{(2\p \i)^\nex}
	     \int_{\G({\cal B}_-)^\nex} \frac{\rd^\nex v}{(2\p \i)^\nex} \:
             \frac{\r_N \bigl(- \frac{\i \g}{2} + \frac{\i t}{\k N} \big|
	           {\cal U}, {\cal V}\big|k\bigr)^{N/2}}
	          {\r_N \bigl(- \frac{\i \g}{2} - \frac{\i t}{\k N}\big|
		   {\cal U}, {\cal V}\big|k\bigr)^{N/2}} \\
          \frac{{\cal A}_N ({\cal U}, {\cal V}|k) \,
	        \r_N \bigl(- \frac{\i \g}{2}\big|{\cal U}, {\cal V}\big|k\bigr)^m}
               {\prod_{j=1}^\nex \bigl(1 + \fa^- (u_j|{\cal U}, {\cal V}, h')\bigr)
                                 \bigl(1 + \fa^+ (v_j|{\cal U}, {\cal V}, h')\bigr)}
				 \biggr\} \epp
\end{multline}
The reason why we had to come back to finite Trotter number here
is that we needed to control the singularities of the function
$\r_N \bigl(- \frac{\i \g}{2} - \frac{\i t}{\k N}\big| {\cal U},
{\cal V}\big|k\bigr)^{-N/2}$. As a function of $u_j$ it has an
$N/2$-fold pole at $- \frac{\i \g}{2} - \frac{\i t}{\k N}$ which
is compensated by a similar pole in $\fa^- (u_j|{\cal U},
{\cal V}, h')$. As a function of $v_k$ is has an $N/2$-fold pole
at $\frac{\i \g}{2} - \frac{\i t}{\k N}$ which is compensated by
a similar pole in $\fa^+ (v_k|{\cal U}, {\cal V}, h')$. If we
would have taken the Trotter limit first, the poles stemming from
the amplitude ratios would have developed into essential
singularities, and we would not have been able to use the residue
theorem in order to transform the sum into a sum over integrals.

In (\ref{intrepgfiniten}) the Trotter limit and the limit
$\e \rightarrow 0$ can be taken, resulting in
\begin{multline}
     {\cal G} (m, t, T, h, h') =
	{\cal A} (\emptyset, \emptyset|0) + {\cal A} (\emptyset, \emptyset|1) (-1)^m \\[1ex]
        + \sum_{\substack{\nex \in {\mathbb N}\\k = 0, 1}}
	  \frac{(-1)^{km}}{(\nex !)^2}
	     \int_{\G({\cal B}_+)^\nex} \frac{\rd^\nex u}{(2\p \i)^\nex}
	     \int_{\G({\cal B}_-)^\nex} \frac{\rd^\nex v}{(2\p \i)^\nex} \:
             {\cal A} ({\cal U}, {\cal V}|k) \\[-1ex]
          \frac{\r \bigl(- \frac{\i \g}{2}\big|{\cal U}, {\cal V}\big|k\bigr)^m
                \exp\bigl\{\frac{\i t}{\k} \6_\la \ln \r \bigl(\la |{\cal U},
		           {\cal V}|k)\bigr|_{\la = - \i \g/2}\bigr\}}
               {\prod_{j=1}^\nex \bigl(1 + \fa^- (u_j|{\cal U}, {\cal V}, h')\bigr)
                                 \bigl(1 + \fa^+ (v_j|{\cal U}, {\cal V}, h')\bigr)} \epp
\end{multline}
Finally we can also take the limit $T \rightarrow 0+$. For the
amplitudes we can then use (\ref{fine}) and for the eigenvalue
ratios the results of section~\ref{sec:evratlowt}. The integrals
further simplify, since the functions $1/(1 + \fa^\pm)$ behave
like Fermi functions for particles and holes. Using (\ref{anlowt})
and the fact that the dressed energy $\e(\la|h)$ is negative
for $\la$ between ${\cal B}_-$ and ${\cal B}_+$ and positive
for $\la$ below ${\cal B}_-$ or above ${\cal B}_+$ we see that
$1/\bigl(1 + \fa^\pm (\la|{\cal U}, {\cal V}, h')\bigr) = 1 + \CO (T^\infty)$
if $\la \in {\cal B}_\pm + \i 0$, whereas these functions are
of order $\CO(T^\infty)$ if $\la \in {\cal B}_\pm - \i 0$. Thus, the
integrals over $\G({\cal B}_\pm)$ can be replaced by integrals
over ${\cal B}_\pm + \i 0$ in the low-$T$ limit. Because of the
$\p$-periodicity of the integrands, these contours can then be
deformed into a `particle contour' ${\cal C}_p$ which lies
above $\frac{\i \g}{2}$ and a `hole contour' ${\cal C}_h$ located
above $- \frac{\i \g}{2}$ both of which, to a certain extend,
can be chosen to our convenience. For the sake of definiteness
we fix them to
\begin{equation} \label{defchcp}
     {\cal C}_p = \tst{[- \frac{\p}{2}, \frac{\p}{2}] + \frac{\i \g (1 + 0_+)}{2}} \epc \qd
     {\cal C}_h = \tst{[- \frac{\p}{2}, \frac{\p}{2}] - \frac{\i \g (1 + 0_-)}{2}}
\end{equation}
here. Hence, in the zero-temperature limit, all remaining dependence
on the magnetic fields $h, h'$ is in $\a = \frac{h - h'}{2 \g T}$
which has to be seen as an independent variable. For this reason
we shall use the notation
\begin{equation}
     {\cal G}^{(0)} (m, t, \a) =
        \lim_{T \rightarrow 0+} {\cal G} (m, t, T, h, h')
\end{equation}
for the generating function in the zero-temperature limit. As
a consequence of the above discussion the latter can be characterized
by the following theorem.

\begin{theorem}
{\bf Generating function at zero temperature.}\\
The generating function of the longitudinal dynamical two-point 
function in the zero-temperature limit has the series representation
\begin{multline} \label{genfunfinalseries}
     {\cal G}^{(0)} (m, t, \a) =
        \frac{\dh_2^2 (\i \g \a/2)}{\dh_2^2}
	+ \frac{\dh_1^2 (\i \g \a/2)}{\dh_2^2}
	  (-1)^m \\[1ex]
        + \sum_{\substack{\nex \in {\mathbb N}\\k = 0, 1}}
	  \frac{(-1)^{km}}{(\nex !)^2}
	     \int_{{\cal C}_h^\nex} \frac{\rd^\nex u}{(2\p \i)^\nex}
	     \int_{{\cal C}_p^\nex} \frac{\rd^\nex v}{(2\p \i)^\nex} \:
             {\cal A} ({\cal U},{\cal V}|k)
	     \re^{- \i \sum_{\la \in {\cal U} \ominus {\cal V}}
	     (m p(\la) + t \e (\la|h))} \epc
\end{multline}
where
\begin{multline}
     {\cal A} ({\cal U},{\cal V}|k) =
        \frac{\dh_2^2 (\Si_0 ({\cal U},{\cal V}))}{\dh_2^2}
	\biggl[\prod_{\la, \m \in {\cal U}\ominus {\cal V}}
	       \mspace{-18.mu} \Ps (\la - \m)\biggr] \\ \times
        \det_{\nex} \{\Om (u_j,v_k|{\cal U}, {\cal V})\}
	\det_{\nex} \{\overline{\Om} (v_j,u_k|{\cal U}, {\cal V})\} \epp
\end{multline}
\end{theorem}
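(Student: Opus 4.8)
\medskip
\noindent\textbf{Proof strategy.}
The statement is the end product of combining the low-$T$ amplitudes, the low-$T$ eigenvalue ratios of Section~\ref{sec:evratlowt}, and the residue-to-integral mechanism of \cite{GKKKS17,DGKS16b}, so the plan is to take the limits $\e\to0$, $N\to\infty$, $T\to0+$ in that order and to substitute the explicit data obtained above. One starts from the finite-Trotter-number series~(\ref{intrepgfiniten}). Its derivation, recalled from Appendix~D of \cite{GKKKS17}, reads the summand attached to an $\nex$-particle-$\nex$-hole class with topological index $k$ as a multiple residue: $\det_{2\nex}\{{\cal J}({\cal U},{\cal V},k)\}$ is the Jacobian of the map~(\ref{aasamap}) whose zeros are the solutions of the higher-level Bethe Ansatz equations, while $\r_N\bigl(-\tfrac{\i\g}{2}-\tfrac{\i t}{\ks N}\big|{\cal U},{\cal V}\big|k\bigr)^{-N/2}$ has, in each of $u_j,v_k$, only a pole --- not an essential singularity --- at finite $N$, which is exactly why~(\ref{intrepgfiniten}) is set up before the Trotter limit. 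Since $N$ enters ${\cal A}_N$ and $\r_N$ only through $\fa^\pm$, whose Trotter limit is governed by~(\ref{nlie1}) with $\e_0^{(N,\e)}$ replaced by $\e_0$ (see~(\ref{defepszero})), the limits $\e\to0$ and $N\to\infty$ are taken under the integral sign, the $N/2$-th power of the $\r_N$-quotient turning into the exponential of $\tfrac{\i t}{\ks}\6_\la\ln\r$ evaluated at $\la=-\tfrac{\i\g}{2}$.

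Next I take $T\to0+$. For the amplitudes, Proposition~\ref{prop:amps} combined with Corollaries~\ref{cor:siphilowt}, \ref{cor:singleint} and Lemmata~\ref{lem:doubleint}--\ref{lem:smalldetslowt} gives, up to $\CO(T^\infty)$, the factorised form~(\ref{fine}); since ${\cal A}_N=A_n\det_{2\nex}\{{\cal J}\}$ with ${\cal X}_n,{\cal Y}_n$ replaced by ${\cal U},{\cal V}$, the factor $\det_{2\nex}\{{\cal J}\}$ in the denominator of~(\ref{fine}) cancels and ${\cal A}({\cal U},{\cal V}|k)$ acquires exactly the shape stated in the theorem. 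For the eigenvalue ratios, equation~(\ref{evaratzero}) and the computation of Section~\ref{sec:evratlowt} turn $\r\bigl(-\tfrac{\i\g}{2}\big|{\cal U},{\cal V}\big|k\bigr)^m$ together with the exponential of its logarithmic $\la$-derivative into $\re^{-\i\sum_{\la\in{\cal U}\ominus{\cal V}}(mp(\la)+t\e(\la|h))}$, the topological factor $(-1)^{km}$ being the one kept explicit in front of the integral. The key step of this limit is the contraction of the contours: by~(\ref{anlowt}) the dressed energy $\e(\la|h)$ is negative between ${\cal B}_-$ and ${\cal B}_+$ and positive outside, hence $1/\bigl(1+\fa^\pm(\la|{\cal U},{\cal V},h')\bigr)=1+\CO(T^\infty)$ on ${\cal B}_\pm+\i0$ and $=\CO(T^\infty)$ on ${\cal B}_\pm-\i0$. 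This removes the Fermi-type denominators and contracts the integrals over the encircling contours $\G({\cal B}_\pm)$ to integrals along ${\cal B}_\pm$, which, using the $\p$-periodicity of the integrand and deforming inside the strips $0<\pm\Im\la<\g$, become the fixed reference contours ${\cal C}_h$ and ${\cal C}_p$ of~(\ref{defchcp}). Since in this limit all dependence on $h,h'$ is carried by $\a=(h-h')/(2\g T)$, the part of~(\ref{genfunfinalseries}) with $\nex\ge1$ follows.

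The two $\nex=0$ terms are handled directly. For ${\cal X}_n={\cal Y}_n=\emptyset$ every product and every determinant in~(\ref{fine}) is empty, so ${\cal A}(\emptyset,\emptyset|k)=\dh_2^2\bigl(\Si_0(\emptyset,\emptyset)\bigr)/\dh_2^2$; from~(\ref{defsizero}), using $\tfrac{h-h'}{4\i T}=-\tfrac{\i\g\a}{2}$, one gets $\Si_0(\emptyset,\emptyset)=-\p k/2+\i\g\a/2$ in the limit. For $k=0$ this reproduces $\dh_2^2(\i\g\a/2)/\dh_2^2$; for $k=1$ the identity $\dh_2(x-\p/2)=\dh_1(x)$, immediate from~(\ref{defotherdhs}), together with the accompanying factor $(-1)^{km}=(-1)^m$ gives $\dh_1^2(\i\g\a/2)/\dh_2^2\,(-1)^m$. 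These are precisely the first two terms of~(\ref{genfunfinalseries}).

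The main difficulty is analytic rather than algebraic: one must justify interchanging $T\to0+$ with the infinite sum over $\nex$ and with the $2\nex$-fold contour integrals, that is, control the $\CO(T^\infty)$ remainders uniformly in $\nex$ and in the integration variables and establish convergence of the limiting series. This is conditional on the geometric picture of Conjecture~\ref{con:nostrings} --- particles and holes condensing on ${\cal B}_\pm$ with no crossing of the shifted curves ${\cal B}_\pm\pm\i\g$ --- which, granted, makes the Fermi-function localisation above legitimate and the remaining integration domains compact, in the same spirit as in \cite{DGKS16b,GKKKS17}.
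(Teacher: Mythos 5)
Your proposal is correct and follows essentially the same route as the paper: it reproduces the passage from the finite-Trotter multiple-residue representation (\ref{intrepgfiniten}) (kept at finite $N$ precisely to avoid essential singularities), the Trotter and $\e\to0$ limits under the integral, and then the low-$T$ limit using (\ref{fine}), the cancellation of $\det_{2\nex}\{{\cal J}\}$, the Fermi-function localisation onto ${\cal B}_\pm+\i 0$ and the deformation to ${\cal C}_h$, ${\cal C}_p$, with the same caveat about uniformity resting on Conjecture~\ref{con:nostrings}. Your explicit check of the $\nex=0$ terms via $\Si_0(\emptyset,\emptyset)=-\p k/2+\i\g\a/2$ and $\dh_2(x-\p/2)=\pm\dh_1(x)$ is a correct (and welcome) detail the paper leaves implicit.
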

\begin{remark}
Note that
\begin{equation} \label{omalphazero}
     \det_{\nex} \{\Om (u_j,v_k|{\cal U}, {\cal V})\}\bigr|_{\a = 0} =
	\det_{\nex} \{\overline{\Om} (v_j,u_k|{\cal U}, {\cal V})\}\bigr|_{\a = 0} = 0
\end{equation}
which follows from Lemma~\ref{lem:ombaromleftevs} below. Using
(\ref{omalphazero}) we conclude with (\ref{genfunfinalseries})
that
\begin{equation}
     {\cal G}^{(0)} (m, t, 0) = 1 \epc
\end{equation}
in accordance with our expectation.
\end{remark}
\begin{lemma}
\label{lem:ombaromleftevs}
{\bf \boldmath Left null vectors of $\Om$ and $\overline \Om$ at $\a = 0$.}\\
Fix two sets ${\cal U} \subset {\mathbb D}_- = \bigl\{z \in {\mathbb C}\big|
|\Re z| \le \frac{\p}{2}, - \frac{\g}{2} \le \Im z \le 0\bigr\}$, ${\cal V}
\subset {\mathbb D}_+ = {\mathbb D}_- + \frac{\i \g}{2}$ such that
$\card {\cal U} = \card {\cal V} = \nex$. Let
\begin{equation}
     f(\la) = \prod_{\m \in {\cal V} \ominus {\cal U}}
              \dh_1 (\la - \m|q^2)
\end{equation}
and, for all $u \in {\cal U}$, $v \in {\cal V}$,
\begin{equation}
     t_u = \frac{1}{(1/f)' (u)} \epc \qd
     \overline{t}_v = \frac{1}{f' (v)} \epp
\end{equation}
Then
\begin{equation} \label{ombaromleftnulls}
     \sum_{w \in {\cal U}} t_w
        \Om (w, v|{\cal U}, {\cal V})\bigr|_{\a = 0} = 0 \epc \qd
     \sum_{w \in {\cal V}} \overline{t}_w
        \overline{\Om} (w, u|{\cal U}, {\cal V})\bigr|_{\a = 0} = 0 \epc
\end{equation}
for all $u \in {\cal U}$, $v \in {\cal V}$.
\end{lemma}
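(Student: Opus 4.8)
The plan is to establish, for each fixed $v\in{\cal V}$, that $\sum_{w\in{\cal U}}t_w\,\Om(w,v|{\cal U},{\cal V})|_{\a=0}=0$ directly from the integral representation (\ref{defombarom}) by a contour‑deformation argument; the statement for $\overline\Om$ then follows verbatim from the mirror substitution ${\cal U}\leftrightarrow{\cal V}$, $\Si_0\mapsto-\Si_0$, $\Ph^{(+)}\leftrightarrow\Ph^{(-)}$ together with a reflection of the pertinent strip through the real axis. The first step is the remark that $t_w=\res_{\la=w}f(\la)$, with $f(\la)=\prod_{\m\in{\cal V}\ominus{\cal U}}\dh_1(\la-\m|q^2)$. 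Since the only $\la$‑dependence of $\Om(\la,v|{\cal U},{\cal V})$ sits in the two occurrences of $g_{\Si_0}$, the function $\Om(\cdot,v|{\cal U},{\cal V})|_{\a=0}$ is holomorphic in the open strip $-\g/2<\Im\la<0$, where $f$ has precisely the simple poles ${\cal U}$; hence
\[
     \sum_{w\in{\cal U}}t_w\,\Om(w,v|{\cal U},{\cal V})\big|_{\a=0}
        =\frac{1}{2\p\i}\oint_{\G_{\cal U}}f(\la)\,\Om(\la,v|{\cal U},{\cal V})\big|_{\a=0}\,\rd\la \epc
\]
with $\G_{\cal U}$ a small positively oriented loop around ${\cal U}$ (the boundary cases in ${\mathbb D}_-$ follow by continuity). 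It remains to blow $\G_{\cal U}$ up to the boundary $\6 D$ of a fundamental parallelogram for the lattice $\p{\mathbb Z}+2\i\g{\mathbb Z}$ and to check that nothing survives.

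The decisive point is that at $\a=0$ the integrand becomes genuinely doubly periodic. On the one hand $f$ is $\p$‑periodic and $f(\la+2\i\g)=\re^{2\i\sum_{\m\in{\cal V}\ominus{\cal U}}\m}f(\la)$, this being the nome‑$q^2$ quasi‑periodicity of $\dh_1(\cdot|q^2)$. On the other hand the relation $g_\a(\la,\m)=-\re^{2\i\a}g_\a(\la+\i\g,\m)$ from (\ref{shiftsbyigamma}) gives, upon iteration, $g_{\Si_0}(\la+2\i\g,\m)=\re^{-4\i\Si_0}g_{\Si_0}(\la,\m)$, so by (\ref{defombarom}) one gets $\Om(\la+2\i\g,v|{\cal U},{\cal V})|_{\a=0}=\re^{-4\i\Si_0}\Om(\la,v|{\cal U},{\cal V})|_{\a=0}$ and $\p$‑periodicity in $\la$; and since $\Si_0|_{\a=0}=-\p k/2+\tfrac12\sum_{\m\in{\cal V}\ominus{\cal U}}\m$ one has exactly $\re^{4\i\Si_0}=\re^{2\i\sum_{\m\in{\cal V}\ominus{\cal U}}\m}$. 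Thus $f(\la)\Om(\la,v|{\cal U},{\cal V})|_{\a=0}$ is elliptic with periods $\p$ and $2\i\g$; it is the value $\a=0$ that forces the two multipliers to cancel, which is why the statement holds only there. Consequently $\oint_{\6 D}f\Om=0$, and after the deformation the sum we want equals minus the contributions to $\oint_{\6 D}$ of all the singularities of $f\Om$ inside $D$ other than ${\cal U}$.

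Inside $D$ the function $f\Om$ has two further kinds of singularity. The term $g_{\Si_0}(\la,v)$ contributes simple poles at $v$ and at $v\pm\i\g$; the pole at $\la=v$ is removed by the zero of $f$ there, while exactly one of $v\pm\i\g$, say $v+\i\g$, lies in $D$ and contributes a residue of the form $\re^{-2\i\Si_0}f(v+\i\g)$ (evaluated from $\res_{\la=v+\i\g}g_{\Si_0}(\la,v)$ via the same shift relation). Moreover the single integral in (\ref{defombarom}) is not meromorphic in $\la$: when $\la+\i\g{\mathbb Z}$ crosses the segment $[-\p/2,\p/2]$ the pole $\z=\la$ (respectively $\z=\la-\i\g$) of $g_{\Si_0}(\la,\z)$ pinches the contour, so $\Om(\cdot,v|{\cal U},{\cal V})|_{\a=0}$ jumps across the two cuts $\Im\la\in\{0,\g\}$ present in $D$; using $\res_{\z=\la}g_{\Si_0}(\la,\z)=1$ and (\ref{shiftsbyigamma}) both jumps are explicit multiples of $\Ph^{(-)}(v)\,\re(\la-v)/\Ph^{(+)}(\la)$. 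The closing computation: in the jump integral along $\Im\la=\g$ substitute $\la\mapsto\la+\i\g$ and use $\Ph^{(-)}(\la)=\re^{-2\i\Si_0}\Ph^{(+)}(\la+\i\g)$, $\re(-\la)=\re(\la+\i\g)$ and the half‑period transformation of $f$ (expressing $f(\la+\i\g)$ through $\dh_4(\cdot|q^2)$); at $\a=0$ all the theta‑ and $q$‑gamma prefactors, together with the exponentials, collapse, and the two jump integrals combine into an integral along $[-\p/2,\p/2]$ which, by the residue theorem on the cylinder ${\mathbb C}/\p{\mathbb Z}$, equals $-\re^{-2\i\Si_0}f(v+\i\g)$. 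Hence the three extra contributions cancel, the claimed identity follows, and via (\ref{omalphazero}) this yields ${\cal G}^{(0)}(m,t,0)=1$.

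The main obstacle is precisely this last bookkeeping. Because the single‑integral term carries branch jumps rather than poles, the ordinary residue theorem must be replaced by one with jump‑integral corrections, and one has to track the orientation of each cut as well as the residue picked up at $v+\i\g$; the delicate point is that all the $\dh$‑ and $\G_{q^4}$‑factors and the exponentials $\re^{\pm\i\sum\m}$, $\re^{\pm2\i\Si_0}$ match only after one feeds in the precise value $\Si_0|_{\a=0}=-\p k/2+\tfrac12\sum_{\m\in{\cal V}\ominus{\cal U}}\m$. Everything else in the argument is soft.
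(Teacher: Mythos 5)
Your proposal is correct in substance, and I could verify that it closes, but it organizes the argument differently from the paper, so let me compare. The paper never integrates $f\,\Om$ itself around a period cell: it applies a plain residue computation only to the genuinely meromorphic product $f(\z)\,g_{\Si_0}(\z,\m)$ over the rectangle of height $2\g$ (equation (\ref{omrectint}), valid for all $\a$, with the boundary mismatch showing up as the factor $1-\re^{2\g\a}$), uses that identity twice — once with second argument $v$ and once with second argument $\z$ under the single integral in (\ref{defombarom}) — and then evaluates the resulting integral ${\cal I}(v)$ directly by shifting the $\z$-contour to $\pm\i\infty$. You instead keep $f\,\Om$ intact, exploit that at $\a=0$ it is exactly elliptic with periods $\p$ and $2\i\g$ (your multiplier computation $\re^{4\i\Si_0}=\re^{2\i\sum_{\m\in{\cal V}\ominus{\cal U}}\m}$ is right and is the same mechanism as the paper's $1-\re^{2\g\a}$), and compensate for the fact that $\Om(\cdot,v)$ is only piecewise meromorphic by jump-integral corrections across $\Im\la\in\{0,\g\}$. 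Tracking your scheme through, the residue at $v+\i\g$ is $\re^{-2\i\Si_0}f(v+\i\g)$ and the two jump integrals combine (after the shift $\la\mapsto\la+\i\g$, $\Ph^{(-)}(\la)=\re^{-2\i\Si_0}\Ph^{(+)}(\la+\i\g)$ and $\re(-\la)=\re(\la+\i\g)$) into exactly the paper's ${\cal I}(v)$, so the cancellation you assert is the same one the paper exhibits. What each route buys: the paper's version avoids any jump bookkeeping and yields the stronger all-$\a$ identity (\ref{sumtom}); yours is conceptually tidier at $\a=0$ (one elliptic function, one cell) at the price of a residue theorem with cut corrections and more delicate sign/orientation tracking.

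One caveat: the final step you describe as ``residue theorem on the cylinder'' is not mere bookkeeping. To close the combined jump integral upward for the $f(\z)/\Ph^{(+)}(\z)$ piece and downward for the $\re^{-2\i\Si_0}f(\z+\i\g)/\Ph^{(+)}(\z)$ piece you must know that these ratios are pole-free and decaying in ${\mathbb H}_+$ and ${\mathbb H}_-$ respectively (the poles of $f$ at ${\cal U}+2\i\g{\mathbb Z}$ have to be eaten by poles of $\Ph^{(+)}$), which is precisely what the paper establishes first by rewriting $f/\Ph^{(+)}$, $f(\cdot+\i\g)/\Ph^{(+)}$, etc., as products of $\G_{q^4}$ ratios via (\ref{scdgammafuneq}). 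Your write-up should state and prove this analytic input explicitly rather than subsume it under ``all the prefactors collapse''; with that added, and with the boundary cases $\Im u=0$, $\Im v=0$ handled by a small deformation rather than bare continuity, the argument is complete.
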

\begin{proof}
Using (\ref{defotherdhs}) and (\ref{scdgammafuneq}) we can recast
the function $f$ in the form
\begin{equation}
     f(\la) = (-1)^k \re^{- 2 \i \Si_0 - \g \a}
              \prod_{w \in {\cal U} \ominus {\cal V}}
	         \tst{\G_{q^4} \Bigl(1 - \frac{\i (\la - w)}{2\g}\Bigr)
	              \G_{q^4} \Bigl(\frac{\i (\la - w)}{2\g}\Bigr)} \epp
\end{equation}
Then the definitions (\ref{defphpm}) of the functions $\Ph^{(\pm)}$
together with the first functional equation (\ref{gammabarnesfun}) of
the $q$-gamma function imply that
\begin{subequations}
\begin{align}
     \frac{f(\la)}{\Ph^{(+)} (\la)} & =
        (-1)^k \re^{\i \Si_0 + \g \a}
	\prod_{w \in {\cal V} \ominus {\cal U}}
	\frac{\G_{q^4} \Bigl(\2 - \frac{\i (\la - w)}{2\g}\Bigr)}
	     {\G_{q^4} \Bigl(- \frac{\i (\la - w)}{2\g}\Bigr)} \epc \\[1ex]
     \frac{f(\la + \i \g)}{\Ph^{(+)} (\la)} & =
        (-1)^k \re^{\i \Si_0 + \g \a}
	\prod_{w \in {\cal V} \ominus {\cal U}}
	\frac{\G_{q^4} \Bigl(1 + \frac{\i (\la - w)}{2\g}\Bigr)}
	     {\G_{q^4} \Bigl(\2 + \frac{\i (\la - w)}{2\g}\Bigr)} \epc \\[1ex]
     \frac{\Ph^{(-)} (\la)}{f(\la)} & =
        (-1)^k \re^{\i \Si_0 + \g \a}
	\prod_{w \in {\cal V} \ominus {\cal U}}
	\frac{\G_{q^4} \Bigl(\frac{\i (\la - w)}{2\g}\Bigr)}
	     {\G_{q^4} \Bigl(\2 + \frac{\i (\la - w)}{2\g}\Bigr)} \epc \\[1ex]
     \frac{\Ph^{(-)} (\la)}{f(\la - \i \g)} & =
        (-1)^k \re^{\i \Si_0 + \g \a}
	\prod_{w \in {\cal V} \ominus {\cal U}}
	\frac{\G_{q^4} \Bigl(\2 - \frac{\i (\la - w)}{2\g}\Bigr)}
	     {\G_{q^4} \Bigl(1 - \frac{\i (\la - w)}{2\g}\Bigr)} \epp
\end{align}
\end{subequations}
From these representations we can read off the analytic properties
of the functions on the left hand side which will become important
below.

Further note that
\begin{subequations}
\begin{align} \label{piperiodfg}
     f(\la + \p)\, g_{\pm \Si_0} (\la + \p, \m) &
        = f(\la)\, g_{\pm \Si_0} (\la, \m) \epc \\[1ex] \label{gammapperiodfg}
     f(\la + 2 \i \g)\, g_{\Si_0} (\la + 2 \i \g, \m) &
        = \re^{2 \g \a} f(\la)\, g_{\Si_0} (\la, \m) \epc \\[1ex]
     \frac{g_{- \Si_0} (\la - 2 \i \g, \m)}{f(\la - 2 \i \g)} &
        = \re^{2 \g \a} \frac{g_{- \Si_0} (\la, \m)}{f(\la)} \epp \label{gammamperiodfg}
\end{align}
\end{subequations}

Denote by ${\cal L} \subset {\mathbb C}$ the rectangle with corners
$- \frac{\p}{2} - \frac{\i \g}{2}$, $\frac{\p}{2} - \frac{\i \g}{2}$,
$\frac{\p}{2} + \frac{3 \i \g}{2}$, $- \frac{\p}{2} + \frac{3 \i \g}{2}$
and by $\6 {\cal L}$ its positively oriented boundary. Then, for
$\m \in {\mathbb D}_+$,
\begin{multline} \label{omrectint}
     \int_{\6 {\cal L}} \frac{\rd \z}{2 \p \i} f(\z) g_{\Si_0} (\z, \m) =
        \sum_{w \in {\cal U}} t_w g_{\Si_0} (w, \m) - f(\m)
	+ \re^{- 2 \i \Si_0} f(\m + \i \g) \\[-1ex]
     = \bigl(1 - \re^{2 \g \a}\bigr)
       \int_{- \frac{\p}{2}}^{\frac{\p}{2}} \frac{\rd \z}{2 \p \i}
          \tst{f\bigl(\z - \frac{\i \g}{2}\bigr)
	       g_{\Si_0} \bigl(\z - \frac{\i \g}{2}, \m\bigr)} \epp
\end{multline}
Here we have used the residue theorem in the first equation and
the quasi periodicity (\ref{piperiodfg}), (\ref{gammapperiodfg})
in the second equation. Using (\ref{omrectint}) and the fact that
$f(v) = 0$ for all $v \in {\cal V}$, we see that
\begin{multline} \label{sumtompre}
     \sum_{w \in {\cal U}} t_w \Om (w, v) =
        - \re^{- 2 \i \Si_0} f(v + \i \g) - {\cal I} (v) \\[-2ex]
	+ \bigl(1 - \re^{2 \g \a}\bigr)
	  \int_{- \frac{\p}{2}}^{\frac{\p}{2}} \frac{\rd \z}{2 \p \i}
          \tst{f\bigl(\z - \frac{\i \g}{2}\bigr) \Om \bigl(\z - \frac{\i \g}{2}, \m\bigr)} \epc
\end{multline}
where
\begin{equation}
     {\cal I} (v) = \Ph^{(-)} (v)
        \int_{- \frac{\p}{2}}^{\frac{\p}{2}} \frac{\rd \z}{2 \p \i}
	   \biggl\{\frac{f(\z)}{\Ph^{(+)} (\z)}
	           - \frac{\re^{- 2 \i \Si_0} f(\z + \i \g)}{\Ph^{(+)} (\z)} \biggr\}
		   \re(\z - v) \epp
\end{equation}
The latter integral can be calculated by means of the residue
theorem, since $f(\z)/\Ph^{(+)}(\z)$ is holomorphic and bounded
in ${\mathbb H}_+$, whereas $f(\z + \i \g)/\Ph^{(+)}(\z)$ is
holomorphic and bounded in ${\mathbb H}_-$. For this reason we
can shift the integration contour to $[-\p/2,\p/2] + \i \infty$
for the calculation of the first contribution to the integral
and to  $[-\p/2,\p/2] - \i \infty$ for the second contribution.
Altogether we obtain
\begin{equation}
     {\cal I} (v) = \frac{\Ph^{(-)} (v) f(v)}{\Ph^{(+)} (v)} -
                    \frac{\Ph^{(-)} (v) f(v + \i \g)}{\Ph^{(+)} (v + \i \g)}
                  = - \re^{- 2 \i \Si_0} f(v + \i \g) \epp
\end{equation}
In the second equation we have used equation (\ref{shiftsbyigamma})
and the fact that $f(v) = 0$ for all $v \in {\cal V}$. Substituting
this back into (\ref{sumtompre}) we see that
\begin{equation} \label{sumtom}
     \sum_{w \in {\cal U}} t_w \Om (w, v) =
	\bigl(1 - \re^{2 \g \a}\bigr)
	\int_{- \frac{\p}{2}}^{\frac{\p}{2}} \frac{\rd \z}{2 \p \i}
        \tst{f\bigl(\z - \frac{\i \g}{2}\bigr) \Om \bigl(\z - \frac{\i \g}{2}, \m\bigr)}
\end{equation}
for all $v \in {\cal V}$ which implies the first equation
(\ref{ombaromleftnulls}).

For the proof of the second equation we start by integrating
the function $g_{- \Si_0} (\cdot, \m)/f(\cdot)$ with $\m \in
{\mathbb D}_-$ over the boundary of the rectangle $\6 \overline {\cal L}$
with corners $- \frac{\p}{2} + \frac{\i \g}{2}$, $- \frac{\p}{2}
- \frac{3 \i \g}{2}$, $\frac{\p}{2} - \frac{3 \i \g}{2}$,
$\frac{\p}{2} + \frac{\i \g}{2}$ and use similar arguments as above.
\end{proof}

\subsection{The series for the two-point function}
It follows from (\ref{szszgenerated}) and (\ref{defalpha}) that
\begin{equation}
     \bigl\<\s_1^z \s_{m+1}^z (t)\bigr\> =
        \tst{\2} \6_{\g \a}^2 D_m^2 {\cal G}^{(0)} (m+1, t, \a)\bigr|_{\a = 0} \epp
\end{equation}
Using Lemma~\ref{lem:ombaromleftevs} we thus obtain the following
form factor series for the two-point function.
\begin{corollary}
In the zero-temperature limit the longitudinal dynamical two-point 
function has the series representation
\begin{multline}
     \bigl\<\s_1^z \s_{m+1}^z (t)\bigr\> = (-1)^m \frac{\dh_1'^2}{\dh_2^2} \\[.5ex]
        + \sum_{\substack{\nex \in {\mathbb N}\\k = 0, 1}}
	  \frac{(-1)^{km}}{(\nex !)^2}
	     \int_{{\cal C}_h^\nex} \frac{\rd^\nex u}{(2\p)^\nex}
	     \int_{{\cal C}_p^\nex} \frac{\rd^\nex v}{(2\p)^\nex} \:
	     {\cal A}^{zz} ({\cal U}, {\cal V}|k)
	     \re^{- \i \sum_{\la \in {\cal U} \ominus {\cal V}}
	             (m p(\la) + t \e (\la|h))} \epc
\end{multline}
where
\begin{multline} \label{azz}
     {\cal A}^{zz} ({\cal U}, {\cal V}|k) =
	(-1)^{(\nex - 1)}
	4 \sin^2 \biggl(\frac{k \p}{2}
	                + \sum_{\la \in {\cal U}\ominus {\cal V}} \frac{p(\la)}{2}\biggr)
        \frac{\dh_2^2 (\Si_0)}{\dh_2^2}
	\biggl[\prod_{\la, \m \in {\cal U}\ominus {\cal V}}
	       \mspace{-18.mu} \Ps (\la - \m)\biggr] \\[.5ex] \times
        \Bigl(\6_{\g \a} \det_{\nex} \{\Om (u_j,v_k|{\cal U}, {\cal V})\}\Bigr)
	\Bigl(\6_{\g \a} \det_{\nex} \{\overline{\Om} (v_j,u_k|{\cal U}, {\cal V})\}
	      \Bigr)\Bigr|_{\a = 0} \epp
\end{multline}
\end{corollary}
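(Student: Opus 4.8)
The plan is to substitute the series (\ref{genfunfinalseries}) for ${\cal G}^{(0)}$ into the identity $\bigl\<\s_1^z\s_{m+1}^z(t)\bigr\> = \tst{\2}\6_{\g\a}^2 D_m^2{\cal G}^{(0)}(m+1,t,\a)\bigr|_{\a=0}$ recorded above, and to apply the operator $\tst{\2}\6_{\g\a}^2 D_m^2$ term by term, sending $\a\to0$ at the end. Two terms are special. The purely $m$-independent piece $\dh_2^2(\i\g\a/2)/\dh_2^2$ is annihilated by $D_m^2$ and drops out. For the piece $(-1)^{m+1}\dh_1^2(\i\g\a/2)/\dh_2^2$ one uses $D_m^2(-1)^{m+1}=-4(-1)^m$ together with $\dh_1(\i\g\a/2)=\dh_1'\,\i\g\a/2+{\cal O}((\g\a)^3)$, so that $\tst{\2}\6_{\g\a}^2\dh_1^2(\i\g\a/2)\bigr|_{\a=0}=-\dh_1'^2/4$; multiplying gives the first term $(-1)^m\dh_1'^2/\dh_2^2$ of the corollary.

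For a generic summand with fixed $\nex$ and $k$, I would first collect its $m$-dependence (after the shift $m\to m+1$) into $w^{m+1}$ with $w=(-1)^k\exp\bigl\{-\i\sum_{\la\in{\cal U}\ominus{\cal V}}p(\la)\bigr\}$, and use $D_m^2 w^{m+1}=(w-1)^2 w^{m-1}$ together with $(w-1)^2/w=w-2+w^{-1}=-4\sin^2\bigl(\tst{\frac{k\p}{2}}+\sum_{\la\in{\cal U}\ominus{\cal V}}\tfrac{p(\la)}{2}\bigr)$, re-expanding $w^m$ to restore the factor $(-1)^{km}\re^{-\i m\sum p(\la)}$. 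Next I would apply $\6_{\g\a}^2$ at $\a=0$: the entire $\a$-dependence of the summand sits in ${\cal A}({\cal U},{\cal V}|k)$, which is the product of the prefactor $\tfrac{\dh_2^2(\Si_0({\cal U},{\cal V}))}{\dh_2^2}\prod_{\la,\m\in{\cal U}\ominus{\cal V}}\Ps(\la-\m)$ (with $\Si_0$ depending on $\g\a$ affinely by (\ref{defsizero})) and the two determinants $\det_\nex\{\Om(u_j,v_k|{\cal U},{\cal V})\}$, $\det_\nex\{\overline{\Om}(v_j,u_k|{\cal U},{\cal V})\}$. By Lemma~\ref{lem:ombaromleftevs} each of these determinants vanishes at $\a=0$, since it carries a non-trivial left null vector; therefore, in the Leibniz expansion of $\6_{\g\a}^2$ of this triple product every term that survives at $\a=0$ must differentiate both determinant factors once, leaving exactly $2\,[\text{prefactor}]_{\a=0}(\6_{\g\a}\det_\nex\{\Om\})(\6_{\g\a}\det_\nex\{\overline{\Om}\})\bigr|_{\a=0}$, and the factor $2$ cancels the $\tst{\2}$. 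Finally the measure in (\ref{genfunfinalseries}) is normalised with $(2\p\i)^{-\nex}$ while (\ref{azz}) uses $(2\p)^{-\nex}$; rewriting it contributes a factor $(-1)^{\nex}$ which, combined with the $-1$ coming from $(w-1)^2/w$, produces the overall sign $(-1)^{\nex-1}$ in (\ref{azz}). Assembling the three contributions yields the claimed series with ${\cal A}^{zz}({\cal U},{\cal V}|k)$.

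The main obstacle, beyond the sign and constant bookkeeping, is the rigorous justification of the third step: one must be allowed to differentiate twice in $\g\a$ and to apply $D_m^2$ underneath the sum over $\nex,k$ and underneath the contour integrals over ${\cal C}_h^\nex$, ${\cal C}_p^\nex$, and to interchange all these operations. This rests on the analyticity of the integrands in a neighbourhood of the compact contours and of $\a=0$, and on the uniform convergence of (\ref{genfunfinalseries}) there, which I would invoke from the construction of that series rather than re-establish. A secondary technical point is to check that, after the cancellation of the $\tst{\2}$, all the $q$-gamma and $q$-Barnes prefactors recombine via the functional equations (\ref{gammabarnesfun}) into precisely the function $\Ps$ of (\ref{defpsi}) and into the kernels (\ref{defombarom}) defining $\Om$, $\overline{\Om}$, so that the amplitude indeed collapses to the compact form displayed in (\ref{azz}).
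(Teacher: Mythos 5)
Your proposal is correct and follows the same route as the paper: the paper derives this corollary precisely by applying $\tst{\2}\6_{\g\a}^2 D_m^2$ at $\a=0$ to the series (\ref{genfunfinalseries}) and invoking Lemma~\ref{lem:ombaromleftevs} to kill all Leibniz terms except the one where each $\a$-derivative hits one of the two determinants. Your bookkeeping — the constant term annihilated by $D_m^2$, the staggered term giving $(-1)^m\dh_1'^2/\dh_2^2$, the factor $-4\sin^2\bigl(\frac{k\p}{2}+\sum p/2\bigr)$ from $D_m^2$ acting on $w^{m+1}$, and the $(2\p\i)^{-2\nex}\to(2\p)^{-2\nex}$ conversion combining to $(-1)^{\nex-1}$ — is exactly the computation the paper leaves implicit, and it checks out.
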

In the latter expression we take the $\a$-derivatives explicitly
and use the symmetry of the integrand to simplify the series.
As a result we obtain a formula in which the first columns of
the determinants of $\Om$ and $\overline \Om$ are modified.
Defining
\begin{equation}
     \Si_0^0 = \Si_0\bigr|_{\a = 0} \epc \qd
     g_{\Si_0^0}^{\rm mod} (\la, \m) =
        g_{\Si_0^0} (\la, \m) \frac{\dh_2' (\m - \la - \Si_0^0)}{\dh_2 (\m - \la - \Si_0^0)}
\end{equation}
and, for $j = 1, \dots, \nex$,
\enlargethispage{2ex}
\begin{subequations}
\begin{align}
     \Om^{\rm mod} (u_j, v_k|{\cal U}, {\cal V}) & =
          \begin{cases}
	  - 2 \i g_{\Si_0^0} (u_j, v_k) + g_{\Si_0^0}^{\rm mod} (u_j, v_k) \\[.5ex]
	  \mspace{-2.mu}
	  - \int_{- \p/2}^{\p/2} \frac{\rd \z}{2 \p \i}
	                         \frac{\Ph^{(-)} (v_k)}{\Ph^{(+)} (\z)} \,
				 g_{\Si_0^0}^{\rm mod} (u_j, \z) \re(\z - v_k)\
				 \text{for $k = 1$,}\\[2ex]
	  \Om (u_j, v_k|{\cal U}, {\cal V})_{\a = 0}\
	  \text{for $k = 2, \dots, \nex$,}
	  \end{cases} \\[1.5ex]
     \overline{\Om}^{\rm mod} (v_j, u_k|{\cal U}, {\cal V}) & =
          \begin{cases}
	  2 \i g_{- \Si_0^0} (v_j, u_k) + g_{- \Si_0^0}^{\rm mod} (v_j, u_k) \\[.5ex]
	  \mspace{-6.mu}
	  - \int_{- \p/2}^{\p/2} \frac{\rd \z}{2 \p \i}
	                         \frac{\Ph^{(-)} (u_k)}{\Ph^{(+)} (\z)} \,
				 g_{- \Si_0^0}^{\rm mod} (v_j, \z) \re(u_k - \z)\
				 \text{for $k = 1$,}\\[2ex]
	  \overline{\Om} (u_j, v_k|{\cal U}, {\cal V})_{\a = 0}\
	  \text{for $k = 2, \dots, \nex$}
	  \end{cases}
\end{align}
\end{subequations}
we can formulate the following theorem.
\begin{theorem}
{\bf Longitudinal two-point functions at zero temperature.}\\
The series
\begin{multline} \label{seriousseries}
     \bigl\<\s_1^z \s_{m+1}^z (t)\bigr\> = (-1)^m \frac{\dh_1'^2}{\dh_2^2} \\[1ex]
        - \sum_{\substack{\nex \in {\mathbb N}\\k = 0, 1}}
	  \frac{(-1)^{km + \nex}}{((\nex - 1) !)^2}
	     \int_{{\cal C}_h^\nex} \frac{\rd^\nex u}{(2\p)^\nex}
	     \int_{{\cal C}_p^\nex} \frac{\rd^\nex v}{(2\p)^\nex} \:
	     \re^{- \i \sum_{\la \in {\cal U} \ominus {\cal V}}
	             (m p(\la) + t \e (\la|h))} \\ \times
	\sin^2 \biggl(\frac{k \p}{2}
	              + \sum_{\la \in {\cal U}\ominus {\cal V}} \frac{p(\la)}{2}\biggr)
        \frac{\dh_2^2 (\Si_0^0)}{\dh_2^2}
	\biggl[\prod_{\la, \m \in {\cal U}\ominus {\cal V}}
	       \mspace{-18.mu} \Ps (\la - \m)\biggr] \\[1ex] \times
        \det_{\nex} \{\Om^{\rm mod} (u_j,v_k|{\cal U}, {\cal V})\}
	\det_{\nex} \{\overline{\Om}^{\rm mod} (v_j,u_k|{\cal U}, {\cal V})\}
\end{multline}
represents the longitudinal dynamical two-point functions of the
XXZ chain in the antiferromagnetic massive regime at zero temperature.
\end{theorem}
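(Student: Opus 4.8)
The starting point is the corollary just above: the amplitude ${\cal A}^{zz}({\cal U},{\cal V}|k)$ of (\ref{azz}) already has the required shape, so the only task is to carry out the two $\6_{\g\a}$-derivatives at $\a=0$ and to rearrange. By Lemma~\ref{lem:ombaromleftevs} the matrices $\{\Om(u_j,v_k|{\cal U},{\cal V})\}$ and $\{\overline{\Om}(v_j,u_k|{\cal U},{\cal V})\}$ possess, at $\a=0$, a one-dimensional left kernel; in particular $\det_\nex\{\Om\}|_{\a=0}=\det_\nex\{\overline{\Om}\}|_{\a=0}=0$ and both matrices have corank one. The plan is, first, to expand each derivative by columns, $\6_{\g\a}\det_\nex\{\Om(u_j,v_k)\}|_{\a=0}=\sum_{l=1}^\nex\det_\nex\{M^{(l)}\}$, where $M^{(l)}$ carries $(\6_{\g\a}\Om(u_j,v_l))|_{\a=0}$ in its $l$-th column and $\Om(u_j,v_k)|_{\a=0}$ elsewhere, and analogously for $\overline{\Om}$.

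Next I would differentiate explicitly. The whole $\a$-dependence of $\Om$ in (\ref{defombarom}) enters through $\Si_0$: via $g_{\Si_0}$ and via the ratio $\Ph^{(-)}(\m)/\Ph^{(+)}(\z)=\re^{-2\i\Si_0}\times(\text{$\a$-independent})$ inside the integral. Since $\6_{\g\a}\Si_0=\tst{\frac{\i}{2}}$ and $\6_s\ln(\dh_2(x-s)/\dh_2(s))=-\dh_2'(x-s)/\dh_2(x-s)-\dh_2'(s)/\dh_2(s)$, a short computation gives
\[
  \6_{\g\a}g_{\Si_0}(\la,\m)\bigr|_{\a=0}
   = -\tst{\frac{\i}{2}}\,g^{\rm mod}_{\Si_0^0}(\la,\m)
     -\tst{\frac{\i}{2}}\,\frac{\dh_2'(\Si_0^0)}{\dh_2(\Si_0^0)}\,g_{\Si_0^0}(\la,\m) ,
\]
while $\6_{\g\a}(\Ph^{(-)}(\m)/\Ph^{(+)}(\z))|_{\a=0}=(\Ph^{(-)}(\m)/\Ph^{(+)}(\z))|_{\a=0}$. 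Collecting the pieces and using that $\Om(\la,\m)|_{\a=0}$ is built from $g_{\Si_0^0}$ and its integral transform in exactly the same way that $\Om^{\rm mod}$ is built from $g^{\rm mod}_{\Si_0^0}$ and $g_{\Si_0^0}$, the stray $g_{\Si_0^0}$-type contributions recombine, and I expect to find $\6_{\g\a}\Om(\la,\m)|_{\a=0}=-\tst{\frac{\i}{2}}\,\Om^{\rm mod}(\la,\m|{\cal U},{\cal V})+c_\Om(\la,\m)\,\Om(\la,\m)|_{\a=0}$ with a scalar factor $c_\Om$, and, analogously, $\6_{\g\a}\overline{\Om}(\la,\m)|_{\a=0}=\tst{\frac{\i}{2}}\,\overline{\Om}^{\rm mod}(\la,\m)+c_{\overline\Om}(\la,\m)\,\overline{\Om}(\la,\m)|_{\a=0}$.

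Substituting this back into $\det_\nex\{M^{(l)}\}$, the term $c_\Om\,\Om(u_j,v_l)|_{\a=0}$ in the $l$-th column makes that column proportional to the $l$-th column of the $\a=0$ matrix, which has rank $\nex-1$, so it contributes zero; hence $\det_\nex\{M^{(l)}\}$ equals $-\tst{\frac{\i}{2}}$ times the determinant whose $l$-th column is the modified formula and whose other columns are $\Om|_{\a=0}$, and similarly for $\overline{\Om}$ with $+\tst{\frac{\i}{2}}$. Then comes the symmetrisation: the factors $\re^{-\i\sum_{\la\in{\cal U}\ominus{\cal V}}(mp(\la)+t\e(\la|h))}$, $\prod_{\la,\m}\Ps(\la-\m)$, $\dh_2^2(\Si_0^0)/\dh_2^2$, $\sin^2(\cdots)$ and the measures are invariant under independent permutations of $\{u_j\}$ and of $\{v_j\}$; swapping the $v_l$- and $v_1$-columns of the $\Om$-determinant produces a sign cancelled by the simultaneous swap of the $v_l$- and $v_1$-rows of the $\overline{\Om}$-determinant, and symmetrically for the $u$'s. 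Hence each of the $\nex^2$ pairs $(l,l')$ yields, after relabelling the integration variables, the same integral, which turns $\frac{1}{(\nex!)^2}\sum_{l,l'}$ into $\frac{1}{((\nex-1)!)^2}$ and replaces the determinants by $\det_\nex\{\Om^{\rm mod}\}$ and $\det_\nex\{\overline{\Om}^{\rm mod}\}$. Finally $(-\tst{\frac{\i}{2}})(\tst{\frac{\i}{2}})=\tst{\frac14}$ cancels the factor $4$ of (\ref{azz}), and combining $(-1)^{\nex-1}$ with the $(-1)^{km}$ of the corollary produces the overall sign $-(-1)^{km+\nex}$; the contact term $(-1)^m\dh_1'^2/\dh_2^2$ is carried through unchanged. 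This is (\ref{seriousseries}).

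The step I expect to be the main obstacle is the second one: verifying that after differentiation the surplus terms assemble exactly into a scalar multiple of $\Om|_{\a=0}$ (respectively $\overline{\Om}|_{\a=0}$), so that they drop out of the determinant. This hinges on the precise normalisations of $g^{\rm mod}_{\Si_0^0}$, of the $\re^{-2\i\Si_0}$ prefactor in the $\Ph$-ratio, and on the value $\6_{\g\a}\Si_0=\i/2$, and it must be done in tandem with a careful account of the signs generated by the column and row swaps in the symmetrisation.
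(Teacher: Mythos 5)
Your proposal is correct and follows essentially the same route as the paper's (only sketched) derivation: differentiate the two determinants in (\ref{azz}) column by column, drop the contributions proportional to the vanishing $\a=0$ determinants, and use the symmetry of the integrand to relabel variables and move the modified column to the first position, which turns $(\nex!)^{-2}\sum_{l,l'}$ into $((\nex-1)!)^{-2}$, absorbs the factor $4$ against $(-\tst{\frac{\i}{2}})(\tst{\frac{\i}{2}})$, and yields the stated overall sign $-(-1)^{km+\nex}$. The identity you flagged as the main obstacle does hold exactly as you anticipated: with $\6_{\g\a}\Si_0=\tst{\frac{\i}{2}}$ one finds $\6_{\g\a}\Om(\la,\m)|_{\a=0}=-\tst{\frac{\i}{2}}\,\Om^{\rm mod}(\la,\m)+\bigl(1-\tst{\frac{\i}{2}}\,\frac{\dh_2'(\Si_0^0)}{\dh_2(\Si_0^0)}\bigr)\Om(\la,\m)|_{\a=0}$ and the analogous relation with $+\tst{\frac{\i}{2}}$ for $\overline\Om$, the stray $g_{\Si_0^0}$-type terms cancelling precisely against the $\mp 2\i\,g_{\pm\Si_0^0}$ pieces built into the definitions of $\Om^{\rm mod}$ and $\overline{\Om}^{\rm mod}$.
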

\subsection{\boldmath On the explicit evaluation of $\det_\nex \Om$ and
$\det_\nex \overline \Om$}
\label{sec:basehype}

\newcommand{\np}{\nonumber\\}
\def\pn2{\vskip 0.6cm\par\noindent}
\def\rg{\frac{3}{14}}
\def\CL{ {\cal L}}
\def\pn{\par\noindent}

\def\nt{\negthickspace}


In this section we shall present explicit expressions for
$\det_\nex \Omega$ and $\det_\nex \overline{\Omega}$
for  $\nex = 1, 2$.

The matrix elements $\Omega(x_j,y_k|\{x\}, \{y\})$ and 
$\overline{\Omega} (y_k,x_j|\{x\}, \{y\})$ with $\card \{x\} =
\card \{y\} = \nex$ are evaluated from (\ref{defombarom}).
Reflecting the equi-distant patterns of poles in the integrands, 
they assume very regular forms when written in terms of the basic
hypergeometric series \cite{GaRa04}, 
\begin{equation}
\phantom{ }_r \Phi_s 
\left (
{a_1, a_2, \cdots, a_r \atop
b_1, b_2, \cdots, b_s}
;q,z
\right)  
=\sum_{k=0}^{\infty}\frac{ (a_1, \cdots, a_r;q)_k}{ (b_1, \cdots, b_s,q;q)_k} 
\Bigl((-1)^k q^{\frac{k(k-1)}{2}}\Bigr)^{s+1-r} z^k.
\end{equation}
Here and hereafter we adopt the notation
\begin{equation}
     (a;q)_m = \prod_{k=0}^{m-1} (1 - a q^k) \epc \qd
(a_1, a_2,\cdots, a_k;q)_m = (a_1;q)_m   (a_2;q)_m \cdots  (a_k;q)_m
\end{equation}
for $q$-Pochhammer symbols.

The resulting $\Omega (x_j,y_k|\{x\},\{y\})$ is given by a linear
combination of $\phantom{}_{2 \nex} \Phi_{2 \nex - 1}(*;q^4,\zeta)$,
where arguments $*$ depend on $ \{ y_j\},  \{ x_j\}$ and $\alpha$.
The argument $\zeta$ also depends on $\alpha$.

The determinants, $\det_\nex \Omega$ and $\det_\nex \overline{\Omega}$,  
are already proven to be null if $\alpha=0$ in Section~\ref{sec:genfunseries}.
One can check that the row vectors (and column vectors) of the matrix
$\{\Omega (x_j,y_k|\{x\},\{y\})\}_{j,k=1}^\nex$ are not simply proportional
to each other.
This implies the existence of nontrivial identities among the $\nex$-products
of $\phantom{}_{2 \nex} \Phi_{2 \nex - 1}$ at $\alpha=0$. Even for $\nex = 2$,
interesting identities, such as non-terminating $q$-Saalsch{\"u}tz formulae,
naturally appear.
This might pose an interesting problem in the theory of basic hypergeometric series.
 
We, however, need to go beyond $\alpha=0$, as the derivatives of the
determinants $\det_\nex \Om$ and $\det_\nex \overline \Om$ at $\alpha=0$
are of our concern. There are many equivalent expressions for
$\Omega (x_j,y_k|\{x\},\{y\})$. This arbitrariness does not matter
much for $\nex = 1$, whereas one needs to take suitable linear
combinations and to select appropriate forms, in order to proceed with
$\nex \ge 2$ and  $\alpha \ne 0$. Their derivation thus becomes slightly
involved, and we decided to present the details in a separate publication
which will also include explicit numerical results.
Here we shall summarize the formulae for $\nex = 1, 2$ and their relation
to previous results.

We will sometimes use exponentiated variables, 
\[
     P_j={\rm e}^{2\i y_j} \epc \quad  H_k={\rm e}^{2\i x_k} \epp
 \]
For any function $f(y_1,\cdots,  x_1\cdots,\alpha)$ we write
\begin{equation}\label{defbar}
     \overline{f}(y_1,\cdots, x_1\cdots,\alpha)
        = f(-y_1,\cdots, -x_1\cdots,-\alpha) \epp
\end{equation}
The same notation is applied to a function with exponentiated variables,
\begin{equation}
     \overline{F}({P_1},\cdots, {H_1}, \cdots, {\rm e}^{\alpha \gamma})
        = F\biggl(\frac{1}{P_1},\cdots,\frac{1}{H_1},\cdots,{\rm e}^{-\alpha \gamma}\biggr) \epp
\end{equation}
By $f(\{y_1,y_2\},\cdots)$ we mean that $f$ is symmetric w.r.t.\
$y_1 \leftrightarrow y_2$.  

We then introduce
\begin{equation}\label{defmu}
     \mu(\{y\},\{x\},\alpha)
        = \frac{\vartheta'_1(0|q)  \vartheta_2(0|q) }
	       {2\vartheta_4(0|q^2) \vartheta_1(\sum_k (y_k - x_k)
	        + \i \alpha \gamma|q^2) } \epc
\end{equation}
where $\{y\}$ and $\{x\}$ mean that the function is symmetric with respect
to permutations of $\{y\}$ and $\{x\}$.

We first state the result for $\nex = 1$. Set further
\begin{equation}\label{defo_ne1}
     \mathfrak{o}_1(y_1, x_1)=
        \frac{1}{\bigl(q^2 \frac{H_1}{P_1}, q^4\frac{P_1}{H_1};q^4\bigr)_{\infty}}
\end{equation}
and
\begin{equation} \label{defPhi_ne1}
     \Phi^{\nex = 1}_1 (w_1,w_2,\alpha) =
        \phantom{}_2 \Phi_1 \Biggl(
	\begin{array}{@{}r@{}}
	   q^{-2}, \frac{Z_1}{Z_2} \\ q^2 \frac{Z_1}{Z_2}
        \end{array};
	q^4, q^4 {\rm e}^{-2\alpha \gamma} \Biggr) \epc  
\end{equation}
where $Z_j={\rm e}^{2\i w_j}$.

\begin{lemma}
Let $\nex = 1$. For arbitrary $\alpha$ the function $\Omega (x_1,y_1|\{x\},\{y\})$
is explicitly given by 
\begin{multline}\label{Omega_ne1}
     \Omega (x_1,y_1|\{x\},\{y\}) = \mu(y_1,x_1,\alpha)
        \frac{\vartheta_1(\i \alpha \gamma|q^2)}{\vartheta_1(y_1-x_1|q^2)}
        \biggl(\overline{\Phi}^{\nex = 1}_1 (y_1,x_1,\alpha) \\[1ex]
        - {\rm e}^{-2i\Sigma_0} \frac{\mathfrak{o}_1(y_1,x_1) }
	                             { \overline{\mathfrak{o}}_1(y_1,x_1) }
				     {\Phi}^{\nex = 1}_1 (y_1,x_1,\alpha) \biggr) \epp
\end{multline}
\end{lemma}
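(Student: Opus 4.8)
The plan is to prove (\ref{Omega_ne1}) by evaluating in closed form, for $\nex = 1$, the single integral occurring in the definition (\ref{defombarom}) of $\Om(x_1, y_1|\{x\},\{y\})$. First I would specialize every ingredient to one hole $x_1$ and one particle $y_1$. By (\ref{defsizero}) the constant becomes $\Si_0 = -\tst{\frac{\p k}{2}} + \tst{\frac{\i\g\a}{2}} + \tst{\frac{y_1 - x_1}{2}}$, and by (\ref{defphpm}) the functions $\Ph^{(\pm)}$ reduce to explicit ratios of four $\G_{q^4}$-functions. Using the conversion (\ref{scdgammafuneq}) between $\G_{q^4}$ and $\dh_4$, the relations (\ref{defotherdhs}) among the Jacobi theta functions and the product representation (\ref{proddh4}), I would rewrite the whole integrand
\[
     \frac{\Ph^{(-)}(y_1)}{\Ph^{(+)}(\z)}\,g_{\Si_0}(x_1,\z)\,\re(\z - y_1)
\]
as a $\p$-periodic meromorphic function of $\z$ expressed through $q$-Pochhammer symbols of base $q^4$ in the variable $\re^{2\i\z}$. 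At this stage one reads off the pole structure: apart from a few isolated poles near the real axis, the singularities organize into equidistant vertical families spaced by $2\i\g$ (those of $1/\Ph^{(+)}$ in the two half planes), and the $q$-multifactorials decay as $\Im\z\to\pm\infty$; this is the structural reason the result is a basic hypergeometric series.

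Next, in the spirit of the contour manipulations used in the proofs of Lemma~\ref{lem:ombaromleftevs} and Lemma~\ref{lem:baseint}, I would split the integrand --- using $\re(\z - y_1) = \ctg(\z - y_1) - \ctg(\z - y_1 - \i\g)$ together with the $\i\g$-quasiperiodicity relations (\ref{shiftsbyigamma}) for $g_{\Si_0}$ and $\Ph^{(\pm)}$ --- into a piece holomorphic and bounded in the upper half plane and a piece holomorphic and bounded in the lower half plane. Shifting the contour $[-\p/2,\p/2]$ to $[-\p/2,\p/2]+\i\infty$, respectively to $[-\p/2,\p/2]-\i\infty$, each contribution collapses onto the residue sum over the single equidistant pole family surviving in that half plane. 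For such a family the $n$-th residue carries a factor $q^{4n}\re^{\mp 2\a\g n}$ times a ratio of the form $(1 - \zeta q^{4(n-1)})/(1 - q^2\zeta q^{4(n-1)})$, with $\zeta = H_1/P_1$ or $P_1/H_1$ ($P_1 = \re^{2\i y_1}$, $H_1 = \re^{2\i x_1}$), so the sum is precisely ${}_2\Phi_1(q^{-2}, \zeta; q^2\zeta; q^4, q^4\re^{\mp 2\a\g})$ --- that is, $\Phi^{\nex=1}_1$ of (\ref{defPhi_ne1}) and its bar $\overline{\Phi}^{\nex=1}_1$ in the sense of (\ref{defbar}). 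The boundary term produced when the contour crosses the first pole of $1/\Ph^{(+)}$, evaluated with (\ref{shiftsbyigamma}) and (\ref{splitetoz}), supplies the relative coefficient $-\re^{-2\i\Si_0}\mathfrak{o}_1(y_1,x_1)/\overline{\mathfrak{o}}_1(y_1,x_1)$, while the $\z$-independent factors of $g_{\Si_0}$, after re-expressing $\dh_2$ through $\dh_1$ and $\dh_4$, combine into the overall prefactor $\mu(y_1,x_1,\a)\,\dh_1(\i\a\g|q^2)/\dh_1(y_1-x_1|q^2)$ of (\ref{defmu}).

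I expect the main obstacle to be purely computational bookkeeping: keeping track of which equidistant pole family lands above and which below the axis after the $\i\g$-shift that decouples $\Ph^{(+)}$ from $\Ph^{(-)}$, and then reducing the accumulated factors of $q^{1/4}$, $\re^{\i x_1}$, $\re^{\i y_1}$, signs and theta constants to exactly the form $\mu\cdot\dh_1(\i\a\g|q^2)/\dh_1(y_1-x_1|q^2)$ with nothing left over --- the paper itself remarks that there are many equivalent-looking representations, and one must single out this one. The step identifying a geometric-type residue sum with a ${}_2\Phi_1$ is routine once the pole lattice has been exhibited. As a consistency check I would verify that at $\a = 0$ the factor $\dh_1(\i\a\g|q^2)$ vanishes, recovering $\Om|_{\a=0}=0$ from Lemma~\ref{lem:ombaromleftevs}, and that the pole of the right-hand side of (\ref{Omega_ne1}) at $y_1 = x_1$ is consistent with the pole of the $g_{\Si_0}$-term in (\ref{defombarom}).
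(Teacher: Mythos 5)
You are right that the only route available is a direct evaluation of the single integral in (\ref{defombarom}), and your overall plan --- exhibit the $\p$-periodic integrand in $q^4$-Pochhammer form, locate the equidistant pole lattices spaced by $2\i\g$, close the contour and resum the residues into ${}_2\Phi_1$ series --- is the natural one; note, however, that the paper itself does not prove this lemma (the derivation is explicitly deferred to a separate publication, the text only remarking that the "equi-distant patterns of poles" are responsible for the basic hypergeometric structure), so there is no in-paper argument to compare with, and your proposal has to stand on its own.

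As written it does not yet stand, for concrete reasons. First, the pole inventory is incomplete: besides the two families of $1/\Ph^{(+)}(\z)$ anchored at $y_1$ (at $\z=y_1-\i\g(2n+1)$ and $\z=y_1+2\i\g(n+1)$), the kernel $g_{\Si_0}(x_1,\z)$ has poles on the whole lattice $\z\equiv x_1 \pmod{\p,\i\g}$, and only the sublattices $x_1-\i\g(2n+1)$ and $x_1+2\i\g(n+1)$ are cancelled by the poles of $\Ph^{(+)}$; the complementary sublattices (including the point $x_1$ itself, which lies below the contour) survive and generate residue series of their own, so the claim that "each contribution collapses onto the residue sum over the single equidistant pole family surviving in that half plane" fails --- these $x_1$-anchored sums, together with the explicit additive term $g_{\Si_0}(x_1,y_1)$ of (\ref{defombarom}) which your plan never mentions, must be accounted for before the clean two-term combination in (\ref{Omega_ne1}) can emerge. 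Second, after splitting $\re(\z-y_1)=\ctg(\z-y_1)-\ctg(\z-y_1-\i\g)$ the two pieces are neither holomorphic nor decaying in a half plane: each $\ctg$ tends to a nonzero constant as $\Im\z\to\pm\infty$, so closing the contour for a single piece leaves a nonvanishing contribution from $\i\infty$ that has to be evaluated (only the full combination $\re$ decays), and each piece still carries both the $x_1$- and $y_1$-anchored lattices in both half planes. Third, the coefficient $-\re^{-2\i\Si_0}\,\mathfrak{o}_1(y_1,x_1)/\overline{\mathfrak{o}}_1(y_1,x_1)$ cannot come from "a boundary term produced when the contour crosses the first pole of $1/\Ph^{(+)}$" --- crossing a pole produces a residue, not a boundary term; this prefactor has to be extracted from the anchor values of $1/\Ph^{(\pm)}$ at the base points of the families, rewritten through (\ref{defgammaqgq}) as ratios of infinite $q^4$-products. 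Finally, the identification of each residue family with a ${}_2\Phi_1$ is not quite "routine": the residues of $\G_{q^4}$ at nonpositive integers carry Gaussian factors of the type $q^{2n(n+1)}$, which only cancel against the Gaussian decay of the companion $\G_{q^4}$ factors whose arguments recede to $-\infty$ along the family; this cancellation (and the resulting rationality of the term ratio in $q^{4n}$, with the leftover exponential producing the argument $q^4\re^{\mp2\a\g}$ rather than something containing $H_1/P_1$) must be exhibited explicitly. These are exactly the pieces of bookkeeping the authors describe as "slightly involved"; until they are repaired, the proposal is a plausible programme rather than a proof.
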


This is clearly zero when $\alpha=0$. Thanks to the $q$-Gauss identity
we have a simple expression for the first term in the small-$\alpha$
expansion,
\begin{multline} \label{Omegaalpha1}
     \Omega (x_1, y_1|\{x\},\{y\}) = \i \alpha \gamma
        \frac{(1 -Z^{-1/2} (-1)^k) }{\sin (w)}
	\frac{\vartheta'_1(0|q)}{\vartheta_1(w|q)} 
     \frac{(q^4, q^2 Z; q^4)_{\infty}}{(q^2, q^4 Z;q^4)_{\infty}} +{\cal O}(\alpha^2) \\[1ex]
     =: \i \alpha \gamma\, \omega(x_1, y_1)  +{\cal O}(\alpha^2) \epc
\end{multline}
where $w=y_1-x_1$ and $Z={\rm e}^{2\i w}$. The expression for
$\overline{\Omega} (y_1,x_1|\{x\},\{y\})$ is simply related,
\begin{equation}\label{barOmega_ne1}
     \overline \Omega (y_1,x_1|\{x\},\{y\}) = - \Omega (x_1,y_1|\{x\},\{y\}) \epp
\end{equation}

We shall comment on the consistency with the result  from 
the vertex operator approach, which implies the form factor expansion
in the spinon basis. For simplicity we consider the formula in the
static case,
\begin{multline} \label{formffserieszzspinon}
\langle \sigma_1^z \sigma_{m+1}^z \rangle =
     (-1)^m \frac{(q^2;q^2)^4}{(-q^2;q^2)^4} \\[1ex] +
     \sum_{\substack{\nex \in {\mathbb N}\\k = 0, 1}} \frac{(-1)^{km}}{(2\nex)!}
     \int_{-\frac{\pi}{2}}^{\frac{\pi}{2}}  \frac{d^{2\nex} u}{(2\pi)^{2\nex}}  
     \re^{\i m \sum_{j=1}^{2\nex} p(u_j)}
     {\cal F}^{zz} (\{u_i\}_{i=1}^{2\nex}|k) \epc
\end{multline}
where  ${\cal F}^{zz} (\{u_i\}_{i=1}^{2\nex}|k)$ denotes the contribution
from $2\nex$ spinons. For arbitrary $\nex$, there exists a multiple-integral
formula with complicated integration contours for this quantity.
When $\nex = 1$, it reduces to a simple formula due to Lashkevich
\cite{Lashkevich02,DGKS16a},
\begin{multline} \label{ampfun}
    {\cal F}^{zz} (\{u_1, u_2\}|k) =
        \frac{ \sin^2 \2 \bigl(p(u_1) + p(u_2) + k\pi \bigr) 
	\sin^2 (u_{12})\, \vartheta_3^2
	   \bigl( \frac{u_{12}+\pi k}{2}|q \bigr)}
	               {\cos \frac{1}{2}(u_{12} + \i \gamma + k\pi)
		        \cos \frac{1}{2}(u_{12} - \i \gamma + k\pi) } \\[1ex]
	\times 32 q (q^2;q^2)^2 \prod_{\sigma = \pm}
        \frac{(q^4;q^4,q^4)^2}{(q^2;q^4,q^4)^2}
        \frac{(q^4  {\rm e}^{2 \i \sigma u_{12}};q^4,q^4)^2}
	     {(q^2  {\rm e}^{2 \i \sigma u_{12}};q^4,q^4)^2}
	\frac{(q^4  {\rm e}^{2 \i \sigma u_{12}};q^4)}
	     {(q^2  {\rm e}^{2 \i \sigma u_{12}};q^4)} \epc
\end{multline}
where $u_{12} = u_1 - u_2$.

One can easily show that this  is consistent with the result obtained here  
if the following  (Conjecture 1 in \cite{DGKS16b}) is valid,
\begin{equation}\label{conjecture_eq_1}
     {\cal F}^{zz}(\{u_1, u_2\}|k) =
          {\cal A}^{zz} (\{u_1 - \i \gamma \}, \{u_2\}|k)
	+ {\cal A} ^{zz} (\{u_2 - \i \gamma\}, \{u_1\}|k), 
\end{equation}
for $0 < {\rm Im}\, u_j <\gamma , \, j=1,2$ (recall that
${\cal A}^{zz}$ was defined in (\ref{azz})).
 
This was verified numerically in  \cite{DGKS16b}, as there existed
only an expression involving Fredholm determinants for ${\cal A}^{zz}$.
With the present result we have now an expression for ${\cal A}^{zz}$,
involving $\omega$ in (\ref{Omegaalpha1}) instead, whose analytic properties
are fully under control,
\begin{multline} \label{Azz_n1}
     {\cal A}^{zz}(\{u_1- \i \gamma \},\{u_2\}|k) =
     - 4 \sin^2 \tst{\2} \bigl(p(u_2) - p(u_1 - \i \gamma) + k \p \bigr)\\[1ex] \times
         \omega(u_1- \i \gamma, u_2)^2 \frac{\vartheta^2_2( -\frac{\pi k}{2}
	+ \frac{u_{21} + \i \gamma}{2}|q)}{\vartheta^2_2(0|q)}
	\frac{\Psi^2(0)}{\Psi(u_{12} - \i \gamma) \Psi(u_{21} + \i \gamma)} \epp
\end{multline}
Thus, now is the right occasion to prove the validity of (\ref{conjecture_eq_1}).
We utilize the anti-periodicity of the dressed momentum, $p(u - \i \gamma)
= - p(u)$, the explicit forms of $\Ps$ and $\omega$, and 
\begin{align}
     \frac{\Gamma_{q^4}^2(\frac{1}{2})}
            {\Gamma_{q^4}(\frac{1}{2} - \frac{\i x}{2\gamma})
	     \Gamma_{q^4}(\frac{1}{2} + \frac{\i x}{2\gamma})}
       & = \frac{\vartheta_4(x|q^2)}{\vartheta_4(0|q^2)} \epc \notag \\[1ex]
     \frac{1}{\Gamma_{q^4}(1- \frac{\i x}{2\gamma})\Gamma_{q^4}(1+ \frac{\i x}{2\gamma})}
       & = \frac{\vartheta_1(x|q^2)}{\sin(x)\, \vartheta'_1(0|q^2)} \epc \notag \\[1ex]
     \frac{G_{q^4}(1 - \frac{\i x}{2\gamma})G_{q^4}(1 + \frac{\i x}{2\gamma})}
            {G_{q^4}(\frac{1}{2} - \frac{\i x}{2\gamma})G_{q^4}(\frac{1}{2} + \frac{\i x}{2\gamma})}
       & = (1-q^4)^{\frac{3}{4}} (q^4;q^4)
	  \prod_{\sigma=\pm} \frac{(q^4 {\rm e}^{2\i \sigma x};q^4,q^4)}
	                          {(q^2 {\rm e}^{2\i \sigma x};q^4,q^4)} \epp
\end{align}
Then, after simple manipulations, one arrives at
\begin{multline}
     {\cal A}^{zz}(\{u_1 - \i \gamma \},\{u_2\}|k)
        = 32 q (q^2;q^2)^2   \sin^2 \tst{\2} \bigl(p(u_1) + p(u_2) + k \pi \bigr) \\[1ex] \times
	  \sin  (u_{21}) \tg \tst{\bigl(\frac{u_{21} + \i \gamma + k \pi}{2} \bigr)}
	  \vartheta_3^2 \tst{\bigl(\frac{u_{12}+\pi k}{2}| q \bigr)} \\[1ex]
	\times\prod_{\sigma = \pm}
        \frac{(q^4;q^4,q^4)^2}{(q^2;q^4,q^4)^2}
        \frac{(q^4  {\rm e}^{2 \i \sigma u_{12}};q^4,q^4)^2}
	     {(q^2  {\rm e}^{2 \i \sigma u_{12}};q^4,q^4)^2}
	\frac{(q^4  {\rm e}^{2 \i \sigma u_{12}};q^4)}
	     {(q^2  {\rm e}^{2 \i \sigma u_{12}};q^4)} \epp
\end{multline}
By interchanging  $u_1\leftrightarrow u_2$ in the above and by summing up,
one immediately verifies (\ref{conjecture_eq_1}). We thus obtain,
\begin{corollary}
The 1-ph excitation brings the same contribution to the form factor
expansion of the longitudinal correlation function as the 2-spinon excitation.
\end{corollary}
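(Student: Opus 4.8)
The plan is to establish equation~(\ref{conjecture_eq_1}) in the single particle-hole sector; the Corollary then follows immediately by comparing the two form factor series~(\ref{seriousseries}) and~(\ref{formffserieszzspinon}) in the $\nex = 1$ term (one particle-hole pair on the one side, two spinons on the other), since the phase factors $\re^{- \i \sum (m p + t \e)}$ and the combinatorial prefactors match automatically once the amplitudes are identified through~(\ref{conjecture_eq_1}).

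First I would specialise the low-$T$ amplitude~(\ref{azz}) to $\nex = 1$. There the two finite determinants reduce to the scalars $\Om (x_1,y_1|\{x\},\{y\})$ and $\overline{\Om} (y_1,x_1|\{x\},\{y\})$, and by~(\ref{barOmega_ne1}) the latter equals $- \Om$, so only $\Om^2$ survives. Evaluating the single integral in~(\ref{defombarom}) by the residue theorem, exploiting the equidistant pole patterns of the integrand, yields the closed form~(\ref{Omega_ne1}) in terms of ${}_2\Phi_1$; the $q$-Gauss summation then gives the linear-in-$\a$ coefficient $\omega (x_1,y_1)$ of~(\ref{Omegaalpha1}). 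Carrying out the $\6_{\g\a}^2$ prescribed by~(\ref{azz}) and inserting $\Si_0^0|_{\nex=1}$ produces the explicit one-ph amplitude~(\ref{Azz_n1}).

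Next I would convert~(\ref{Azz_n1}) entirely into theta functions and $q$-infinite products, using the three conversion identities listed after~(\ref{Azz_n1}) (namely $\G_{q^4}(\tst{\2}-\cdot)\G_{q^4}(\tst{\2}+\cdot) \mapsto \dh_4$, $\G_{q^4}(1-\cdot)\G_{q^4}(1+\cdot) \mapsto \dh_1/\sin$, and the $G_{q^4}$-quotient $\mapsto$ the double $q$-product), together with the anti-periodicity $p(u - \i\g) = - p(u)$, which turns $\sin^2 \tst{\2}(p(u_2) - p(u_1 - \i\g) + k\p)$ into $\sin^2 \tst{\2}(p(u_1) + p(u_2) + k\p)$ and simplifies the theta arguments containing $\Si_0^0$. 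After these substitutions $\mathcal{A}^{zz}(\{u_1 - \i\g\},\{u_2\}|k)$ assumes the fully explicit rational-trigonometric-times-$\prod_{\s=\pm}$ form displayed just before the Corollary. Symmetrising, i.e.\ adding the $u_1 \leftrightarrow u_2$ image and using $\tg\frac{u_{21}+\i\g+k\p}{2} + \tg\frac{u_{12}+\i\g+k\p}{2}$ together with the oddness of $\sin(u_{21})$, collapses the sum to a single term with prefactor $\sin^2(u_{12})\,\dh_3^2(\tst{\frac{u_{12}+\p k}{2}})/[\cos\tst{\2}(u_{12}+\i\g+k\p)\cos\tst{\2}(u_{12}-\i\g+k\p)]$, which is precisely Lashkevich's $\mathcal{F}^{zz}(\{u_1,u_2\}|k)$ of~(\ref{ampfun}). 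This proves~(\ref{conjecture_eq_1}) and hence the Corollary.

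I expect the main obstacle to be the bookkeeping in this last trigonometric/theta simplification: matching the overall sign, the constant $32 q (q^2;q^2)^2$, and in particular the interconversion of $\cos$-products and $\tg$-terms under symmetrisation, so that the two amplitudes — derived here from the quantum transfer matrix and there from the $q$-vertex operators — agree exactly rather than up to an unchecked $\la$-independent constant. A secondary subtlety is that $\Om$ at $\nex=1$ admits many equivalent ${}_2\Phi_1$ representations; one must select the one on which the $q$-Gauss identity acts cleanly at $\a = 0$, so that the $\CO(\a)$ term in~(\ref{Omegaalpha1}) comes out in the compact $\omega$-form used in~(\ref{Azz_n1}).
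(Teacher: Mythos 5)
Your proposal is correct and follows essentially the same route as the paper: reduce the $\nex=1$ amplitude (\ref{azz}) to the scalar $\Om$ via (\ref{barOmega_ne1}), pass through (\ref{Omega_ne1})--(\ref{Omegaalpha1}) and (\ref{Azz_n1}), convert with the stated $\G_{q^4}$/$G_{q^4}$--theta identities and the anti-periodicity $p(u-\i\g)=-p(u)$, and then symmetrise in $u_1 \leftrightarrow u_2$ to recover Lashkevich's ${\cal F}^{zz}$ of (\ref{ampfun}), which verifies (\ref{conjecture_eq_1}) and hence the corollary. The only difference is presentational: you sketch a residue-theorem derivation of (\ref{Omega_ne1}) itself, which the paper states as a lemma with details deferred to a separate publication.
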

In order to present the result for the 2-ph excitations, we need a
few more objects. The analogue of (\ref{defo_ne1}) is defined by
\begin{equation}
     {\mathfrak o}_2(y_1,y_2,\{x\})
        = \frac{\bigl( q^2\frac{P_2}{P_1}, q^4\frac{P_1}
	       {P_2};q^4\bigr)_{\infty}}
          {\bigl( q^2\frac{H_1}{P_1},q^2\frac{H_2}{P_1},
	          q^4\frac{P_1}{H_1}, q^4\frac{P_1}{H_2};q^4\bigr)_{\infty}} \epp
		  \label{defo2}  
\end{equation}
This time we need two kinds of basic hypergeometric series, 
\begin{align}
     \Phi_1(y_1,y_2,\{x\},\alpha) &
        = \phantom{}_4 \Phi_3
	  \Biggl(
	     \begin{array}{@{}r@{}}
	        q^{-2},  \frac{P_1} {H_1}, \frac{P_1} {H_2}, q^2\frac{P_1}{P_2} \\
                q^2\frac{P_1} {H_1}, q^2 \frac{P_1} {H_2}, \frac{P_1}{P_2}
	     \end{array};
                q^4, q^4 \re^{-2\alpha \gamma}\Biggr) \epc \label{Phi1_ne2} \\[1ex]
     \Phi_2(y_1,y_2,\{x\},\alpha) &
        = \phantom{}_4 \Phi_3
          \Biggl(
	     \begin{array}{@{}r@{}}
	        q^6, q^4 \frac{P_2} {H_1}, q^4\frac{P_2} {H_2}, q^2\frac{P_2}{P_1} \\
                q^6 \frac{P_2} {H_1}, q^6\frac{P_2} {H_2}, q^8\frac{P_2}{P_1}
	     \end{array};
	        q^4, q^4 \re^{-2\alpha \gamma}\Biggr) \epp \label{Phi2_ne2}
\end{align}
The first one is the $\nex = 2$ counterpart of $\Phi_1^{\nex = 1}$.
In addition, we set
\begin{equation}\label{def_r2}
     r_2(y_1, y_2,\{x\}) =
        \frac{q^2(1-q^2)^2 \frac{P_2}{P_1}}
	     {(1-\frac{P_2}{P_1}) (1-q^4 \frac{P_2}{P_1})} 
        \prod_{k=1,2} \frac{(1-\frac{P_2}{H_k})}{(1-q^2 \frac{P_2}{H_k})} \epp
\end{equation}
We then introduce linear combinations of these objects,
\begin{align}
     \Phi_{[1]} (y_1,y_2,\{x \},\alpha)
         & = \frac{{\mathfrak o}_2(y_1,y_2,\{x \})}
	          {\overline{\mathfrak o}_2(y_1,y_2,\{x \})}
		  \Phi_1(y_1,y_2,\{x \},\alpha) 
         - {\rm e}^{2\i \Sigma_0} \overline{\Phi}_1(y_1,y_2,\{x \},\alpha) \epc
	     \notag \\
     \Phi_{[2]} (y_1,y_2,\{x \},\alpha)
        & = {\rm e}^{-2\alpha \gamma} r_2(y_1,y_2,\{x \})
	    \frac{{\mathfrak o}_2(y_2,y_1,\{x \})}
	         {\overline{\mathfrak o}_2(y_2,y_1,\{x \})}
		 \Phi_2(y_1,y_2,\{x \},\alpha)   \notag \\[.5ex] & \mspace{90.mu}
          - {\rm e}^{2\i \Sigma_0} {\rm e}^{2\alpha \gamma} \bar{r}_2(y_1,y_2,\{x \})
            \overline{\Phi}_2(y_1,y_2,\{x \},\alpha) \epp
	    \label{def_Phi_2_}
\end{align}
Note the similarly between the above $\Phi_{[1]}$ and the content of the
bracket in equation (\ref{Omega_ne1}) for $\Omega$. Finally, we introduce
\begin{multline} \label{delta0}
     \delta_0 ([y],[x],\alpha) =
        \frac{\vartheta_1(y_2-x_2+\i\alpha \gamma|q^2)}{\vartheta_1(y_2-x_2|q^2)}
        \frac{\vartheta_1(y_1-x_1+\i\alpha \gamma|q^2)}{\vartheta_1(y_1-x_1|q^2)} \\[1ex]
      - \frac{\vartheta_1(y_2-x_1+\i\alpha \gamma|q^2)}{\vartheta_1(y_2-x_1|q^2)}
        \frac{\vartheta_1(y_1-x_2+\i\alpha \gamma|q^2)}{\vartheta_1(y_1-x_2|q^2)} \epp
\end{multline}
By $[y]$ we mean that $\delta_0$ is anti-symmetric in $y_1$ and $y_2$.
Similarly for $[x]$.

We are now in position to write down the result of $\det_2
\Omega (x_j, y_k|\{x\},\{y\})$.
\begin{lemma}
Let $\nex = 2$. Then, for arbitrary $\alpha$,  $\det_2 \Omega (x_j, y_k|\{x\},\{y\})$
is given by
\begin{multline} \label{detOmega2}
     \tst{\det_2 \Omega (x_j, y_k|\{x\},\{y\})}
        = \delta_0 ([y],[x],\alpha)\, \mu(\{y\},\{x\},\alpha)^2 \re^{-4\i \Sigma_0} \\[1ex]
	  \times \det
          \begin{vmatrix}
	     \Phi_{[1]} (y_1,y_2,\{x \},\alpha)&   \Phi_{[2]} (y_1,y_2,\{x \},\alpha) \\
	     \Phi_{[2]} (y_2,y_1,\{x \},\alpha)&  \Phi_{[1]} (y_2,y_1,\{x \},\alpha)
	  \end{vmatrix} \epp
\end{multline}
\end{lemma}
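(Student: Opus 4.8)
The plan is to evaluate each of the four entries $\Omega(x_j,y_k\,|\,\{x\},\{y\})$, $j,k\in\{1,2\}$, in closed form from the integral representation (\ref{defombarom}), exactly as was done for $\nex=1$ to obtain (\ref{Omega_ne1}), and then to reorganise the $2\times2$ determinant. In
\[
     \Omega(x_j,y_k) = g_{\Si_0}(x_j,y_k)
        - \int_{-\p/2}^{\p/2}\frac{\rd\z}{2\p\i}\,
          \frac{\Ph^{(-)}(y_k)}{\Ph^{(+)}(\z)}\,g_{\Si_0}(x_j,\z)\,\re(\z-y_k)
\]
the integrand is $\p$-periodic in $\z$, so the vertical edges at $\Re\z=\pm\p/2$ cancel and the contour $[-\p/2,\p/2]$ may be pushed to $\pm\i\infty$. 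Collecting residues one picks up the isolated poles of $g_{\Si_0}(x_j,\cdot)$ at $\z=x_j$ and of $\re(\cdot-y_k)$ at $\z=y_k$ and $\z=y_k+\i\g$, together with the two half-infinite equidistant ladders of poles of $1/\Ph^{(+)}$, which by (\ref{defphpm}) sit on the $\G_{q^4}$-pole lattices attached to $\{y_1,y_2\}$. Each ladder resums into a basic hypergeometric series; for $\nex=2$ these are precisely the two ${}_4\Phi_3$'s $\Phi_1$ and $\Phi_2$ of (\ref{Phi1_ne2}), (\ref{Phi2_ne2}), and tracking the $q$-Pochhammer prefactors produces the factors $\mathfrak{o}_2$, $r_2$ and the exponentials $\re^{\pm2\i\Si_0}$, $\re^{\mp2\alpha\g}$, i.e.\ it assembles the combinations $\Phi_{[1]}$, $\Phi_{[2]}$ of (\ref{def_Phi_2_}). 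The residue at $\z=y_k$ and its $\i\g$-shift supply a theta factor with denominator $\dh_1(y_k-x_j|q^2)$, while the common scalar $\mu(\{y\},\{x\},\alpha)$ of (\ref{defmu}) collects the remaining theta constant; here one uses $\sum_j(y_j-x_j)+\i\alpha\g=2\Si_0+\p k$, so that $\dh_1(\sum_j(y_j-x_j)+\i\alpha\g|q^2)=(-1)^k\dh_1(2\Si_0|q^2)$.

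With the entries in hand, I would substitute them into $\det_2\Omega=\Omega(x_1,y_1)\Omega(x_2,y_2)-\Omega(x_1,y_2)\Omega(x_2,y_1)$. The pairing-independent scalar prefactors — the symmetric theta constant in $\mu$, which factors out as $\mu^2$, and the exponential $\re^{-2\i\Si_0}$, which after reorganising each entry contributes the overall $\re^{-4\i\Si_0}$ — are pulled out of the determinant. The remaining theta-dependent terms are then combined, via the three-term (Weierstrass/Frobenius) identity for $\dh_1(\,\cdot\,|q^2)$ of the type underlying (\ref{detgzeromain}), into the $2\times2$ theta determinant $\delta_0([y],[x],\alpha)$ of (\ref{delta0}); this is the step at which the vanishing at $\alpha=0$ becomes manifest, consistently with the left null-vector of $\{\Omega(x_j,y_k)\}$ exhibited in Lemma~\ref{lem:ombaromleftevs}. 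What remains is to show that the residual hypergeometric factor of the determinant collapses to the $2\times2$ determinant of the $\Phi_{[i]}$'s appearing on the right-hand side of (\ref{detOmega2}).

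I expect this last identity to be the main obstacle. A naive expansion of $\det_2\Omega$ produces a sum of several products of ${}_4\Phi_3$'s, and compressing it into the compact $2\times2$ determinant $\det\{\Phi_{[i]}\}$ requires selecting, among the many equivalent representations of each matrix element (an arbitrariness the text stresses), the forms in which the cross terms $\Phi_1(y_1,\dots)\Phi_2(y_2,\dots)$ are matched against the off-diagonal products of the claimed determinant; this uses contiguous relations together with a non-terminating $q$-Saalsch\"utz-type transformation of one of the series. As independent checks that fix all prefactors I would verify: (i) the $\alpha=0$ vanishing, already built into the $\delta_0$ factor; (ii) the correct quasi-periodicity of both sides under $y_j\mapsto y_j+\p$ and under the $q^4$-shifts $y_j\mapsto y_j+2\i\g$, using (\ref{shiftsbyigamma}) and (\ref{piperiodfg})--(\ref{gammamperiodfg}); and (iii) the degeneration of (\ref{detOmega2}), obtained by letting $x_2$ tend to a pole of a suitable entry so that a row and a column of the matrix collapse, onto the already established $\nex=1$ formula (\ref{Omega_ne1}). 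These checks make the final identification essentially self-correcting.
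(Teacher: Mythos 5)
The paper never proves this lemma in the text: it states explicitly that the derivation of the $\nex=2$ formulae ``becomes slightly involved'' and is deferred, together with the general-$\nex$ case, to a separate publication. So there is no in-paper argument to measure you against, and your proposal has to stand on its own. Its first stage is sound in outline and matches what the paper says about the structure of the entries: evaluating (\ref{defombarom}) by residues, with the $\Gamma_{q^4}$-pole ladders of $1/\Ph^{(+)}$ resumming into the two ${}_4\Phi_3$'s (\ref{Phi1_ne2}), (\ref{Phi2_ne2}) and the prefactors ${\mathfrak o}_2$, $r_2$, $\re^{\pm 2\i\Si_0}$ of (\ref{def_Phi_2_}), is the natural $\nex=2$ analogue of (\ref{Omega_ne1}); your relation $\sum_j(y_j-x_j)+\i\a\g=2\Si_0+\p k$ is correct. (One technical point you gloss over: the contour $[-\p/2,\p/2]$ cannot simply be ``pushed to $\pm\i\infty$'' as a whole; as in the proof of Lemma~\ref{lem:ombaromleftevs}, the integrand must be split and the two pieces closed in opposite half-planes, using the boundedness of $f/\Ph^{(+)}$ type ratios in the respective half-planes.)

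The genuine gap is exactly where you say you expect ``the main obstacle'', and it is not closed. The content of (\ref{detOmega2}) is a factorization: a $2\times2$ determinant whose entries are each a theta ratio times a linear combination of two ${}_4\Phi_3$'s does not generically split into $\delta_0([y],[x],\a)$ times a purely hypergeometric $2\times2$ determinant; this is precisely the point at which, as the paper warns, one must ``take suitable linear combinations and select appropriate forms'' among the many equivalent representations of each entry. Your proposal replaces this step by an appeal to unspecified contiguous relations and ``a non-terminating $q$-Saalsch\"utz-type transformation'', without exhibiting which transformation, applied to which of the series, produces the required cancellation of cross terms for arbitrary $\a$. The fallback verification scheme cannot repair this: vanishing at $\a=0$, quasi-periodicity in the $y_j$, $x_j$, and agreement in an $\nex=1$ degeneration do not determine a function of $(y_1,y_2,x_1,x_2,\a)$ uniquely — no Liouville-type argument is available without complete control of the pole and zero structure in every variable (which the residue representation does not immediately give, the ${}_4\Phi_3$'s having their own poles in the exponentiated variables), and the $\a$-dependence enters non-elliptically through the argument $q^4\re^{-2\a\g}$, so ellipticity arguments do not apply in $\a$ at all. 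As it stands, the proposal is a plausible programme for a proof, with the decisive identity asserted rather than established.
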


We remark that this is one of the possible equivalent forms and that it
has several advantages over others. First, the anti-symmetry in $\{x\}$
and $\{y\}$, that originated from the definition, is reflected only in
$\delta_0$ and the other parts are symmetric in $\{x\}$ and $\{y\}$,
respectively. Second, it is clear from (\ref{delta0}) that
$\delta_0 ([y],[x],0) = 0$ and thus the determinant is explicitly
shown to be zero if $\alpha=0$. Third, the first term in the expansion
w.r.t.\ $\alpha$ can be immediately found: except in $\delta_0$,
one only has to set $\alpha=0$,
\begin{multline} \label{detOmega22}
     \tst{\det_2 \Omega (x_j, y_k|\{x\},\{y\})} = \i \alpha \gamma
        \ln' \biggl[\frac{\vartheta_1(y_2-x_2|q^2) \vartheta_1(y_1-x_1|q^2)}
	                 {\vartheta_1(y_2-x_1|q^2) \vartheta_1(y_1-x_2|q^2)} \biggr]
        \mu(\{y\},\{x\},0)^2 \\[1ex] \times \frac{H_1 H_2}{P_1 P_2}
       \det
       \begin{vmatrix}
          \Phi_{[1]} (y_1,y_2,\{x \},0) & \Phi_{[2]} (y_1,y_2,\{x \},0) \\
	  \Phi_{[2]} (y_2,y_1,\{x \},0) & \Phi_{[1]} (y_2,y_1,\{x \},0)
       \end{vmatrix} + \CO (\alpha^2) \epc
\end{multline}
where $\ln'$ stands for the logarithmic derivative.

We find that $\det_2 \overline{\Omega} ( y_k, x_j|\{x\},\{y\})$ is also
proportional to $\delta_0$ and thus we can safely set $\alpha=0$
in the rest. In place of (\ref{def_Phi_2_}) we set
\begin{align}
     & \Phi^{\cal T}_{[1]} (x_1,x_2,\{y \}) =
       \frac{\overline{{\mathfrak o}}(x_1,x_2,\{y \})}
            {{\mathfrak o}(x_1,x_2,\{y \})}
	    \overline{\Phi}_1(x_1,x_2,\{y \},0) \notag \\[.5ex]
     & \mspace{234.mu} + (-1)^{k-1} \re^{\i \sum_k (y_k -x_k)}
       {\Phi}_1(x_1,x_2,\{y \},0) \epc \notag \\[1.5ex]   
     & \Phi^{\cal T}_{[2]} (x_1,x_2,\{y \}) =
       \overline{r}_2(x_1,x_2,\{y \})
       \frac{\overline{{\mathfrak o}}(x_2,x_1,\{y \})}
            {{\mathfrak o}(x_2,x_1,\{y \})}
	    \overline{\Phi}_2(x_1,x_2,\{y \},0) \notag \\[.5ex]
     & \mspace{144.mu} + (-1)^{k-1} {\rm e}^{\i \sum_k (y_k -x_k)}
       {r}_2(x_1,x_2,\{y \}) {\Phi}_2(x_1,x_2,\{y \},0) \epp  
\end{align}
Then the small-$\alpha$ expansion of $\det_2 \overline{\Omega} (y_k, x_j|\{x\},\{y\})$
reads
\begin{multline} \label{detOmegabar}
     \tst{\det_2 \overline{\Omega} (y_j, x_i|\{x\},\{y\})} = \i \alpha \gamma
        \ln' \biggl[\frac{\vartheta_1(y_2-x_2|q^2) \vartheta_1(y_1-x_1|q^2)}
	                 {\vartheta_1(y_2-x_1|q^2) \vartheta_1(y_1-x_2|q^2)} \biggr]
			 \mu(\{y\},\{x\},0)^2 \\[1ex] \times \frac{H_1 H_2}{P_1 P_2}
     \det
     \begin{vmatrix}
        \Phi^{\cal T}_{[1]} (x_1,x_2,\{y \}) & \Phi^{\cal T}_{[2]} (x_1,x_2,\{y \}) \\
	\Phi^{\cal T}_{[2]} (x_2,x_1,\{y \}) &  \Phi^{\cal T}_{[1]} (x_2,x_1,\{y \})
     \end{vmatrix} + \CO (\alpha^2) \epp
\end{multline}

In a subsequent paper, we will apply the above formulae to the
numerical investigation of the dynamical correlation functions
and exemplify their efficiency. The details of their derivation
will also be explained there.

Finally, we comment on our previous ${\cal A}_{\rm old}^{zz}$
\cite{DGKS16b} involving the Fredholm determinants. It might be
interesting to compare it numerically against the present one (denoted by
${\cal A}^{zz}_{\rm new}$ for comparison), as the actual computation of
${\cal A}_{\rm old}^{zz}$ required a discretized approximation.
We consider $\nex = 2$, fix $x_1 = 1 - \frac{\i \gamma}{2}$,
$y_1= \frac{1}{2} + \frac{\i \gamma}{2}$ and evaluate the
relative difference, $\bigl|1-\frac{{\cal A}^{zz}_{\rm new}}
{{\cal A}^{zz}_{\rm old}}\bigr|$, with $x_2 = u_0 - \frac{\i \gamma}{2}$
and $y_2= v_0 + \frac{\i \gamma}{2}$ for various $u_0$ and $v_0$.
The Fredholm determinants are approximated by determinants of 
$50 \times 50$ matrices, see Fig.~\ref{fig:comparisonn2}.

\begin{figure}[!h]
\centering
\includegraphics[width=6cm]{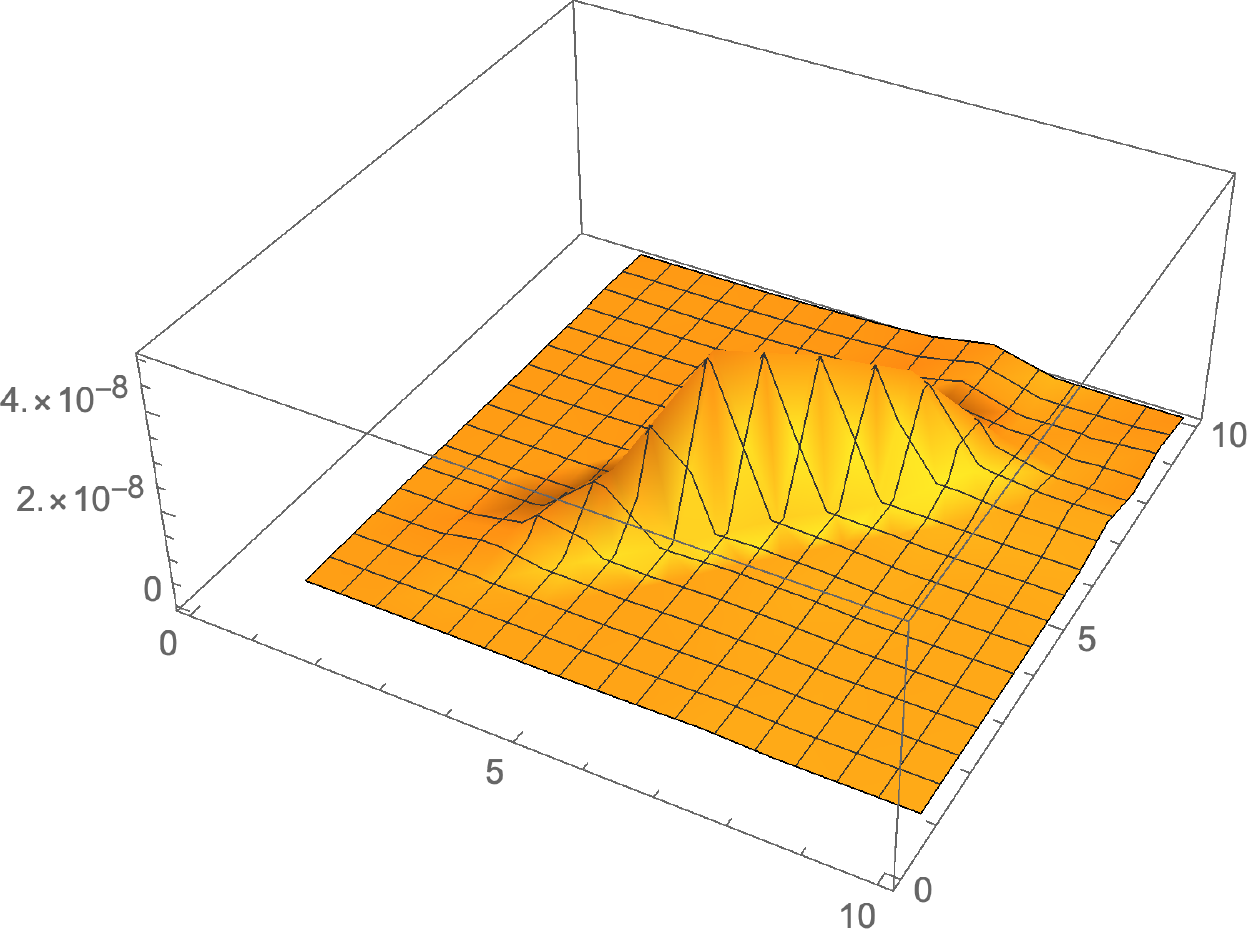} 
\caption{ 
$\bigl|1-\frac{{\cal A}^{zz}_{\rm new}}{{\cal A}^{zz}_{\rm old}}\bigr|$
for $q=0.5 , k=0, \nex = 2$.
}\label{fig:comparisonn2}
\end{figure}
The maximum value of the relative difference is $\sim 3\times 10^{-8}$.
We thus conclude that they agree with rather nice precision and that the
discretization scheme proposed in \cite{Bornemann10} is numerically efficient
in our case.

%
%
%
%
%
%
%
%

\section{The isotropic limit}
The isotropic point $\D = 1$, $h = 0$ of the ground state phase
diagram of the XXZ chain is located at the boundary between
the antiferromagnetic massive and massless regimes (see
Figure~\ref{fig:phasediagram}). It can be accessed from the
antiferromagnetic massive regime, e.g.\ by first sending
$h \rightarrow 0$ and then $\g \rightarrow 0$. The first limit
is trivial at zero temperature, since in this case the
correlation functions are independent of the magnetic field.
The isotropic limit requires a rescaling of the integration
variables $u, v \rightarrow \g u, \g v$. Sending $\g
\rightarrow 0$ means sending $q \rightarrow 1$. In this
limit the functions of the $q$-gamma family become
functions of the ordinary gamma family, e.g.\
$\lim_{q \rightarrow 1} \G_q (u) = \G (u)$ and
$\lim_{q \rightarrow 1} G_q (u) = G (u)$. For those functions
involving Jacobi theta functions the limit can be easily
calculated after employing a special modular transformation,
\begin{align}
     \dh_1 (\la|q) & = - \i \sqrt{\frac{\p}{\g}} \re^{- \frac{\la^2}{\g}}
                       \dh_1 \bigl(\tst{\frac{\p \i \la}{\g}}\big|q'\bigr) \epc
     & \dh_2 (\la|q) & =  \sqrt{\frac{\p}{\g}} \re^{- \frac{\la^2}{\g}}
                       \dh_4 \bigl(\tst{\frac{\p \i \la}{\g}}\big|q'\bigr) \epc \notag \\
     \dh_3 (\la|q) & =  \sqrt{\frac{\p}{\g}} \re^{- \frac{\la^2}{\g}}
                       \dh_3 \bigl(\tst{\frac{\p \i \la}{\g}}\big|q'\bigr) \epc
     & \dh_4 (\la|q) & =  \sqrt{\frac{\p}{\g}} \re^{- \frac{\la^2}{\g}}
                       \dh_2 \bigl(\tst{\frac{\p \i \la}{\g}}\big|q'\bigr) \epc
\end{align}
where
\begin{equation}
     q' = \re^{- \frac{\p^2}{\g}} \epp
\end{equation}

We indicate the isotropic limit by putting a hat over the
respective function, $\hat f (u) = \lim_{\g \rightarrow 0} f(\g u)$.
Then the momentum and dressed energy go to
\begin{equation}
     \hat p (u) = \frac{\p}{2}
        - \i \ln \biggl(
        \frac{\ch \bigl( \frac \p 2(u + \frac \i 2) \bigr)}
             {\ch \bigl( \frac \p 2(u - \frac \i 2) \bigr)} \biggr) \epc \qd
     \hat \e(u|0) = - \frac{2\p J}{\ch(\p u)} \epp
\end{equation}
The function $\Ps$ becomes
\begin{equation}
     \hat \Ps (u) = \G \bigl(\tst{\2} - \tst{\frac{\i u}{2}}\bigr)
                    \G \bigl(1 - \tst{\frac{\i u}{2}}\bigr)
		    \frac{G^2 \bigl(1 - \tst{\frac{\i u}{2}}\bigr)}
                         {G^2 \bigl(\2 - \tst{\frac{\i u}{2}}\bigr)} \epp
\end{equation}

Severe simplifications of the series in the isotropic limit
arise from the theta-function factors. First of all, as is well
known, the staggered polarization vanishes in this limit, $\dh_1'/\dh_2
\rightarrow 0$. Moreover,
\begin{equation}
     \frac{\dh_2 (\Si_0)}{\dh_2}\biggr|_{\a = 0}
	\rightarrow
        \begin{cases}
	   1 & \text{for $k = 0$}\\
	   0 & \text{for $k = 1$,}
	\end{cases}
\end{equation}
implying that the whole `staggered part' of the series
(\ref{seriousseries}) vanishes for $\g \rightarrow 0$,
since the remaining factors for $k=1$ have a finite limit.
For this reason is enough to give an explicit description
of these remaining factors only for $k=0$. If we denote
\begin{subequations}
\begin{align}
     \hat \Om (u, v|{\cal U}, {\cal V}) & =
        \lim_{\g \rightarrow 0} \g \,
        \Om (\g u, \g v|\g {\cal U}, \g {\cal V})
	\bigr|_{k = 0, \, \a = 0} \epc \\
        \hat{\overline \Om} (u, v|{\cal U}, {\cal V}) & =
        \lim_{\g \rightarrow 0} \g \,
	\overline \Om (\g u, \g v|\g {\cal U}, \g {\cal V})
	\bigr|_{k = 0, \, \a = 0} \epc
\end{align}
\end{subequations}
then
\begin{subequations}
\begin{align}
     \hat \Om (u, v|{\cal U}, {\cal V}) & =
        \hat g(u,v) - \int_{- \infty}^\infty \frac{\rd t}{2 \p \i}
	              \frac{\hat \Ph^{(-)} (v)}{\hat \Ph^{(+)} (t)}
		      \hat g(u,t) \hat \re(t - v) \epc \\[1ex]
     \hat{\overline \Om} (v, u|{\cal U}, {\cal V}) & =
        \hat g(v,u) - \int_{- \infty}^\infty \frac{\rd t}{2 \p \i}
	              \frac{\hat \Ph^{(-)} (t)}{\hat \Ph^{(+)} (u)}
		      \hat g(v,t) \hat \re(u - t) \epc 
\end{align}
\end{subequations}
where
\begin{subequations}
\begin{align}
     \hat g(u,v) & = \frac{\p}{\sh\bigl(\p (v - u)\bigr)} \epc \qd
        \hat \re(u) = \frac{1}{u} - \frac{1}{u - \i} \epc \\[1ex]
     \Ph^{(\s)} (u) & = \prod_{w \in {\cal U} \ominus {\cal V}}
        \G \Bigl(\tst{\2 - \frac{\s \i (u - w)}{2}}\Bigr)
        \G \Bigl(\tst{1 + \frac{\s \i (u - w)}{2}}\Bigr) \epc \qd \s = \pm \epp
\end{align}
\end{subequations}
For $j = 1, \dots, \nex$ we further define
\begin{subequations}
\begin{align}
     \hat{\Om}^{\rm mod} (u_j, v_k|{\cal U}, {\cal V}) & =
        \begin{cases}
	   \int_{- \infty}^\infty \frac{\rd t}{\p} 
	   \frac{\hat \Ph^{(-)} (v_k)}{\hat \Ph^{(+)} (t)}
	   \frac{\hat g(u_j, t)}{v_k - t} & \text{for $k = 1$,} \\[1.5ex]
           \hat \Om (u_j, v_k|{\cal U}, {\cal V}) & \text{for $k = 2, \dots, \nex$,}
	\end{cases}
     \\[1ex]
     \hat{\overline{\Om}}\raisebox{10.pt}{}^{\rm mod} (v_j, u_k|{\cal U}, {\cal V}) & =
        \begin{cases}
	   \int_{- \infty}^\infty \frac{\rd t}{\p} 
	   \frac{\hat \Ph^{(-)} (t)}{\hat \Ph^{(+)} (u_k)}
	   \frac{\hat g(v_j, t)}{u_k - t} & \text{for $k = 1$,} \\[1.5ex]
           \hat{\overline \Om} (v_j, u_k|{\cal U}, {\cal V})
	   & \text{for $k = 2, \dots, \nex$.}
	\end{cases}
\end{align}
\end{subequations}
Inserting all of the above into equation (\ref{seriousseries}) we have
arrived at the following theorem.
\begin{theorem}
{\bf Two-point functions in the isotropic limit.}\\
The dynamical two-point correlation functions of the isotropic
Heisenberg chain in the zero-temperature limit have the form
factor series representation
\begin{multline} \label{xxxseries}
     \bigl\<\s_1^z \s_{m+1}^z (t)\bigr\> = 
        \sum_{\nex \in {\mathbb N}}
	  \frac{(- 1)^{\nex - 1}}{((\nex - 1) !)^2}
	     \int_{\hat {\cal C}_h^\nex} \frac{\rd^\nex u}{(2\p)^\nex}
	     \int_{\hat {\cal C}_p^\nex} \frac{\rd^\nex v}{(2\p)^\nex} \:
	     \re^{- \i \sum_{\la \in {\cal U} \ominus {\cal V}}
	             (m \hat p(\la) + t \hat \e (\la|h))} \\[1ex] \times
	\sin^2 \biggl\{\sum_{\la \in {\cal U}\ominus {\cal V}} \frac{\hat p(\la)}{2}\biggr\}
	\biggl[\prod_{\la, \m \in {\cal U}\ominus {\cal V}}
	       \mspace{-18.mu} \hat \Ps (\la - \m)\biggr] \\[1ex] \times
        \det_{\nex} \{\hat \Om^{\rm mod} (u_j,v_k|{\cal U}, {\cal V})\}
	\det_{\nex} \{\hat{\overline{\Om}}\raisebox{10.pt}{}^{\rm mod}
	              (v_j,u_k|{\cal U}, {\cal V})\} \epc
\end{multline}
where $\hat {\cal C}_p = {\mathbb R} + \frac{\i (1 + 0_+)}{2}$,
$\hat {\cal C}_h = {\mathbb R} - \frac{\i (1 + 0_-)}{2}$.
\end{theorem}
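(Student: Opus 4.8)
The plan is to obtain (\ref{xxxseries}) by taking the isotropic limit directly in the series (\ref{seriousseries}), working at fixed $\nex$ and $k$. First I would substitute $u_j \to \g u_j$, $v_j \to \g v_j$ in the $2\nex$-fold integral; this maps the contours ${\cal C}_h$, ${\cal C}_p$ of (\ref{defchcp}) onto $\hat{\cal C}_h = {\mathbb R} - \i(1+0_-)/2$ and $\hat{\cal C}_p = {\mathbb R} + \i(1+0_+)/2$, and produces an overall factor $\g^{2\nex}$ from the $2\nex$ differentials. These $2\nex$ powers of $\g$ are absorbed, $\g^\nex$ into each of the two $\nex\times\nex$ determinants: $\g^\nex \det_\nex\{\Om^{\rm mod}(\g u_j,\g v_k|\g{\cal U},\g{\cal V})\} = \det_\nex\{\g\,\Om^{\rm mod}(\g u_j,\g v_k|\g{\cal U},\g{\cal V})\}$, which tends to $\det_\nex\{\hat\Om^{\rm mod}(u_j,v_k|{\cal U},{\cal V})\}$ by the very definition of $\hat\Om^{\rm mod}$ (and likewise for $\overline{\Om}^{\rm mod}$). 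The remaining factors of the summand carry no net power of $\g$, since $\prod_{\la,\m\in{\cal U}\ominus{\cal V}}\Ps(\g(\la-\m))$, $\dh_2^2(\Si_0^0)/\dh_2^2$, the $\sin^2$, and $\re^{-\i\sum(mp+t\e)}$ all have finite limits. So it remains to send $\g\to0+$, equivalently $q = \re^{-\g}\to1$, in each of these.

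The elementary factors are handled by the Jacobi imaginary transformation quoted just before the statement, which exchanges a theta function at nome $q$, argument $\g\la$, for one at the small nome $q' = \re^{-\p^2/\g}$, argument $\p\i\la$, up to a Gaussian factor $\re^{-\g\la^2}$ and a factor $\sqrt{\p/\g}$. Applied to $p$ of (\ref{ptheta4}) and $\e$ of (\ref{dressede}), letting $q'\to0$, this gives $\hat p$ and $\hat\e(\cdot|0)$ as stated, the Gaussian and $\sqrt{\p/\g}$ prefactors cancelling in the ratios. The function $\Ps$ of (\ref{defpsi}) tends to $\hat\Ps$ since $\lim_{q\to1}\G_q = \G$ and $\lim_{q\to1}G_q = G$; here one checks that the $(1-q^4)$-power normalisations in (\ref{defgammaqgq}) drop out of the full product $\prod_{\la,\m\in{\cal U}\ominus{\cal V}}\Ps(\la-\m)$, which they do because the signed double sum over ${\cal U}\ominus{\cal V}$ of any affine function of $\la-\m$ vanishes when $\card{\cal U} = \card{\cal V}$. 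For the prefactors one invokes the two facts recorded in the isotropic section: $\dh_1'/\dh_2\to0$ (vanishing of the staggered polarisation), so that the term $(-1)^m\dh_1'^2/\dh_2^2$ on the right of (\ref{seriousseries}) disappears, and $\dh_2^2(\Si_0^0)/\dh_2^2 \to \one_{k=0}$. Since the surviving factors (including $\sin^2(k\p/2 + \sum_\la p(\la)/2)$) have finite limits, the whole $k=1$ part of the series drops, while in the $k=0$ part $(-1)^{km+\nex} = (-1)^\nex$ and $\sin^2(k\p/2 + \sum_\la p(\la)/2) = \sin^2(\sum_\la\hat p(\la)/2)$.

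The nontrivial building blocks are the kernels $g_{\Si_0^0}$, $g_{\Si_0^0}^{\rm mod}$, $\Ph^{(\pm)}$ and $\re$ that enter $\Om^{\rm mod}$, $\overline{\Om}^{\rm mod}$ through (\ref{defombarom}). After the rescaling, $\Si_0^0$ is $O(\g)$ for $k=0$, so $g_{\Si_0^0}\to g_0$; applying the modular transformation to $\dh_1'$, $\dh_2(\cdot)$, $\dh_1(\cdot)$ in $g_0(\g\la,\g\m)$ one finds that the $\re^{-\g\la^2}$, $\sqrt{\p/\g}$ and ${q'}^{1/4}$ prefactors recombine so that $\lim_{\g\to0}\g\, g_{\Si_0^0}(\g\la,\g\m) = \hat g(\la,\m) = \p/\sh(\p(\m-\la))$, and similarly $\re(\g\la)/\g\to\hat\re(\la)$, while (\ref{scdgammafuneq}) turns the theta functions in (\ref{defphpm}) into ordinary $\G$'s, giving $\Ph^{(\pm)}(\g\la)\to\hat\Ph^{(\pm)}(\la)$. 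Feeding these limits into the first-column-modified matrices $\Om^{\rm mod}$, $\overline{\Om}^{\rm mod}$ — where the $g_{\Si_0^0}^{\rm mod}$ and $-2\i g_{\Si_0^0}$ pieces combine and collapse, in the limit, onto the $1/(v_k-t)$-type integrands of $\hat\Om^{\rm mod}$, $\hat{\overline{\Om}}{}^{\rm mod}$ — and collecting all prefactors reproduces the summand of (\ref{xxxseries}).

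The main obstacle is analytic rather than algebraic: one has to justify exchanging $\lim_{\g\to0}$ with the $2\nex$-fold contour integral on the now non-compact, shifted-real contours $\hat{\cal C}_h^\nex\times\hat{\cal C}_p^\nex$. This needs a $\g$-independent integrable majorant, which can be assembled from the exponential decay of $\re^{-\i m\hat p(\la) - \i t\hat\e(\la|0)}$ and of $\hat\Ps$ along the contours, together with uniform control — along the tails and uniformly in the rescaled integration variables — of the $q'\to0$ remainder terms produced by the modular transformation; the analogous interchange with the sum over $\nex$, if the full series identity is wanted, requires in addition a bound on the $\nex$-dependence of the summands uniform for small $\g$. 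I expect the pointwise limits above to be essentially bookkeeping once the modular transformation is in hand, so that it is this uniformity that constitutes the real work.
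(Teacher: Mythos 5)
Your proposal follows essentially the same route as the paper: rescale $u,v \to \g u, \g v$, use the modular (Jacobi imaginary) transformation of the theta functions to obtain $\hat p$, $\hat\e$, $\hat\Ps$, $\hat g$, $\hat\re$, $\hat\Ph^{(\pm)}$, observe that $\dh_1'/\dh_2 \to 0$ and $\dh_2(\Si_0^0)/\dh_2 \to \one_{k=0}$ kill the staggered and $k=1$ parts, and insert the resulting limits of $\g\,\Om^{\rm mod}$, $\g\,\overline{\Om}^{\rm mod}$ into (\ref{seriousseries}). Your closing remarks on dominated convergence and uniformity along the non-compact contours go beyond what the paper itself justifies (it treats these limits formally), so there is no gap relative to the paper's own argument.
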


\section{Conclusions}
We have obtained novel form factor series representations for the
longitudinal two-point correlation functions of the XXZ chain
in the antiferromagnetic massive regime and of the XXX chain
at vanishing magnetic field. These series take simple forms
at $T = 0$. They are series of multiple integrals of increasing
even multiplicity, 2, 4, 6, \dots. In previous works the
integrands in the general terms were of rather intricate forms.
They consisted either of sums of multiple contour integrals
with complicated contours \cite{JiMi95}, of sums over multiple
residues \cite{DGKS15a}, or, in the best case \cite{DGKS16b},
of functions involving Fredholm determinants in their definition.
In the present work the integrands in the $2 \nex$-fold integrals
are basically products of two order-$\nex$ determinants with
entries that can be represented as simple integrals over special
functions from the $q$-gamma family.

Expressions for form factors not involving multiple integrals
or Fredholm determinants were known for a long time for massive
integrable quantum field theories \cite{Smirnov86}, where they
had been obtained as solutions of functional equations
\cite{Smirnov92}.  They also appear in the context of the Fermionic
basis approach to the form factors of the Sine-Gordon model
\cite{JMS11b}. Thus, taken the close relationship between the
XXZ and Sine-Gordon models, the existence of a form factor
series for the correlation functions of the XXZ chain in
the antiferromagnetic massive regime with form factors
represented as finite determinants might have been anticipated.
We wish to stress, however, that we have used an eigenbasis
of the quantum transfer matrix rather than the Hamiltonian
eigenbasis. For this reason our form factor series comes with
a different interpretation of the particle content of the model,
which is in terms of particles and holes rather than in terms
of spinons. The latter are usually interpreted as a lattice
manifestation of the quantum solitons in the Sine-Gordon model.
This may mean that our representation is still structurally
different from everything we would expect to obtain by analogy
with the Sine-Gordon model. 

We recently learned about a thesis \cite{Kulkarni20} in which
a representation of the form factor amplitudes involving only
finite determinants was obtained in an eigenbasis of the
Hamiltonian of the XXX chain. It will be interesting to see,
if that representation can be brought to a more explicit form
and how it then compares with our result.

Due to the relative simplicity of our novel expressions, answers
to a number of longstanding questions appear now within reach
and new questions can be asked.
\begin{enumerate}
\item
We believe that the novel expressions are more efficient for
a numerical computation of the correlation functions.
\item
We hope that we will be able to prove the convergence of the
series employing methods that were recently developed in the
context of integrable massive quantum field theories
\cite{Kozlowski20app}.
\item
We hope that we can now attack the problem of calculating the
long-time, large-distance asymptotics of two-point correlations
in the XXX chain, which requires to estimate all terms in the
series as they all contribute.
\item
We think that the new series representation may make it possible
to prove our older conjecture \cite{DGKS16b} about the connection
of the form factor amplitudes in the spinon and particle-hole
bases.
\item
As the elements of the matrices $\Om$ and $\overline \Om$
can be entirely expressed in terms of basic hypergeometric
functions the question arises, if there are even more simple
and explicit expressions for the corresponding determinants
in the general case. The exploration of this question may 
take us deep into the theory of basic hypergeometric series
and may potentially yield new identities between basic
hypergeometric series or new derivations of some of the
known identities.
\end{enumerate}

In future work we would also like to gain a better understanding
of the generic finite temperature case and of the high-$T$
asymptotics. We would further like to explore the possibility
if similar series representations exist for the two-point functions
in the massless regime. This will require a better understanding
of the spectrum and general Bethe root patterns of the excited
states of the quantum transfer matrix of the XXZ chain.\\

{\bf Acknowledgments.}
We would like to thank Jesko Sirker for providing his DMRG data for
comparison and Ole Warnaar for helpful discussions about basic
hypergeometric series identities. CB and FG acknowledge financial
support by the DFG in the framework of the research unit FOR 2316.
The work of KKK is supported by the CNRS and by the ‘Projet international
de coop\'eration scientifique No. PICS07877’: \textit{Fonctions
de corr\'elations dynamiques dans la cha{\^\nodoti}ne XXZ \`a
temp\'erature finie}, Allemagne, 2018-2020. JS is grateful for
support by a JSPS Grant-in-Aid for Scientific Research (C)
No.\ 18K03452 and by a JSPS Grant-in-Aid for Scientific
Research (B) No.\ 18H01141.\\

{\bf Note added.} While the preparation of this manuscript had been
delayed for unforeseeable reasons, we found and proved a generalization
to arbitrary $\nex$ of the results of Section~\ref{sec:basehype} on the
explicit evaluation of the remaining finite determinants in terms
of basic hypergeometric series. In order to not further increase the
length of the present manuscript, and also for its independent
significance, we shall publish this additional result separately.\\


\clearpage

\renewcommand{\thesection}{\Alph{section}}
\renewcommand{\theequation}{\thesection.\arabic{equation}}

\begin{appendices}

\section{The generating function for the longitudinal
two-point function}
\setcounter{equation}{0}
\label{app:genfun}
Here we provide a derivation of equation (\ref{azzagen}) of the
main text. Consider two eigenstates $|l, h\>$ and $|n, h\>$ of
the dynamical quantum transfer matrix $t(\la|h)$ which are
non-degenerate for $l \ne n$. Then
\begin{multline}
    \6_{h'/2T} \<l, h| t(\la|h')|n, h'\>\bigl|_{h' = h} =
       \<l, h|\Si^z (\la|h)|n, h\> + \La_l (\la|h) \<l, h| \6_{h/2T}|n, h\> \\[1ex]
       = \de_{l, n} \<l, h|l, h\> \6_{h/2T} \La_n (\la|h)
         + \La_n (\la|h) \<l, h| \6_{h/2T}|n, h\> \epc
\end{multline}
implying that
\begin{multline} \label{szid1}
     \<l, h|\Si^z (\la|h)|n, h\> = \\[1ex]
        \de_{l, n} \<l, h|l, h\> \6_{h/2T} \La_l (\la|h)
	+ \bigl(\La_n (\la|h) - \La_l (\la|h)\bigr) \<l, h| \6_{h/2T}|n, h\> \epp
\end{multline}
Similarly,
\begin{multline} \label{szid2}
     \<n, h|\Si^z (\la|h)|l, h\> = \\[1ex]
        \de_{l, n} \<l, h|l, h\> \6_{h/2T} \La_l (\la|h)
	+ \bigl(\La_n (\la|h) - \La_l (\la|h)\bigr)
	  \bigl(\6_{h/2T} \<n, h|\bigr)|l, h\> \epp
\end{multline}
Setting $l = 0$, multiplying (\ref{szid1}) and (\ref{szid2})
and supplying the correct normalization we obtain
\begin{multline} \label{azzagen1}
     A_n^{zz} = \de_{n,0} \biggl(\frac{\6_{h/2T} \La (- \i \g/2|h)}
                                      {\La (- \i \g/2|h)} \biggr)^2 \\[1ex] +
                \frac{\<h|\bigl(\6_{h/2T} |n, h\>\bigr)\bigl(\6_{h/2T} \<n, h|\bigr)|h\>}
	             {\<h|h\>\<n, h|n, h\>}
                \biggl(
		   \frac{\La_n (- \i \g/2|h)}{\La (- \i \g/2|h)} - 2 +
		   \frac{\La (- \i \g/2|h)}{\La_n (- \i \g/2|h)} \biggr),
\end{multline}

Now let
\begin{equation}
     \varrho_n (h') = \frac{\<h|n, h'\>\<n, h'|h\>}{\<h|h\>\<n, h'|n, h'\>} \epc \qd
     \varsigma_n (h') = \frac{\La_n (- \i \g/2|h')}{\La (- \i \g/2|h)} - 2
                + \frac{\La (- \i \g/2|h)}{\La_n (- \i \g/2|h')} \epp
\end{equation}
It follows that
\begin{equation}
     \varrho_0 (h) = 1 \epc \qd
     \varsigma_0 (h) = \varsigma_0' (h) = 0 \epc \qd
     4 T^2 \varsigma_0'' (h) = 2 \biggl(\frac{\6_{h/2T} \La (- \i \g/2|h)}
                                     {\La(- \i \g/2|h)}\biggr)^2 \epc
\end{equation}
whereas for $n \ne 0$
\begin{equation}
     \varrho_n (h) = \varrho_n' (h) = 0 \epc \qd
     4 T^2 \varrho_n'' (h) = \frac{2 \<h|\bigl(\6_{h/2T} |n, h\>\bigr)
                             \bigl(\6_{h/2T} \<n, h|\bigr)|h\>}
			    {\<h|h\>\<n, h|n, h\>} \epc
\end{equation}
and $\varsigma_n (h) \ne 0$. Thus,
\begin{equation}
     2 T^2 \6_{h'}^2 \varrho_n (h') \varsigma_n (h') \bigr|_{h' = h} = A_n^{zz} (h)
\end{equation}
as a consequence of (\ref{azzagen1}), which is equivalent to (\ref{azzagen}).

\section{\boldmath Determinant of $g_0$}
\setcounter{equation}{0}
\label{app:ellipticcauchy}
In this section we shall derive a product representation of the
determinant
\begin{equation}
     \D_M (\xv, \yv) = \det_M \{g_0 (x_j, y_k)\} \epc
\end{equation}
where $\xv = (x_1, \dots, x_M)$, $\yv = (y_1, \dots, y_M)$. For later
convenience we also introduce the notation
\begin{equation}
     \xv_l = (x_1, \dots, x_{l - 1}, x_{l + 1}, \dots, x_M)
\end{equation}
and similarly for $\yv_m$.

Due to the multi-linearity of the determinant, the function $\D_M (\xv, \yv)$
inherits the pole structure and quasi-periodicity from $g_0$. In order
to understand the latter recall that
\begin{subequations}
\label{thetaperiods}
\begin{align} \label{theta1periods}
     \dh_1 (x + \p) & = - \dh_1 (x) \epc &&
     \dh_1 (x + \i \g) = - q^{-1} \re^{- 2 \i x} \dh_1 (x) \epc \\[1ex]
     \dh_2 (x + \p) & = - \dh_2 (x) \epc &&
     \dh_2 (x + \i \g) = q^{-1} \re^{- 2 \i x} \dh_2 (x) \epp
     \label{theta2periods}
\end{align}
\end{subequations}
It follows that
\begin{equation}
     g_0 (x + \p, y) = g_0 (x,y) \epc \qd
     g_0 (x + \i \g, y) = - g_0 (x,y) \epp
\end{equation}
Thus,
\begin{equation}
     \D_M (\xv + \p \ev_l, \yv) = \D_m (\xv, \yv) \epc \qd
     \D_M (\xv + \i \g \ev_l, \yv) = - \D_m (\xv, \yv) \epc
\end{equation}
where $\ev_l$ is the $l$th canonical unit vector.

We are looking for a generalization of the Cauchy-det formula. For
this reason we define
\begin{equation}
     \chi_M (\xv, \yv) =
     \frac{\prod_{1 \le j < k \le M} \dh_1 (x_j - x_k) \dh_1 (y_k - y_j)}
          {\prod_{j,k = 1}^M \dh_1 (y_j - x_k)} \epp
\end{equation}
Then, by (\ref{theta1periods}),
\begin{subequations}
\label{chiperiods}
\begin{align}
     \chi_M (\xv + \p \ev_l, \yv) & = - \chi_M (\xv, \yv) \epc \\[1ex]
     \chi_M (\xv + \i \g \ev_l, \yv) & = - q \re^{2 \i \sum_{j=1}^M (x_j - y_j)}
                                            \chi_M (\xv, \yv) \epp
\end{align}
\end{subequations}

We note that $\D_M$ is an elliptic function (with periods $\p$, $2 \i \g$)
in every $x_l$. As a function of $x_l$ it therefore has as many poles
as zeros per cell. The same is definitively not true for $\chi_M$.
$\chi_M$ cannot be elliptic as it has less zeros than poles per cell. Also
\begin{equation}
     \D_1 (x,y) = g_0 (x,y) = \frac{\dh_1' \dh_2 (y - x)}{\dh_2 \dh_1 (y - x)} \epc \qd
     \chi_1 (x,y) = \frac{1}{\dh_1 (y - x)} \epp
\end{equation}

This suggests to replace $\chi_M (\xv, \yv)$ by
\begin{equation}
     \PH_M (\xv, \yv) = \dh_2 \bigl( \tst{\sum_{j=1}^M (x_j - y_j)} \bigr)
                        \chi_M (\xv, \yv) \epc
\end{equation}
which for all $l = 1, \dots, M$ has the periodicity properties
\begin{equation}
     \PH_M (\xv + \p \ev_l, \yv) = \PH_M (\xv, \yv) \epc \qd
     \PH_M (\xv + \i \g \ev_l, \yv) = - \PH_M (\xv, \yv)
\end{equation}
following from (\ref{theta2periods}) and (\ref{chiperiods}).

We infer that the function
\begin{equation}
     {\cal F}_M (\xv, \yv) = \frac{\D_M (\xv, \yv)}{\PH_M (\xv, \yv)}
\end{equation}
is double periodic with periods $\p$, $\i \g$ and meromorphic in every
$x_l$, $l = 1, \dots, M$, hence an elliptic function of $x_l$
with periods $\p$, $\i \g$. ${\cal F}_M$ as a function of $x_l$ has
at most a single simple pole congruent to $\sum_{j=1}^M y_j
- \sum_{j=1, j \ne l}^M x_j - \p/2$ per cell. But such an elliptic
function must be a constant, since the sum of the residua of an
elliptic function at its poles in any cell is zero. It follows that
${\cal F}_M (\xv, \yv)$ is independent of $\xv$. Repeating all the
arguments for $\yv$ instead of $\xv$, we see that ${\cal F}_M
(\xv, \yv)$ is independent of $\yv$ as well, so it is a mere constant.

This constant can be calculated, e.g., by comparing the residua of
$\D_M$ and $\PH_M$ at $x_l = y_k$. This way we see that ${\cal F}_M
= (\dh_1')^M/\dh_2$. Thus,
\begin{equation} \label{detgzero}
     \D_M (\xv, \yv) =
	\frac{\dh_2 (\Si)}{\dh_2}
        \frac{\prod_{1 \le j < k \le M}
	      \bigl(\dh_1 (x_j - x_k)/\dh_1'\bigr)\bigl(\dh_1 (y_k - y_j)/\dh_1'\bigr)}
	     {\prod_{j,k = 1}^M \bigl(\dh_1 (y_j - x_k)/\dh_1'\bigr)} \epc
\end{equation}
where we have introduced the shorthand notation $\Si = \sum_{j=1}^M (x_j - y_j)$.

\end{appendices}

\bibliographystyle{amsplain}
\bibliography{hub}

\providecommand{\bysame}{\leavevmode\hbox to3em{\hrulefill}\thinspace}
\providecommand{\MR}{\relax\ifhmode\unskip\space\fi MR }
\providecommand{\MRhref}[2]{%
  \href{http://www.ams.org/mathscinet-getitem?mr=#1}{#2}
}
\providecommand{\href}[2]{#2}
\begin{thebibliography}{10}

\bibitem{Baxter73a}
R.~J. Baxter, \emph{Spontaneous staggered polarization of the ${F}$-model}, J.
  Stat. Phys. \textbf{9} (1973), 145.

\bibitem{Baxter76a}
\bysame, \emph{Corner transfer matrices of the eight-vertex model. {I}.
  low-temperature expansions and conjectured properties}, J. Stat. Phys.
  \textbf{15} (1976), 485.

\bibitem{BeSl19}
S.~Belliard and N.~A. Slavnov, \emph{Why scalar products in the algebraic
  {B}ethe ansatz have determinant representation}, J. High Energ. Phys. (2019),
  103.

\bibitem{BKM02a}
D.~Biegel, M.~Karbach, and G.~M\"uller, \emph{Transition rates via {B}ethe
  ansatz for the spin-$\frac{1}{2}$ {H}eisenberg chain}, Europhys.\ Lett.
  \textbf{59} (2002), 882.

\bibitem{BoGo09}
H.~Boos and F.~G\"ohmann, \emph{On the physical part of the factorized
  correlation functions of the {XXZ} chain}, J. Phys. A \textbf{42} (2009),
  315001.

\bibitem{Bornemann10}
F.~Bornemann, \emph{On the numerical evaluation of {F}redholm determinants},
  Mathematics of Computation \textbf{79} (2010), 871.

\bibitem{BCK96}
A.~H. Bougourzi, M.~Couture, and M.~Kacir, \emph{Exact two-spinon dynamical
  correlation function of the {H}eisenberg model}, Phys. Rev. B \textbf{54}
  (1996), R12669.

\bibitem{BKM98}
A.~H. Bougourzi, M.~Karbach, and G.~M\"uller, \emph{Exact two-spinon dynamic
  structure factor of the one-dimensional $s = 1/2$ {H}eisenberg-{I}sing
  antiferromagnet}, Phys. Rev. B \textbf{57} (1998), 11429.

\bibitem{CaHa06}
J.-S. Caux and R.~Hagemans, \emph{The 4-spinon dynamical structure factor of
  the {H}eisenberg chain}, J. Stat. Mech.: Theor. Exp. (2006), P12013.

\bibitem{CKSW12}
J.-S. Caux, H.~Konno, M.~Sorrell, and R.~Weston, \emph{Exact form-factor
  results for the longitudinal structure factor of the massless {XXZ} model in
  zero field}, J. Stat. Mech.: Theor. Exp. (2012), P01007.

\bibitem{CaMa05}
J.-S. Caux and J.~M. Maillet, \emph{Computation of dynamical correlation
  functions of {H}eisenberg chains in a field}, Phys. Rev. Lett. \textbf{95}
  (2005), 077201.

\bibitem{CMP08}
J.-S. Caux, J.~Mossel, and I.~P\'erez Castillo, \emph{The two-spinon transverse
  structure factor of the gapped {H}eisenberg antiferromagnetic chain}, J.
  Stat. Mech.: Theor. Exp. (2008), P08006.

\bibitem{CIKT92}
F.~Colomo, A.~G. Izergin, V.~E. Korepin, and V.~Tognetti, \emph{Correlators in
  the {H}eisenberg {XXO} chain as {F}redholm determinants}, Phys. Lett. A
  \textbf{169} (1992), 243.

\bibitem{DGK13a}
M.~Dugave, F.~G\"ohmann, and K.~K. Kozlowski, \emph{Thermal form factors of the
  {XXZ} chain and the large-distance asymptotics of its temperature dependent
  correlation functions}, J. Stat. Mech.: Theor. Exp. (2013), P07010.

\bibitem{DGKS15b}
M.~Dugave, F.~G\"ohmann, K.~K. Kozlowski, and J.~Suzuki, \emph{Low-temperature
  spectrum of correlation lengths of the {XXZ} chain in the antiferromagnetic
  massive regime}, J. Phys. A \textbf{48} (2015), 334001.

\bibitem{DGKS15a}
\bysame, \emph{On form factor expansions for the {XXZ} chain in the massive
  regime}, J. Stat. Mech.: Theor. Exp. (2015), P05037.

\bibitem{DGKS16a}
\bysame, \emph{Asymptotics of correlation functions of the {H}eisenberg-{I}sing
  chain in the easy-axis regime}, J. Phys. A \textbf{49} (2016), 07LT01.

\bibitem{DGKS16b}
\bysame, \emph{Thermal form factor approach to the ground-state correlation
  functions of the {XXZ} chain in the antiferromagnetic massive regime}, J.
  Phys. A \textbf{49} (2016), 394001.

\bibitem{Frobenius1882}
F.~G. Frobenius, \emph{{\"Uber die elliptischen Functionen zweiter Art}}, J.
  Reine Angew. Math. \textbf{93} (1882), 53–68.

\bibitem{GaRa04}
G.~Gasper and M.~Rahman, \emph{Basic hypergeometric series}, scd. ed.,
  Encyclopedia of Mathematics and its Applications, vol.~96, Cambridge
  University Press, 2004.

\bibitem{GGKS20}
F.~G\"ohmann, S.~Goomanee, K.~K. Kozlowski, and J.~Suzuki, \emph{Thermodynamics
  of the spin-1/2 {H}eisenberg-{I}sing chain at high temperatures: a rigorous
  approach}, Comm. Math. Phys. \textbf{377} (2020), 623--673.

\bibitem{GKKKS17}
F.~G\"ohmann, M.~Karbach, A.~Kl\"umper, K.~K. Kozlowski, and J.~Suzuki,
  \emph{Thermal form-factor approach to dynamical correlation functions of
  integrable lattice models}, J. Stat. Mech.: Theor. Exp. (2017), 113106.

\bibitem{GKS04a}
F.~G\"ohmann, A.~Kl\"umper, and A.~Seel, \emph{Integral representations for
  correlation functions of the {XXZ} chain at finite temperature}, J. Phys. A
  \textbf{37} (2004), 7625.

\bibitem{GKS05}
\bysame, \emph{Integral representation of the density matrix of the {XXZ} chain
  at finite temperature}, J. Phys. A \textbf{38} (2005), 1833.

\bibitem{GKSS19}
F.~G\"ohmann, K.~K. Kozlowski, J.~Sirker, and J.~Suzuki, \emph{Equilibrium
  dynamics of the {XX} chain}, Phys. Rev. B \textbf{100} (2019), 155428.

\bibitem{GKS20a}
F.~G\"ohmann, K.~K. Kozlowski, and J.~Suzuki, \emph{High-temperature analysis
  of the transverse dynamical two-point correlation function of the {XX}
  quantum-spin chain}, J. Math. Phys. \textbf{61} (2020), 013301.

\bibitem{GKS20b}
\bysame, \emph{Long-time large-distance asymptotics of the transversal
  correlation functions of the {XX} chain in the space-like regime}, Lett.
  Math. Phys. \textbf{110} (2020), 1783--1797.

\bibitem{GFE20}
E.~Granet, M.~Fagotti, and F.~H.~L. Essler, \emph{{Finite temperature and
  quench dynamics in the Transverse Field Ising Model from form factor
  expansions}}, SciPost Phys. \textbf{9} (2020), 33.

\bibitem{ISG12}
A.~Imambekov, T.~L. Schmidt, and L.~I. Glazman, \emph{One-dimensional quantum
  liquids: {B}eyond the {L}uttinger liquid paradigm}, Rev. Mod. Phys.
  \textbf{84} (2012), 1253.

\bibitem{Iorgov11}
N.~Iorgov, \emph{Form factors of the finite quantum {XY}-chain}, J. Phys. A
  \textbf{44} (2011), 335005.

\bibitem{IoLi11}
N.~Iorgov and O.~Lisovyy, \emph{Finite-lattice form factors in free-fermion
  models}, J. Stat. Mech.: Theor. Exp. \textbf{2011} (2011), P04011.

\bibitem{IIKS93b}
A.~R. Its, A.~G. Izergin, V.~E. Korepin, and N.~Slavnov, \emph{Temperature
  correlations of quantum spins}, Phys. Rev. Lett. \textbf{70} (1993),
  1704--1706.

\bibitem{IKMT99}
A.~G. Izergin, N.~Kitanine, J.~M. Maillet, and V.~Terras, \emph{Spontaneous
  magnetization of the {XXZ} {H}eisenberg spin-$\frac{1}{2}$ chain}, Nucl.
  Phys. B \textbf{554} (1999), 679.

\bibitem{Jie98}
X.~Jie, \emph{The large time asymptotics of the temperature correlation
  functions of the {XX0} {H}eisenberg ferromagnet: The {R}iemann-{H}ilbert
  approach}, Ph.D. thesis, Indiana University Purdue University Indianapolis,
  1998.

\bibitem{JiMi95}
M.~Jimbo and T.~Miwa, \emph{Algebraic analysis of solvable lattice models},
  American Mathematical Society, 1995.

\bibitem{JMS08}
M.~Jimbo, T.~Miwa, and F.~Smirnov, \emph{Hidden {G}rassmann structure in the
  {XXZ} model {III}: introducing {M}atsubara direction}, J. Phys. A \textbf{42}
  (2009), 304018.

\bibitem{JMS11b}
\bysame, \emph{Fermionic structure in the sine-{G}ordon model: form factors and
  null-vectors}, Nucl. Phys. B \textbf{852} (2011), 390.

\bibitem{KMBFM97}
M.~Karbach, G.~M\"uller, A.~H. Bougourzi, A.~Fledderjohann, and K.-H. M\"utter,
  \emph{Two-spinon dynamic structure factor of the one-dimensional $s = 1/2$
  {H}eisenberg antiferromagnet}, Phys. Rev. B \textbf{55} (1997), 12510.

\bibitem{KKMST09b}
N.~Kitanine, K.~K. Kozlowski, J.~M. Maillet, N.~A. Slavnov, and V.~Terras,
  \emph{On the thermodynamic limit of form factors in the massless {XXZ}
  {H}eisenberg chain}, J. Math. Phys. \textbf{50} (2009), 095209.

\bibitem{KKMST11b}
\bysame, \emph{A form factor approach to the asymptotic behavior of correlation
  functions in critical models}, J. Stat. Mech.: Theor. Exp. (2011), P12010.

\bibitem{KKMST11a}
\bysame, \emph{The thermodynamic limit of particle-hole form factors in the
  massless {XXZ} {H}eisenberg chain}, J. Stat. Mech.: Theor. Exp. (2011),
  P05028.

\bibitem{KKMST12}
\bysame, \emph{Form factor approach to dynamical correlation functions in
  critical models}, J. Stat. Mech.: Theor. Exp. (2012), P09001.

\bibitem{KiKu19}
N.~Kitanine and G.~Kulkarni, \emph{Thermodynamic limit of the two-spinon form
  factors for the zero field {XXX} chain}, SciPost Phys. \textbf{6} (2019),
  076.

\bibitem{KMT99a}
N.~Kitanine, J.~M. Maillet, and V.~Terras, \emph{Form factors of the {XXZ}
  {Heisenberg} spin-$\frac{1}{2}$ finite chain}, Nucl. Phys. B \textbf{554}
  (1999), 647.

\bibitem{KMT99b}
\bysame, \emph{Correlation functions of the {XXZ} {H}eisenberg
  spin-$\frac{1}{2}$ chain in a magnetic field}, Nucl. Phys. B \textbf{567}
  (2000), 554.

\bibitem{Kluemper93}
A.~Kl\"umper, \emph{Thermodynamics of the anisotropic spin-1/2 {H}eisenberg
  chain and related quantum chains}, Z. Phys. B \textbf{91} (1993), 507.

\bibitem{Kozlowski17}
K.~K. Kozlowski, \emph{Form factors of bound states in the {XXZ} chain}, J.
  Phys. A \textbf{50} (2017), 184002.

\bibitem{Kozlowski18pp}
\bysame, \emph{On singularities of dynamic response functions in the massless
  regime of the {XXZ} spin-1/2 chain}, preprint, arXiv:1811.06076, 2018.

\bibitem{Kozlowski18}
\bysame, \emph{On the thermodynamic limit of form factor expansions of
  dynamical correlation functions in the massless regime of the {XXZ} spin 1/2
  chain}, J. Math. Phys. \textbf{59} (2018), 091408.

\bibitem{Kozlowski19}
\bysame, \emph{Long-distance and large-time asymptotic behaviour of dynamic
  correlation functions in the massless regime of the {XXZ} spin-1/2 chain}, J.
  Math. Phys. \textbf{60} (2019), 073303.

\bibitem{Kozlowski20app}
\bysame, \emph{On convergence of form factor expansions in the infinite volume
  quantum {S}inh-{G}ordon model in 1+1 dimensions}, preprint, arXiv:2007.01740,
  2020.

\bibitem{Kulkarni20}
G.~Kulkarni, \emph{Asymptotic analysis of the form-factors of quantum spin
  chains}, Ph.D. thesis, Université de Bourgogne Franche-Comt\'e, ED 533
  Carnot-Pasteur, 2020, arXive:2012.02367.

\bibitem{Lashkevich02}
M.~Lashkevich, \emph{Free field construction for the eight-vertex model:
  representation for form factors}, Nucl. Phys. B \textbf{621} (2002), 587.

\bibitem{MEKCSR13}
M.~Mourigal, M.~Enderle, A.~Kl\"opperpieper, J.-S. Caux, A.~Stunault, and H.~M.
  Ronnow, \emph{Fractional spinon excitations in the quantum {H}eisenberg
  antiferromagnetic chain}, Nat. Phys. \textbf{9} (2013), 435.

\bibitem{Castillo20pp}
I.~{P\'erez~Castillo}, \emph{The exact two-spinon longitudinal dynamical
  structure factor of the anisotropic {XXZ} model}, Preprint, arXiv:2005.10729,
  2020.

\bibitem{RoSc06}
H.~Rosengren and M.~Schlosser, \emph{Elliptic determinant evaluations and the
  {M}acdonald identities for affine root systems}, Compositio Math.
  \textbf{142} (2006), 937–961.

\bibitem{Sakai07}
K.~Sakai, \emph{Dynamical correlation functions of the {XXZ} model at finite
  temperature}, J. Phys. A \textbf{40} (2007), 7523.

\bibitem{SST04}
J.~Sato, M.~Shiroishi, and M.~Takahashi, \emph{Evaluation of dynamic spin
  structure factor for the spin-1/2 {XXZ} chain in a magnetic field}, J. Phys.
  Soc. Jpn. \textbf{73} (2004), 3008.

\bibitem{Slavnov89}
N.~A. Slavnov, \emph{Calculation of scalar products of the wave functions and
  form factors in the framework of the algebraic {B}ethe ansatz}, Teor. Mat.
  Fiz. \textbf{79} (1989), 232.

\bibitem{Slavnov90}
\bysame, \emph{Non-equal time current correlation function in a one-dimensional
  {B}ose gas}, Theor. Math. Phys. \textbf{82} (1990), 273--282.

\bibitem{Smirnov86}
F.~A. Smirnov, \emph{A general formula for soliton form factors in the quantum
  sine-{G}ordon model}, J. Phys. A \textbf{19} (1986), L575--L578.

\bibitem{Smirnov92}
\bysame, \emph{Form factors in completely integrable models of quantum field
  theory}, World Scientific, Singapore, 1992.

\bibitem{SAW90}
J.~Suzuki, Y.~Akutsu, and M.~Wadati, \emph{A new approach to quantum spin
  chains at finite temperature}, J. Phys. Soc. Jpn. \textbf{59} (1990), 2667.

\bibitem{Suzuki85}
M.~Suzuki, \emph{Transfer-matrix method and {Monte Carlo} simulation in quantum
  spin systems}, Phys. Rev. B \textbf{31} (1985), 2957.

\bibitem{WhWa63}
E.~T. Whittaker and G.~N. Watson, \emph{A course of modern analysis}, fourth
  ed., ch.~21, Cambridge University Press, 1963.

\end{thebibliography}

\end{document}